\newcommand{\itatomich}{\textit{(atom$_\textit{h}$)}\xspace}
\newcommand{\itbackh}{\textit{(back$_\textit{h}$)}\xspace}
\newcommand{\itforthh}{\textit{(forth$_\textit{h}$)}\xspace}
\newcommand{\itsimb}{\textit{(sim)}\xspace}
\newcommand{\atomf}{\textit{atom}\!$_f$\!}
\newcommand{\atomh}{\textit{atom}$_h$\!}
\newcommand{\forth}{\textit{forth}}
\newcommand{\forthh}{\textit{forth}$_h$\!}
\newcommand{\backh}{\textit{back}$_h$\!}
\newcommand{\atomgh}{\textit{atom}$^g_h$}
\newcommand{\forthgh}{\textit{forth}$^g_h$}
\newcommand{\forthggh}{\textit{forth}$^{gg}_h$}
\newcommand{\forthg}{\textit{forth}$^g$}
\newcommand{\backgh}{\textit{back}$^g_h$}
\newcommand{\simgh}{\textit{sim}$^g_h$}
\newcommand{\atomg}{\textit{atom}$^g$}
\newcommand{\hALC}{\ensuremath{\textit{horn}\mathcal{ALC}}}
\newcommand{\hGF}{\ensuremath{\textit{horn}\textsl{GF}}}
\theoremstyle{plain}
\newtheorem{theorem}{Theorem}
\newtheorem{lemma}{Lemma}
\newtheorem{definition}{Definition}
\newtheorem{example}{Example}
\begin{document}
%
\title{Model Comparison Games for Horn Description Logics}

\author{
\IEEEauthorblockN{Jean Christoph Jung}
\IEEEauthorblockA{Universit\"at Bremen\\
Germany\\
jeanjung@uni-bremen.de}
\and
\IEEEauthorblockN{Fabio Papacchini and Frank Wolter}
\IEEEauthorblockA{Department of Computer Science\\
University of Liverpool, UK\\
\{papacchf,wolter\}@liverpool.ac.uk}
\and
\IEEEauthorblockN{Michael Zakharyaschev}
\IEEEauthorblockA{Department of Computer Science\\and Information Systems\\
Birkbeck, University of London, UK\\
michael@dcs.bbk.ac.uk}
}


%

\IEEEoverridecommandlockouts
\IEEEpubid{\makebox[\columnwidth]{978-1-7281-3608-0/19/\$31.00~
\copyright2019 IEEE \hfill} \hspace{\columnsep}\makebox[\columnwidth]{ }}

\maketitle


\begin{abstract}
Horn description logics are syntactically defined fragments of standard description logics that fall within the Horn fragment of first-order logic and for which ontology-mediated query answering is in \textsc{PTime} for data complexity. They were independently introduced in modal logic to capture the intersection of Horn first-order logic with modal logic. In this paper, we introduce model comparison games for the basic Horn description logic \hALC{} (corresponding to the basic Horn modal logic) and use them to obtain an Ehrenfeucht-Fra\"{i}ss\'e type definability result and a van Benthem style expressive completeness result for \hALC{}. We also establish a finite model theory version of the latter.  
The Ehrenfeucht-Fra\"{i}ss\'e type definability result is used to show that checking \hALC{} indistinguishability of models is \textsc{ExpTime}-complete, which is in sharp contrast to $\mathcal{ALC}$ indistinguishability (i.e., bisimulation equivalence) checkable in \textsc{PTime}. In addition, we explore the behavior of Horn fragments of more expressive description and modal logics by defining a Horn guarded fragment of  first-order logic and introducing model comparison games for it.
\end{abstract}


%
\IEEEpeerreviewmaketitle

\section{Introduction}
Description logics (DLs)~\cite{Baader-et-al-03b,DBLP:books/daglib/0041477} have been introduced as knowledge representation formalisms supported by efficient reasoning tools. The basic DL, called $\mathcal{ALC}$, is a notational variant of the classical multi-modal logic. In fact, numerous applications have led to the development of a large family of DLs with different features. DLs serve as the logical underpinning of ontologies, finitely axiomatized theories known as TBoxes. Two main  reasoning problems have to be solved efficiently for TBoxes, often containing thousands of axioms:
\begin{description}
\item[\emph{Deduction}:] \hspace*{0.6cm} does a formula follow from a TBox?
 
\item[\emph{Query answering}:] \hspace*{1.6cm} is a tuple $\dbf$ in a database $D$ a certain answer to an ontology-mediated query $(\Tmc,\boldsymbol{q}(\xbf))$ comprising a TBox $\Tmc$ and a query $\boldsymbol{q}(\xbf)$? In other words, does $\boldsymbol{q}(\dbf)$ follow from $\Tmc\cup D$?
\end{description}
As the data is typically much larger than both TBox and query, the
appropriate efficiency measure for ontology-mediated query answering
is \emph{data complexity}, under which the database is the only input
to the problem, while the TBox and query are regarded as
fixed~\cite{DBLP:conf/stoc/Vardi82}. For $\mathcal{ALC}$, deduction is
\ExpTime-complete and ontology-mediated query answering with
conjunctive queries is \coNP-complete~\cite{DBLP:books/daglib/0041477}.

Horn DLs have been introduced as syntactically defined fragments of standard DLs that fall within the Horn fragment of first-order logic (henceforth Horn FO) and for which ontology-mediated query answering is in \PTime for data complexity~\cite{DBLP:conf/ijcai/HustadtMS05,DBLP:journals/jar/HustadtMS07}. The Horn fragment of $\mathcal{ALC}$ is called \hALC{}. The modal logic corresponding to \hALC{} was introduced independently, actually five years earlier, with the aim of capturing the intersection of Horn FO and modal logic~\cite{Sturm00}.\!\footnote{The results obtained in this paper could have been presented as a contribution to modal rather than description logic. The only reason why we have chosen the DL environment is that the impact of Horn fragments in description logic has so far been much more significant than in modal logic.}
The introduction of Horn DLs had an enormous impact on description logic research and applications: while the weaker Horn DLs of the $\mathcal{EL}$~\cite{BaBrLu-IJCAI-05} and \textsl{DL-Lite}~\cite{Romans,DBLP:journals/jair/ArtaleCKZ09} families gave rise to two Web Ontology Language OWL~2 profiles (trading expressiveness for high efficiency), the more expressive Horn DLs starting at \hALC{} have also been used extensively, and investigated in depth for ontology-mediated query answering~\cite{DBLP:conf/jelia/EiterGOS08,ijcai2011s,DBLP:conf/rweb/BienvenuO15,DBLP:conf/kr/GrauHKKMMW12,DBLP:conf/ijcai/Bienvenu0LW16,DBLP:conf/aaai/GlimmKT17}. Moreover, despite the fact that deduction in many expressive Horn DLs, including \hALC{}, is \ExpTime-hard, it turned out that efficient reasoners capable of coping with very large real-world TBoxes could be developed~\cite{conf/ijcai/Kazakov09,DBLP:series/ssw/Krotzsch10}. The complexity of reasoning in various types of Horn DLs has been investigated in~\cite{DBLP:journals/tocl/KrotzschRH13}. It also turned out that basic questions relevant for ontology-mediated query answering, such as query-inseparability and conservative extensions, query emptiness, and query by example, admit more elegant solutions and are easier to solve computationally for Horn DLs than in the classical case~\cite{emptiness,DBLP:journals/ai/BotoevaKRWZ16,GuJuSa-IJCAI18,Botoevaetal19}. The relationship between Horn DLs 
and \PTime query answering is by now well-understood~\cite{DBLP:journals/lmcs/LutzW17,HLPW-IJCAI18}. 

In contrast, the model theory for expressive Horn DLs remains largely
undeveloped. Even very basic questions such as whether \hALC{} indeed
captures the intersection of $\mathcal{ALC}$ (or modal logic) and Horn
FO are still unanswered. The aim of this paper is to lay foundations
for a model-theoretic understanding of Horn DLs \textup{(}and Horn
modal logic\textup{)} by developing model comparison games and using
them to obtain Ehrenfeucht-Fra\"{i}ss\'e type definability and van
Benthem style expressive completeness results. In a first application
of these results, we show that concept learning and model
indistinguishability in \hALC{} are \ExpTime-complete and that \hALC{} does not capture the intersection of $\mathcal{ALC}$ and Horn FO. 

The original definition of Horn DLs~\cite{DBLP:conf/ijcai/HustadtMS05}
was based on the polarity of concepts, as used in automated theorem
proving. The equivalent definition given for modal
logic~\cite{Sturm00} (and also for
DLs~\cite{DBLP:journals/lmcs/LutzW17}) is more similar to the classical definition of Horn FO as the closure of Horn clauses $\varphi_{1}\wedge \cdots \wedge \varphi_{n}\rightarrow \varphi_{n+1}$, with atomic $\varphi_{i}$, under $\wedge$, $\exists$, and $\forall$. The obvious analogue of this definition in description (modal) logic is the closure of Horn clauses $A_{1}\sqcap \cdots \sqcap A_{n}\rightarrow A_{n+1}$, with concept names (propositional variables in modal logic) $A_{i}$, under $\sqcap$, $\exists R$, $\forall R$ (respectively, $\wedge$, $\Diamond$, $\Box$).
However, in contrast to the first-order case, this definition leaves out the substitution instances $C_{1} \sqcap \dots \sqcap C_{n}\rightarrow C$ with positive existential $C_i$ and Horn $C$ (understood recursively), which have thus been explicitly included in \hALC. It is easy to show that \hALC{} is a fragment of Horn FO under the standard translation. 


As the first contribution of the paper, we introduce model comparison
games for characterizing \hALC. These \emph{Horn simulation games}
differ from standard bisimulation or Ehrenfeucht-Fra\"{i}ss\'e games
in the following respects: 
%
\begin{enumerate}
\item[(1)] the Horn simulation relations underlying Horn simulation games are non-symmetric (which reflects
  that Horn languages are not closed under negation); 

\item[(2)] positions in the games consist of pairs $(X,b)$ with a \emph{set} $X$ of nodes and a node $b$ (which reflects that Horn languages are not closed under disjunction);

\item[(3)] Horn simulation games use as a subgame the basic simulation game for checking indistinguishability by positive existential $\mathcal{ALC}$ formulas (which reflects that the left-hand side of \hALC{} implications are such).
\end{enumerate}
Both~(2) and (3) have important consequences. The latter means that
Horn simulation games are \emph{modular} as far as the
characterization of the left-hand side of implications is concerned.
For example, by dropping the subgames entirely, we characterize the
positive fragment of $\mathcal{ALC}$ and by restricting it to depth
$0$, we characterize the weaker Horn fragment of $\mathcal{ALC}$
discussed above. We will use this modularity to characterize 
a proper extension, \hALC$_{\nabla}$, of \hALC{} with the operators $\nabla R.C = \exists R.\top \sqcap \forall R.C$ (or $\nabla p=\Diamond \top \wedge \Box p$ in modal logic) on the left-hand side of \hALC{} 
implications, which also lies in Horn FO.

The consequences of~(2) are three-fold. First, using sets rather than nodes in positions implies that the obvious algorithm checking the existence of Horn simulations containing a pair $(\{a\},b)$ of nodes runs in exponential time. Thus, using Horn simulation games to check whether two nodes $a$ and $b$ satisfy the same \hALC-concepts or whether two models satisfy the same TBox axioms yields exponential time algorithms. We show that this is unavoidable by proving corresponding \ExpTime lower bounds. 
The \ExpTime-completeness results are in sharp contrast to the typical
complexity of indistinguishability in modal-like languages. For
example, as bisimilarity of nodes can be checked in polynomial time,
deciding whether two nodes satisfy the same $\mathcal{ALC}$-concepts
is in \PTime; similarly, since one can check in polynomial time
whether there is a standard simulation between two nodes, deciding
whether they satisfy the same $\mathcal{EL}$-concepts is in \PTime as well. 
Thus, \hALC{} sits between languages for which definability questions are
computationally and model-theoretically much more
straightforward.

Second, as player~2 does \emph{not} have a winning strategy in position
$(X,b)$ in the Horn simulation game if, and only if, there exists a
\hALC-concept that is true at all nodes in $X$ but not true at $b$,
our complexity results are directly applicable to the \emph{concept
learning by example \textup{(}CBE\textup{)} problem}: given a data
set, and sets $P$ and $N$ of positive and negative examples, does
there exist a \hALC-concept $C$ separating $P$ from $N$ over the
data? The goal of this supervised learning problem is to
automatically derive new concept descriptions from labelled data. It
has been investigated before in
DL~\cite{DBLP:journals/jmlr/Lehmann09,DBLP:conf/ilp/BadeaN00, DBLP:conf/kr/CohenH94} and
for many logical languages, in particular in
databases~\cite{DBLP:conf/icdt/CateD15,DBLP:conf/icdt/Barcelo017,DBLP:journals/tods/ArenasD16,DBLP:conf/lics/GroheR17}.
Horn DLs are of particular interest as target languages for CBE as
they can be regarded as `maximal DLs without disjunction,\!' and the
unlimited use of disjunction in derived concept descriptions is
undesirable as it leads to \emph{overfitting}: learnt concepts enumerate the positive
examples rather than generalize from the
examples~\cite{DBLP:conf/ilp/BadeaN00}. The complexity
analysis for Horn simulation games shows that the CBE problem for
\hALC{} is \ExpTime-complete, again in contrast to \ALC, where CBE is
in \PTime. We regard the increased complexity as the price for obtaining
proper generalizations.

Finally, the presence of sets in positions of the Horn simulation
games has an impact on the standard infinitary saturated model
approach to proving van Benthem style expressive completeness
results~\cite{goranko20075,DBLP:books/daglib/p/Gradel014}. For
example, we aim to prove that an FO-formula with one free variable is
equivalent to (the standard translation of) a \hALC-concept just in
case it is preserved under Horn simulations. 
Then, for the infinitary proof method, not only do the structures
showing that non-equivalence to a \hALC{}-concept implies
non-preservation under Horn simulations have to satisfy appropriate
saturatedness conditions, but also the \emph{substructures} induced by
the sets $X$ chosen by the players have to be saturated. However, saturated structures do not enjoy the latter property for arbitrary subsets $X$ of
their domain.  In fact, it currently seems that the only way to obtain
expressive completeness results with an infinitary approach is to
restrict the moves of players to `saturated sets,' say sets definable
as the intersection of FO-definable sets.

In this paper, we prove van Benthem style expressive completeness
results for $\hALC$-concepts and TBoxes via Horn simulation games by 
developing appropriate finitary methods which do not require saturated structures. 
As a consequence, the results hold both in the classical and the
finite model theory setting, and without any restrictions on the moves of players. In fact, we show that preservation under $\ell$-round Horn simulation games coincides 
with preservation under infinitary Horn simulation games for $\mathcal{ALC}$-concepts and TBoxes of nesting depth $\ell$. We thus also obtain 
decidability of the problem whether an $\mathcal{ALC}$-concept or TBox is equivalent to a \hALC-concept or TBox, respectively. 
The finitary approach to van Benthem style expressive completeness results was first used by Rosen~\cite{DBLP:journals/jolli/Rosen97} to 
obtain a bisimulation characterization of $\mathcal{ALC}$ in the
finite model theory setting, and has been further developed and applied with great success by 
Otto et al.~\cite{DBLP:journals/apal/Otto04,DBLP:journals/apal/DawarO09,DBLP:journals/jacm/Otto12,DBLP:books/daglib/p/Gradel014}. The lifting of our results from expressive
completeness for \hALC{} within $\mathcal{ALC}$ to expressive completeness for \hALC{} within FO relies on these earlier results. 
%

It is straightforward to extend the Horn simulation games for \hALC{} to games providing Ehrenfeucht-Fra\"{i}ss\'e type
definability results for the Horn fragments of many popular extensions of $\mathcal{ALC}$, such as the extension by inverse roles
or the universal role. Instead of going through those extensions step-by-step, however, we consider the guarded fragment, GF, of
FO and introduce its Horn fragment, \hGF{}, by generalizing the definition of Horn DLs in the obvious way. We remind the
reader that GF has been introduced as an extension of multi-modal logic to predicates of arbitrary arity, which still has many
of the fundamental properties of modal and description logics~\cite{ANvB98,DBLP:journals/jsyml/Gradel99,goranko20075,DBLP:journals/corr/BaranyGO13,pods17}. 
Like \hALC{}, \hGF{} is contained in Horn FO and ontology-mediated query answering using conjunctive queries is in \PTime for data complexity. 
The latter can be shown by establishing a close relationship between \hGF{} and guarded tuple-generating dependencies (guarded tgds), a popular member of the 
Datalog$^\pm$ family for which query answering is in \PTime~\cite{DBLP:journals/jair/CaliGK13,DBLP:conf/pods/CaliGL09}.
In fact, guarded tgds can be seen as normal forms for \hGF{}, and deduction and query answering in \hGF{} can both be polynomially reduced to 
the same problem for guarded tgds, and vice versa.
To study the model theory of \hGF{}, we generalize Horn simulations to guarded Horn simulations, and show an Ehrenfeucht-Fra\"{i}ss\'e type
definability result for \hGF{}. This result is used to prove an \ExpTime upper bound for model indistinguishability in \hGF{}
and to explore the expressive power of \hGF. In particular, we show
that \hGF{} captures more of the intersection of $\mathcal{ALC}$ and Horn FO
than \hALC{} but does not capture the intersection of GF and Horn FO. We then show expressive completeness of \hGF{}: an FO-formula is equivalent to a \hGF{}-formula just
in case it is preserved under guarded Horn simulations. Our proof uses infinitary methods and thus the moves of player~1 are restricted
to intersections of FO-definable sets. It remains open whether the expressive completeness holds without this restriction and whether it holds in the finite model theory setting.  

The emerging landscape of the fragments of Horn FO and GF we considered in this paper is discussed in the conclusion.



\smallskip

{\bf Related Work}. Here we briefly review the related work not yet discussed. 
The definition of Horn simulations is inspired by games used to provide van Benthem style characterizations 
of concepts in weak DLs such as $\mathcal{FL}^{-}$~\cite{DBLP:journals/ai/KurtoninaR99}. Van Benthem style
characterizations of DLs in the $\mathcal{EL}$ and \textsl{DL-Lite} families are given  in~\cite{TBoxpaper}. Bisimulations
have been studied for the guarded fragment and many variations~\cite{ANvB98,DBLP:books/daglib/p/Gradel014,DBLP:journals/jacm/BaranyCS15}.
Bisimulations have also been studied recently for coalgebraic modal logics~\cite{DBLP:conf/calco/GorinS13,DBLP:journals/logcom/SchroderPL17} 
and fuzzy modal logics~\cite{DBLP:conf/lics/WildSP018}.

This paper contributes to the model theory of languages obtained by taking the
intersection of Horn FO with modal and description logic. Horn FO was originally introduced in classical 
model theory~\cite{DBLP:journals/jsyml/McKinsey43,DBLP:journals/jsyml/Horn51} with the aim of understanding FO-formulas 
that are preserved under products of models. As it turned out, an FO-formula is equivalent to a Horn formula iff 
it is preserved under \emph{reduced} products; for details consult~\cite{ChangKeisler,Hodges93}. A complicated recursive
characterization of FO-sentences preserved under direct products is given in~\cite{Weinstein65}. 

There have been other attempts to define Horn modal and temporal logics~\cite{Nguyen04,CerroP87,ChenLin93,BresolinMS16} with
a focus on the complexity of reasoning and not on model theory.

\section{Preliminaries}
\label{sec:characterisation}

Description logics (DLs) are fragments of first-order logic with unary and binary predicates. 
However, the standard notation for DL `formulas' is more succinct and does not use individual 
variables explicitly~\cite{Baader-et-al-03b,DBLP:books/daglib/0041477}. Let $\tau$ be a vocabulary
consisting of unary and binary predicate names only. In DL parlance, they are called \emph{concept names} (denoted $A$, $B$, etc.) and \emph{role names} (denoted $R$, $S$, etc.), respectively.
The \emph{$\mathcal{ALC}[\tau]$-concepts}, $C$, are defined by the following grammar:
\begin{multline*}
C,D ~::=~ A \ \mid \ \top \ \mid \ \bot \ \mid \ \neg C \ \mid \ C \sqcup D \ \mid \ 
C \sqcap D \ \mid\\
C \to D \ \mid \ \exists R.C \ \mid \ \forall R.C,
\end{multline*}
where $A\in \tau$ is unary, $R\in \tau$ binary, $\top$ is the universal and $\bot$ the empty concept. If not relevant, we drop $\tau$ and simply talk about $\mathcal{ALC}$-concepts.
An $\mathcal{ALC}[\tau]$-\emph{concept inclusion} (or CI) takes the form $C\sqsubseteq D$, where $C$ and $D$
are $\mathcal{ALC}[\tau]$-concepts. An $\mathcal{ALC}[\tau]$-\emph{TBox}, $\Tmc$, is a finite set of 
$\mathcal{ALC}[\tau]$-CIs. 

$\mathcal{ALC}[\tau]$ is interpreted in usual $\tau$-structures 
$$
\Amf=(\text{dom}(\Amf),(A^{\Amf})_{A\in \tau},(R^{\Amf})_{R\in \tau})
$$
with $\text{dom}(\Amf) \ne \emptyset$, $A^{\Amf}\subseteq \text{dom}(\Amf)$ and
$R^{\Amf}\subseteq\text{dom}(\Amf)^{2}$. The semantics of $\mathcal{ALC}$ can be defined 
via the \emph{standard translation} $^\dag$ of $\mathcal{ALC}$-concepts to FO-formulas with one  free variable $x$:
\begin{align*}
& A^{\dag}  = A(x), \quad  \top^{\dag} =  (x=x), \quad \bot^{\dag} = \neg(x=x), \\
& ^\dag\text{commutes with the Booleans (changing $\sqcap$ to $\land$ and $\sqcup$ to $\lor$),}\\ 
& (\exists R.C)^\dag = \exists y \, (R(x,y) \land C^\dag[y/x]),\\ 
& (\forall R.C)^\dag = \forall y \, (R(x,y) \to C^\dag[y/x]).
\end{align*}
The \emph{extension} $C^{\Amf}$ of a concept $C$ in a structure $\Amf$ is \mbox{defined as}
$$
C^{\Amf} = \{ a\in \text{dom}(\Amf) \mid \Amf\models C^{\dag}(a)\},
$$
and the CI $C\sqsubseteq D$ is regarded as a shorthand for the FO-sentence 
$\forall x \, (C^\dag(x) \to D^\dag(x))$. We write $\Tmc\models C \sqsubseteq D$ to 
say that the CI $C\sqsubseteq D$ \emph{follows from} the TBox $\Tmc$, that is, 
$C^\Amf \subseteq D^\Amf$ holds in every model $\Amf$ of $\Tmc$. 
Concepts $C$ and $D$ are \emph{equivalent} if $\emptyset \models C\sqsubseteq D$ and $\emptyset \models D\sqsubseteq C$.

$\mathcal{ALC}$-concepts that are built from concept names using $\top$, $\sqcap$, $\sqcup$, and $\exists R.C$ only are called $\mathcal{ELU}$-\emph{concepts}; $\mathcal{ELU}$-concepts without $\sqcup$ are called $\mathcal{EL}$-\emph{concepts}. The FO-translation $C^\dag$ of any $\mathcal{ELU}$-concept $C$ is clearly 
a positive existential formula.

\begin{definition}[\bf Horn $\mathcal{ALC}$-concept]\label{defHornALC}\em 
We define \emph{\hALC$[\tau]$-concepts}, $H$, by the grammar
$$
H,H' ~::=~ \bot \mid \top \mid A \mid H \sqcap H' \mid L \to H \mid \exists R . H \mid \forall R.  H,
$$
where $L$ is an $\mathcal{ELU}[\tau]$-concept. 
A \emph{\hALC-CI} takes the form $L \sqsubseteq H$.
A \emph{\hALC-TBox} is a finite set of \hALC-CIs.
\end{definition} 

Our definition of \hALC-concepts is
from~\cite{DBLP:journals/lmcs/LutzW17}. We show in
the appendix that both polarity-based definition of \hALC-concepts
from~\cite{DBLP:conf/ijcai/HustadtMS05} and Horn \emph{modal} formulas
(appropriately adapted to the DL syntax)
defined in~\cite{Sturm00} are equivalent to our definition.

To put \hALC-concepts into the context of classical Horn FO-formulas,
we recall that a \emph{basic Horn formula} is a disjunction
$\varphi_{1}\vee \dots \vee \varphi_{n}$ of FO-formulas, with at most
one of them being an atom and the remaining ones  negations of
atoms~\cite{ChangKeisler}. A \emph{Horn formula} is constructed from
basic Horn formulas using $\wedge$, $\exists$, and $\forall$. 

\begin{theorem}\label{thm:11}
$(i)$ Every \hALC{}-concept is equivalent to a Horn formula with one free variable
and every \hALC-CI is equivalent to a Horn sentence.

$(ii)$ There exists an $\mathcal{ALC}$-concept \textup{(}TBox\textup{)} that is equivalent to a Horn formula \textup{(}Horn sentence\textup{)}, but not equivalent to any \hALC-concept \textup{(}TBox\textup{)}.
\end{theorem}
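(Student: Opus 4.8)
The plan is to handle the two parts by quite different means: $(i)$ is a structural induction whose only real content is one closure lemma about the standard translation, while $(ii)$ needs a concrete witness together with the Horn-simulation machinery developed later in the paper.

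For $(i)$ the only delicate point is that $^\dag$ turns the two ``downward'' constructors $\forall R.H$ and $L\to H$ into disjunctions, so I must check that disjoining never manufactures a second positive literal. I would isolate this as a closure lemma: if $\psi$ is (equivalent to) a Horn formula and $\alpha$ is an atom, then $\neg\alpha\vee\psi$ is again Horn. This follows by a routine induction on the construction of $\psi$ from basic Horn formulas by $\wedge,\exists,\forall$: in the basic case $\neg\alpha\vee\psi$ still has at most one positive disjunct, and $\neg\alpha\vee(\cdot)$ commutes with $\wedge$ and, after renaming, with $\exists,\forall$. From this I would derive the antecedent lemma by induction on the $\mathcal{ELU}$-concept $L$: for every Horn $\psi$, the formula $L^\dag\to\psi$ is Horn. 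Here $L=A,\top$ is the closure lemma; $L=L_1\sqcap L_2$ uses $(L_1^\dag\wedge L_2^\dag)\to\psi\equiv L_1^\dag\to(L_2^\dag\to\psi)$ and the hypothesis twice; $L=L_1\sqcup L_2$ uses $(L_1^\dag\vee L_2^\dag)\to\psi\equiv(L_1^\dag\to\psi)\wedge(L_2^\dag\to\psi)$ and closure of Horn under $\wedge$; and $L=\exists R.L'$ rewrites $(\exists R.L')^\dag\to\psi$ as $\forall y\,(R(x,y)\to(L'^\dag[y/x]\to\psi))$, applying the hypothesis under the guard and then the closure lemma to the atom $R(x,y)$.

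Part $(i)$ itself is then a structural induction on $H$: the cases $\bot,\top,A$ are literals, $\sqcap$ is closure under $\wedge$, and $\exists R.H$, $\forall R.H$ translate to $\exists y\,(R(x,y)\wedge H^\dag[y/x])$ and $\forall y\,(R(x,y)\to H^\dag[y/x])$, handled by closure under $\exists,\forall$ and the closure lemma; the case $L\to H$ is exactly the antecedent lemma with $\psi=H^\dag$. For a CI $L\sqsubseteq H$ the antecedent lemma makes $L^\dag\to H^\dag$ Horn and prefixing $\forall x$ yields a Horn sentence.

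For $(ii)$ I would take $C=\nabla R.A\to B$, that is $(\exists R.\top\sqcap\forall R.A)\to B$, and the CI $\nabla R.A\sqsubseteq B$; note $C\equiv\forall R.\bot\sqcup\exists R.(A\to\bot)\sqcup B$. To see $C$ is Horn I would check preservation under reduced products: $\exists R.\top$ at a product node forces almost every factor node to have a successor, and $\forall R.A$ there forces almost every factor node to have only $A$-successors (a large family of factors carrying a non-$A$ successor would assemble into a non-$A$ successor of the product node); so $\nabla R.A$, hence $B$, holds in almost all factors, hence $B$ holds at the product node. The guarding conjunct $\exists R.\top$ is essential: $\forall R.A\to B$ alone already fails for a binary direct product, which is exactly why $C$ escapes the fragment of $(i)$, whose antecedents are $\mathcal{ELU}$ and contain no $\forall$. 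To show $C$ is \emph{not} equivalent to any \hALC-concept I would invoke the Ehrenfeucht-Fra\"{i}ss\'e/Horn-simulation characterization: take $b$ with a single $A$-successor and $\neg B(b)$ (so $C$ fails at $b$), and a set $X=\{a_1,a_2\}$ where $a_1$ has only $A$-successors but satisfies $B$, while $a_2$ has a non-$A$-successor and $\neg B$, so that $C$ holds at both $a_1,a_2$ but via \emph{different} disjuncts. The three \hALC-concepts that could separate $X$ from $b$ --- namely $B$, $\exists R.(A\to\bot)$ and $\forall R.\bot$ --- each fail at one of $a_1,a_2$, so none lies in the common \hALC-theory of $X$, whereas $C$ (which is precisely their disjunction, a combination \hALC cannot form) holds throughout $X$. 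The TBox claim follows by comparing two models agreeing on all \hALC-CIs but disagreeing on $\nabla R.A\sqsubseteq B$.

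The main obstacle is the final step: verifying that $(X,b)$ is genuinely related by a Horn simulation, i.e.\ that the common \hALC-theory of the \emph{set} $X$ transfers to $b$ and that no deeper or set-sensitive \hALC-concept separates them. This is exactly where a single-node simulation breaks down --- the \hALC-concept $\exists R.(A\to\bot)$ already distinguishes $a_2$ from $b$ --- so the set-based positions of the Horn-simulation game are indispensable, and the clean execution of $(ii)$ rests on the game characterization rather than on an ad hoc induction.
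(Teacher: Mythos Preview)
Your treatment of $(i)$ is correct and is exactly the ``straightforward induction'' the paper alludes to, with the closure and antecedent lemmas making explicit what the paper leaves implicit.

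For $(ii)$ you choose the same concept $C_\nabla=(\exists R.\top\sqcap\forall R.A)\to B$ and the same mechanism (Horn simulations, forward-referencing Lemma~\ref{lem:diagram}) for non-expressibility, but there is a genuine gap in your argument that $C_\nabla$ is Horn. Your intermediate claim---``$\forall R.A$ at the product node forces almost every factor node to have only $A$-successors''---fails for general proper filters: from a set $W$ of indices where $f(i)$ has a non-$A$ successor you can build a non-$A$ successor of $f$ and conclude $W\notin F$, but $W\notin F$ does \emph{not} yield $I\setminus W\in F$ unless $F$ is an ultrafilter, and preservation under ultraproducts is vacuous. The reduced-product route can be salvaged by interleaving the two disjuncts (define $h(i)$ to be a non-$A$ successor of $f(i)$ when one exists and the $\exists R.\top$-witness $g(i)$ otherwise, then apply $f\in(\forall R.A)^{\prod_F}$ to this $h$), but the paper avoids all of this by simply exhibiting the equivalent Horn formula
\[
\exists y\,R(x,y)\ \to\ \exists z\bigl(R(x,z)\wedge(A(z)\to B(x))\bigr).
\]
A smaller point: for \textit{(sim)} to give $\Bmf,b\preceq_{\textit{sim}}\Amf,a_2$ in your construction, $a_2$ must also have an $A$-successor (since $b$'s unique successor is in $A$); ``has a non-$A$-successor'' alone is not enough. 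The paper's Example~\ref{exam:1}(i) gives the analogous node two successors for exactly this reason.

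For the TBox case your simpler witness $\nabla R.A\sqsubseteq B$ does work: the structures of Example~\ref{exam:1}(i) already give a \emph{surjective} Horn simulation from a model to a non-model of this CI, so Theorem~\ref{thm:ehrenhorn-tbox} applies. The paper instead uses the more elaborate $\Tmc_{\textit{horn}}$ because it is reused later to show that even $\hALC_\nabla$ does not capture the intersection of $\mathcal{ALC}$ with Horn~FO.
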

\begin{proof}
$(i)$ is proved by a straightforward induction on the construction of \hALC-concepts. To prove $(ii)$, consider first the $\mathcal{ALC}$-concept 
$$
C_{\nabla} = (\exists R.\top \sqcap \forall R.A) \rightarrow B.
$$
It is not hard to check that $C_{\nabla}^\dag$ has the same models as 
$$
\exists y\, R(x,y) \to \exists z\, (R(x,z) \land (A(z) \to B(x))),
$$
which is equivalent to a Horn formula. 
Example~\ref{exam:1} below shows that $C_{\nabla}$ is not equivalent to any \hALC{}-concept.

Next, consider the $\mathcal{ALC}$-TBox $\Tmc_{\textit{horn}}$ with the following CIs:
\begin{align*}
& E \sqsubseteq A_1 \sqcup A_2 \sqcup \exists R. (\neg B_1 \sqcap \neg B_2), \quad \exists R. (B_1\sqcap B_2) \sqsubseteq \bot, \\
& E \sqsubseteq \exists R. \top, \quad \exists R. B_1 \sqsubseteq \exists R. B_2, \quad \exists R. B_2 \sqsubseteq \exists R. B_1.
\end{align*}
The FO-translations of all of them but the first one are obviously (equivalent to) Horn sentences. We take a conjunction of these translations together with the sentence  
\begin{multline*}
\forall x\, \big[E(x) \to \exists y\, ( R(x,y) \land{}\\
(B_1(y) \to A_1(x)) \land (B_2(y) \to A_2(x)))\big],
\end{multline*}
which is also equivalent to a Horn one. One can check that the resulting sentence is equivalent to $\Tmc_{\textit{horn}}$. 
On the other hand, Example~\ref{exam:1} below shows that $\Tmc_{\textit{horn}}$ is not equivalent to any \hALC{}-TBox.
\end{proof}
Given Theorem~\ref{thm:11}, a natural question arises whether it is possible to design a syntactic extension of \hALC{} that captures the intersection of 
$\mathcal{ALC}$ and Horn FO. We discuss this problem in the conclusion of this paper.


We remind the reader of two usual operations on structures. The \emph{product} $\prod_{i\in I}\Amf_{i}$ of a family of $\tau$-structures $\Amf_{i}$, $i\in I$, is defined as follows: its domain $\text{dom}(\prod_{i\in I}\Amf_{i})$ is the set of functions $f\colon I\rightarrow \bigcup_{i\in I}\text{dom}(\Amf_{i})$ with $f(i)\in \text{dom}(\Amf_{i})$, for $i\in I$, and 
\begin{align*}
& A^{\prod_{i\in I}\Amf_{i}} = \{ f\in \text{dom}(\prod_{i\in I}\Amf_{i}) \mid \forall i\in I\,f(i)\in A^{\Amf_{i}}\},\\
& R^{\prod_{i\in I}\Amf_{i}} = \{ (f,g)\in (\text{dom}(\prod_{i\in I}\Amf_{i}))^{2} \mid{}\\[-4pt] 
& \hspace*{4.9cm} \forall i\in I\,(f(i),g(i))\in R^{\Amf_{i}}\}.
\end{align*}
Horn formulas are \emph{preserved under products} in the sense that 
$$
\forall i\in I\, \Amf_{i}\models \varphi(f_{1}(i),\ldots,f_{n}(i)) \ \Rightarrow \ 
\prod_{i\in I}\Amf_{i} \models \varphi(f_{1},\ldots,f_{n})
$$
for all Horn formulas $\varphi$. Note that an FO-formula is equivalent to a Horn formula iff it is preserved under the more general \emph{reduced} products (modulo filters over $I$)~\cite{ChangKeisler}.

The \emph{disjoint union} $\Amf$ of a family $\Amf_{i}$, $i\in I$, of structures has domain
$\bigcup_{i\in I}\text{dom}(\Amf_{i})\times \{i\}$ and 
\begin{align*}
& A^{\Amf} = \{ (a,i) \mid a\in A^{\Amf_{i}}\},\\
& R^{\Amf} = \{ ((a,i),(b,i)) \mid (a,b)\in R^{\Amf_{i}}\}.
\end{align*}
$\mathcal{ALC}$-TBoxes $\Tmc$ are \emph{invariant under disjoint
unions}, that is:
$$
\forall i \in I\ \Amf_{i}\models \Tmc \quad \Leftrightarrow \quad \Amf\models \Tmc.
$$
%
By the \emph{depth} of a concept $C$ we mean the maximal number of nestings of $\exists R$ and $\forall R$ in $C$.
For example, the concepts $\exists R.\exists R.A$ and $\exists R.\forall R.A$ are of depth 2.
The \emph{depth} of a TBox is the maximum over the depths of the concepts occurring in it.
By a \emph{pointed structure} we mean 
a pair $\Amf,X$ with a structure $\Amf$ and a non-empty set 
$X\subseteq \text{dom}(\Amf)$. If $X=\{a\}$, we simply write $\Amf,a$.
\begin{definition}[\bf DL indistinguishability]\em
For any DL $\Lmc$, $\tau$-structures $\Amf$ and $\Bmf$, $a\in \text{dom}(\Amf)$, $X\subseteq \text{dom}(\Amf)$, $b\in \text{dom}(\Bmf)$, and $\ell<\omega$, we write:
\begin{itemize}
\item[--] $\Amf,X \leq_{\mathcal{\Lmc}}^{(\ell)} \Bmf,b$ if $X\subseteq C^{\Amf}$ implies $b\in C^{\Bmf}$, for any $\Lmc$-concept $C$ (of depth $\leq \ell$); 


\item[--] $\Amf,a \equiv_{\Lmc}^{(\ell)} \Bmf,b$ if $\Amf,a\leq^{(\ell)}_{\Lmc}\Bmf,b$ and $\Bmf,b\leq^{(\ell)}_{\Lmc}\Amf,a$;

\item[--] $\Amf\leq_{\Lmc}\Bmf$ if $\Amf\models C\sqsubseteq D$ implies $\Bmf\models C\sqsubseteq D$, for any $\Lmc$-CI $C \sqsubseteq D$;

\item[--] $\Amf\equiv_{\Lmc}\Bmf$ if $\Amf \leq_{\Lmc}\Bmf$ and $\Bmf\leq_{\Lmc}\Amf$. 
 \end{itemize}
\end{definition}
%
%
%
%
We now recall the model comparison games for indistinguishability in $\mathcal{ELU}$ that will be 
required as part of the model comparison games for \hALC{}. The $\mathcal{ELU}$ case is rather straightforward 
and folklore~\cite{LuWo09}, but 
it will remind the reader of the basics of model comparison games used
in this paper. 

\begin{definition}[\bf simulation]\label{def:sim}\em 
A relation $S\subseteq \text{dom}(\Amf)\times \text{dom}(\Bmf)$ is a \emph{simulation} between 
$\tau$-structures $\Amf$ and $\Bmf$ if the following conditions hold:
\begin{description}
\item[\hspace*{-1.5mm}(\atomf)] for any $A\in \tau$, if $(a,b)\in S$ and $a\in A^\Amf$, then $b\in A^\Bmf$,
\item[\hspace*{-1.5mm}(\forth)] for any $R\in \tau$, if $(a,b)\in S$ and $(a,a')\in R^{\Amf}$, then there is 
$b'$ with $(b,b')\in R^{\Bmf}$ and $(a',b')\in S$.
\end{description}
We write $\Amf,a\preceq_{\textit{sim}}\Bmf,b$ if there
exists a simulation $S$ between $\Amf$ and $\Bmf$ with $(a,b)\in S$.
\end{definition}
Simulations can be equivalently described as games between two players on the disjoint union 
of $\Amf$ and $\Bmf$. A position in the \emph{simulation game} is a pair of nodes 
$(a,b)\in \text{dom}(\Amf)\times \text{dom}(\Bmf)$, marked by pebbles. The players move, in turns,   the pebbles
along binary relations in $\tau$. The first player chooses an \mbox{$R\in \tau$} and moves
the pebble in $\Amf$ along $R^{\Amf}$, the second player must respond in $\Bmf$ complying with 
(\atomf) and (\forth). The second player wins a game starting at $(a,b)$ if she can
always respond to the first player's moves, \emph{ad infinitum}. One can show that the second player 
has a winning strategy iff $\Amf,a\preceq_{\textit{sim}}\Bmf,b$.

Besides the infinitary simulation games corresponding to Definition~\ref{def:sim}, we consider games with a fixed number $\ell$ of moves. 
We write $\Amf,a\preceq_{\textit{sim}}^{\ell}\Bmf,b$ if the second player has a winning strategy in the simulation game with $\ell$ rounds 
starting from $(a,b)$.
We write $\Amf,a\preceq_{\textit{sim}}^{\omega}\Bmf,b$ if $\Amf,a\preceq_{\textit{sim}}^{\ell}\Bmf,b$ for every $\ell < \omega$.

\begin{theorem}[\bf Ehrenfeucht-Fra\"{i}ss\'e game for $\mathcal{ELU}$]\label{thm:elu}
For any finite vocabulary $\tau$, pointed $\tau$-structures $\Amf,a$ and $\Bmf,b$, and any $\ell < \omega$, we have
$$
\Amf,a \leq_{\mathcal{ELU}}^{\ell} \Bmf,b \quad \text{iff} \quad 
\Amf,a \preceq_{\textit{sim}}^{\ell}\Bmf,b.
$$
Thus, $\Amf,a \leq_{\mathcal{ELU}} \Bmf,b$ iff $\Amf,a \preceq_{\textit{sim}}^{\omega}\Bmf,b$. 
If $\Amf$ and $\Bmf$ are finite, then
$$
\Amf,a \leq_{\mathcal{ELU}} \Bmf,b \quad \text{iff} \quad 
\Amf,a \preceq_{\textit{sim}}\Bmf,b.
$$
\end{theorem}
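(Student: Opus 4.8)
The plan is to prove the first (per-round) equivalence by a double induction and then read off the other two statements. The ingredient that makes everything go through is a finiteness lemma which I would establish first: for a finite vocabulary $\tau$ and each $\ell<\omega$ there are, up to logical equivalence, only finitely many $\mathcal{ELU}[\tau]$-concepts of depth $\leq\ell$. This follows by induction on $\ell$: depth-$0$ concepts are positive Boolean combinations of the finitely many concept names together with $\top$, of which there are finitely many up to equivalence; and a depth-$(\ell+1)$ concept is a positive Boolean combination of concept names and concepts $\exists R.D$ with $D$ of depth $\leq\ell$, so the induction hypothesis bounds the number of distinct generators and hence of their positive Boolean combinations. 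I fix, for each $\ell$, a finite set $\mathcal{C}_\ell$ of representatives.

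I would then prove $\Amf,a \leq_{\mathcal{ELU}}^{\ell} \Bmf,b$ iff $\Amf,a \preceq_{\textit{sim}}^{\ell}\Bmf,b$ by induction on $\ell$. In the base case $\ell=0$, depth-$0$ concepts are exactly positive Boolean combinations of concept names, so $\leq_{\mathcal{ELU}}^{0}$ coincides with the atomic condition (\atomf) of Definition~\ref{def:sim} holding at $(a,b)$, which is what winning the $0$-round game means. For the ($\Leftarrow$) part of the inductive step, I assume the second player wins the $(\ell+1)$-round game and argue by structural induction on a concept $C$ of depth $\leq\ell+1$ that $a\in C^{\Amf}$ implies $b\in C^{\Bmf}$: the atomic case uses (\atomf), the Boolean cases use monotonicity of $\sqcap$ and $\sqcup$, and for $C=\exists R.D$ the winning strategy supplies a (\forth) response at which the induction hypothesis on $\ell$ applies, since $D$ has depth $\leq\ell$.

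The crux is the ($\Rightarrow$) part of the inductive step: if the first player wins the $(\ell+1)$-round game, I must build a separating concept. If (\atomf) already fails at $(a,b)$, a single concept name separates. Otherwise the first player has a winning move along some $R$ to some $a'$ such that for every admissible response $b'$ (that is, every $R^{\Bmf}$-successor of $b$) the first player wins the remaining $\ell$-round game at $(a',b')$; by the induction hypothesis on $\ell$ each such $b'$ is separated from $a'$ by a depth-$\leq\ell$ concept, which I may take from $\mathcal{C}_\ell$. Here lies the main obstacle: the set of responses $b'$ may be infinite, so I cannot naively conjoin one separating concept per response. This is exactly where the finiteness lemma is used: only finitely many distinct representatives from $\mathcal{C}_\ell$ occur, so their conjunction $D$ is a genuine $\mathcal{ELU}$-concept with $a'\in D^{\Amf}$ and $b'\notin D^{\Bmf}$ for every $R^{\Bmf}$-successor $b'$ of $b$ (with $D=\top$ covering the case where $b$ has no $R$-successor). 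Then $C=\exists R.D$ has depth $\leq\ell+1$ and witnesses $\Amf,a \not\leq_{\mathcal{ELU}}^{\ell+1}\Bmf,b$.

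The second equivalence is immediate, since every $\mathcal{ELU}$-concept has some finite depth, so $\leq_{\mathcal{ELU}}$ is the intersection of all $\leq_{\mathcal{ELU}}^{\ell}$, matching $\preceq_{\textit{sim}}^{\omega}$ by definition. For the final, finitary equivalence, the implication $\preceq_{\textit{sim}} \Rightarrow \preceq_{\textit{sim}}^{\omega}$ is trivial, as a simulation yields a winning strategy for any round count. For the converse over finite structures I would show that $S=\{(a',b') \mid \Amf,a' \preceq_{\textit{sim}}^{\omega}\Bmf,b'\}$ is a simulation: (\atomf) holds because it is already required for the $0$-round game, and (\forth) follows by a pigeonhole argument — given $(a',a'')\in R^{\Amf}$, for each $\ell$ there is an $R^{\Bmf}$-successor of $b'$ that $\ell$-round simulates $a''$, and as $\Bmf$ has finitely many such successors (image-finiteness suffices) one of them works for infinitely many $\ell$, hence, by downward monotonicity of $\preceq_{\textit{sim}}^{\ell}$ in $\ell$, for all $\ell$, giving $(a'',b'')\in S$.
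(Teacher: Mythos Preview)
Your proof is correct and follows the standard approach. Note, however, that the paper does not actually supply a proof of this theorem: it is presented as ``rather straightforward and folklore'' with a reference to the literature, and no argument is given. The paper does implicitly rely on the same key ingredient you isolate---the finiteness, up to equivalence, of $\mathcal{ELU}[\tau]$-concepts of bounded depth over a finite vocabulary---when it invokes the characteristic formulas $\lambda_{\Amf,\ell,a}$ in the proof of Lemma~\ref{lem:diagram}, so your finiteness lemma is exactly the expected tool. Your treatment of the $(\Rightarrow)$ direction via a finite conjunction of separating representatives, and of the finite-model case via pigeonhole on the finitely many $R^{\Bmf}$-successors together with downward monotonicity of $\preceq_{\textit{sim}}^{\ell}$ in $\ell$, are both the canonical arguments and are sound as written.
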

%


In some proofs, we shall also be using bisimulations. Recall that a relation $S$ between $\Amf$ and $\Bmf$ is a \emph{bisimulation} if $S$ is a simulation between $\Amf$ and $\Bmf$, and its inverse is a simulation between $\Bmf$ and $\Amf$. The notion of \emph{$\ell$-bisimilarity} is defined by restricting the corresponding \emph{bisimulation game} to $\ell$ moves. 
This notion characterizes indistinguishability in $\mathcal{ALC}$:
pointed structures $\Amf,a$ and $\Bmf,b$ are $\ell$-bisimilar iff
$\Amf,a \equiv_{\mathcal{ALC}}^{\ell}
\Bmf,b$~\cite{DBLP:books/daglib/p/Gradel014}.

%



\section{Simulations for \hALC{}}

We now define a new model comparison game, the \emph{Horn simulation
game}, and prove that it provides an Ehrenfeucht-Fra\"{i}ss\'e
characterization of the relation $\Amf,a \leq_{\hALC}\Bmf,b$. 
As \hALC{} is not closed under negation, Horn simulations will be
non-symmetric. Moreover, since \hALC{} is not closed under
disjunction, Horn simulations relate non-empty \emph{sets} of elements
from $\text{dom}(\Amf)$ with elements from $\text{dom}(\Bmf)$.  hus,
rather than characterizing $\Amf,a \leq_{\hALC} \Bmf,b$ only, we
actually characterize the relation $\Amf,X \leq_{\hALC} \Bmf,b$.  To
define the relation between subsets of $\text{dom}(\Amf)$ along which
the pebble is moved in the Horn simulation game, we set $X
R^{\uparrow}Y$, for a binary relation $R$ and sets $X,Y$, if for any
$a\in X$, there exists $b\in Y$ with $(a,b)\in R$, and we  set $X
R^{\downarrow}Y$ if, for any $b\in Y$, there exists $a\in X$ with
$(a,b)\in R$.

 \begin{definition}[\bf Horn simulation]\label{def:hornsim}\em
 A   \emph{Horn simulation} between $\tau$-structures $\Amf$ and $\Bmf$ is a relation $Z \subseteq
  \mathcal{P}(\text{dom}(\Amf))\times \text{dom}(\Bmf)$ such that $(X,b) \in Z$ implies  $X\not=\emptyset$
  and the following hold:
\begin{description}
\item[\hspace*{-1.5mm}(\atomh)] for any $A\in \tau$, if $(X,b) \in Z$ and $X\subseteq A^{\Amf}$, then $b\in A^{\Bmf}$;
 
\item[\hspace*{-1.5mm}(\forthh)] for any $R\in \tau$, if $(X,b) \in Z$ and $X R^{\Amf\uparrow} Y$, then there exist $Y' \subseteq Y$ and $b'\in \text{dom}(\Bmf)$ with $(b,b')\in R^{\Bmf}$ and $(Y',b') \in Z$;
       
\item[\hspace*{-1.5mm}(\backh)] for any $R\in \tau$, if $(X,b) \in Z$ and $(b,b')\in R^{\Bmf}$,
    then there is $Y \subseteq \text{dom}(\Amf)$ with $X R^{\Amf\downarrow} Y$ and $(Y,b') \in Z$;
   
\item[\hspace*{-1.5mm}(\emph{sim})] if $(X,b) \in Z$, then $\Bmf,b\preceq_{\textit{sim}}\Amf,a$ for every $a\in X$.
\end{description}
We write $\Amf,X \preceq_{\textit{horn}}\Bmf,b$ if there exists a Horn simulation $Z$ between $\Amf$ and $\Bmf$ such that $(X,b) \in Z$.
\end{definition}
Condition (\atomh) ensures that concept names are preserved under Horn simulations, and conditions (\forthh), (\backh), and (\emph{sim}) ensure, recursively, the preservation of concepts of the form $\exists R.H$, $\forall R.H$ and $L\rightarrow H$, respectively. Note that (\emph{sim}) implies that the converse of (\atomh) holds as well, and so $\Amf,X \preceq_{\textit{horn}}\Bmf,b$ entails $X \subseteq A^{\Amf}$ iff $b\in A^{\Bmf}$, for all $A\in \tau$, which reflects that $A\rightarrow \bot$ is a \hALC-concept.

As we intend Horn simulations to characterize \hALC-concepts, which are $\mathcal{ALC}$-concepts, Horn simulations should subsume bisimulations. The following lemma states that this is indeed the case. It also shows that having \emph{sets} as the first component of positions in the Horn simulation games is the defining difference between Horn simulations and bisimulations. The (straightforward) proof is instructive to understand Horn simulations.
\begin{lemma}\label{lem:hornbisim}
$(i)$  If $Z$ is a bisimulation between $\tau$-structures $\Amf$ and $\Bmf$, then 
$\{(\{a\},b) \mid (a,b)\in Z\}$ is a Horn simulation between $\Amf$ and $\Bmf$.
$(ii)$ Conversely, if $Z$ is a Horn simulation with a singleton $X$ in every $(X,b)\in Z$, then
$\{(a,b) \mid (\{a\},b) \in Z\}$ is a bisimulation between $\Amf$ and $\Bmf$. 
\end{lemma}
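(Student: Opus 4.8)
The plan is to verify both directions by directly checking the defining clauses, using the fact that each bisimulation condition on node pairs matches a Horn-simulation condition once all the sets involved are forced to be singletons. The only structural facts I need are the non-emptiness requirement in the definition of Horn simulation and, for part $(ii)$, the assumption that every first component is a singleton; together these collapse the set-valued quantifiers $R^{\Amf\uparrow}$, $R^{\Amf\downarrow}$ and the subset choices $Y'\subseteq Y$ to their element-valued analogues, after which each clause reduces to its bisimulation counterpart.

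For $(i)$ I would let $Z$ be a bisimulation and put $Z' = \{(\{a\},b)\mid (a,b)\in Z\}$. Every first component is a non-empty singleton, so I only have to check the four clauses against an arbitrary $(\{a\},b)\in Z'$ with $(a,b)\in Z$. Clause (\atomh) is just (\atomf) for $Z$, since $\{a\}\subseteq A^\Amf$ means $a\in A^\Amf$. For (\forthh), the hypothesis $\{a\}R^{\Amf\uparrow}Y$ says precisely that some $a'\in Y$ has $(a,a')\in R^\Amf$, so the (\forth) clause of $Z$ yields $b'$ with $(b,b')\in R^\Bmf$ and $(a',b')\in Z$, and $Y'=\{a'\}\subseteq Y$ witnesses the claim. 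Clause (\backh) is handled analogously, using the (\forth) clause of the inverse simulation $Z^{-1}$ to obtain a successor $a'$ in $\Amf$ and returning $Y=\{a'\}$. Finally, (\emph{sim}) is immediate: $Z^{-1}$ is a simulation between $\Bmf$ and $\Amf$ containing $(b,a)$, so $\Bmf,b\preceq_{\textit{sim}}\Amf,a$.

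For $(ii)$ I would let $Z'$ be a Horn simulation whose first components are all singletons and put $Z=\{(a,b)\mid (\{a\},b)\in Z'\}$; I must show that $Z$ and $Z^{-1}$ are both simulations. That $Z$ is a simulation is direct: (\atomf) is (\atomh) read on singletons, and for (\forth) I feed $(a,a')\in R^\Amf$ into (\forthh) with $Y=\{a'\}$, the returned $Y'$ being non-empty and contained in $\{a'\}$, hence equal to it, so $(a',b')\in Z$. For $Z^{-1}$, atom reflection follows from the remark after the definition of Horn simulations, namely that (\emph{sim}) forces $a\in A^\Amf$ iff $b\in A^\Bmf$; thus $b\in A^\Bmf$ gives $a\in A^\Amf$. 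The (\forth) clause for $Z^{-1}$ is supplied by (\backh): given $(b,b')\in R^\Bmf$ it returns $Y$ with $\{a\}R^{\Amf\downarrow}Y$ and $(Y,b')\in Z'$; the singleton hypothesis makes $Y=\{a'\}$, and $\{a\}R^{\Amf\downarrow}\{a'\}$ unwinds to $(a,a')\in R^\Amf$, whence $(a',b')\in Z$.

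The one point deserving care — though the verification is otherwise routine — is that the backward role moves for $Z^{-1}$ must be drawn from (\backh) and not from (\emph{sim}): the simulation guaranteed by (\emph{sim}) need not coincide with $Z$, so it reflects atoms but cannot be used to keep a chosen successor pair inside $Z$, whereas (\backh) stays within the Horn simulation by construction. Once this is observed, and the collapse of the set quantifiers via the singleton and non-emptiness conditions is recorded, every clause matches its bisimulation analogue and no further obstacle remains.
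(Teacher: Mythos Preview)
Your proposal is correct and follows precisely the direct verification the paper has in mind; the paper itself omits the proof as ``straightforward'' and ``instructive,'' so your clause-by-clause check, including the observation that \itbackh rather than \itsimb supplies the backward successor inside $Z$, is exactly the intended argument.
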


As the FO-translations of \hALC-concepts are Horn
formulas, and the Horn formulas are (almost) characterized as the fragment of FO preserved under products, 
one could expect products to be closely related to Horn simulations. We now show this to be the case. Consider a family of $\tau$-structures $\Amf_{i}$, $i\in I$, and let $\Amf$ be the disjoint union of the $\Amf_{i}$.  Define a relation $Z$ between $\mathcal{P}(\text{dom}(\Amf))$ and $\prod_{i\in I}\Amf_{i}$ by setting $(Y,f)\in Z$ if $Y\subseteq \text{dom}(\Amf)$, $f\in \text{dom}(\prod_{i\in I}\Amf_{i})$, and $\text{dom}(\Amf_{i})\cap Y=\{f(i)\}$ for all $i\in I$. 
The proof of the following is again straightforward and instructive. 
\begin{lemma}\label{lem:product}
$Z$ is a Horn simulation between $\Amf$ and $\prod_{i\in I}\Amf_{i}$.
\end{lemma}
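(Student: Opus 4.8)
The plan is to verify directly that the relation $Z$ satisfies the four conditions of Definition~\ref{def:hornsim}. The key observation driving every case is that, by construction, $(Y,f)\in Z$ holds precisely when $Y$ is the \emph{graph} of $f$, i.e.\ $Y=\{(f(i),i)\mid i\in I\}$: the defining requirement $\text{dom}(\Amf_i)\cap Y=\{f(i)\}$ (reading $\text{dom}(\Amf_i)$ as its copy $\text{dom}(\Amf_i)\times\{i\}$ inside $\text{dom}(\Amf)$) says exactly that $Y$ contains one node per component, namely $f(i)$. In particular $Y\neq\emptyset$, so the non-emptiness requirement on positions holds, and membership in $Z$ amounts to a bijection between functions $f$ and transversals $Y$. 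A second, repeatedly used fact is that in a disjoint union no $R^{\Amf}$-edge crosses components: every $R^{\Amf}$-edge out of $(c,i)$ lands on some $(c',i)$ in the \emph{same} copy, with $(c,c')\in R^{\Amf_i}$.

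For (\atomh), if $Y\subseteq A^{\Amf}$ then $(f(i),i)\in A^{\Amf}$ for each $i$, i.e.\ $f(i)\in A^{\Amf_i}$ for all $i$, which is exactly $f\in A^{\prod_{i\in I}\Amf_i}$. I would then treat the two ``modal'' conditions together, as mirror images of the product/disjoint-union bookkeeping. For (\forthh), from $Y R^{\Amf\uparrow} W$ I pick, for each $i$, a witness $(c_i,i)\in W$ with $(f(i),c_i)\in R^{\Amf_i}$ (possible by the no-crossing fact); setting $g(i)=c_i$ gives $(f,g)\in R^{\prod_{i\in I}\Amf_i}$, and $W'=\{(c_i,i)\mid i\in I\}\subseteq W$ is the graph of $g$, so $(W',g)\in Z$. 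For (\backh), from $(f,g)\in R^{\prod_{i\in I}\Amf_i}$ I take $W$ to be the graph of $g$; then $(W,g)\in Z$, and for every $(g(i),i)\in W$ the node $(f(i),i)\in Y$ witnesses $Y R^{\Amf\downarrow} W$ because $(f(i),g(i))\in R^{\Amf_i}$.

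The one condition that requires building an auxiliary object rather than pure bookkeeping is \itsimb, and this is where I expect the only real (if still mild) obstacle. Given $(Y,f)\in Z$ and any $a\in Y$, say $a=(f(i_0),i_0)$, I must exhibit an \emph{ordinary} simulation witnessing $\prod_{i\in I}\Amf_i,f\preceq_{\textit{sim}}\Amf,a$. The natural candidate is the projection relation $S=\{(h,(h(i_0),i_0))\mid h\in\text{dom}(\prod_{i\in I}\Amf_i)\}$, and I would check (\atomf) and (\forth) for it. Condition (\atomf) is immediate from the definition of $A^{\prod_{i\in I}\Amf_i}$, while (\forth) holds because $(h,h')\in R^{\prod_{i\in I}\Amf_i}$ forces $(h(i_0),h'(i_0))\in R^{\Amf_{i_0}}$, hence an $R^{\Amf}$-edge in the $i_0$-th copy whose target $(h'(i_0),i_0)$ is again $S$-related to $h'$. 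Since $(f,a)\in S$, this yields the required simulation. Assembling the four cases then shows that $Z$ is a Horn simulation between $\Amf$ and $\prod_{i\in I}\Amf_i$.
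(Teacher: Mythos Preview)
Your proposal is correct and is precisely the direct verification the paper has in mind: the paper does not spell out a proof, calling it ``straightforward and instructive,'' and your case-by-case check of \itatomich, \itforthh, \itbackh, and \itsimb via the graph-of-$f$ description of $Y$ together with the projection simulation for \itsimb is exactly the intended argument.
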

The following examples illustrate that Horn simulations can be seen as
a proper generalization of both bisimulations and products.

\begin{tikzpicture}
  [ every circle node/.style={draw, fill=black, inner sep=0pt,
    minimum size=.1cm},%
  xscale= 1,%
  yscale= 1,%
  ]

  \node (a) at (0,0) {$a$};
  \draw (a) ++(-90:2cm) ++(-180:1cm) node (b) {$b$};
  \draw (a) ++(-90:2cm) ++(0:1cm) node (c) {$c$};
  \node (d) [node distance=2cm, right of=a] {$d$};
  \node (e) [node distance=2cm, below of=d] {$e$};

  \node (aLabel) [above of=a, node distance=.4cm] {$\neg B$};
  \node (dLabel) [above of=d, node distance=.4cm] {$B$};
  \node [below of=b, node distance=.4cm] {$\neg A$};
  \node [below of=c, node distance=.4cm] {$A$};
  \node [below of=e, node distance=.4cm] {$A$};

  \draw[->, thick] (a) to node [left] {$R$} (b);
  \draw[->, thick] (a) to node [left] {$R$} (c);
  \draw[->, thick] (d) to node [left] {$R$} (e);
  
  \draw (a) ++(90:1cm) ++(0:1cm) node {$\Amf_0$};

  \node (a1) [node distance=6cm, right of=a] {$a'$};
  \node (b1) [node distance=2cm, below of=a1] {$b'$};;

  \node (a1Label) [above of=a1, node distance=.4cm] {$\neg B$};
  \node (b1Label) [below of=b1, node distance=.4cm] {$A$};

  \draw[->, thick] (a1) to node [left] {$R$} (b1);

  \node [node distance=1cm, above of=a1] {$\Bmf_{0}$};

  \node (fit) [ellipse, minimum width=2.75cm, minimum height=.5cm,
  draw=blue!70!white, thick] at (1,0) {};
  
  \draw [<->, thick, dashed, draw=blue!70!white] (fit) to
  node[above] {\color{blue!70!white}$Z$} (a1);

  \draw [<->, thick, dashed, draw=blue!70!white] (e) to
  node[above] {\color{blue!70!white}$Z$} (b1);

\end{tikzpicture}

\begin{example}\label{exam:1}\em
(\emph{i}) Let $\Amf_{0}$ and $\Bmf_{0}$ be the structures shown above.
$Z= \{(\{a,d\},a'),(\{e\},b')\}$ is a Horn simulation between $\Amf_{0}$ and $\Bmf_{0}$.
For the concept $C_{\nabla}$ from Theorem~\ref{thm:11}, we have $\{a,d\}\subseteq C_{\nabla}^{\Amf_{0}}$ 
but $a'\not\in C_{\nabla}^{\Bmf_{0}}$. Thus, by Lemma~\ref{lem:diagram} below, $C_{\nabla}$ is not equivalent 
to any \hALC{}-concept.


(\emph{ii}) Let $\Amf_{1}$ and $\Bmf_{1}$ be the structures below. 
Then the relation 
$$
Z= \{(\{a_1,a_2\},b),(\{c_1\},e),(\{c_2\},e),(\{d_1\},f),(\{d_2\},f)\} 
$$
is a surjective Horn simulation between $\Amf_{1}$ and $\Bmf_{1}$.
Moreover, we have $\Amf_{1}\models \Tmc_{\textit{horn}}$ but
$\Bmf_{1}\not\models \Tmc_{\textit{horn}}$ for the TBox $\Tmc_{\textit{horn}}$ from Theorem~\ref{thm:11}.
By Theorem~\ref{thm:ehrenhorn-tbox} below, $\Tmc_{\textit{horn}}$ is not equivalent to any \hALC{}-TBox.

\begin{tikzpicture}
  [ every circle node/.style={draw, fill=black, inner sep=0pt,
    minimum size=.1cm},%
  xscale= 1,%
  yscale= 1,%
  ]

  \node (a1)  at (0,0) {$a_1$};
  \draw (a1) ++(-135:2cm) node (c1) {$c_1$};
  \draw (a1) ++(-90:2.7cm) node (d1) {$d_1$};

  \node [node distance=.5cm, above of=a1] {$E,A_1$};
  \node [node distance=.5cm, below of=c1] {$B_1$};
  \node [node distance=.5cm, below of=d1] {$B_2$};

  \draw [->, thick] (a1) to node[left] {$R$} (c1);
  \draw [->, thick] (a1) to node[left, near end] {$R$} (d1);

  \node (a2) [node distance=2.5cm, right of=a1] {$a_2$};
  \draw (a2) ++(-135:2cm) node (c2) {$c_2$};
  \draw (a2) ++(-90:2.7cm) node (d2) {$d_2$};

  \node [node distance=.5cm, above of=a2] {$E,A_2$};
  \node [node distance=.5cm, below of=c2] {$B_1$};
  \node [node distance=.5cm, below of=d2] {$B_2$};

  \draw [->, thick] (a2) to node[left] {$R$} (c2);
  \draw [->, thick] (a2) to node[left, near end] {$R$} (d2);

  \node at (1.25, 1) {$\Amf_1$};

  \node (b) [node distance=.3\linewidth, right of=a2] {$b$};
  \draw (b) ++(-135:2cm) node (e) {$e$};
  \draw (b) ++(-90:2.7cm) node (f) {$f$};

  \draw [->, thick] (b) to node[left] {$R$} (e);
  \draw [->, thick] (b) to node[left,near end] {$R$} (f);

  \node [node distance=.5cm, above of=b] {$E$};
  \node [node distance=.5cm, below of=e] {$B_1$};
  \node [node distance=.5cm, below of=f] {$B_2$};

  \node [above of=b, node distance=1cm] {$\Bmf_1$};

  \node (fit1) [ellipse, minimum width=3.5cm, minimum height=.5cm,
  draw=blue!70!white, thick] at (1.25,0) {};

  
  
  \draw [<->, thick, dashed, draw=blue!70!white] (fit1) -- (b);
  


  \draw [<->, thick, dashed, draw=blue!70!white] (c1.-35) to (e.-145);

  \draw [<->, thick, dashed, draw=blue!70!white] (c2.5) to (e.175);

  \draw [<->, thick, dashed, draw=blue!70!white] (d1.-35) to (f.-145);

  \draw [<->, thick, dashed, draw=blue!70!white] (d2.5) to (f.175);

\end{tikzpicture}
\end{example}

Given the notion of Horn simulation, we next introduce the \emph{Horn simulation game} between $\tau$-structures $\Amf$ and $\Bmf$
in the expected way. In the infinite case, it consists of the following nested games. Using simulation games, one can check
whether condition (\emph{sim}) holds for a pair $(X,b)$. Then, in the main Horn simulation game, the
second player must respond with pairs $(Y',b')$ for (\forthh) and sets $Y$ for (\backh) 
such that the new position satisfies ($\emph{sim}$) and the remaining conditions of Definition~\ref{def:hornsim}. 
In the Horn simulation game with $\ell$ rounds, some care must be taken: as we want to characterize the 
depth $\ell$ fragment of \hALC, 
we have to decompose condition (\emph{sim}). 
Thus, define pairs $(X,b)$ satisfying (\emph{sim}$^{\ell}$)
as those for which player 2 has a winning strategy for the $\ell$-round simulation game for all 
$(b,a)$ with $a\in X$.
Then, inductively, player 2 has a winning strategy in the $(\ell+1)$-round Horn simulation game at 
position $(X,b)$ if $(X,b)$ satisfies (\emph{sim}$^{\ell+1}$) and player 2 can
react to player 1's first move by choosing $Y$ in such a way that condition (\emph{sim}$^{\ell}$) 
holds for the resulting position 
and she has a winning strategy in the resulting $\ell$-round game. A formal definition is given in the appendix. 
We write $\Amf,X \preceq_{\textit{horn}}^{\ell}\Bmf,b$ if player 2 has a winning strategy in the $\ell$-round game.
\begin{theorem}[\bf Ehrenfeucht-Fra\"{i}ss\'e game for \hALC]\label{thm:ehrenhorn}
For any finite vocabulary $\tau$, pointed $\tau$-structures $\Amf,a$ and $\Bmf,b$,
and any $\ell < \omega$, we have
$$
\Amf,a \leq_{\textit{horn}\mathcal{ALC}}^{\ell} \Bmf,b \quad \text{iff} \quad 
\Amf,a \preceq_{\textit{horn}}^{\ell}\Bmf,b.
$$
Thus, $\Amf,a \leq_{\textit{horn}\mathcal{ALC}} \Bmf,b$ iff $\Amf,a \preceq_{\textit{horn}}^{\omega}\Bmf,b$. 
If $\Amf$ and $\Bmf$ are finite, then
$$
\Amf,a \leq_{\textit{horn}\mathcal{ALC}} \Bmf,b \quad \text{iff} \quad \Amf,a \preceq_{\textit{horn}}\Bmf,b.
$$
\end{theorem}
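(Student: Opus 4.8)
The plan is to prove the $\ell$-round equivalence first and then read off the two consequences. The soundness direction ($\preceq^{\ell}_{\textit{horn}}\Rightarrow\leq^{\ell}_{\hALC}$) I will prove in the general set form $\Amf,X\preceq^{\ell}_{\textit{horn}}\Bmf,b\Rightarrow\Amf,X\leq^{\ell}_{\hALC}\Bmf,b$; the completeness direction I will prove only for singletons $X=\{a\}$ (it genuinely fails for arbitrary sets, since e.g.\ a two-element set $X$ on which only $\top$ holds is $\leq^{0}_{\hALC}$-below every node but need not be $\preceq^{0}_{\textit{horn}}$-below a node satisfying some $A$). Singletons suffice at the top level while the game nonetheless creates set positions because the right invariant for player~2 is a set-level strengthening of $\leq^{k}_{\hALC}$ that does hold at a singleton start. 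Throughout I will use Theorem~\ref{thm:elu} to read the subgame condition (\emph{sim}$^{k}$) at a position $(X',b')$ as: every $\mathcal{ELU}$-concept of depth $\le k$ true at $b'$ is true at every $a'\in X'$; and I will use that, for fixed $k$ and finite $\tau$, there are only finitely many \hALC-concepts of depth $\le k$ up to equivalence, so that each pointed structure $\Amf,X'$ has a \emph{characteristic concept} $\chi^{k}_{\Amf,X'}$, the conjunction of all depth-$\le k$ \hALC-concepts true throughout $X'$, for which $b'\in(\chi^{k}_{\Amf,X'})^{\Bmf}$ iff $\Amf,X'\leq^{k}_{\hALC}\Bmf,b'$ (this is the diagram machinery behind Lemma~\ref{lem:diagram}).

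For soundness I argue by a double induction, the outer one on $\ell$ and the inner one on the structure of the \hALC-concept $H$, noting that $\sqcap$ and $L\to H'$ leave the position and the round counter unchanged while $\exists R$ and $\forall R$ consume one round. Assuming player~2 wins the $k$-round game at $(X',b')$ and $X'\subseteq H^{\Amf}$, the atomic and $\top,\bot,\sqcap$ cases are immediate from (\atomh) and $X'\neq\emptyset$. For $\exists R.H'$ I collect the successors of $X'$ lying in $(H')^{\Amf}$ into a set $Y$ with $X'R^{\Amf\uparrow}Y$, apply (\forthh) to obtain $Y'\subseteq Y$ and a successor $b''$ of $b'$, and finish by the outer hypothesis at $(Y',b'')$. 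For $\forall R.H'$ I use (\backh) against an arbitrary successor of $b'$, observing that $X'R^{\Amf\downarrow}Y$ forces $Y\subseteq(H')^{\Amf}$. For $L\to H'$, if $b'\in L^{\Bmf}$ then (\emph{sim}$^{k}$) with Theorem~\ref{thm:elu} gives $X'\subseteq L^{\Amf}$, hence $X'\subseteq(H')^{\Amf}$, and the inner hypothesis applies at the same position.

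The converse is the heart of the argument. Player~2 plays so as to preserve the invariant that the current position $(X',b')$ with $k$ rounds remaining satisfies both (I1) $\Amf,X'\leq^{k}_{\hALC}\Bmf,b'$ and (I2) $\Bmf,b'\leq^{k}_{\mathcal{ELU}}\Amf,a'$ for every $a'\in X'$; note (I2) is precisely (\emph{sim}$^{k}$). At the singleton start $(\{a\},b)$, (I1) is the hypothesis and (I2) follows from it by applying (I1) to the concepts $L\to\bot$. The delicate point is that (I2) can fail for a superset of a good position, so when answering player~1 I must shrink the set of successors to exactly those that back-simulate the chosen node. For a (\forthh) move with $X'R^{\Amf\uparrow}Y$ I apply (I1) to $\exists R.\chi^{k-1}_{\Amf,Y}$ to get a successor $c$ of $b'$ with $\Amf,Y\leq^{k-1}_{\hALC}\Bmf,c$, and set $Y'=\{y\in Y:\Bmf,c\leq^{k-1}_{\mathcal{ELU}}\Amf,y\}$, which secures (I2); for a (\backh) move against a successor $c$ of $b'$ I take $Y'$ to be the back-simulating $R$-successors of $X'$. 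The obstacle is then to re-establish (I1) for the shrunken $Y'$: for each discarded successor $z$, Theorem~\ref{thm:elu} yields an $\mathcal{ELU}$-concept $L_z$ with $c\in L_z^{\Bmf}$ and $z\notin L_z^{\Amf}$, and I combine them into $L^{*}=\bigsqcap_z L_z$. If some depth-$\le(k-1)$ concept $H$ had $Y'\subseteq H^{\Amf}$ but $c\notin H^{\Bmf}$, then $L^{*}\to H$ (forth case) or $\forall R.(L^{*}\to H)$ (back case) would be true throughout $Y$ resp.\ $X'$ yet false at $c$ resp.\ $b'$, contradicting (I1); taking $H=\bot$ shows $Y'\neq\emptyset$. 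Non-emptiness in the back case and the existence of a back-simulating successor come from (I2) at $(X',b')$ via the (\forth) clause of the $\mathcal{ELU}$ subgame. Running this for $\ell$ rounds yields $\Amf,a\preceq^{\ell}_{\textit{horn}}\Bmf,b$. I expect this set-shrinking step, and its verification via the $L^{*}$ construction, to be the main obstacle.

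The two remaining claims then follow routinely. Since every \hALC-concept has finite depth, $\Amf,a\leq_{\hALC}\Bmf,b$ is equivalent to $\Amf,a\leq^{\ell}_{\hALC}\Bmf,b$ for all $\ell$, which by the $\ell$-round result is $\Amf,a\preceq^{\omega}_{\textit{horn}}\Bmf,b$. For finite $\Amf,\Bmf$, the relations $\preceq^{\ell}_{\textit{horn}}$ form a decreasing chain on the finite set $\mathcal{P}(\text{dom}(\Amf))\times\text{dom}(\Bmf)$ and hence stabilize at some $n$ with $\preceq^{n}_{\textit{horn}}=\preceq^{n+1}_{\textit{horn}}=\preceq^{\omega}_{\textit{horn}}$; one then checks directly that $Z=\{(X,b):\Amf,X\preceq^{\omega}_{\textit{horn}}\Bmf,b\}$ satisfies the four clauses of Definition~\ref{def:hornsim}, using stabilization for (\forthh) and (\backh) and the finite-model part of Theorem~\ref{thm:elu} to convert the information (\emph{sim}$^{\ell}$)-for-all-$\ell$ into the infinitary condition (\emph{sim}).
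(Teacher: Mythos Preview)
Your proposal is correct and takes essentially the same approach as the paper: your invariant (I1)+(I2) is exactly the paper's relation $Z_\ell$ (conditions $b\in\rho_{\Amf,\ell,X}^{\Bmf}$ and $X\subseteq\lambda_{\Bmf,\ell,b}^{\Amf}$), and your $L^*$-based shrinking step is the paper's $Y'=Y\cap\lambda_{\Bmf,\ell,b'}^{\Amf}$, with $L^*$ playing the role of the characteristic $\mathcal{ELU}$-concept $\lambda_{\Bmf,\ell,b'}$ (so make explicit that $L^*$ is a finite conjunction, which follows from the same counting you cite for $\chi^k$). The only cosmetic difference is that in the back case the paper hand-picks one witness successor $a_C'$ per failing \hALC-concept $C$ rather than taking all back-simulating successors as you do; both choices yield a valid response for player~2.
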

As discussed above, to prove Theorem~\ref{thm:ehrenhorn}, we actually show the following
stronger statement:
\begin{lemma}\label{lem:diagram}
For any finite vocabulary $\tau$, pointed $\tau$-structures $\Amf,X$ and $\Bmf,b$,
and any $\ell < \omega$,
$$
\Amf,X \leq_{\textit{horn}\mathcal{ALC}}^{\ell} \Bmf,b \quad \text{iff} \quad 
\text{$\exists X_{0}\subseteq X\ \Amf,X_{0} \preceq_{\textit{horn}}^{\ell}\Bmf,b$}.
$$
If $\Amf$ and $\Bmf$ are finite, then
$$
\Amf,X \leq_{\textit{horn}\mathcal{ALC}} \Bmf,b \quad \text{iff} \quad 
\text{$\exists X_{0}\subseteq X\ \Amf,X_{0} \preceq_{\textit{horn}}\Bmf,b$}.
$$
\end{lemma}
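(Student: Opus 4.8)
The plan is to establish the $\ell$-round equivalence first and then derive the unbounded-depth statement for finite structures by a stabilisation argument. A fact used repeatedly is that, over a finite vocabulary $\tau$ and for each fixed depth, there are only finitely many $\mathcal{ELU}[\tau]$- and \hALC$[\tau]$-concepts up to logical equivalence; this is what lets me form the finite conjunctions below.

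For the implication from games to concepts it suffices to show that a winning strategy for player~2 in the $\ell$-round Horn simulation game at $(X_0,b)$ yields $\Amf,X_0\leq_{\hALC}^{\ell}\Bmf,b$; the stated direction then follows because $X_0\subseteq X$ and $X\subseteq C^\Amf$ force $X_0\subseteq C^\Amf$. I would prove this by an outer induction on $\ell$ and, for fixed $\ell$, an inner induction on the structure of a depth-$\le\ell$ \hALC-concept $C$. The atomic, $\top$, $\bot$ and $\sqcap$ cases are immediate from \itatomich and the inner hypothesis. For $C=L\to H$ I combine \itsimb with Theorem~\ref{thm:elu}: if $b\in L^\Bmf$ then $\Bmf,b\preceq_{\textit{sim}}^{\ell}\Amf,a$ forces $a\in L^\Amf$ for every $a\in X_0$, so $X_0\subseteq H^\Amf$ and the inner hypothesis applies to $H$. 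For $C=\exists R.H$ and $C=\forall R.H$ I let player~1 play the corresponding forth/back move and appeal to the outer hypothesis at round $\ell-1$.

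For the converse I set $X_0=\{a\in X\mid \Bmf,b\preceq_{\textit{sim}}^{\ell}\Amf,a\}$ and describe a strategy for player~2 that maintains, at every reached position $(X',b')$ with $k$ rounds left, the invariant that (\emph{sim}$^{k}$) holds and $\Amf,X'\leq_{\hALC}^{k}\Bmf,b'$ (the latter yielding \itatomich for free). Writing $L^*$ for the conjunction of all depth-$\le\ell$ $\mathcal{ELU}$-concepts true at $b$, the \hALC-concepts $L^*\to\bot$ and $L^*\to C$ convert $\Amf,X\leq_{\hALC}^{\ell}\Bmf,b$ into $X_0\neq\emptyset$ and $\Amf,X_0\leq_{\hALC}^{\ell}\Bmf,b$, so the invariant holds initially. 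On a back move of player~1 along $(b',b'')\in R^\Bmf$, player~2 answers with $Y=\{a'\mid \exists a\in X',\,(a,a')\in R^\Amf,\,\Bmf,b''\preceq_{\textit{sim}}^{k-1}\Amf,a'\}$; with $M^*$ the conjunction of all depth-$\le(k-1)$ $\mathcal{ELU}$-concepts true at $b''$, the concept $\forall R.(M^*\to\bot)$ forces $Y\neq\emptyset$ and $\forall R.(M^*\to H)$ propagates the invariant, since any $H$ witnessing $\Amf,Y\not\leq_{\hALC}^{k-1}\Bmf,b''$ would make $\forall R.(M^*\to H)$ separate $X'$ from $b'$. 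On a forth move player~2 must commit to a single successor $b''$ of $b'$ together with some $Y'\subseteq Y$; I argue by contradiction that a good choice exists, attaching to each bad successor $b''$ a separating \hALC-concept $H_{b''}$ of shape $N^{b''}\to G_{b''}$ (or $N^{b''}\to\bot$), and then deriving from $\exists R.\bigsqcap_{b''}H_{b''}$ --- a finite conjunction by the finiteness fact --- a contradiction with the invariant.

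The main obstacle is precisely this forth move: because \hALC{} is not closed under disjunction, player~2 cannot answer with a ``union'' of successors, so the separating concepts attached to the different successors of $b'$ must be merged by conjunction underneath a single $\exists R$, which is exactly why the guarded implications $N\to G$ and the finiteness of bounded-depth concepts are indispensable. Finally, for the unbounded-depth statement over finite $\Amf,\Bmf$ I use that $\Amf,X\leq_{\hALC}\Bmf,b$ holds iff $\Amf,X\leq_{\hALC}^{\ell}\Bmf,b$ for all $\ell$, that the relations $\preceq_{\textit{horn}}^{\ell}$ decrease in $\ell$ and, there being finitely many positions, stabilise at $\preceq_{\textit{horn}}$, and that $X$ has only finitely many subsets: some fixed $X_0\subseteq X$ then witnesses the $\ell$-round relation for infinitely many, hence all, $\ell$, giving $\Amf,X_0\preceq_{\textit{horn}}^{\omega}\Bmf,b$ and therefore $\Amf,X_0\preceq_{\textit{horn}}\Bmf,b$; the game-to-concept direction transfers back unchanged.
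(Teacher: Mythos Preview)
Your proposal is correct and follows essentially the same route as the paper: the paper packages your invariant as an explicit relation $Z_\ell=\{(X,b)\mid b\models\rho_{\Amf,\ell,X}\text{ and }X\subseteq\lambda_{\Bmf,\ell,b}^{\Amf}\}$ using the same characteristic \hALC- and $\mathcal{ELU}$-concepts you call $L^*,M^*,N^{b''}$, handles \itbackh exactly as you do, and for \itforthh constructs the good successor $b'$ directly from $b\in(\exists R.\rho_{\Amf,\ell,Y})^{\Bmf}$ rather than arguing by contradiction---but this is the same idea read contrapositively. Your stabilisation argument for the finite case is a bit more explicit than the paper's one-line ``follows directly from the first claim'', and is fine.
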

\begin{proof} (sketch)
The second claim follows directly from the first one, which we prove here. 
For any $\ell < \omega$ and any pointed $\tau$-structure $\Amf,a$, 
let $\lambda_{\Amf,\ell,a}$ be an $\mathcal{ELU}[\tau]$-concept of depth $\leq \ell$ such that, 
for any pointed $\tau$-structure $\Bmf,b$,
\begin{equation}\label{eq:1}
b \in \lambda_{\Amf,\ell,a}^{\Bmf} \quad \text{iff} \quad \Amf,a\preceq_{\textit{sim}}^{\ell}\Bmf,b.
\end{equation}
The existence of $\lambda_{\Amf,\ell,a}$ follows immediately from the fact that 
there are only finitely-many non-equivalent $\mathcal{ELU}[\tau]$-concepts of any fixed depth $\ell$. 
Similarly, fix a finite set Horn$_{\ell}$ of \hALC-concepts of depth $\leq \ell$ such that every \hALC-concept
of depth $\leq \ell$ is equivalent to some concept in Horn$_{\ell}$.
For a pointed $\tau$-structure $\Amf,X$, let $\rho_{\Amf,\ell,X}$ be
the conjunction of all concepts $C$ in Horn$_{\ell}$ with $X\subseteq C^{\Amf}$.
Clearly, we have
\begin{equation}
b\in \rho_{\Amf,\ell,X}^{\Bmf} \quad \text{iff} \quad 
\Amf,X \leq_{\textit{horn}\mathcal{ALC}}^{\ell} \Bmf,b.
\end{equation}

To prove the implication $(\Rightarrow)$ of the first claim, we 
define relations $Z_{\ell} \subseteq \mathcal{P}(\text{dom}(\Amf)) \times \text{dom}(\Bmf)$, $\ell < \omega$, by setting 
$(X,b)\in Z_{\ell}$ if $X\not=\emptyset$ and the following two conditions hold:
\begin{itemize}
\item[$(i)$] $b\in \rho_{\Amf,\ell,X}^{\Bmf}$,
\item[$(ii)$] $X\subseteq \lambda_{\Bmf,\ell,b}^{\Amf}$.
\end{itemize}
\smallskip
\noindent
\emph{Claim}~1. For any $\ell<\omega$, $\emptyset \ne X\subseteq \text{dom}(\Amf)$ and $b\in \text{dom}(\Bmf)$, 
if $(X,b)\in Z_{\ell}$, then $\Amf,X \preceq_{\textit{horn}}^{\ell}\Bmf,b$.

\smallskip
\noindent
\emph{Proof of claim}. We proceed by induction on $\ell<\omega$. The basis $\ell=0$ holds by definition. So suppose that Claim~1 has been proved for $\ell$ and that $(X,b)\in Z_{\ell+1}$. We show $\Amf,X \preceq_{\textit{horn}}^{\ell+1}\Bmf,b$. Condition~(\atomh) holds by definition. For (\forthh), suppose player~1 moves the pebble to $Y$ with $X R^{\Amf\uparrow}Y$. Then  $X\subseteq (\exists R.\rho_{\Amf,\ell,Y})^{\Amf}$. By the definition of $Z_{\ell+1}$, $b\in (\exists R.\rho_{\Amf,\ell,Y})^{\Bmf}$. Let player~2 respond with $b'$ and $Y'$ such that $(b,b')\in R^{\Bmf}$, 
$b'\in \rho_{\Amf,\ell,Y}^{\Bmf}$, and 
$$
Y'= Y \cap \lambda_{\Bmf,\ell,b'}^{\Amf}.
$$
We show that $(Y',b')$ is as required for (\forthh). By IH, it suffices to prove that $(Y',b')\in Z_{\ell}$. To show $Y'\ne\emptyset$, suppose otherwise. Then $Y\subseteq (\lambda_{\Bmf,\ell,b'}\rightarrow \bot)^{\Amf}$, so $(\lambda_{\Bmf,\ell,b'}\rightarrow \bot)$ is equivalent to a conjunct of $\rho_{\Amf,\ell,Y}$. By the construction of $b'$, we have $b'\in (\lambda_{\Bmf,\ell,b'}\rightarrow \bot)^{\Bmf}$. On the other hand, $b'\in \lambda_{\Bmf,\ell,b'}^{\Bmf}$, which is impossible.

Condition $(i)$ is proved similarly and condition $(ii)$ holds by the definition of $Y'$.

For (\backh), suppose player 1 moves the pebble to $b'$ with $(b,b')\in R^{\Bmf}$. For every $C \in \text{Horn}_{\ell}$ with $b'\not\in C^{\Bmf}$, take some $a_{C}\in X$ and $a_{C}'$ with $(a_{C},a_{C}')\in R^{\Amf}$ such that $a_{C}'\in (\lambda_{\Bmf,\ell,b'}\sqcap \neg C)^{\Amf}$. They exist since otherwise we would have $X\subseteq (\forall R.(\lambda_{\Bmf,\ell,b'}\rightarrow C))^{\Amf}$ but \mbox{$b\not\in(\forall R.(\lambda_{\Bmf,\ell,b'}\rightarrow C))^{\Bmf}$}, which contradicts the definition of $Z_{\ell+1}$. Now let player~2 respond with the set $Y$ of all such $a_{C}'$. Then $X R^{\Amf\downarrow} Y$ and $(Y,b')\in Z_{\ell}$ as $b'\in \rho_{\Amf,\ell,Y}^{\Bmf}$ and $Y\subseteq \lambda_{\Bmf,\ell,b'}^{\Bmf}$ hold by the construction of $Y$. By IH, $\Amf,Y \preceq_{\textit{horn}}^{\ell}\Bmf,b'$, as required.

Finally,~(\emph{sim}$^{\ell+1}$) follows from~\eqref{eq:1}, which completes the proof of Claim~1.

\smallskip

Now assume that $\Amf,X \leq_{\textit{horn}\mathcal{ALC}}^{\ell} \Bmf,b$. Then it suffices to
prove that if $b\in \rho_{\Amf,\ell,X}^{\Bmf}$, then there exists $X_{0}\subseteq X$ with
$(X_{0},b)\in Z_{\ell}$. But for $X_{0}= X \cap \lambda_{\Bmf,\ell,b}^{\Amf}$ this can be proved 
in the same way as Claim~1 above.

The proof of the implication $(\Leftarrow)$ of the first claim is by induction on $\ell<\omega$.
\end{proof}
Horn simulations can also characterize \hALC{}-TBoxes. 
For $\ell <\omega$, we write $\Amf \preceq_{\textit{horn}}^{(\ell)}\Bmf$ if, for every $b\in \text{dom}(\Bmf)$, 
there exists $X\subseteq \text{dom}(\Amf)$ such that $\Amf,X \preceq_{\textit{horn}}^{(\ell)}\Bmf,b$.
\begin{theorem}\label{thm:ehrenhorn-tbox}
For any finite vocabulary $\tau$, $\tau$-structures $\Amf$ and $\Bmf$,
and $\ell < \omega$, we have
$$
\Amf \leq_{\hALC}^{\ell} \Bmf \quad \text{iff} \quad 
\Amf \preceq_{\textit{horn}}^{\ell}\Bmf.
$$
If $\Amf$ and $\Bmf$ are finite, then
$$
\Amf \leq_{\hALC} \Bmf \quad \text{iff} \quad \Amf \preceq_{\textit{horn}}\Bmf.
$$
\end{theorem}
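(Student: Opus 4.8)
The plan is to factor the proof through an intermediate ``validity preservation'' condition and to reuse Lemma~\ref{lem:diagram}, so that no new game-theoretic argument is required. Write $(\star)$ for the condition: \emph{for every \hALC-concept $C$ of depth $\leq \ell$, $\Amf\models \top\sqsubseteq C$ implies $\Bmf\models \top\sqsubseteq C$}. I would first record two elementary facts about \hALC-CIs. Since $\top$ is an $\mathcal{ELU}$-concept, every validity statement $\top\sqsubseteq C$ with $C$ an \hALC-concept is itself an \hALC-CI. Conversely, for any \hALC-CI $L\sqsubseteq H$ the concept $L\to H$ is again a \hALC-concept of the same depth (by the grammar in Definition~\ref{defHornALC}), and $\Amf\models L\sqsubseteq H$ iff $\Amf\models \top\sqsubseteq (L\to H)$, because $a\in (L\to H)^{\Amf}$ holds exactly when $a\in L^{\Amf}$ implies $a\in H^{\Amf}$, so both sides say $L^{\Amf}\subseteq H^{\Amf}$.

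Using these, I would show $\Amf\leq_{\hALC}^{\ell}\Bmf$ iff $(\star)$. The direction from the CI relation to $(\star)$ is immediate, as each $\top\sqsubseteq C$ appearing in $(\star)$ is a \hALC-CI of depth $\leq\ell$. For the converse, assume $(\star)$ and take a \hALC-CI $L\sqsubseteq H$ of depth $\leq\ell$ with $\Amf\models L\sqsubseteq H$; rewriting it as $\top\sqsubseteq (L\to H)$ and noting that $L\to H$ is a \hALC-concept of depth $\leq\ell$, condition $(\star)$ yields $\Bmf\models \top\sqsubseteq (L\to H)$, i.e.\ $\Bmf\models L\sqsubseteq H$.

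For the game side I would unfold the definition of $\preceq_{\textit{horn}}^{\ell}$: we have $\Amf\preceq_{\textit{horn}}^{\ell}\Bmf$ iff for every $b\in \text{dom}(\Bmf)$ there is a nonempty $X\subseteq \text{dom}(\Amf)$ with $\Amf,X\preceq_{\textit{horn}}^{\ell}\Bmf,b$. Applying Lemma~\ref{lem:diagram} with $X=\text{dom}(\Amf)$, the existence of such an $X$ is equivalent to $\Amf,\text{dom}(\Amf)\leq_{\hALC}^{\ell}\Bmf,b$, which unfolds to: for every \hALC-concept $C$ of depth $\leq\ell$, $\text{dom}(\Amf)\subseteq C^{\Amf}$ implies $b\in C^{\Bmf}$, i.e.\ $\Amf\models\top\sqsubseteq C$ implies $b\in C^{\Bmf}$. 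Hence $\Amf\preceq_{\textit{horn}}^{\ell}\Bmf$ iff for all $b$ and all such $C$, $\Amf\models\top\sqsubseteq C$ implies $b\in C^{\Bmf}$. Since the antecedent $\Amf\models\top\sqsubseteq C$ does not mention $b$, I can move $\forall b$ inside the implication, and as $\forall b\,(b\in C^{\Bmf})$ is just $\Bmf\models\top\sqsubseteq C$, this is exactly $(\star)$. Combining with the previous step gives the $\ell$-round equivalence $\Amf\leq_{\hALC}^{\ell}\Bmf$ iff $\Amf\preceq_{\textit{horn}}^{\ell}\Bmf$; the unrestricted equivalence for finite $\Amf,\Bmf$ follows verbatim from the finite part of Lemma~\ref{lem:diagram}.

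I expect no genuine obstacle here: the theorem is essentially a repackaging of Lemma~\ref{lem:diagram}, the TBox-level relation $\preceq_{\textit{horn}}^{\ell}$ being obtained from the pointed relation by one extra existential over $X$ that Lemma~\ref{lem:diagram} already absorbs when $X$ is the full domain. The only points needing care are the quantifier interchange in the last step (legitimate because the antecedent is $b$-independent) and checking that the derived concept $L\to H$ and the derived CI $\top\sqsubseteq C$ never raise the depth above $\ell$, so that the depth-$\ell$ fragments line up on both sides; both are routine.
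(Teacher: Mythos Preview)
Your argument is correct and is exactly the reduction the paper has in mind: Theorem~\ref{thm:ehrenhorn-tbox} is stated as an immediate consequence of Lemma~\ref{lem:diagram}, and your rewriting of \hALC-CIs $L\sqsubseteq H$ as validity statements $\top\sqsubseteq(L\to H)$ together with the instantiation $X=\text{dom}(\Amf)$ in Lemma~\ref{lem:diagram} is the intended route. The depth bookkeeping and the quantifier swap are handled correctly.
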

\section{Complexity of Model Indistinguishability}\label{sec:complexity}

We next study the complexity of deciding the relations $\leq_{\hALC}$
and $\equiv_{\hALC}$ and their restrictions $\leq^{\ell}_{\hALC}$ and
$\equiv^{\ell}_{\hALC}$ on the level of concepts. The related problems
\emph{on the TBox level} (cf.\ Theorem~\ref{thm:ehrenhorn-tbox}) have the
same complexity as shown in the appendix. We refer to the respective decision
problems as \emph{$(\ell)$-entailment} and
\emph{$(\ell)$-equivalence}; for instance, entailment is the problem
of deciding whether $\Amf,a\leq_{\hALC} \Bmf,b$ for input
$\Amf,\Bmf,a,b$, and $\ell$-equivalence is the problem of deciding
whether $\Amf,a\equiv_{\hALC}^{\ell} \Bmf,b$ for input
$\Amf,\Bmf,a,b,\ell$.

As a second application of the Ehrenfeucht-Fra\"{i}ss\'e
results we investigate \emph{concept learning by
example \textup{(}CBE\textup{)}}. CBE is a supervised learning
problem with applications in knowledge engineering for automatically deriving new and potentially interesting concept
descriptions from labelled data.  Intuitively, given some relational
data and sets of positive and negative examples, the goal is to find
a concept that generalizes the positive examples, but avoids the
negative ones. The associated decision problem is formally
defined as follows:
\begin{description}

  \item[\hspace*{-1mm}\textbf{Input:}] structure $\Amf$, positive and negative examples $P,N$.

  \item[\hspace*{-1mm}\textbf{Question:}] \quad\, is there a \hALC-concept $C$ with
    $P\subseteq C^\Amf$ and $N\cap C^\Amf=\emptyset$?

\end{description}
The connection to Horn simulations is given by
Lemma~\ref{lem:diagram}: an input $\Amf,P,N$ is a yes-instance
of CBE iff $\Amf,P\not\preceq_{\textit{horn}}\Amf,b$ for all
$b\in N$. We denote by $\ell$-CBE the variant of CBE
restricted to \hALC-concepts of depth~$\ell$. This variant is 
important in practice as the user is interested in small
separating concepts. 

Our main result in this section is the following theorem:
\begin{theorem}\label{thm:complexity}
  Entailment, equivalence, and CBE are \ExpTime-complete. Moreover,
  $\ell$-entailment, $\ell$-equivalence and $\ell$-CBE are
  \ExpTime-complete if $\ell$ is given in binary and \PSpace-complete
  if $\ell$ is given in unary.
\end{theorem}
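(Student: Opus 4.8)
The plan is to establish matching upper and lower bounds for all six problems, treating entailment/CBE together (since CBE reduces to polynomially many entailment checks by Lemma~\ref{lem:diagram}) and equivalence as two entailments. For the \textbf{upper bounds}, I would exploit the Ehrenfeucht--Fra\"iss\'e characterization of Section~\ref{sec:characterisation}. By Lemma~\ref{lem:diagram}, deciding $\Amf,a \leq_{\hALC} \Bmf,b$ amounts to checking the existence of a Horn simulation $Z$ with $(\{a\},b)\in Z$, equivalently testing whether $(\{a\},b)$ belongs to the \emph{greatest} Horn simulation. The key observation is that positions are pairs $(X,b)$ with $X\subseteq\text{dom}(\Amf)$, so there are at most $2^{|\Amf|}\cdot|\Bmf|$ positions. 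I would compute the greatest Horn simulation by a standard greatest-fixed-point elimination: start with all pairs $(X,b)$ satisfying \itatomich{} and the subgame condition \itsimb{} (the latter checkable in \PTime{} via the $\mathcal{ELU}$ simulation game of Theorem~\ref{thm:elu}), then repeatedly remove any pair violating \itforthh{} or \itbackh{} relative to the current relation. Each purification round is polynomial in the number of positions, so the whole procedure runs in time $2^{O(|\Amf|)}\cdot\text{poly}(|\Bmf|)$, giving the \ExpTime{} upper bound for entailment, hence for equivalence and CBE. For the depth-bounded variants, I would instead unfold the $\ell$-round game: when $\ell$ is given in binary this is again \ExpTime, but when $\ell$ is in unary one can organize the search so that only $\text{poly}(\ell)$ position-sets need be held on the stack at once, yielding \PSpace{} via an alternating-polynomial-time simulation (player~1 existentially guesses a challenge, player~2 universally responds, recursing to depth $\ell$), and $\mathrm{APTIME}=\PSpace$.

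For the \textbf{lower bounds}, the target is \ExpTime-hardness of entailment, and the natural source is the \ExpTime-completeness of \hALC{} (equivalently basic Horn modal) satisfiability/deduction. I would reduce from a known \ExpTime-complete problem---concept satisfiability with respect to an \hALC-TBox, or directly from the word problem for a polynomial-space alternating Turing machine---by building, from an instance, two structures $\Amf,\Bmf$ such that $\Amf,a\leq_{\hALC}\Bmf,b$ holds iff the instance is positive. The crucial point is that the existence of a Horn simulation is a genuinely \emph{set-valued} reachability condition, so one can encode an exponentially branching computation tree into the subset structure $\mathcal{P}(\text{dom}(\Amf))$: positions $(X,b)$ naturally track sets of configurations, matching the alternation. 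I would engineer the \itforthh{}/\itbackh{} moves so that player~1's choices of $R$-successor sets $Y$ correspond to the machine's transitions and player~2's obligation to find responding $Y'$ and $b'$ captures acceptance. This same construction, with the recursion depth bounded by the number of computation steps, simultaneously yields \PSpace-hardness for the unary-$\ell$ variants by reducing from a \PSpace-complete problem (e.g.\ $\mathrm{QBF}$ or polynomial-space halting), where the depth $\ell$ is polynomial in the instance size.

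I expect the \textbf{main obstacle} to be the \ExpTime{} lower bound, specifically designing the reduction so that the \emph{set} component $X$ of positions is forced to behave exactly like the power-set of machine configurations, with player~2's winning condition precisely mirroring alternating acceptance. The interplay between \itbackh{} (which lets player~1 push $\Bmf$ forward and demands player~2 produce a \emph{$\downarrow$-witnessing} set $Y$ in $\Amf$) and \itforthh{} (the dual $\uparrow$-direction) is what furnishes the alternation, but getting both directions to cohere---while keeping the subgame condition \itsimb{} from trivializing the construction---requires careful gadgetry. The upper bounds and the \PSpace{} side, by contrast, should follow by routine fixed-point and alternation arguments once the Horn simulation games are in place.
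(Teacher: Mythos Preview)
Your plan is correct and broadly matches the paper's strategy: interreduce the problems (the paper formalizes this as Lemma~\ref{lem:reductions}, reducing everything to a single problem \textit{HornSim}), give an \ExpTime{} upper bound via the game, and prove \ExpTime-hardness by reducing the word problem for polynomial-space ATMs; the bounded variants follow by cutting the game to $\ell$ rounds.

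Two differences in execution are worth noting. For the upper bound, the paper does not compute the greatest Horn simulation by fixed-point elimination over the exponential position space; instead it directly implements the game on an \emph{alternating} Turing machine using only polynomial space (universal branching over player~1's challenges, existential guessing of player~2's responses), appealing to $\mathrm{APSPACE}=\ExpTime$. Your fixed-point approach also works, but note that verifying \itforthh{} for a fixed $(X,b)$ already quantifies over exponentially many $Y$; the alternating presentation sidesteps this bookkeeping and yields the \PSpace{} bound for unary $\ell$ with no extra work. For the lower bound, your instinct to use the \itforthh/\itbackh{} interplay for alternation is not what the paper does: in its reduction the \itsimb{} and \itbackh{} conditions are deliberately \emph{neutralized} by including a full copy of $\Bmf$ inside $\Amf$ (with suitable edges), so that player~2 can always discharge them trivially. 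All the alternation is instead encoded in the second structure $\Bmf$, which is a fixed $20$-element gadget whose nodes carry AND/OR labels and acceptance values; \itforthh{} alone drives the simulated computation forward through the $\Amf_i$ components (one per tape cell), and \itatomich{} plus auxiliary concept names $V_i$ force the reached sets $X$ to be exactly valid configurations. This is a cleaner gadget than trying to balance \itforthh{} against \itbackh, and it is the reason the construction goes through; you may find that design idea the missing piece when you try to make your reduction concrete.
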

It is worth mentioning the striking contrast between this \ExpTime
result and the fact that the same problems for 
\ALC are in \PTime. On the
one hand, the \ExpTime lower bounds provide evidence that the use of sets in the notion of
Horn simulations is inevitable. On the other hand, observe that CBE for \ALC is
in \PTime because there is an \ALC-concept separating the
positive and negative examples iff $\Amf,a$ and $\Amf,b$ are not
bisimilar, for any $a\in P$ and $b\in N$. Consequently, the positive
examples can be treated essentially separately and a naive application
of this leads to overfitting, that is, the intended generalization of
the positive examples is not taking
place~\cite{DBLP:conf/ilp/BadeaN00}. Thus, we can regard the \ExpTime
result for CBE in \hALC{} as the price for obtaining real
generalizations. 

To prove Theorem~\ref{thm:complexity}, we focus on the
unrestricted case.  The following lemma gives complexity-theoretic
reductions between the mentioned problems and the problem
\emph{HornSim} of deciding whether
$\Amf,X\preceq_{\textit{horn}}\Bmf,b$ for input $\Amf,\Bmf,X,b$. 
%
\begin{lemma} \label{lem:reductions}
  \begin{enumerate}

    \item[(1)] $\text{CBE}\leq_T^P\text{HornSim}$;

    \item[(2)] $\overline{\text{HornSim}}\leq_m^P\text{CBE}$;

    \item[(3)] $\text{HornSim}\leq_m^P \text{Entailment}$;

    \item[(4)] $\text{Entailment}\leq_m^P \text{HornSim}$;

    \item[(5)] $\text{Equivalence}\leq_T^P \text{Entailment}$;

    \item[(6)] $\text{Entailment}\leq_m^P\text{Equivalence}$.

  \end{enumerate}
\end{lemma}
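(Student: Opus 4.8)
The plan is to prove the six reductions in Lemma~\ref{lem:reductions} by giving explicit polynomial-time many-one or Turing reductions, relying throughout on the characterizations established earlier, chiefly Lemma~\ref{lem:diagram} (linking $\preceq_{\textit{horn}}$ to $\leq_{\hALC}$) and Theorem~\ref{thm:ehrenhorn} (the Ehrenfeucht-Fra\"iss\'e game for \hALC). I would organize the reductions into three clusters: the two between \text{CBE} and \text{HornSim} (items~(1),(2)), the two between \text{Entailment} and \text{HornSim} (items~(3),(4)), and the two between \text{Equivalence} and \text{Entailment} (items~(5),(6)).

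For~(1) and~(2), the key observation is already recorded in the paper: the input $\Amf,P,N$ is a yes-instance of \text{CBE} iff $\Amf,P\not\preceq_{\textit{horn}}\Amf,b$ for all $b\in N$. This immediately gives a polynomial-time Turing reduction $\text{CBE}\leq_T^P\text{HornSim}$: we call the \text{HornSim} oracle on $(\Amf,\Amf,P,b)$ for each $b\in N$ (at most $|N|$ oracle calls), and answer \emph{yes} exactly when every oracle call returns \emph{no}. For the many-one reduction $\overline{\text{HornSim}}\leq_m^P\text{CBE}$ in~(2), I would map an instance $(\Amf,\Bmf,X,b)$ of \text{HornSim} to the \text{CBE} instance over the disjoint union $\Amf\uplus\Bmf$ with positive examples $X$ (viewed inside the $\Amf$-copy) and the single negative example $b$ (inside the $\Bmf$-copy); then the \text{CBE} instance is a yes-instance iff there is a \hALC-concept separating $X$ from $b$, which by Lemma~\ref{lem:diagram} is iff $\Amf,X\not\preceq_{\textit{horn}}\Bmf,b$, i.e.\ iff the original instance is a yes-instance of $\overline{\text{HornSim}}$. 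I must check here that $\preceq_{\textit{horn}}$ is preserved under passing to the disjoint union, so that separability over $\Amf\uplus\Bmf$ really reflects the relation between $\Amf,X$ and $\Bmf,b$; this is where the bookkeeping is most delicate.

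For~(3) and~(4), the link is the equivalence $\Amf,a\leq_{\hALC}\Bmf,b$ iff $\Amf,a\preceq_{\textit{horn}}\Bmf,b$ from Theorem~\ref{thm:ehrenhorn}, which handles the \emph{singleton} case directly and gives~(4) essentially for free. The subtlety for~(3) is that \text{HornSim} starts from an arbitrary \emph{set} $X$, whereas entailment is phrased with a single point. I would encode $X$ by adding a fresh root to $\Amf$ that $R_{\text{new}}$-reaches exactly the elements of $X$ (with a fresh role name), and correspondingly add a root to $\Bmf$ over $b$; the point-to-point \text{HornSim}/entailment relation between the two new roots then simulates the set relation, using (\forthh)/(\backh) to push the game from the roots onto $X$ and $b$. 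Finitely many structures are finite, so the finite-model clause of the theorems applies and no saturation is needed.

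For~(5) and~(6), recall $\Amf,a\equiv_{\hALC}\Bmf,b$ means $\Amf,a\leq_{\hALC}\Bmf,b$ \emph{and} $\Bmf,b\leq_{\hALC}\Amf,a$. Item~(5) is then a two-call Turing reduction: query the \text{Entailment} oracle in both directions and conjoin the answers. Item~(6) is a many-one reduction realizing a single entailment as an equivalence; the standard trick is to make the reverse direction trivially true by a padding construction (e.g.\ mapping $\Amf,a$ to a structure that is \hALC-entailed by the image of $\Bmf,b$ by design, so that one direction of $\equiv$ is forced and the equivalence collapses to the desired single entailment). I expect the main obstacle to be~(6): arranging the padding so that exactly one direction is automatic while the other faithfully mirrors the input entailment requires care, since \hALC-entailment is non-symmetric and I must ensure the construction does not accidentally validate or falsify the nontrivial direction. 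Once the reductions are in place, \ExpTime-completeness of all the concept-level problems follows by combining them with the separately proved \ExpTime upper and lower bounds for \text{HornSim}.
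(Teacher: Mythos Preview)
Your approach to items~(1), (2), (4), and (5) matches the paper and is fine; item~(6) is vague but the padding idea is right (the paper makes $a'$ an $R$-predecessor of both $a$ and $b$, and $b'$ an $R$-predecessor of $b$ only, inside the disjoint union $\Amf\cup\Bmf$).

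Item~(3), however, has a genuine gap. Your construction adds a fresh root $r_{\Amf}$ over $X$ in $\Amf$ and a fresh root $r_{\Bmf}$ over $b$ in $\Bmf$, and claims $\Amf,X\preceq_{\textit{horn}}\Bmf,b$ iff $\Amf',r_{\Amf}\preceq_{\textit{horn}}\Bmf',r_{\Bmf}$. The forward direction fails. In the Horn simulation game at position $(\{r_{\Amf}\},r_{\Bmf})$, player~1 may invoke (\forthh) with a \emph{singleton} $Y=\{y\}\subseteq X$; player~2 must then respond with some $R_{\text{new}}$-successor of $r_{\Bmf}$, and the only one is $b$, so she is forced into $(\{y\},b)$. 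But $(\{y\},b)$ need not be Horn-simulable: take $\Amf$ with $x_1\in A^{\Amf}$, $x_2\notin A^{\Amf}$, $\Bmf$ with $b\notin A^{\Bmf}$, and $X=\{x_1,x_2\}$. Then $\Amf,X\preceq_{\textit{horn}}\Bmf,b$ (the link $(\{x_1,x_2\},b)$ satisfies all conditions trivially), yet $r_{\Amf}\in(\exists R_{\text{new}}.A)^{\Amf'}$ via $x_1$ while $r_{\Bmf}\notin(\exists R_{\text{new}}.A)^{\Bmf'}$, so $\Amf',r_{\Amf}\not\leq_{\hALC}\Bmf',r_{\Bmf}$.

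The paper's fix is exactly to neutralize such singleton (\forthh)-moves: it builds $\Bmf'$ as the disjoint union of $\Amf$ and $\Bmf$ and makes the new root $d$ an $R$-predecessor not only of $b$ but also of every node in the copy of $X$. Then, when player~1 plays $Y=\{y\}\subseteq X$, player~2 can answer with the copy of $y$ in $\Bmf'$ (rather than with $b$), landing in an identity pair. Only (\backh)-moves to $b$ actually engage the original Horn simulation between $\Amf,X$ and $\Bmf,b$. You should incorporate this extra copy of $\Amf$ inside $\Bmf'$; without it, the reduction is incorrect.
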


\begin{proof} Here, we only show the most interesting reduction (3). 

  Let $\Amf,\Bmf,X,b$ be the input to \textit{HornSim}. Define $\Amf'$ by
  adding a new $R$-predecessor $a$ to all nodes in $X$. Further,
  define $\Bmf'$ by taking the disjoint union of $\Amf$ and $\Bmf$ and
  adding a new $R$-predecessor $d$ to $b$, and making $d$ also a
  predecessor of all nodes in (the copy of) $X$. Then we have
  \begin{align*}
    \Amf,X\preceq_{\textit{horn}}\Bmf,b \quad\text{iff\quad}
    \Amf',a\preceq_{\textit{horn}}\Bmf',d,
  \end{align*}
  which is equivalent to $\Amf',a\leq_{\hALC} \Bmf',d$ by
  Theorem~\ref{thm:ehrenhorn}.\! 
\end{proof}
Thus, to prove Theorem~\ref{thm:complexity}, it suffices to
show the following:
\begin{lemma} \label{lem:hornsim}
  HornSim is \ExpTime-complete.
\end{lemma}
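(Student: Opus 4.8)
The plan is to prove the two bounds separately. For the \textbf{upper bound}, I would first note that Horn simulations are closed under arbitrary unions: conditions \itatomich and \itsimb depend only on $\Amf,\Bmf$ and the fixed relation $\preceq_{\textit{sim}}$, while \itforthh and \itbackh are existential in $Z$ and hence preserved as $Z$ grows. Thus there is a greatest Horn simulation $Z_{\max}$ between $\Amf$ and $\Bmf$, and $\Amf,X\preceq_{\textit{horn}}\Bmf,b$ holds iff $(X,b)\in Z_{\max}$. I would compute $Z_{\max}$ by the usual greatest-fixpoint elimination over the position space $\mathcal{P}(\text{dom}(\Amf))\times\text{dom}(\Bmf)$, starting from all pairs $(X,b)$ with $X\neq\emptyset$ and repeatedly deleting every pair that violates one of \itatomich, \itforthh, \itbackh, \itsimb with respect to the current candidate relation. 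The relation $\preceq_{\textit{sim}}$ needed for \itsimb is precomputed in polynomial time via the standard simulation algorithm (the characterization being Theorem~\ref{thm:elu}). Since the arena has size $2^{|\text{dom}(\Amf)|}\cdot|\text{dom}(\Bmf)|$, each elimination round inspects exponentially many positions, each individual check (in particular the quantification over the sets $Y$ in \itforthh and \itbackh) is itself at most exponential, and there are at most exponentially many rounds; the total running time is therefore $2^{O(n)}$.

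For the \textbf{lower bound} I would reduce from a problem that is \ExpTime-complete precisely because an exponential state space is navigated under alternation, namely the acceptance problem for alternating Turing machines running in polynomial space (recall $\text{APSPACE}=\ExpTime$). The guiding idea is to encode a machine configuration---a polynomial amount of information---as a \emph{subset} of $\text{dom}(\Amf)$, so that the exponentially many positions $(X,b)$ of the Horn simulation game correspond to the exponentially many configurations, and the alternation between player~1 and player~2 in the forth/back moves simulates existential/universal branching. Concretely, I would build polynomial-size structures $\Amf,\Bmf$ and a start position $(X_0,b_0)$ such that player~1 has a winning strategy---equivalently $\Amf,X_0\not\preceq_{\textit{horn}}\Bmf,b_0$, equivalently there is a separating \hALC-concept by Lemma~\ref{lem:diagram}---iff the machine $M$ accepts its input $w$ (the exact polarity is immaterial, as \ExpTime is closed under complement). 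Role names advance the simulated computation one step per round: a \itforthh move lets the relevant player propose a successor configuration through the choice of $Y$ and the subset $Y'$, a \itbackh move forces a response, and it is the interplay of the $R^{\uparrow}$/$R^{\downarrow}$ quantifier patterns with these subset choices that yields genuine existential versus universal branching.

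The conditions \itatomich and \itsimb would serve as \emph{consistency filters}. Auxiliary concept names mark the structural components of an encoded configuration, and \itatomich forces their preservation, so any pair whose first component $X$ fails to encode a legal configuration is eliminated; the \itsimb subgame, which must hold for every $a\in X$, is the mechanism that rules out spurious sets $X$ that player~1 might otherwise exploit, and it is also where the halting/acceptance condition is placed so that reaching an accepting configuration breaks the simulation. Correctness would then be verified by induction on the length of the alternating computation, matching winning strategies for player~1 in the game with accepting computation trees of $M$, and symmetrically for player~2 with the absence of such trees.

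The \textbf{main obstacle} I anticipate is this lower-bound reduction, and within it the faithful encoding of alternation: one must guarantee that a subset $Y'$ or $Y$ chosen by the duplicator cannot cheat by encoding an ill-formed or non-reachable configuration, and symmetrically that the spoiler's choices are confined to legal single-step transitions. Getting the interaction between the set-valued first component, the $R^{\uparrow}$/$R^{\downarrow}$ semantics, and the \itsimb filter exactly right---so that the two players' powers coincide with the machine's existential and universal choices and with the acceptance condition---is the delicate part; the upper bound, by contrast, is a routine greatest-fixpoint computation over the (exponential) arena.
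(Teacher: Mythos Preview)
Your upper-bound argument is correct and equivalent in spirit to the paper's, though the paper phrases it as an alternating polynomial-space procedure (implementing the game directly and invoking $\textsc{APSpace}=\ExpTime$) rather than as an explicit greatest-fixpoint elimination over the exponential arena; both are standard and sound.

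The lower-bound plan has the right source problem (polynomially space-bounded ATMs, following Harel--Kupferman--Vardi) and the right representational idea (configurations as subsets of $\text{dom}(\Amf)$), but the mechanism you propose for alternation differs from the paper's and is not obviously workable as stated. You suggest that the \itforthh/\itbackh duality supplies existential versus universal branching. The paper does \emph{not} do this: all computation steps are simulated via \itforthh alone, and the alternation is encoded entirely in the small control structure $\Bmf$, whose twenty elements carry labels $(\ast,l,r,\textit{val},d)$ recording whether the current node is $\wedge$ or $\vee$ together with the intended acceptance values of the two children; the transitions in $\Bmf$ enforce the AND/OR truth-table semantics. A further device---auxiliary concept names $V_1,\dots,V_{s(n)}$---uses \itatomich to force player~2's subset choice $Y'$ to equal the full $Y$, so that player~1's \itforthh move deterministically fixes the next configuration. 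The conditions \itbackh and \itsimb, far from carrying the alternation, are nuisances that the paper \emph{neutralizes} by embedding a copy of $\Bmf$ inside $\Amf$ with suitable edges, so that player~2 can always discharge them trivially by stepping into that copy.

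Your intuition that \itsimb is where acceptance is detected is partially right (the paper uses \itsimb together with the concept names $U_0,U_1$ to read off the acceptance value at final configurations), but your broader picture of \itbackh as the universal-branching carrier is not how the reduction goes, and it is not clear it can be made to work: in a \itbackh move player~2 chooses the set $Y$, so player~1 does not get the control needed for a universal step. The idea you are missing is to push the ATM's control logic into the second structure $\Bmf$ rather than into the game's move types.
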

For the upper bound, we observe that the Horn
simulation game can be implemented by an alternating Turing machine
(ATM) using only polynomial space. 
%
%
%
%
For the lower bound, we carefully adapt a strategy
from~\cite{HarelKV02} for proving that the simulation problem between
two structures \Amf and \Bmf is \ExpTime-hard when \Amf is given as a
\emph{fair concurrent transition system}, that is, a certain
synchronized product of structures. More precisely, we reduce the word
problem of polynomially space-bounded ATMs. Let $M$ be an $s(n)$-space
bounded ATM and $w$ an input of length $n$. We construct structures
\Amf, \Bmf, $X\subseteq \text{dom}(\Amf)$, and $b\in \text{dom}(\Bmf)$
such that
$$M\text{ accepts }w\text{\quad
  iff\quad$\Amf,X\preceq_{\textit{horn}}\Bmf,b$.}$$
The structure $\Amf$ can be thought of as the disjoint union of $s(n)$
structures $\Amf_1,\ldots,\Amf_{s(n)}$ and a single copy of \Bmf (plus
some connections from the $\Amf_i$ to \Bmf).  Intuitively, each
sub-structure $\Amf_i$ is responsible for tape cell $i$ of one of $M$'s
configurations on input $w$; thus, the domain of each $\Amf_i$
consists of the possible contents of a single cell.  As usual, the
challenge is the synchronization. Here, different tape cells are
synchronized via the simulation conditions using different role names:
one role name $R_{q,a,i,d}$ for every possible state $q$
of $M$, current head position $i$, read symbol $a$, and branching
direction $d$ (we assume that $M$ has binary branching).  The
extension of such a role name $R_{q,a,i,d}$ in a structure $\Amf_j$
is defined in the obvious way, respecting $M$'s transition relation.
The set $X$ consists of $M$'s initial configuration on input $w$. 

The role of \Bmf (as the second structure) is to control $M$'s
computation; its domain is independent of $M$ and consists of 20
elements only. It manages both the
switch between universal and existential states and the intended
acceptance value of the current configuration ($1$ if the
configuration leads to acceptance, $0$ otherwise). Universal and
existential elements are fully interconnected with the mentioned role names
$R_{q,a,i,d}$.  The initial element $b$ for the reduction is a
universal element (without loss of generality $M$'s initial state is
universal), and corresponds to acceptance value~$1$.

Now, conditions~(\textit{atom$_h$}) and (\textit{forth$_h$}) are
responsible for simulating $M$'s computation on input $w$
and~(\textit{atom$_h$}) ensures that every reached set $X$ in $\Amf$
corresponds to a valid configuration.  Conditions~(\textit{sim})
and~(\textit{back$_h$}) do not have a real purpose for the reduction,
but need to be reflected in the mentioned inclusion of (a copy of)
$\Bmf$ in \Amf.

\section{Expressive Completeness for \hALC{}}
In this section, we prove that an FO-formula with one free variable is
equivalent to a \hALC{}-concept just in case it is preserved under
Horn simulations, and that an FO-sentence is equivalent to a
\hALC{}-TBox just in case it is invariant under disjoint unions and
preserved under \emph{global} (that is, surjective) Horn simulations.
We prove these results both in the classical setting defined above and
in the finite model theory setting where the notions of equivalence,
preservation, and invariance are relativized to finite models. 

In the concept case, by the van Benthem-Rosen characterization of $\mathcal{ALC}$-concepts
as the bisimulation invariant fragment of FO and since preservation under Horn simulations implies
invariance under bisimulations (by Lemma~\ref{lem:hornbisim}), it suffices to prove that
an $\mathcal{ALC}$-concept is equivalent to a \hALC{}-concept iff it is preserved under Horn simulations
(in the classical and finite model theory setting). We will, therefore, formulate the expressive completeness result
within $\mathcal{ALC}$. In the TBox case, it is known that an FO-sentence is equivalent to an $\mathcal{ALC}$-TBox just in case it is 
invariant under disjoint unions and preserved under bisimulations\cite{TBoxpaper,DBLP:journals/apal/Otto04}; 
thus, again by Lemma~\ref{lem:hornbisim}, it suffices to show that an $\mathcal{ALC}$-TBox is equivalent to a \hALC-TBox iff it is preserved under global Horn simulations. 
In both proofs, we employ Otto's finitary method~\cite{DBLP:journals/apal/Otto04} and show that, for $\mathcal{ALC}$-concepts 
(TBoxes) of depth $\le\ell$, preservation under (global) Horn simulations is equivalent to 
preservation under (global) $\ell$-Horn simulations, 
which is the same as equivalence to a \hALC{}-concept (TBox) of depth $\leq \ell$.

We start with the concept case. Say that an $\mathcal{ALC}$-concept $C$
is \emph{preserved under $(\ell)$-Horn simulations} if, for any pointed structures $\Amf,X$ and $\Bmf,b$, whenever 
$X\subseteq C^{\Amf}$ and $\Amf,X\preceq_{\textit{horn}}^{(\ell)}\Bmf,b$ then $b\in C^{\Bmf}$.
%
%

\begin{theorem}[\bf expressive completeness: \hALC-concepts]\label{thm:mainconcept}
Let $C$ be an $\mathcal{ALC}$-concept of depth $\ell$. Then the following
conditions are equivalent \textup{(}in the classical and finite model theory setting\textup{)}:
\begin{enumerate}
\item[(1)] $C$ is equivalent to a \hALC{}-concept,
\item[(2)] $C$ is preserved under Horn simulations,
\item[(3)] $C$ is preserved under $\ell$-Horn simulations,
\item[(4)] $C$ is equivalent to a \hALC{}-concept of depth $\leq \ell$.
\end{enumerate}
\end{theorem}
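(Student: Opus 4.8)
The plan is to establish the cycle $(4)\Rightarrow(1)\Rightarrow(2)\Rightarrow(3)\Rightarrow(4)$; together with the trivial implication $(3)\Rightarrow(2)$ (since $\preceq_{\textit{horn}}$ refines $\preceq_{\textit{horn}}^{\ell}$) this gives all four equivalences, and the same argument should apply in the finite model theory setting because every structure I build from given ones will be finite whenever the latter are. The implication $(4)\Rightarrow(1)$ is immediate. For $(1)\Rightarrow(2)$ I would prove the soundness of Horn simulations by induction on a \hALC{}-concept $H$: whenever there is a Horn simulation $Z$ with $(X,b)\in Z$ and $X\subseteq H^{\Amf}$, then $b\in H^{\Bmf}$. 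The cases of $\sqcap$, $\exists R.H'$ and $\forall R.H'$ use \itatomich, \itforthh, and \itbackh, while the case $L\to H'$ uses \itsimb together with the fact that a simulation $\Bmf,b\preceq_{\textit{sim}}\Amf,a$ preserves the $\mathcal{ELU}$-concept $L$ from $\Bmf$ to each $a\in X$ (Theorem~\ref{thm:elu}), forcing $X\subseteq L^{\Amf}$ and hence $X\subseteq H'^{\Amf}$. Thus a concept equivalent to a \hALC{}-concept is preserved under Horn simulations.

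For $(3)\Rightarrow(4)$ I would exhibit an explicit defining concept. Fixing the finite set $\text{Horn}_{\ell}$ of representatives of \hALC{}-concepts of depth $\le\ell$ from the proof of Lemma~\ref{lem:diagram}, let $H$ be the conjunction of all $D\in\text{Horn}_{\ell}$ with $\models C\sqsubseteq D$. Then $C\sqsubseteq H$ holds trivially; for $H\sqsubseteq C$ take any $b\in H^{\Bmf}$ (we may assume $C$ satisfiable, as otherwise $C\equiv\bot$). For each $D\in\text{Horn}_{\ell}$ with $b\notin D^{\Bmf}$ we have $\not\models C\sqsubseteq D$, so there is a pointed structure $\Amf_{D},a_{D}$ with $a_{D}\in C^{\Amf_{D}}$ and $a_{D}\notin D^{\Amf_{D}}$. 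Letting $\Amf$ be the disjoint union of these $\Amf_{D}$ and $X$ the set of (the copies of) the $a_{D}$, invariance of $\mathcal{ALC}$-concepts under disjoint unions gives $X\subseteq C^{\Amf}$ and, by the choice of the $a_{D}$, $\Amf,X\leq_{\hALC}^{\ell}\Bmf,b$. By Lemma~\ref{lem:diagram} there is $X_{0}\subseteq X$ with $\Amf,X_{0}\preceq_{\textit{horn}}^{\ell}\Bmf,b$, and since $X_{0}\subseteq C^{\Amf}$, preservation under $\ell$-Horn simulations yields $b\in C^{\Bmf}$. As the witnesses can be chosen finite, this works unchanged in the finite setting.

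The main obstacle is $(2)\Rightarrow(3)$, that is, turning preservation under (infinitary) Horn simulations into preservation under the \emph{weaker} $\ell$-round simulations; this is where I would invoke Otto's finitary method to produce the depth bound. Given $\Amf,X\preceq_{\textit{horn}}^{\ell}\Bmf,b$ with $X\subseteq C^{\Amf}$, the idea is to pass to finite bisimilar covers $\widehat{\Amf}\to\Amf$ and $\widehat{\Bmf}\to\Bmf$ of large girth (locally acyclic up to radius $2\ell$). Since $C$ is an $\mathcal{ALC}$-concept, ordinary bisimilarity preserves it, so lifting $X$ and $b$ to $\widehat{X}$ and $\widehat{b}$ keeps $\widehat{X}\subseteq C^{\widehat{\Amf}}$ and $\widehat{b}\in C^{\widehat{\Bmf}}\Leftrightarrow b\in C^{\Bmf}$. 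The crux is to show that on these covers the $\ell$-round strategy lifts to a \emph{full} Horn simulation $\widehat{\Amf},\widehat{X}\preceq_{\textit{horn}}\widehat{\Bmf},\widehat{b}$: because every radius-$\ell$ ball is now a tree and the covering maps are locally bijective, the local $\ell$-round responses can be glued into a global winning strategy, the girth condition guaranteeing that the game never returns to an already-committed region. Applying hypothesis $(2)$ on the covers then gives $\widehat{b}\in C^{\widehat{\Bmf}}$, hence $b\in C^{\Bmf}$, and finiteness of the covers settles the finite case simultaneously.

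The genuinely new difficulty, relative to the classical van Benthem--Rosen--Otto argument for $\mathcal{ALC}$, is that positions are pairs $(X,b)$ with a \emph{set} $X$, and that \itsimb embeds an auxiliary simulation subgame. The cover construction must therefore be carried out so that the high-girth unfolding respects the set-valued left component and simultaneously preserves the $\mathcal{ELU}$-simulation types required by \itsimb; the bulk of the work is in verifying that the lifted relation still satisfies \itforthh, \itbackh, and \itsimb at every reachable position $(X,b)$. I expect this bookkeeping, rather than the locality idea itself, to be the hard part.
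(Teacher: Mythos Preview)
Your treatment of $(4)\Rightarrow(1)$, $(1)\Rightarrow(2)$, and $(3)\Rightarrow(4)$ is essentially the paper's argument (the paper phrases $(3)\Rightarrow(4)$ via a single ``universal'' disjoint union of all finite structures rather than picking a witness per non-consequence, but the content is the same).

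For $(2)\Rightarrow(3)$, however, you are reaching for heavier machinery than necessary, and the paper's route sidesteps precisely the difficulty you flag as ``the hard part.'' The paper does \emph{not} use high-girth covers. Instead, given $\Amf,X\preceq_{\textit{horn}}^{\ell}\Bmf,b$ with $X\subseteq C^{\Amf}$, it simply unravels each $\Amf,a$ (for $a\in X$) and $\Bmf,b$ into trees, takes the disjoint union $\Amf'$ of the unravellings on the left, and then \emph{truncates both sides at depth~$\ell$} to obtain finite-depth forests $\Amf''$ and $\Bmf''$. Since $C$ has depth $\le\ell$, truncation does not change membership of the roots in $C$. The key observation is now almost tautological: on trees of depth at most $\ell$, an $\ell$-round Horn simulation \emph{is} a full Horn simulation (there are no positions beyond depth $\ell$ to reach), so $\Amf'',X\preceq_{\textit{horn}}\Bmf'',b$. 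Hypothesis $(2)$ then gives $b\in C^{\Bmf''}$, hence $b\in C^{\Bmf}$. When $\Amf$ and $\Bmf$ are finite, the truncated unravellings are finite, so the finite-model version follows for free.

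This approach completely avoids the issues you worry about: there is no lifting of set-valued positions through a cover, no gluing of local strategies, and no separate verification of \itforthh, \itbackh, \itsimb on the cover. The set $X$ is carried along verbatim as the set of tree roots, and the passage from $\ell$-round to full Horn simulation is immediate once the structures have depth $\le\ell$. Your high-girth plan might be made to work, but the bookkeeping you anticipate is genuinely nontrivial (lifting \itsimb and \itforthh through a cover when the left component is a set is not obviously compatible with Otto's local-bijectivity argument), and it is simply unnecessary here: for $\mathcal{ALC}$ concepts the Rosen-style truncation already does the job.
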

\begin{proof} (sketch)
$(1) \Rightarrow (2)$ follows from Lemma~\ref{lem:diagram}; \mbox{$(4) \Rightarrow (1)$} is trivial; (3) $\Rightarrow$ (4) is straightforward and proved in the appendix. We thus focus on
(2) $\Rightarrow$ (3).

A structure $\Amf$ is called \emph{tree-shaped} if the directed graph $G_{\Amf}=(\text{dom}(\Amf),E)$ with
$E=\bigcup_{R\in \tau}R^{\Amf}$ is a directed tree and $R^{\Amf}\cap S^{\Amf}=\emptyset$
for all distinct role names $R$ and $S$. The root of $G_{\Amf}$ is called the \emph{root} of $\Amf$.
The \emph{depth} of $a\in \text{dom}(\Amf)$ is the length of the path from the root of $\Amf$
to $a$; the root of $\Amf$ has depth $0$. The disjoint union of tree-shaped structures
is a \emph{forest}.
Recall that every pointed $\Amf,a$ can be unravelled into a tree-shaped structure
$\Amf^{\ast}$ with root $a$ such that $\Amf,a$ and $\Amf^{\ast},a$ are bisimilar~\cite{goranko20075}.
Note that $\Amf^{\ast}$ is infinite (even for finite $\Amf$) if $G_{\Amf}$ contains a cycle.
The finite model theory version of Theorem~\ref{thm:mainconcept} is
not affected as one only needs 
the unravelled tree-shaped structures up to a finite depth $\ell$.
%

Suppose $C$ is an $\mathcal{ALC}$-concept of depth $\leq \ell$
preserved under Horn simulations.
Let $\Amf,X$ and $\Bmf,b$ be pointed structures such that $\Amf,X \preceq_{\textit{horn}}^{\ell} \Bmf,b$
and $X\subseteq C^{\Amf}$. We have to show that $b\in C^{\Bmf}$. For every $a\in X$, take a tree-shaped 
pointed structure $\Amf_{a},a$ bisimilar to $\Amf,a$. Let $\Bmf',b$ be a tree-shaped pointed structure
bisimilar to $\Bmf,b$.
Then $\Amf',X\preceq_{\textit{horn}}^{\ell} \Bmf',b$ for the disjoint union $\Amf'$ of $\Amf_{a}$, $a\in X$.
By bisimulation invariance of $\mathcal{ALC}$-concepts, we have $X\subseteq C^{\Amf'}$ and it suffices 
to prove that $b\in C^{\Bmf'}$.
Remove from $\Amf'$ and $\Bmf'$ all nodes of depth $>\ell$ and denote the resulting
structures by $\Amf''$ and $\Bmf''$, respectively. 
As $C$ is of depth $\leq \ell$, we have $X\subseteq C^{\Amf''}$ and it suffices 
to prove that $b\in C^{\Bmf''}$. Using $\Amf',X \preceq_{\textit{horn}}^{\ell} \Bmf',b$, it is straightforward 
to show that $\Amf'',X\preceq_{\textit{horn}} \Bmf'',b$.
Then $b\in C^{\Bmf''}$ follows from the preservation of $C$ under Horn simulations.

For the finite model theory setting, observe that $\Amf''$ and $\Bmf''$ are finite if $\Amf$ and $\Bmf$ are finite.
\end{proof}

We now consider the TBox case.
We say that an $\mathcal{ALC}$-TBox is \emph{preserved under global ($\ell$)-Horn simulations} if 
$\Amf\models \Tmc$ and $\Amf\preceq_{\textit{horn}}^{(\ell)} \Bmf$ 
imply $\Bmf\models \Tmc$.

\begin{theorem}[\bf expressive completeness: \hALC-TBoxes]\label{thm:maintbox}
For any $\mathcal{ALC}$-TBox $\Tmc$ of depth $\ell$, the following
conditions are equivalent \textup{(}in the classical and finite model theory setting\textup{)}:
\begin{enumerate}
\item[(1)] $\Tmc$ is equivalent to a \hALC-TBox;
\item[(2)] $\Tmc$ is preserved under global Horn simulations;
\item[(3)] $\Tmc$ is preserved under global $\ell$-Horn simulations;
\item[(4)] $\Tmc$ is equivalent to a \hALC-TBox of depth $\leq \ell$.
\end{enumerate}
\end{theorem}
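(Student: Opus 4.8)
The plan is to mirror the structure of the concept case (Theorem~\ref{thm:mainconcept}), establishing the cycle $(4)\Rightarrow(1)\Rightarrow(2)\Rightarrow(3)\Rightarrow(4)$, while handling the additional bookkeeping forced by the fact that we now work with TBoxes (sentences) rather than concepts. The implication $(4)\Rightarrow(1)$ is trivial, and $(1)\Rightarrow(2)$ should follow from Theorem~\ref{thm:ehrenhorn-tbox} together with the fact that \hALC-TBoxes are invariant under disjoint unions and preserved under global Horn simulations (just as \hALC-concepts are preserved under Horn simulations). The step $(3)\Rightarrow(4)$ I expect to be the ``compactness-free'' direction: since there are only finitely many non-equivalent \hALC-TBoxes of depth $\le\ell$ over a finite vocabulary, one collects all \hALC-CIs of depth $\le\ell$ entailed by $\Tmc$ into a candidate TBox $\Tmc_h$, and uses Theorem~\ref{thm:ehrenhorn-tbox} to argue that preservation under global $\ell$-Horn simulations forces every model of $\Tmc_h$ to be a model of $\Tmc$; this is the analogue of the appendix argument for concepts and should go through with only notational changes.

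The heart of the proof is $(2)\Rightarrow(3)$, and here is where I would invest the main effort. The idea, following Otto's finitary method as in the concept case, is: given $\Amf\models\Tmc$ and $\Amf\preceq_{\textit{horn}}^{\ell}\Bmf$, I must produce a genuine (global) Horn simulation between suitably transformed structures and invoke $(2)$. First I would replace $\Amf$ and $\Bmf$ by bisimilar \emph{forests} $\Amf'$ and $\Bmf'$ — using that \ALC-TBoxes are bisimulation-invariant and invariant under disjoint unions, and that unravelling each pointed substructure preserves both properties — so that $\Amf'\models\Tmc$ and $\Amf'\preceq_{\textit{horn}}^{\ell}\Bmf'$ still hold. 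Then I would truncate both forests at depth $\ell$, obtaining $\Amf''$ and $\Bmf''$; because $\Tmc$ has depth $\le\ell$, truncation preserves modelhood of $\Tmc$ at the relevant witnesses, and on these depth-bounded forests an $\ell$-Horn simulation can be extended to a full global Horn simulation (the tree structure means no further moves are possible below depth $\ell$). Applying $(2)$ then yields $\Bmf''\models\Tmc$, and pulling back through the truncation and unravelling gives $\Bmf\models\Tmc$.

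The main obstacle, and the place where the TBox case genuinely differs from the concept case, is the \emph{global} (surjective) nature of the simulation combined with the need to keep $\Tmc$ true throughout these transformations. For concepts one only cares about a single distinguished point and a single side, but for a TBox one must ensure that \emph{every} element of $\Bmf$ is covered by the simulation and that truncating at depth $\ell$ does not inadvertently falsify some CI at a node whose relevant witnesses lived just below the cut. Concretely, I would need to argue that if $\Amf''$ is the depth-$\ell$ truncation of a forest model of a depth-$\ell$ TBox $\Tmc$, then $\Amf''\models\Tmc$ — which requires checking that for each CI $C\sqsubseteq D$ and each node $a$ of depth $k\le\ell$, the value of $C^\dag$ and $D^\dag$ at $a$ is unchanged by removing nodes of depth $>\ell$, using that $C,D$ have depth $\le\ell-k$ at such a node. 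I would also need to verify that forming the disjoint union of the unravelled pointed structures (one per chosen witness set) yields a single forest to which $\preceq_{\textit{horn}}^{\ell}$ lifts globally; the care here is that the witnessing sets $X$ for different $b\in\text{dom}(\Bmf)$ may overlap in $\Amf$, so the unravelling must be organized to produce a coherent global $\ell$-Horn simulation rather than a family of local ones. Once these invariance-under-truncation and coherence lemmas are in place, the remaining steps are routine adaptations of the concept-case argument.
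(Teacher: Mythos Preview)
Your cycle $(4)\Rightarrow(1)\Rightarrow(2)$ and the sketch of $(3)\Rightarrow(4)$ are fine and match the paper. The gap is in $(2)\Rightarrow(3)$.

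Your plan is to truncate \emph{both} unravelled forests $\Amf'$ and $\Bmf'$ at depth $\ell$ and then apply $(2)$ with $\Amf''$ as the source. For this you need $\Amf''\models\Tmc$, and your justification---that at a node $a$ of depth $k$ the concepts $C,D$ in a CI ``have depth $\le\ell-k$''---is simply false. The concepts occurring in $\Tmc$ have depth $\le\ell$ regardless of where they are evaluated; a node at depth $k>0$ in $\Amf''$ has only $\ell-k$ levels of successors left, which is not enough to determine the value of a depth-$\ell$ concept there. So $\Amf''$ is in general \emph{not} a model of $\Tmc$, and the step ``applying $(2)$ then yields $\Bmf''\models\Tmc$'' has no source model to work from. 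This is exactly where the TBox case diverges from the concept case: for concepts you only needed $C$ to hold at the \emph{roots} of $\Amf''$, which truncation does preserve; for a TBox you need every CI to hold at \emph{every} node.

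The paper's fix is not to truncate $\Amf'$ at all. One fixes $b_0\in\text{dom}(\Bmf)$, unravels to a tree-shaped $\Bmf',b_0$ and a forest $\Amf'$, and arranges (by duplicating successors in $\Bmf'$) an \emph{injective} $\ell$-Horn simulation $H^0,\dots,H^\ell$ with $(X,b_0)\in H^0$, so that each leaf $b$ of depth $\ell$ in $\Bmf'|_\ell$ has a \emph{unique} partner set $X_b$. One then \emph{extends} $\Bmf'|_\ell$ rather than using it as is: at each such leaf $b$ one hooks the product $\prod_{a\in X_b}\Amf'_a$ (where $\Amf'_a$ is the subtree of $\Amf'$ rooted at $a$), identifying $b$ with the diagonal element. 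By Lemma~\ref{lem:product} this continues the Horn simulation past depth $\ell$, yielding a full Horn simulation $\Amf',X\preceq_{\textit{horn}}\Bmf'',b_0$ with $\Amf'$ still a genuine model of $\Tmc$. Now $(2)$ applies and gives $\Bmf''\models\Tmc$; since $\Bmf''$ and $\Bmf'$ agree up to depth $\ell$ from $b_0$ and $\Tmc$ has depth $\le\ell$, every CI holds at $b_0$ in $\Bmf'$, hence in $\Bmf$. The product construction at the leaves---and the injectivity needed to make $X_b$ well-defined---is the idea missing from your proposal.
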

\begin{proof} (sketch)
We use the notation from the previous proof and focus on (2) $\Rightarrow$ (3), showing (3) $\Rightarrow$ (4) in the appendix.


We require \emph{injective $\ell$-Horn simulations}, which are defined as follows. 
Let $\Amf$ be a forest and $\Bmf$ a tree-shaped structure. A sequence $H^{0},\ldots,H^{\ell}$ of relations between $\mathcal{P}(\text{dom}(\Amf))$
and $\text{dom}(\Bmf)$ is called an \emph{injective $\ell$-Horn simulation} if for each $(X,b)\in H^{i}$ all $a\in X$ are
of depth $i$ in $\Amf$ and $b$ is of depth $i$ in $\Bmf$, and the following conditions hold:
\begin{enumerate}
\item[--] if $(X,b)\in H^{i}$, then $\Amf,X \preceq_{\textit{horn}}^{i}\Bmf,b$, for $0\leq i\leq \ell$; 
\item[--] if $(X,b)\in H^{i}$ and $X R^{\Amf\uparrow} Y$, then there are $Y' \subseteq Y$ and $b'\in\text{dom}(\Bmf)$ with 
        $(b,b')\in R^{\Bmf}$ and $(Y',b')\in H^{i+1}$, for all $R\in \tau$ and $0\leq i<\ell$;
\item[--] if $(X,b)\in H^{i}$ and $(b,b')\in R^{\Bmf}$,
    then there exists $Y \subseteq \text{dom}(\Amf)$ such that $X R^{\Amf\downarrow} Y$ and 
    $(Y',b')\in H^{i+1}$, for all $R\in \tau$ and $0\leq i<\ell$;
\item[--] if $(X_{0},b),(X_{1},b)\in H^{i}$, then $X_{0}=X_{1}$, for $0\leq i \leq \ell$.
\end{enumerate}
If $\Amf,X \preceq_{\text{horn}}^{\ell}\Bmf,b$, we can take,  
for $a\in X$, a tree-shaped pointed structure $\Amf_{a},a$ bisimilar
to $\Amf,a$ and a tree-shaped pointed structure $\Bmf',b$ bisimilar to $\Bmf,b$.
Then $\Amf',X\preceq_{\text{horn}}^{\ell} \Bmf',b$ for the disjoint union $\Amf'$ of the 
$\Amf_{a}$, $a\in X$. By duplicating successors in $\Bmf'$ sufficiently often
(possibly exponentially many times), we obtain a tree-shaped pointed structure 
$\Bmf'',b$ bisimilar to $\Bmf',b$ such that
there is an injective $\ell$-Horn simulation $H^{0},\ldots,H^{\ell}$ between $\Amf'$ and $\Bmf''$ with
$(X,b)\in H^{0}$.

Now suppose $\Tmc$ is preserved under global Horn simulations.
Let $\Bmf$ be a structure such that there exists a model $\Amf$ of $\Tmc$ with
$\Amf \preceq^{\ell}_{\textit{horn}} \Bmf$. We have to show that $\Bmf$ is a model of $\Tmc$. 
Let $b_{0}\in \text{dom}(\Bmf)$ be arbitrary. 
It suffices to show $b_{0}\in (\neg C \sqcup D)^{\Bmf}$ for all $C \sqsubseteq D\in \Tmc$.
Since $\Amf\preceq^\ell_{\textit{horn}}\Bmf$, there is a set
$X\subseteq \text{dom}(\Amf)$ with $\Amf,X\preceq_{\textit{horn}}^{\ell} \Bmf,b_{0}$. 
For every $a\in X$, take a tree-shaped pointed structure $\Amf_{a},a$ bisimilar
to $\Amf,a$. By the observation above, we can take a tree-shaped pointed interpretation 
$\Bmf',b_{0}$ bisimilar to $\Bmf,b_{0}$ and the disjoint union $\Amf'$ of the  
$\Amf_{a}$, $a\in X$, such that there is an injective $\ell$-Horn simulation $H^{0},\ldots,H^{\ell}$ between 
$\Amf'$ and $\Bmf'$ with $(X,b_{0})\in H^{0}$. By bisimulation invariance of $\mathcal{ALC}$-concepts, 
$\Amf'$ is a model of $\Tmc$, and so it is enough to show that $b_{0}\in (\neg C \sqcup D)^{\Bmf'}$ for all 
$C \sqsubseteq D\in \Tmc$.

Let $\Bmf'|_{\ell}$ be the structure obtained from $\Bmf'$
by dropping all nodes of depth $>\ell$. We hook to every leaf $b\in \Bmf'|_{\ell}$ of depth $\ell$ a 
structure $\Bmf_{b}$ so that $\Amf',X \preceq_{\textit{horn}} \Bmf'',b_{0}$ 
for the resulting structure $\Bmf''$. As $\Tmc$ is preserved under global Horn simulations, 
$\Bmf''$ is a model of $\Tmc$. As $\Tmc$
has depth $\leq \ell$, we have $b_{0}\in (\neg C \sqcup D)^{\Bmf'}$ for all $C \sqsubseteq D\in \Tmc$, as required.
We come to the construction of the $\Bmf_{b}$ for $b$ a leaf of depth $\ell$ in $\Bmf'|_{\ell}$. 
Since $H^{0},\ldots,H^{\ell}$ is injective, there is a unique non-empty $X_{b}\subseteq \text{dom}(\Amf')$
such that $(X_{b},b)\in H^{\ell}$. Observe that from $(X_{b},b)\in H^{\ell}$ it follows that $X_{b} \subseteq A^{\Amf'}$ iff 
$b\in A^{\Bmf'}$ for any $A\in \tau$.
Let $\Amf_{a}'$ be the tree-shaped substructure of $\Amf'$ rooted at $a$, for $a\in X_{b}$. 
Then we hook to $b$ the structure $\Bmf_{b}=\prod_{a\in X_{b}}\Amf_{a}'$ by identifying $(a \mid a\in X_{b})\in
\prod_{a\in X_{b}}\Amf_{a}'$ with $b$. Using Lemma~\ref{lem:product}, it is readily checked that
the resulting structure is as required. 

In the finite model theory setting, we consider finite $\Amf$ and $\Bmf$.
Then we can assume that the structures $\Bmf'|_{\ell}$ and $\Amf'|_{\ell}$ are finite. Now, rather than hooking
the (possibly infinite) $\Bmf_{b}=\prod_{a\in X_{b}}\Amf_{a}'$ to every leaf $b$ of depth $\ell$ in $\Bmf'|_{\ell}$ we (\emph{i}) replace all 
$\Amf_{a}'$ with $a$ of depth $\ell$ in $\Amf'$ by \emph{finite} models $\Amf''_{a}$ of $\Tmc$ satisfying the same subconcepts of 
$\Tmc$ as $\Amf'_{a}$ in $a$ and (\emph{ii}) hook $\prod_{a\in X_{b}}\Amf_{a}''$ to $b$. Then $\Amf'',X\preceq_{\textit{horn}}\Bmf''',b_{0}$
for the resulting finite structures $\Amf''$ and $\Bmf'''$, and $\Amf''$ is a model of $\Tmc$. 
\end{proof}
As there is only a finite number of \hALC{}-concepts and TBoxes of bounded depth in
a finite vocabulary $\tau$, it follows from Theorems~\ref{thm:mainconcept}
and~\ref{thm:maintbox} that it is decidable whether an $\mathcal{ALC}$-concept or TBox
is equivalent to a \hALC{}-concept or TBox, respectively.

\section{Horn guarded fragment \hGF{} of FO}
\label{sec:horn_GFO}
We extend \hALC{} to the Horn fragment, \hGF{}, of the guarded fragment of FO
in the obvious way. \hGF{} contains numerous popular Horn DLs including those
extending \hALC{} with inverse roles, the universal role, and role inclusions~\cite{DBLP:conf/ijcai/HustadtMS05,DBLP:journals/jar/HustadtMS07}.
We then generalize the Horn simulation games to guarded Horn simulation games for \hGF{} and prove 
an Ehrenfeucht-Fra\"{i}ss\'e type definability theorem and a van Benthem style expressive
completeness result for \hGF{}. Applications include an \ExpTime upper
bound for model indistinguishability.

Let $\tau$ be a vocabulary of predicate names $R$ of arbitrary arity $r_{R}\ge 0$. The \emph{guarded fragment} GF$[\tau]$ of FO is defined by the following rules:
\begin{itemize}
\item[--] GF$[\tau]$ contains the constants $\top$ (truth) and $\bot$ (falsehood);
\item[--] GF$[\tau]$ contains the \emph{atomic formulas} $R(\xbf)$ and $x=y$ with $R\in \tau$;
\item[--] GF$[\tau]$ is closed under the connectives $\wedge$, $\vee$, and $\neg$;
\item[--] if $\varphi(\xbf\ybf)$ is in GF$[\tau]$ with free
variables among $\xbf\ybf$ and $G(\xbf\ybf)$ is an atomic formula containing all the variables in
$\xbf\ybf$, then
$$
\forall \ybf\, (G(\xbf\ybf)\rightarrow \varphi(\xbf\ybf)),
\qquad
\exists \ybf\, (G(\xbf\ybf)\wedge \varphi(\xbf\ybf))
$$
are in GF$[\tau]$ (these are called the universal and existential \emph{guarded quantifiers} of GF$[\tau]$).
\end{itemize}
If the particular vocabulary $\tau$ is not relevant, we simply write GF for GF$[\tau]$.
The \emph{nesting depth of guarded quantifiers} in a formula $\varphi$ in GF, or simply the \emph{depth} of $\varphi$,
is defined as the number of nestings of guarded quantifiers in $\varphi$.
The formulas of the \emph{positive existential guarded fragment} GF$^{\exists}[\tau]$ of GF$[\tau]$
are constructed from atomic formulas using $\wedge$, $\vee$, and the 
guarded existential quantifiers.

\begin{definition}[\bf \hGF{}]\em
The fragment \hGF$[\tau]$ of GF$[\tau]$ is given by the following grammar:
\begin{multline*}
\varphi,\varphi' ~::=~ \bot \ \mid \ \top \ \mid x=y \ \mid \ R(\xbf) \ \mid \ \varphi
                                                       \wedge
                                                       \varphi' \ 
                                                       \mid \ 
                                                       \lambda
                                                       \to \varphi\\
                                                     \mid \ \exists\, \ybf
                                                       (G(\xbf\ybf)
                                                       \land
                                                       \varphi(\xbf\ybf)) \ 
                                                       \mid \ 
                                                       \forall\, \ybf(G(\xbf\ybf) \to
                                                       \varphi(\xbf\ybf)),
\end{multline*}
where $R\in \tau$, $G(\xbf\ybf)$ are atomic formulas containing all the variables in
$\xbf\ybf$, and $\lambda\in \text{GF}^{\exists}[\tau]$.
\end{definition}
%
%

\hGF{} is closely related to guarded tuple-generating dependencies
(guarded tgds), a member of the Datalog$^\pm$ family of ontology
languages for which query answering is in
\PTime~\cite{DBLP:conf/pods/CaliGL09}. \emph{Guarded tgds} are 
FO-formulas of the form $\forall \xbf\forall \ybf\,
(\psi(\xbf\ybf)\rightarrow \exists \zbf\, \varphi(\xbf\zbf))$ with
conjunctions of atoms $\psi(\xbf\ybf)$ and $\varphi(\xbf\zbf)$ 
such that $\psi$ contains an atom $G(\xbf\ybf)$ guarding all the 
variables in $\xbf\ybf$. Thus, in contrast to \hGF{},
guarded tgds have no quantifier alternation and can be regarded as a
normal form for \hGF{}. In the appendix, we give
a polynomial time reduction of
deduction and query answering in \hGF{} to the respective problems for
guarded tgds by introducing fresh predicate names for complex formulas.
We also provide a polynomial reduction in the converse direction. We
note that satisfiability in \hGF{} has the same complexity as
satisfiability in GF~\cite{DBLP:journals/jsyml/Gradel99}:
\ExpTime-complete if the arity of predicates is bounded and
\TwoExpTime-complete otherwise. 
%
\begin{example}\em The Horn formulas equivalent to the concept
  $C_{\nabla}$ and TBox $\Tmc_{\textit{horn}}$ from the proof of
  Theorem~\ref{thm:11} are in \hGF{}. Thus, there are
  $\mathcal{ALC}$-concepts and TBoxes that are not equivalent to any
  \hALC{}-concepts or TBoxes, respectively, but nevertheless are
  equivalent to formulas in \hGF.  \end{example}
\begin{theorem}\label{thm:guarded11} $(i)$ Every formula in \hGF{} is
  equivalent to a Horn formula.

$(ii)$ There exists a sentence in GF---in fact, an $\mathcal{ALC}$-TBox---that is equivalent 
to a Horn sentence, but not equivalent to any \hGF{} sentence.
\end{theorem}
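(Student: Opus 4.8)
The plan is to prove the two parts separately. For part $(i)$ I would argue by induction on the construction of \hGF{}-formulas, mirroring the proof of Theorem~\ref{thm:11}$(i)$ but with the guarded quantifiers in place of $\exists R.C$ and $\forall R.C$. The atomic cases ($\bot$, $\top$, $x=y$, $R(\xbf)$) are basic Horn formulas; conjunction is immediate since Horn formulas are closed under $\wedge$; and the guarded existential $\exists\ybf\,(G(\xbf\ybf)\wedge\varphi)$ is Horn because $G$ is an atom, so $G\wedge\varphi$ is Horn whenever $\varphi$ is, and Horn formulas are closed under $\exists$. The only genuinely non-trivial cases are the implication $\lambda\to\varphi$ with $\lambda\in\text{GF}^{\exists}[\tau]$ and the guarded universal $\forall\ybf\,(G(\xbf\ybf)\to\varphi)$; both reduce to a single auxiliary lemma.

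That lemma is: \emph{if $\lambda\in\text{GF}^{\exists}[\tau]$ and $\varphi$ is a Horn formula, then $\lambda\to\varphi$ is equivalent to a Horn formula.} I would prove it by a sub-induction on $\lambda$. For $\lambda$ an atom, $\lambda\to\varphi\equiv\neg\lambda\vee\varphi$ is pushed through the Horn structure of $\varphi$: it distributes over the $\wedge$ of basic Horn formulas and commutes with $\exists$ and $\forall$ (renaming so that the bound variables are not free in $\lambda$), until it reaches a basic Horn disjunction, where the extra negative literal $\neg\lambda$ keeps the ``at most one positive atom'' property. For $\lambda=\lambda_1\vee\lambda_2$ use $(\lambda_1\vee\lambda_2)\to\varphi\equiv(\lambda_1\to\varphi)\wedge(\lambda_2\to\varphi)$; for $\lambda=\lambda_1\wedge\lambda_2$ curry into $\lambda_1\to(\lambda_2\to\varphi)$; and for a guarded existential $\lambda=\exists\ybf\,(G\wedge\lambda')$ use $\exists\ybf(G\wedge\lambda')\to\varphi\equiv\forall\ybf\,(G\to(\lambda'\to\varphi))$ and close under $\forall$. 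The guarded universal case of the main induction is then the atom instance of this lemma (with $\lambda=G$) followed by closure under $\forall$. This settles $(i)$.

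For part $(ii)$ the Horn-expressibility of the witnessing sentence is established exactly as in Theorem~\ref{thm:11}$(ii)$: I would fix a concrete $\mathcal{ALC}$-TBox $\Tmc^{\star}$ and exhibit an explicitly Horn FO-sentence with the same models, checking the equivalence directly. The substance is showing that $\Tmc^{\star}$ is not equivalent to any \hGF{}-sentence. Since $\Tmc^{\star}$ is an $\mathcal{ALC}$-TBox, both $\Tmc^{\star}$ and every \hGF{}-sentence are invariant under guarded bisimulations, so guarded bisimulation cannot separate them; the separation must therefore come from the \emph{directed}, set-based component, i.e.\ the \hGF{} analogue of the Horn simulations of Definition~\ref{def:hornsim} (just as, in the \hALC{} case, Lemma~\ref{lem:hornbisim} shows that the real power lies in using \emph{sets} of nodes). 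I would accordingly introduce \emph{guarded Horn simulations}, as previewed in the introduction, and establish their soundness for \hGF{}---the easy direction, analogous to the ``only if'' of Lemma~\ref{lem:diagram} and Theorem~\ref{thm:ehrenhorn-tbox}: every \hGF{}-sentence is invariant under disjoint unions and preserved under surjective (global) guarded Horn simulations. It then suffices to exhibit finite structures $\Amf$ and $\Bmf$ with $\Amf\models\Tmc^{\star}$, $\Bmf\not\models\Tmc^{\star}$, together with a surjective guarded Horn simulation from $\Amf$ to $\Bmf$: no \hGF{}-sentence can distinguish $\Amf$ from $\Bmf$, whereas $\Tmc^{\star}$ does.

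The main obstacle is the design of $\Tmc^{\star}$, $\Amf$ and $\Bmf$. The TBox $\Tmc_{\textit{horn}}$ of Theorem~\ref{thm:11} will not do, since the Example preceding Theorem~\ref{thm:guarded11} shows it already lies in \hGF{}: guarded quantification lets one fold its root disjunction into an implication $\lambda\to\varphi$ whose witness travels along a single guarded edge. To defeat \hGF{} I need a TBox whose disjunctive content cannot be resolved by any conjunction of implications with positive-existential guarded antecedents---intuitively, one in which the choice witnessing the disjunction must be made \emph{coherently across several successors}, so that it is realised only in a product of witnesses rather than a single guarded witness, while the interaction of the remaining (manifestly Horn) axioms still forces the whole sentence to be preserved under products. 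Verifying simultaneously that $\Tmc^{\star}$ is Horn and that the required surjective guarded Horn simulation $\Amf\to\Bmf$ exists is the delicate point: this is exactly where the extra expressive power of \hGF{} over \hALC{}, and its limits, must be pinned down.
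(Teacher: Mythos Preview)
Your approach for both parts matches the paper's. Part~$(i)$ is the straightforward induction the paper only alludes to, and your auxiliary lemma (pushing $\lambda\to{}$ through the Horn structure of $\varphi$ by sub-induction on $\lambda\in\text{GF}^{\exists}$) is exactly the right engine; the details you give are correct.

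For part~$(ii)$, your strategy---define guarded Horn simulations, prove the easy preservation direction, then separate via a concrete $\Amf\models\Tmc^{\star}$, $\Bmf\not\models\Tmc^{\star}$ with a surjective guarded Horn simulation $\Amf\to\Bmf$---is precisely what the paper does. What you leave as the ``delicate point'' is resolved by the paper with
\[
\Tmc_{\textit{guard}}=\{\,E\sqsubseteq\exists R.\top\sqcap\exists S.\top,\ \ E\sqcap\forall R.A\sqcap\forall S.B\sqsubseteq D\,\}.
\]
The Horn equivalent is $\forall x\,(E(x)\to\exists y_1 y_2\,(R(x,y_1)\land S(x,y_2)\land((A(y_1)\land B(y_2))\to D(x))))$, and the key device is using two \emph{distinct} binary roles $R$ and $S$: there is no atom guarding $\{x,y_1,y_2\}$, so the existential cannot be guarded. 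This is the concrete form of your ``coherence across several successors'' intuition---guarded quantification can inspect only one edge at a time, and the implication's antecedent needs two incomparable edges simultaneously. The separating structures (Example~\ref{ex:guardedexx}) are small: $\Bmf$ is $\Amf$ with one extra point $f$ that has an $R$-successor in $A$ and an $S$-successor in $B$ but is not in $D$; the guarded Horn simulation sends $f$ to the set $\{b,c\}$ where $b,c$ individually fail one of the two $\forall$-conditions.
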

\begin{proof}
The proof of $(i)$ is by a straightforward induction.
For $(ii)$, consider the TBox 
$$
\Tmc_{\textit{guard}} ~=~ \{\, E \sqsubseteq \exists R . \top \sqcap \exists S.\top, \ 
E \sqcap \forall R.A \sqcap \forall S.B \sqsubseteq D \, \}.
$$
It is equivalent to the Horn sentence
\begin{multline*}
    \forall x\, (E(x) \to \exists y_{1}y_{2}\, (R(x,y_{1}) \land S(x,y_{2})\wedge{} \\
                        ((A(y_{1}) \wedge B(y_{2})) \rightarrow D(x))))
\end{multline*}
%
but, as shown in Example~\ref{ex:guardedexx} below, $\Tmc_{\textit{guard}}$ is not equivalent to any \hGF-sentence.
%
\end{proof}

Let $\Amf=(\dom(\Amf),(R^{\Amf})_{R\in \tau})$ be a $\tau$-structure. Denote by $\abf$ a \emph{tuple} $a_{1}\dots a_{n}$ of elements of $\Amf$ and set 
$[\abf]=\{a_{1},\dots,a_{n}\}$. A set $X \subseteq \text{dom}(\Amf)$ is \emph{guarded} in $\mathfrak{A}$ if $X$ is a singleton or $R^\Amf(\abf)$ for some $R\in\tau$
and $X = [\abf]$. A tuple $\abf$ is \emph{guarded} if $[\abf]$ is guarded. 

Let $\Amf$ and $\Bmf$ be structures. If $\abf$ and $\bbf$ are tuples
of the same length in $\Amf$ and $\Bmf$, respectively, we use
$p\colon\abf \mapsto \bbf$ to denote the map from $[\abf]$ to $[\bbf]$
with $p(a_{i})=b_{i}$. If $\abf'=a_{i_{1}}\dots a_{i_{k}}$ is a
subtuple of $\abf$, then $p(\abf')$ denotes the subtuple
$p(a_{i_{1}})\dots p(a_{i_{k}})$ of $\bbf$. The map $p$ is a
\emph{homomorphism} from $\Amf|_{[\abf]}$ to $\Bmf|_{[\bbf]}$ if
$\cbf \in R^{\Amf}$ implies $p(\cbf)\in R^{\Bmf}$
for all $R\in \tau$ and $\cbf$ with $[\cbf]\subseteq [\abf]$. 

In this section, by a \emph{pointed structure} we mean a pair $\Amf,X$ where $X\subseteq \text{dom}(\Amf)$ is a nonempty set of \emph{guarded tuples}, all of the same positive length. We again write $\Amf,\abf$ for $\Amf,\{\abf\}$.
We give the straightforward Ehrenfeucht-Fra\"{i}ss\'e type characterization for GF$^{\exists}$ needed for the
charachterization of \hGF{}. It is
obtained from the standard guarded bisimulation characterization of GF~\cite{DBLP:books/daglib/p/Gradel014}
by replacing partial isomorphisms by homomorphisms and dropping the backward condition.
\begin{definition}[\bf guarded simulation]\em
For $\tau$-structures $\Amf$ and $\Bmf$, a set $Z$ of maps from guarded sets in $\Amf$ to guarded sets in $\Bmf$ is called a \emph{guarded simulation} if the following conditions hold  for all $p \colon\abf \mapsto \bbf$ in $Z$: 
\begin{description}
\item[\hspace*{-1mm}(\atomg)]\ $p:\Amf|_{[\abf]}\rightarrow \Bmf|_{[\bbf]}$ is a homomorphism;
\item[\hspace*{-1mm}(\forthg)]\ for every guarded tuple $\abf'$ in $\Amf$, there exist  
a guarded tuple $\bbf'$ in $\Bmf$ and $p'$ such that $p'\colon\abf' \rightarrow \bbf'$
is in $Z$ and $p|_{[\abf]\cap [\abf']}=p'|_{[\abf] \cap [\abf']}$. 
\end{description} 
We write $\Amf,\abf\preceq_{\textit{gsim}}\Bmf,\bbf$ if there exists a guarded simulation between $\Amf$ and $\Bmf$ containing $p\colon\abf\mapsto \bbf$.
\end{definition}

In the same way as for $\mathcal{ELU}$, one can capture guarded simulations by 
\emph{guarded simulation games} between two players such that player~2 has a winning strategy 
(can respond to any move of player~1) iff $\Amf,\abf\preceq_{\textit{gsim}}\Bmf,\bbf$.
We write $\Amf,\abf \preceq_{\textit{gsim}}^{\ell} \Bmf,\bbf$ if player~2 has a winning 
strategy for the guarded simulation game with $\ell$ rounds.

We write $\Amf,\abf \leq_{\text{GF}^{\exists}}^{(\ell)} \Bmf,\bbf$ if, for all formulas $\lambda$ in GF$^{\exists}$ 
(of depth $\leq \ell$), $\Amf \models \lambda(\abf)$ implies $\Bmf \models \lambda(\bbf)$.

\begin{theorem}[\bf Ehrenfeucht-Fra\"{i}ss\'e game for GF$^{\exists}$]\label{thm:gfexists}
For any finite vocabulary $\tau$, pointed $\tau$-structures $\Amf,\abf$ and $\Bmf,\bbf$, and any $\ell < \omega$, we have\footnote{Here and in what follows the assumption that the tuples considered in Ehrenfeucht-Fra\"{i}ss\'e characterizations are guarded is not essential. It is straightforward to modify the model comparison games in such a way that the characterizations hold for arbitrary tuples.}
$$
\Amf,\abf  \leq_{\textit{GF}^{\exists}}^{\ell} \Bmf,\bbf \quad \text{ iff }
\quad \Amf,\abf \preceq_{\textit{gsim}}^{\ell} \Bmf,\bbf.
$$
If $\Amf$ and $\Bmf$ are finite, then
$$
\Amf,\abf \leq_{\textit{GF}^{\exists}} \Bmf,\bbf \quad \text{ iff }
\quad \Amf,\abf \preceq_{\textit{gsim}} \Bmf,\bbf.
$$
\end{theorem}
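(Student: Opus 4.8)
The plan is to follow the template of the $\mathcal{ELU}$ characterization (Theorem~\ref{thm:elu}) and of the classical guarded bisimulation theorem \cite{DBLP:books/daglib/p/Gradel014}, first establishing the $\ell$-round equivalence $\Amf,\abf \leq^{\ell}_{\text{GF}^{\exists}} \Bmf,\bbf \Leftrightarrow \Amf,\abf \preceq^{\ell}_{\textit{gsim}} \Bmf,\bbf$ by two separate inductions, and then deriving the finite-model statement. Throughout I use that, in a finite vocabulary and for a bounded number of free variables, there are only finitely many GF$^{\exists}$-formulas of any fixed depth up to logical equivalence.

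For $(\Leftarrow)$ (preservation), I would show by induction on the structure of a GF$^{\exists}$-formula $\lambda(\xbf)$ of depth $\leq \ell$ that, whenever player~2 has a winning strategy for the $\ell$-round game from a position $p\colon\abf\mapsto\bbf$ and $\Amf\models\lambda(\abf)$, then $\Bmf\models\lambda(\bbf)$. Atomic formulas $R(\xbf')$ and equalities $x_i=x_j$ are preserved directly by \atomg, since $p$ being a homomorphism sends $R^{\Amf}$-tuples to $R^{\Bmf}$-tuples and, being a well-defined map on $[\abf]$, preserves equalities. The cases of $\wedge$ and $\vee$ are immediate. For a guarded existential $\lambda = \exists\ybf\,(G(\xbf\ybf)\wedge\varphi(\xbf\ybf))$ of depth $\leq\ell$, a witness in $\Amf$ yields a guarded tuple $\abf''$ with $G^{\Amf}(\abf'')$ and $\Amf\models\varphi(\abf'')$; letting player~1 move to $\abf''$, player~2's $(\ell-1)$-round winning response $\bbf''$ satisfies $G^{\Bmf}(\bbf'')$ by \atomg and agrees with $p$ on the overlap, so the induction hypothesis gives $\Bmf\models\varphi(\bbf'')$ and hence $\Bmf\models\lambda(\bbf)$.

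For $(\Rightarrow)$ (completeness), for a guarded tuple $\abf$ I let $\chi_{\Amf,k,\abf}(\xbf)$ be the finite conjunction of all GF$^{\exists}$-formulas of depth $\leq k$ satisfied by $\abf$ in $\Amf$; then $\chi_{\Amf,k,\abf}\in\text{GF}^{\exists}$ has depth $\leq k$ and $\Bmf\models\chi_{\Amf,k,\abf}(\bbf)$ iff $\Amf,\abf\leq^{k}_{\text{GF}^{\exists}}\Bmf,\bbf$. Player~2's strategy is to maintain the invariant $\Amf,\abf\leq^{r}_{\text{GF}^{\exists}}\Bmf,\bbf$, where $r$ is the number of rounds still to be played. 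Condition \atomg holds because preservation of the depth-$0$ formulas forces $p$ to be a homomorphism. For \forthg, given such a position with $r\geq 1$ and a guarded move $\abf'$ of player~1, I form the formula $\psi(\xbf)$ obtained by guardedly existentially quantifying $\chi_{\Amf,r-1,\abf'}$ over the elements of $\abf'$ not already named by $\abf$, using a guard $G$ of $\abf'$; then $\psi\in\text{GF}^{\exists}$ has depth $\leq r$ and $\Amf\models\psi(\abf)$, so the invariant gives $\Bmf\models\psi(\bbf)$, providing a guarded witness $\bbf'$ that agrees with $p$ on $[\abf]\cap[\abf']$ and satisfies $\Amf,\abf'\leq^{r-1}_{\text{GF}^{\exists}}\Bmf,\bbf'$, which is exactly the response required. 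I expect the bookkeeping for this guarded pullback---correctly separating the variables of $\abf'$ shared with $\abf$ from the fresh ones, and verifying the overlap condition in \forthg---to be the main obstacle.

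Finally, the $\omega$-round statement $\Amf,\abf\leq_{\text{GF}^{\exists}}\Bmf,\bbf \Leftrightarrow \Amf,\abf\preceq^{\omega}_{\textit{gsim}}\Bmf,\bbf$ follows since full GF$^{\exists}$-equivalence is the intersection over all $\ell$ of the equivalences of depth $\leq\ell$. For finite $\Amf$ and $\Bmf$ I would then invoke the standard fact that, in image-finite structures, the relation $\{p\colon\abf\mapsto\bbf \mid \forall\ell\ \Amf,\abf\preceq^{\ell}_{\textit{gsim}}\Bmf,\bbf\}$ is itself a guarded simulation (the descending chain of $\ell$-round relations stabilizes, and the stable relation satisfies \forthg), which yields $\Amf,\abf\leq_{\text{GF}^{\exists}}\Bmf,\bbf \Leftrightarrow \Amf,\abf\preceq_{\textit{gsim}}\Bmf,\bbf$ exactly as in the $\mathcal{ELU}$ case.
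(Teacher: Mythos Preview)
Your proposal is correct and follows exactly the standard approach the paper has in mind: the paper explicitly states that the proof of Theorem~\ref{thm:gfexists} is ``standard and omitted,'' and then introduces the characteristic formula $\lambda_{\Amf,\ell,\abf}$ (the conjunction of all GF$^{\exists}$-formulas of depth $\leq\ell$ satisfied at $\abf$), which is precisely your $\chi_{\Amf,k,\abf}$. Your two inductions and the stabilization argument for finite structures are the expected details behind this omission.
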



\subsection{Simulations for \hGF{}}
We introduce guarded Horn simulation games and prove an Ehrenfeucht-Fra\"{i}ss\'e type definability 
result for \hGF{}. A \emph{link} between structures $\Amf$ and $\Bmf$ is a pair $(P,\bbf)$ with $\bbf$ a guarded tuple in $\Bmf$ 
and $P$ a nonempty set of mappings $p\colon \bbf \mapsto p(\bbf)$ such that each $p$ is a homomorphism
from $\Bmf|_{[\bbf]}$ to $\Amf|_{[p(\bbf)]}$. We denote by $P[\bbf]$ the set $\{ p(\bbf) \mid p\in P\}$ and define the analogue of $X R^{\uparrow} Y$ for guarded Horn simulations.
If $(P,\bbf)$ is a link between $\Amf$ and $\Bmf$, and $\Amf,Y$ a pointed structure, 
$R(\xbf_{0}\ybf)$ an atomic formula of the same arity as the tuples in $Y$, and $\bbf_{0}$ a subtuple of $\bbf$ of the same length as $\xbf_{0}$, 
then we say that \emph{$Y$ is an $R(\bbf_{0}\ybf)$-successor of
$(P,\bbf)$} when, for any $p\in P$, there exists a tuple $\abf$ with $p(\bbf_{0})\abf\in Y$ and 
$\Amf\models R(p(\bbf_{0})\abf)$.
 
\begin{definition}[\bf guarded Horn simulation]\label{def:guardhornsim}\em
A \emph{guarded Horn simulation} between structures $\Amf$ and $\Bmf$ is a set $Z$ of links between $\Amf$ and $\Bmf$ such that for all $(P,\bbf)\in Z$, we have:
\begin{description}
\item[\hspace*{-1.5mm}(\atomgh)]\ for all atomic formulas $R(\xbf)$ and tuples $\bbf'$ with $[\bbf']\subseteq [\bbf]$, if $\Amf \models R(p(\bbf'))$ for all $p\in P$, then $\Bmf\models R(\bbf')$; 
\item[\hspace*{-1.5mm}(\forthgh)]\ for all sets $Y$ of guarded tuples
  and atomic formulas $R(\xbf_{0}\ybf)$, if $Y$ is an
  $R(\bbf_{0},\ybf)$-successor of $(P,\bbf)$, then there exists $(P',\bbf_{0}\bbf')\in Z$ such that $P'[\bbf_{0}\bbf'] \subseteq Y$;
\item[\hspace*{-1.5mm}(\backgh)]\ for every guarded tuple $\bbf'$ in $\Bmf$, there exists a link $(P',\bbf')$ in $Z$ such that, for any $p'\in P'$, there exists $p\in P$ with $p|_{[\bbf]\cap [\bbf']}=p'|_{[\bbf] \cap [\bbf']}$;
\item[\hspace*{-1.5mm}(\simgh)]\ there exists a guarded simulation between $(\Bmf,\bbf)$ and $(\Amf,p(\bbf))$ for every $p\in P$.
\end{description}
We write $\Amf,X \preceq_{\textit{ghsim}} \Bmf,\bbf$ if there exists a guarded Horn simulation $Z$ between 
$\Amf$ and $\Bmf$ such that $X=P[\bbf]$ for some $P$ with $(P,\bbf)\in Z$. 
\end{definition}

Lemma~\ref{lem:hornbisim} linking Horn simulations with bisimulations can be lifted to the
guarded case. In fact, any guarded bisimulation $Z$ between $\Amf$ and
$\Bmf$ defines a guarded Horn simulation \mbox{$Z'=\{
  (\{p^{-1}\},\bbf) \mid p\colon \abf \mapsto \bbf\in Z\}$} and if $Z$
  is a guarded Horn simulation with singleton $P$ for every
  \mbox{$(P,\bbf)\in Z$}, then $\{ p^{-1} \mid (\{p\},\bbf)\in Z\}$ is a
  guarded bisimulation (notice that, by~(\atomgh) and~(\simgh), $p$ is a partial
  isomorphism if $P=\{p\}$).

On the other hand, as the moves of player~1 are no longer restricted to
those along $R^{\Amf}$, $R\in \tau$, the relationship to products is 
subtler than in the Horn simulation case (Lemma~\ref{lem:product}).
\begin{example}\label{ex:ggg}\em
For $i=1,2$, let $\Amf_{i}=(\{a_{i}\},A_{1}^{\Amf_{i}},A_{2}^{\Amf_{i}})$, where
$A_1^{\Amf_1}=\{a_1\}$, $A_2^{\Amf_2}=\{a_2\}$, and $A_2^{\Amf_1}=A_1^{\Amf_2}=\emptyset$.
Then $Z = \{(\{a_{1},a_{2}\},(a_{1},a_{2}))\}$
is a Horn simulation between the disjoint union $\Amf$ of $\Amf_{1}$ and $\Amf_{2}$ and the
product $\Amf_{1}\times \Amf_{2}$, but it is not a guarded Horn simulation as $Y=\{a_{1}\}$ is a $(y=y)$-successor
of $(\{a_{1},a_{2}\},(a_{1},a_{2}))$ (with empty $\bbf_{0}$) for which there is no link with 
$Y$ as the first component in $Z$. Clearly, it is also not possible
to expand $Z$ to a guarded Horn simulation. We will revisit the relationship to products below. 
\end{example}

For $\ell<\omega$, we define the relations $\Amf,X \preceq_{\textit{ghsim}}^{\ell} \Bmf,\bbf$ in the obvious 
way following the definition Horn simulation games with $\ell$ rounds.
We write $\Amf,X \leq_{\textit{\hGF{}}}^{(\ell)} \Bmf,\bbf$ if for all formulas $\varphi$ in \hGF{} 
(of depth $\leq \ell$) the following holds: 
if $\Amf \models \varphi(\abf)$ for all $\abf\in X$, then $\Bmf \models \varphi(\bbf)$.

\begin{theorem}[\bf Ehrenfeucht-Fra\"{i}ss\'e game for \hGF{}]\label{thm:EhrenhornGF}
For any finite vocabulary $\tau$, pointed $\tau$-structures $\Amf,\abf$ and $\Bmf,\bbf$, and any 
$\ell < \omega$, we have 
$$
\Amf,\abf \leq_{\textit{hornGF}}^{\ell} \Bmf,\bbf \quad \text{ iff }
\quad \Amf,\abf \preceq_{\textit{ghsim}}^{\ell} \Bmf,\bbf.
$$
If $\Amf$ and $\Bmf$ are finite, then
$$
\Amf,\abf \leq_{\textit{hornGF}} \Bmf,\bbf \quad \text{ iff }
\quad  \Amf,\abf \preceq_{\textit{ghsim}} \Bmf,\bbf.
$$
\end{theorem}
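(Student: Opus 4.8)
The plan is to prove Theorem~\ref{thm:EhrenhornGF} by lifting the strategy of Lemma~\ref{lem:diagram} from \hALC{} to \hGF{}, proving a correspondingly stronger statement about arbitrary pointed structures $\Amf,X$ (sets of guarded tuples) rather than single tuples. As in the concept case, the infinite/finite-model claim on the bottom line follows from the $\ell$-round claim by a compactness-style argument over the finitely many nonequivalent \hGF{}-formulas of each depth, so the real work is establishing
$$
\Amf,X \leq_{\textit{hornGF}}^{\ell} \Bmf,\bbf \quad \text{iff} \quad \exists X_{0}\subseteq X\ \Amf,X_{0} \preceq_{\textit{ghsim}}^{\ell}\Bmf,\bbf.
$$

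\textbf{Setup.} First I would introduce the two families of ``type'' formulas, exactly mirroring the \hALC{} proof. For a pointed structure $\Amf,\bbf$ and depth $\ell$, let $\lambda_{\Amf,\ell,\bbf}$ be a GF$^{\exists}$-formula of depth $\leq\ell$ characterizing $\preceq_{\textit{gsim}}^{\ell}$, i.e.\ $\Bmf\models\lambda_{\Amf,\ell,\bbf}(\bbf')$ iff $\Amf,\bbf\preceq_{\textit{gsim}}^{\ell}\Bmf,\bbf'$; its existence is guaranteed by Theorem~\ref{thm:gfexists} together with the finiteness of GF$^{\exists}$ up to equivalence at bounded depth. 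Dually, fix a finite set $\text{Horn}_{\ell}$ of \hGF{}-formulas representing all depth-$\leq\ell$ \hGF{}-formulas up to equivalence, and for $\Amf,X$ let $\rho_{\Amf,\ell,X}$ be the conjunction of all $\varphi\in\text{Horn}_{\ell}$ with $\Amf\models\varphi(\abf)$ for every $\abf\in X$. As before these satisfy $\Bmf\models\rho_{\Amf,\ell,X}(\bbf)$ iff $\Amf,X \leq_{\textit{hornGF}}^{\ell}\Bmf,\bbf$. I would then define relations $Z_{\ell}$ by putting $(X,\bbf)\in Z_{\ell}$ exactly when $X$ is a nonempty set of guarded tuples, $\Bmf\models\rho_{\Amf,\ell,X}(\bbf)$, and every $\abf\in X$ satisfies $\Amf\models\lambda_{\Bmf,\ell,\bbf}(\abf)$.

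\textbf{Main induction.} The heart of the proof is the claim that $(X,\bbf)\in Z_{\ell}$ implies $\Amf,X \preceq_{\textit{ghsim}}^{\ell}\Bmf,\bbf$, proved by induction on $\ell$. The inductive step must verify each of the four clauses of Definition~\ref{def:guardhornsim} for the $\ell$-round game. For \itatomich, preservation of atoms is immediate from the conjunct of $\rho$ coming from atomic \hGF{}-formulas. Clause \simgh{} is handled directly by the defining condition $\Amf\models\lambda_{\Bmf,\ell,\bbf}(\abf)$ together with Theorem~\ref{thm:gfexists}. For \forthgh, when player~1 presents an $R(\bbf_{0}\ybf)$-successor $Y$, the set membership forces $X\subseteq(\exists\ybf(R(\bbf_{0}\ybf)\wedge\rho_{\Amf,\ell-1,Y}))^{\Amf}$ in the appropriate guarded sense, so the corresponding guarded-existential \hGF{}-formula is a conjunct of $\rho$, transferring to $\bbf$; player~2 then extracts a witnessing extension $\bbf_{0}\bbf'$ and the link $P'$ obtained by restricting $Y$ via $\lambda_{\Bmf,\ell-1,\bbf_{0}\bbf'}$, and one shows the result lies in $Z_{\ell-1}$, invoking the induction hypothesis. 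The \backgh{} clause is dual, built from the guarded-universal fragment as in the \hALC{} \itbackh{} argument. The converse direction and the reduction of the top-level theorem to the $X$-version proceed exactly as in Lemma~\ref{lem:diagram}, by a routine induction on $\ell$ showing that a guarded Horn simulation preserves all depth-$\ell$ \hGF{}-formulas.

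\textbf{Main obstacle.} The hardest part will be the bookkeeping forced by \emph{links} and guarded tuples rather than plain sets of nodes. In the \hALC{} proof a position is a set $X$ of points and the $\uparrow/\downarrow$ successor relations are simple; here a position is a pair $(P,\bbf)$ where $P$ is a set of \emph{homomorphisms}, and \forthgh{} allows player~1 to move along \emph{any} guard $R(\bbf_{0}\ybf)$ re-using an arbitrary subtuple $\bbf_{0}$ of $\bbf$, including the degenerate $\bbf_{0}=\varepsilon$ case highlighted in Example~\ref{ex:ggg}. Consequently the ``$\exists\ybf(\dots)$'' formula I extract must have its free variables precisely matching $\bbf_{0}$, and I must check that the witnessing map $P'$ is genuinely a set of homomorphisms on the \emph{new} guarded tuple $\bbf_{0}\bbf'$ and that the overlap/compatibility condition on $[\bbf]\cap[\bbf']$ is respected when defining $Z_{\ell-1}$-membership. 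Verifying that the restricted link $P'[\bbf_{0}\bbf']\subseteq Y$ is nonempty (the analogue of the ``$Y'\neq\emptyset$'' step, using that $\lambda_{\Bmf,\ell,\bbf}\to\bot$ would otherwise be a conjunct of $\rho$ contradicting \simgh) is where the guarded structure makes the argument most delicate, and this is the step I would write out in full detail.
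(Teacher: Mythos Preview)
Your proposal is correct and follows essentially the same approach as the paper: the paper also proves the stronger set-version of the theorem (its Lemma~\ref{lem:EhrenhornGF}) by introducing exactly your $\lambda_{\Amf,\ell,\bbf}$ and $\rho_{\Amf,\ell,X}$, defining the relations $Z_k$ via your two conditions, and proving the same inductive claim with the same verifications of (\atomgh), (\forthgh), (\backgh), and (\simgh), including the nonemptiness-of-$Y'$ argument you flagged. The paper handles the empty-$\bbf_0$ case uniformly without special treatment, so you need not worry about it beyond the bookkeeping you already identified.
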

The proof of Theorem~\ref{thm:EhrenhornGF} is similar to that of Theorem~\ref{lem:diagram}
and given in the appendix. In particular, one has to prove again a stronger version where the
tuple $\abf$ is replaced by a set $X$ of tuples. The existence of a winning strategy for player 2
in the guarded Horn simulation game is decidable in exponential time. Thus, it follows from
Theorem~\ref{thm:EhrenhornGF} that entailment and equivalence in \hGF{} are decidable in \ExpTime.
\begin{theorem}\label{thm:gf-complexity}
  In \hGF, entailment, equivalence, and CBE are in \ExpTime. Moreover,
  $\ell$-entailment, $\ell$-equivalence, and $\ell$-CBE are in
  \ExpTime for binary encoding of $\ell$ and in \PSpace for unary
  encoding.
\end{theorem}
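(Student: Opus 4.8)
The plan is to mirror the treatment of Theorem~\ref{thm:complexity} for \hALC: reduce every problem to solving the guarded Horn simulation game, and then bound the resources needed to solve that game. By Theorem~\ref{thm:EhrenhornGF} (and the set-strengthening used in its proof), $\Amf,\abf \leq_{\textit{hornGF}} \Bmf,\bbf$ holds iff player~2 wins the guarded Horn simulation game from the link determined by $\abf$ and $\bbf$; equivalence amounts to two such checks; and, exactly as for \hALC, an instance $\Amf,P,N$ of CBE is positive iff $\Amf,P\not\preceq_{\textit{ghsim}}\Amf,\bbf$ for every $\bbf\in N$, since the conjunction of the separating formulas obtained for each $\bbf$ is a single \hGF-formula separating $P$ from $N$. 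As there are polynomially many $\bbf\in N$, and the classes \ExpTime{} and \PSpace{} are closed under complement and under polynomially many oracle calls, it suffices to solve the (bounded and unbounded) guarded Horn simulation game within the stated bounds.

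First I would fix a polynomial-space representation of positions. A link $(P,\bbf)$ is determined by the guarded tuple $\bbf$ in $\Bmf$ together with the set $P[\bbf]$ of image tuples: each $p\in P$ is fixed by its value $p(\bbf)$, and since $\bbf$ carries a guard, every $p(\bbf)$ is again a guarded tuple of $\Amf$. As there are only polynomially many guarded tuples in $\Amf$, a position can be written down in polynomial space. The local conditions (\atomgh) and (\simgh) are checkable in polynomial time: the former by inspection, the latter by Theorem~\ref{thm:gfexists}, since the existence of a guarded simulation between two pointed structures is a greatest fixpoint computable in polynomial time. Player~1's move for (\forthgh) can be encoded compactly by choosing, for each $p\in P$, a single guarded witness tuple, so every transition of the game is computable in polynomial time from a polynomial-size position.

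I would then implement the game on an alternating Turing machine storing the current position and a binary round counter. This is a safety game for player~2: she wins iff she can respond forever while maintaining (\atomgh), (\simgh), (\forthgh), and (\backgh). Since the arena has at most exponentially many positions, the decreasing chain of $i$-round winning sets stabilizes within exponentially many rounds, so infinite survival coincides with survival for $K$ rounds, where $K$ is the number of positions. For entailment, equivalence, and CBE I would run the machine for $K$ rounds; as $K$ is exponential, the counter needs only polynomially many bits, the machine runs in \emph{polynomial space}, and APSPACE $=$ \ExpTime{} yields the \ExpTime{} bound. For the $\ell$-round variants with $\ell$ in binary, the same machine runs for $\ell$ rounds with a polynomially-bounded counter, again using polynomial space, hence \ExpTime. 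For $\ell$ in unary, $\ell$ is at most the input size, so the alternating computation has polynomially many rounds and runs in \emph{alternating polynomial time}; by APTIME $=$ \PSpace{} this gives the \PSpace{} bound.

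The main obstacle is the second step: confining the exponential blow-up. The game does explore exponentially many positions, since the sets $P[\bbf]$ range over subsets of the guarded tuples of $\Amf$, but the crux is that each \emph{individual} position is of polynomial size and each transition is polynomial-time computable, so the exponential cost lives entirely in the round count absorbed by alternation rather than in the size of configurations. Making this precise requires verifying that (\forthgh), whose definition quantifies over all successor sets $Y$, can be reduced to player~1 selecting one witness per $p\in P$, and that the nested guarded simulation subgame of (\simgh) does not raise the per-step cost above polynomial time. Once these are in place, the identities APSPACE $=$ \ExpTime{} and APTIME $=$ \PSpace{} deliver all six bounds uniformly.
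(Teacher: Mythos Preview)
Your plan matches the paper's: reduce to the guarded Horn simulation game and run it on an alternating machine with polynomial-size configurations, invoking APSPACE$=$\ExpTime and APTIME$=$\PSpace for the unbounded and unary-$\ell$ cases respectively. One step needs a different justification, however. The assertion that ``there are only polynomially many guarded tuples in $\Amf$'' is false under the paper's definition (a tuple is guarded if its underlying \emph{set} is guarded): already a single $n$-ary atom yields $n!$ guarded tuples of length $n$, which is exponential in the input. The correct reason $|P|$ is polynomial, which the paper spells out, is that each $p\in P$ is a \emph{homomorphism} from $\Bmf|_{[\bbf]}$ and must therefore send the guard atom $R(\cbf)$ of $\bbf$ to some $R$-atom of $\Amf$; since there are at most $|\Amf|$ atoms and this choice determines $p$, one gets $|P|\leq |\Amf|$. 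With this repair the rest of your argument goes through; your observation that (\forthgh) moves can be taken minimal (one witness per $p\in P$) is correct and in fact makes explicit a point the paper's alternating procedure leaves implicit.
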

In contrast to \hALC{}, it remains open whether the \ExpTime upper bound is tight.
Using guarded Horn simulations, we now show that the TBox $\Tmc_{\textit{guard}}$ from the
proof of Theorem~\ref{thm:guarded11} is not equivalent to any \hGF-sentence.
\begin{example}\label{ex:guardedexx}\em
Let $\Amf$ and $\Bmf$ be the structures below.
$\Bmf$ is a copy of $\Amf$ with the extra node $f$; $\Bmf$ refutes $\Tmc_{\textit{guard}}$ in $f$, 
but $\Amf$ is a model of $\Tmc_{\textit{guard}}$.
A guarded Horn simulation $Z$ between $\Amf$ and $\Bmf$ is given by
adding to the set of singleton links 
$$
\{(\{u\},u')\mid u\in \text{dom}(\Amf)\}
\cup \{(\{uv\},u'v')\mid (u,v)\in R^\Amf\cup S^\Amf\}
$$
the links
%
%
$(\{b,c\},f)$, $(\{bd,cd\},fd')$, and $(\{ba,ca\},fa')$.
\end{example}
\begin{center}
    \begin{tikzpicture}
      [ every circle node/.style={draw, fill=black, inner sep=0pt,
        minimum size=.1cm},%
      xscale= 1,%
      yscale= 1,%
      ]
      \node (a)  at (0,0) {$a$};
      \draw (a) ++(-150:2cm) node (b) {$b$};
      \draw (a) ++(-90:2cm) node (c) {$c$};
      \node (d) [below of=b, node distance=2cm] {$d$};
      \node (e) [below of=c, node distance=2cm] {$e$};

      \draw [->, thick] (b) to node [above,near end] {$S$} (a);
      \draw [->, thick] (c) to node [left,near end] {$S$} (a);
      \draw [->, thick] (b) to node [left,near end] {$R$} (d);
      \draw [->, thick] (c) to node [above,near end] {$R$} (d);
      \draw [->, thick] (c) to node [left,near end] {$R$} (e);

      \node [above of=a, node distance=.5cm] {$B$};
      \node [above of=b, node distance=.5cm] {$E,D$};
      \draw (c) ++(160:.7cm ) node {$E, \neg D$};
      \node [below of=d, node distance=.5cm] {$A$};
      \node [below of=e, node distance=.5cm] {$\neg A$};

      \draw (a) ++(90:1cm) ++(-180:1cm) node {$\Amf$};

      \node (a1) [right of=a, node distance=4cm] {$a'$};
      \draw (a1) ++(-150:2cm) node (b1) {$b'$};
      \draw (a1) ++(-130:2cm) node (f1) {$f$};
      \draw (a1) ++(-90:2cm) node (c1) {$c'$};
      \node (d1) [below of=b1, node distance=2cm] {$d'$};
      \node (e1) [below of=c1, node distance=2cm] {$e'$};

      \draw [->, thick] (b1) to node [above] {$S$} (a1);
      \draw [->, thick] (b1) to node [left,near end] {$R$} (d1);

      \draw [->, thick] (c1) to node [right,near end] {$S$} (a1);
      \draw [->, thick] (c1) to node [below,near end] {$R$} (d1);
      \draw [->, thick] (c1) to node [left,near end] {$R$} (e1);

      \draw [->, thick] (f1) to node [right,near end] {$S$} (a1);
      \draw [->, thick] (f1) to node [right,near end] {$R$} (d1);

      \node [above of=a1, node distance=.5cm] {$B$};
      \node [above of=b1, node distance=.5cm] {$E,D$};
      \node [right of=f1, node distance=.7cm] {$E, \neg D$};
      \node [right of=c1, node distance=.7cm] {$E, \neg D$};
      \node [below of=d1, node distance=.5cm] {$A$};
      \node [below of=e1, node distance=.5cm] {$\neg A$};

      \draw (a1) ++(90:1cm) ++(-180:1cm) node {$\Bmf$};

      \node (fit) [ellipse, draw=blue!70!white,
      thick, rotate=-30, minimum width=2.8cm, minimum height=.8cm]
      at (-.9,-1.5) {};
      
      \draw [<-, thick, dashed, draw=blue!70!white] (a) to (a1);
      
      \draw [<-, thick, dashed, draw=blue!70!white] (b) -- (b1);

      \draw [<-, thick, dashed, draw=blue!70!white] (c) -- (c1);

      \draw [<-, thick, dashed, draw=blue!70!white] (d) -- (d1);

      \draw [<-, thick, dashed, draw=blue!70!white] (fit) -- (f1);

      \draw [<-, thick, dashed, draw=blue!70!white] (e) -- (e1);

    \end{tikzpicture}
%
\end{center}
\subsection{Expressive Completeness for \hGF{}}
Our next aim is to show that an FO-formula $\varphi$ is equivalent to a \hGF-formula just in case it is preserved under guarded Horn simulations. This statement
needs qualification, however, in two respects. First, our infinitary proof goes through only
if we require the sets $Y$ where player~1 moves to in condition~(\forthgh) to be intersections of FO-definable sets.
Second, in contrast to \hALC{}, the language \hGF{} admits equality guards and is not local
in the sense that the truth of a \hGF-formula $\varphi(\xbf)$ in $\Amf,\abf$
is not determined by some neighbourhood of $\abf$ in the Gaifman graph of $\Amf$. 
As a consequence, GF-sentences such as $\varphi=\forall x\, A_{1}(x)
\vee \forall x\, A_{2}(x)$ (with omitted equality guards)
are not equivalent to any \hGF-sentence but preserved under guarded Horn simulations.

To deal with this issue, we lift the definition of guarded Horn simulations from single structures to families
of structures. Let $\Amf_{i}$, $i \in I$, be a family of disjoint structures. 
A set $X$ of tuples in $\bigcup_{i\in I}\text{dom}(\Amf_{i})$ 
\emph{intersects with all $\Amf_{i}$, $i \in I$}, if $X$ contains at
least one tuple from each $\text{dom}(\Amf_{i})$. For an open formula $\varphi(\xbf)$, we write $(\Amf_{i}\mid i\in I) \models \varphi(\abf)$ if 
$\abf$ is a nonempty tuple in some $\text{dom}(\Amf_{i})$ and $\Amf_{i}\models \varphi(\abf)$. 
For closed $\varphi$, we write $(\Amf_{i}\mid i\in I) \models \varphi$ if $\Amf_{i}\models \varphi$ for all $i\in I$.
A set $X$ of tuples is \emph{FO$^{\infty}$-definable} in $(\Amf_{i}\mid i\in I)$ if 
there is a set $\Gamma(\xbf)$ of FO-formulas with
$
X= \{ \abf \mid \forall \varphi\in \Gamma(\xbf)\ (\Amf_{i}\mid i\in I) \models \varphi(\abf)\}.
$
\begin{definition}[\bf generalized guarded Horn simulation]\em
Let $\Amf_{i}$, $i\in I$, be a family of disjoint structures, $\Amf$ the
disjoint union of $\Amf_{i}$, $i\in I$, and $\Bmf$ a structure.
A set $Z$ of links between $\Amf$ and $\Bmf$ is a \emph{generalized guarded Horn simulation
between $(\Amf_{i}\mid i\in I)$ and $\Bmf$} if all $(P,\bbf)\in Z$ satisfy
conditions (\atomgh) and (\backgh) from Definition~\ref{def:guardhornsim} and
\begin{description}
\item[\hspace*{-2mm}(\forthggh)]\ for all sets $Y$ of guarded tuples in $\Amf$ and atomic formulas 
$R(\xbf_{0}\ybf)$, if $Y$ is an $R(\bbf_{0}\ybf)$-successor of $(P,\bbf)$ in $\Amf$ and 
\begin{enumerate}
\item[--] $\bbf_{0}$ is not empty or
\item[--] $\bbf_{0}$ is empty and $Y$ intersects with all $\Amf_{i}$, $i\in I$,
\end{enumerate}
then there is $(P',\bbf_{0}\bbf')\in Z$ with $P'[\bbf_{0}\bbf'] \subseteq Y$;
\item[\hspace*{-2mm}(\emph{sim}$^{gg}_h$)] there exists a guarded simulation between $(\Bmf,\bbf)$ and 
$(\Amf_{i},p(\bbf))$ for every $p\in P$ and $p(\bbf)$ in $\text{dom}(\Amf_{i})$.
\end{description}
$Z$ is \emph{FO-restricted} if (\forthggh) holds for all FO$^{\infty}$-definable $Y$.
We write $(\Amf_{i}\mid i \in I),X\preceq_{\textit{ghorn}}^{\textit{FO}}\Bmf,\bbf$ if there exists an
FO-restricted generalized guarded Horn simulation $Z$ between $(\Amf_{i}\mid i \in I)$ and $\Bmf$ such 
that $X=P[\bbf]$ for some $(P,\bbf)\in Z$.
\end{definition}
Note that, as we modified~(\forthgh), the set $Z$ in Example~\ref{ex:ggg}
is a generalized guarded Horn simulation. In fact, now Lemma~\ref{lem:product}
can be lifted to the guarded case: if $\Amf_{i}$, $i\in I$, is a family of structures, the set of all $(P,f_{1}\ldots f_{n})$ with 
$f_{1}\dots f_{n}$ a guarded tuple in $\prod_{i\in I}\Amf_{i}$ and $p\in P$ just in case there exists $i\in I$ such that 
$p(f_{j})= f_{j}(i)$, $1\leq j \leq n$, is a generalized guarded Horn simulation between the disjoint union of the $\Amf_{i}$ and 
$\prod_{i\in I}\Amf_{i}$.

A formula $\varphi(\xbf)$ is \emph{preserved under FO-restricted 
generalized guarded Horn simulations} if $(\Amf_{i}\mid i\in I) \models \varphi(\abf)$ for all $\abf\in X$ and $(\Amf_{i}\mid i \in I),X\preceq_{\textit{ghorn}}^{\textit{FO}}\Bmf,\bbf$ imply $\Bmf \models \varphi(\bbf)$.
%
\begin{theorem}[\bf expressive completeness: \hGF{}]\label{thm:exprhgf}
An FO-formula is equivalent to a \hGF{}-formula iff it is preserved under
FO-restricted generalized guarded Horn simulations.
\end{theorem}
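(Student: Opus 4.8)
The plan is to prove the two implications separately. The forward direction (a \hGF-formula is preserved under FO-restricted generalized guarded Horn simulations) is the lifting of the $(\Leftarrow)$ direction of Theorem~\ref{thm:EhrenhornGF} from a single structure to a family, and I would dispatch it by induction on the \hGF-formula, reading each inductive case off the matching simulation condition: atoms from (\atomgh), conjunctions trivially, implications $\lambda\to\varphi$ from (\emph{sim}$^{gg}_h$) together with Theorem~\ref{thm:gfexists}, guarded existentials from (\forthggh), and guarded universals from (\backgh). The only delicate point is non-locality, which is exactly what the second clause of (\forthggh) (with empty $\bbf_{0}$ and $Y$ intersecting every $\Amf_{i}$) is designed to handle.

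The substance is the converse. Suppose $\varphi(\xbf)$ is preserved under FO-restricted generalized guarded Horn simulations, and let $\Phi(\xbf)=\{\psi(\xbf)\in\hGF \mid \varphi\models\psi\}$ be the set of its \hGF-consequences. Since \hGF{} is closed under $\wedge$, it suffices to show $\Phi\models\varphi$: compactness then yields a finite $\Phi_{0}\subseteq\Phi$ with $\bigwedge\Phi_{0}\models\varphi$, and as $\varphi\models\bigwedge\Phi_{0}$ is trivial, $\varphi$ is equivalent to the \hGF-formula $\bigwedge\Phi_{0}$. So fix $\Bmf,\bbf$ with $\Bmf\models\Phi(\bbf)$; since everything in sight is FO, I may replace $\Bmf$ by an $\omega$-saturated elementary extension and assume $\Bmf$ is $\omega$-saturated. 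I then aim to exhibit a family of models of $\varphi$ and an FO-restricted generalized guarded Horn simulation onto $\Bmf,\bbf$, so that preservation forces $\Bmf\models\varphi(\bbf)$.

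The models are chosen to realize the positive-existential behaviour of $\bbf$ while refuting every \hGF-formula that $\bbf$ refutes. Put $\Delta(\xbf)=\{\lambda\in\text{GF}^{\exists} \mid \Bmf\models\lambda(\bbf)\}$. The key consistency claim is: for every $\psi\in\hGF$ with $\Bmf\not\models\psi(\bbf)$, the theory $\{\varphi\}\cup\Delta\cup\{\neg\psi\}$ is satisfiable. Otherwise a finite conjunction $\lambda$ of members of $\Delta$ satisfies $\varphi\models(\lambda\to\psi)$, and since $\lambda\in\text{GF}^{\exists}$ and $\psi\in\hGF$ the formula $\lambda\to\psi$ lies in \hGF{}, hence in $\Phi$; then $\Bmf\models(\lambda\to\psi)(\bbf)$ together with $\Bmf\models\lambda(\bbf)$ give $\Bmf\models\psi(\bbf)$, a contradiction. (Taking $\psi=\bot$ shows $\{\varphi\}\cup\Delta$ is itself satisfiable, which I use to keep the family nonempty when $\bbf$ satisfies every \hGF-formula.) For each such $\psi$ pick an $\omega$-saturated model $(\Amf_{\psi},\abf_{\psi})$ of $\{\varphi\}\cup\Delta\cup\{\neg\psi\}$, take the $\Amf_{\psi}$ pairwise disjoint, and set $X=\{\abf_{\psi}\}$. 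Each $\abf_{\psi}$ satisfies $\varphi$, and since $\abf_{\psi}$ realizes $\Delta$ it in particular preserves the atoms of $\bbf$, so each map $\bbf\mapsto\abf_{\psi}$ is a homomorphism; collecting these maps into $P$ gives a link $(P,\bbf)$ with $P[\bbf]=X$.

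Finally I would verify that the set $Z$ of all links $(P',\bbf')$ between $\bigcup_{\psi}\Amf_{\psi}$ and $\Bmf$ for which (i) every \hGF-formula true at all tuples in $P'[\bbf']$ is true at $\bbf'$ in $\Bmf$, and (ii) every GF$^{\exists}$-formula true at $\bbf'$ is true at every tuple in $P'[\bbf']$, is an FO-restricted generalized guarded Horn simulation containing the root link $(P,\bbf)$. Membership of the root is precisely the consistency claim (for (i), via its contrapositive) and $\Delta$-realization (for (ii)). Conditions (\atomgh) and (\emph{sim}$^{gg}_h$) follow from (ii) and Theorem~\ref{thm:gfexists} applied to the $\omega$-saturated models, while (\backgh) is met by using $\omega$-saturation of the $\Amf_{\psi}$ to realize, over the finitely many parameters of $\bbf'$, the required successor types. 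The main obstacle is (\forthggh): for a player-1 move to an FO$^{\infty}$-definable successor set $Y$, I must produce a witness $\bbf'$ in $\Bmf$ and a sublink whose image lands inside $Y$ while remaining in $Z$. This is exactly where the FO-restriction is indispensable, since FO$^{\infty}$-definability of $Y$ converts ``landing inside $Y$'' into a condition on types that $\omega$-saturation of $\Bmf$ can meet; without it no witness need exist. Once $Z$ is known to be such a simulation, preservation applied to $(X,\bbf)\in Z$, with every tuple of $X$ satisfying $\varphi$, yields $\Bmf\models\varphi(\bbf)$, establishing $\Phi\models\varphi$ and hence the theorem.
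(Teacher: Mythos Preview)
Your overall architecture matches the paper's: reduce to $\Phi\models\varphi$ by compactness, pass to an $\omega$-saturated $\Bmf$, build a disjoint family of $\omega$-saturated $\varphi$-models, define $Z$ semantically, and verify it is an FO-restricted generalized guarded Horn simulation. But two interrelated choices create a genuine gap.

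First, indexing the family by individual refuted \hGF-formulas $\psi$ is too sparse. In the (\backgh) step you must, for each \hGF-formula $\rho$ false at the new tuple $\bbf'$, find a tuple in some $\Amf_{\psi}$ that \emph{simultaneously} realises the full GF$^{\exists}$-type $\lambda_{\Bmf,\bbf'}$, refutes $\rho$, and is overlap-compatible with the current $X$. By $\omega$-saturation of the individual model this amounts to finding one root $\abf_{\psi}$ that refutes every formula $\chi_{\lambda'}:=\forall\ybf\,(R\to(\lambda'\to\rho))$ for \emph{all} finite $\lambda'\subseteq\lambda_{\Bmf,\bbf'}$ at once. Your family only provides, for each single $\chi_{\lambda'}$, a model refuting that one formula; nothing guarantees one model refuting the whole infinite collection. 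The paper fixes this by indexing over the set $\mathcal{C}$ of all FO-theories $\Gamma$ with $\Bmf\models\Gamma(\bbf_{0})$ and $\Gamma\cup\{\varphi\}$ satisfiable: then $\Gamma=\lambda_{\Bmf,\bbf_{0}}\cup\{\neg\chi_{\lambda'}:\lambda'\text{ finite}\}$ lies in $\mathcal{C}$ by compactness, and $\abf_{\Gamma}$ is the required simultaneous witness.

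Second, your condition (ii) is strictly weaker than the paper's condition (b), which requires each $X=P'[\bbf']$ to be FO$^{\infty}$-definable by some $\Phi\supseteq\lambda_{\Bmf,\bbf'}$. That definability is what allows compactness to be run \emph{at the level of $X$}: it converts ``for each finite $\lambda'$ some $\abf\in X$ refutes $\chi_{\lambda'}$'' into the finite satisfiability of a single type, which then yields a witness in the rich family $\mathcal{C}$. With only (ii), $X$ can be an arbitrary subset satisfying the GF$^{\exists}$-type, and there is no handle for this compactness step; your one-line appeal to $\omega$-saturation of the $\Amf_{\psi}$ does not help, since saturation is a property of each model separately, not of the family or of the set $X$. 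The forward direction and the overall skeleton are fine; to make the converse go through you need both the richer index set $\mathcal{C}$ and the FO$^{\infty}$-definability clause in the definition of $Z$.
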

\begin{proof} (sketch) The implication $(\Rightarrow)$ is straightforward. Conversely,
suppose $\varphi(\xbf_{0})$ is preserved under FO-restricted generalized guarded Horn simulations.
Let $\text{cons}(\varphi)$ be the set of all $\psi(\xbf_{0})$ in \hGF{} entailed by $\varphi(\xbf_{0})$.
By compactness, it suffices to show $\text{cons}(\varphi)\models \varphi$. Let $\Bmf$ be an $\omega$-saturated~\cite{ChangKeisler}
model satisfying $\text{cons}(\varphi)(\bbf_{0})$ for some tuple $\bbf_{0}$ in $\text{dom}(\Bmf)$. 
We show $\Bmf\models \varphi(\bbf_{0})$. For any tuple $\bbf$ and tuple $\xbf$ of variables of the same length as $\bbf$, we denote by 
$\lambda_{\Bmf,\bbf}(\xbf)$ the set of guarded existential positive $\lambda(\xbf)$ with $\Bmf\models \lambda(\bbf)$.
Let $\mathcal{C}$ be the set of all sets $\Gamma(\xbf_{0})$ of FO-formulas with $\Bmf\models \Gamma(\bbf_{0})$ and such that
$\Gamma(\xbf_{0}) \cup \{\varphi(\xbf_{0})\}$ is satisfiable and take, for any $\Gamma(\xbf_{0})\in \mathcal{C}$, an 
$\omega$-saturated structure $\Amf_{\Gamma}$ and tuple $\abf_{\Gamma}$ with $\Amf_{\Gamma}\models (\Gamma\cup \{\varphi\})(\abf_{\Gamma})$.
Let $\Amf$ be the disjoint union of $(\Amf_{\Gamma}\mid \Gamma\in \mathcal{C})$ and let
$Z$ be the set of pairs $(X,\bbf)$ such that
\begin{itemize}
\item[(a)] for any $\psi(\xbf)\in \hGF{}$, if $(\Amf_{\Gamma}\mid \Gamma\in \mathcal{C})
\models \psi(\abf)$ for all $\abf\in X$, then $\Bmf\models \psi(\bbf)$;
\item[(b)] there exists a set $\Phi(\xbf)\supseteq \lambda_{\Bmf,\bbf}$ of FO-formulas
such that $X$ is the set of all tuples $\abf$ with  
$(\Amf_{\Gamma}\mid \Gamma\in \mathcal{C}) \models \Phi(\abf)$.
\end{itemize}
Each $(X,\bbf)\in Z$ can be regarded as a link $(P,\bbf)$ with $X=P[\bbf]$. As we work with $\omega$-saturated structures, one can show that 
$Z$ is an FO-restricted generalized guarded Horn simulation between $(\Amf_{\Gamma}\mid \Gamma\in \mathcal{C})$ and $\Bmf$. 
Since $\varphi(\xbf_{0})$ is preserved under generalized guarded Horn simulations, $\Bmf\models \varphi(\bbf_{0})$.
\end{proof}

\section{Conclusion and Outlook}

We have introduced model comparison games for \hALC{} and \hGF{} and obtained Ehrenfeucht-Fra\"{i}ss\'e type definability and van 
Benthem style  expressive completeness results. For \hALC{}, our results are `complete': the characterizations hold in both classical and finite model theory
settings without any restrictions on the players' moves, and the straightforward \ExpTime upper bound for checking 
indistinguishability of models and concept learnability using our model comparison games is tight. For more expressive \hGF{}, it 
remains open whether the characterization holds in the setting of finite model theory and whether the moves of the 
players have to be restricted to `saturated' sets in the expressive completeness result. In this case, it is also open whether 
the \ExpTime upper bound for checking indistinguishability of models is tight.

A different line of open research problems arises from the fact that \hALC{} and \hGF{} do not capture the intersections of
$\mathcal{ALC}$ (respectively, GF) and Horn FO. It is thus an open problem to find out whether there exists a `neat' syntactic definition of the intersection
of $\mathcal{ALC}$ and Horn FO such that, if an $\mathcal{ALC}$-concept or TBox is equivalent to a Horn FO formula, 
then it is equivalent to a concept, or, respectively, TBox, satisfying this definition. The analogous question is
also open for \hGF{} and Horn FO. The proofs of Theorems~\ref{thm:11}
and ~\ref{thm:guarded11} suggest the following syntactic extension of \hALC{}.
\begin{example}\em
Denote by $\mathcal{ELU}_{\nabla}$ the extension of $\mathcal{ELU}$ with the 
\emph{$\nabla$-operator} defined as $\nabla R.C = \exists R.\top \sqcap \forall R.C$
%
%
and let $\hALC_{\nabla}$ be defined in the same way as \hALC{} (Definition~\ref{defHornALC}) with the exception that now $L$ is an  
$\mathcal{ELU}_{\nabla}$-concept. Then the concept $C_{\nabla}$ from the proof of Theorem~\ref{thm:11}~$(i)$ and the TBox $\Tmc_{\textit{guard}}$ from the
proof of Theorem~\ref{thm:guarded11} are clearly a $\hALC_{\nabla}$-concept and TBox, respectively. So $\hALC_{\nabla}$ captures more
from the intersection of $\mathcal{ALC}$ and Horn FO than \hALC{}. One can also show by an inductive argument that all 
\hALC$_{\nabla}$-concepts and TBoxes are indeed equivalent to Horn FO formulas (details are in the appendix). 
However, again this language does not fully capture the intersection in question as the TBox $\Tmc_{\textit{horn}}$ from
the proof of Theorem~\ref{thm:11}~$(ii)$ is not equivalent to any \hALC$_{\nabla}$-TBox. This can be shown by introducing
model comparison games for $\hALC_{\nabla}$ (obtained by replacing the simulation game for $\mathcal{ELU}$ with a game capturing $\mathcal{ELU}_{\nabla}$)
and showing that the Horn simulation from Example~\ref{exam:1}~$(ii)$ preserves $\hALC_{\nabla}$. 
\end{example}

Taking into account the examples given in this paper, we arrive at the following
lattice of languages and their intersections (modulo equivalence) where all inclusions are proper:

\medskip
\quad\begin{tikzpicture}
  [every circle node/.style={draw, inner sep=0pt,
    minimum size=.1cm},%
  xscale= 1.2,%
  yscale= 1.2,%
  ]
  \node (a) [circle, fill=black] {};
  \node (alabel) [left of=a, node distance=1cm] {\small $\hALC$};

  \draw (a) ++(90:.75cm) node[circle] (a1) {};
  \node (a1label) [left of=a1, node distance=1.9cm] {\small
$\hALC_\nabla\cap\hGF$};

  \draw (a1) ++(120:.75cm) node[circle, fill=black] (left) {};
  \node (leftlabel) [left of=left, node distance=1.1cm]{\small $\hALC_\nabla$};

  \node (left1) [circle, above of=left, node distance=.75cm] {};
  \node (left2) [circle, fill=black, above of=left1, node distance=.75cm] {};
  \node (left3) [circle, fill=black, above of=left2, node distance=.75cm] {};

  \node (left1label) [left of=left1, node distance=1.5cm]
  {\small $\mathcal{ALC}\cap\textsl{Horn\ FO}$};
  \node (left2label) [left of=left2, node distance=.65cm]
  {\small $\mathcal{ALC}$};
  \node (left3label) [left of=left3, node distance=.5cm]
  {\small $\textsl{GF}$};

  \draw (a1) ++(60:.75cm) node[circle] (right) {};
  \node (rightlabel) [right of=right, node distance=1.4cm]{\small
$\mathcal{ALC}\cap\hGF$};

  \node (right1) [circle, fill=black, above of=right, node distance=.75cm] {};
  \node (right2) [circle, above of=right1, node distance=.75cm] {};
  \node (right3) [circle, fill=black, above of=right2, node distance=.75cm] {};

  \node (right1label) [right of=right1, node distance=.9cm]
  {\small $\hGF$};
  \node (right2label) [right of=right2, node distance=1.4cm]
  {\small $\textsl{GF}\cap\textsl{Horn\ FO}$};
  \node (right3label) [right of=right3, node distance=.9cm]
  {\small $\textsl{Horn\ FO}$};

  \draw [-, thick, draw=black] (a) -- (a1);
  \draw [-, thick, draw=black] (a1) -- (left);
  \draw [-, thick, draw=black] (a1) -- (right);
  \draw [-, thick, draw=black] (left) -- (left1);
  \draw [-, thick, draw=black] (left1) -- (left2);
  \draw [-, thick, draw=black] (left2) -- (left3);
  \draw [-, thick, draw=black] (right) -- (right1);
  \draw [-, thick, draw=black] (right1) -- (right2);
  \draw [-, thick, draw=black] (right2) -- (right3);
  \draw [-, thick, draw=black] (right) -- (left1);
  \draw [-, thick, draw=black] (left1) -- (right2);
  \draw [-, thick, draw=black] (right2) -- (left3);
\end{tikzpicture}

\bigskip
\noindent \textsc{Acknowledgements.} F.~Papacchini was supported by
the EPSRC UK grants EP/R026084 and EP/L024845.  F.~Wolter and
M.~Zakharyaschev were supported by the EPSRC grants EP/M012646 and
EP/M012670. J.C.~Jung was supported by ERC consolidator grant 647289
CODA. 

\newpage{}
\bibliographystyle{IEEEtran}



\newpage

\appendix

\newcommand{\phALC}{\ensuremath{\textit{p-horn}\mathcal{ALC}}}
\newcommand{\shALC}{\ensuremath{\textit{s-horn}\mathcal{ALC}}}
\newcommand{\nhALC}{\ensuremath{\textit{n-horn}\mathcal{ALC}}}

\section{Other Definitions of Horn Description and Modal Logics}
\label{sec:horn_relationships}
We show that the definition of $\hALC$-concepts used
in this paper is equivalent to syntactically different definitions of 
Horn-$\mathcal{ALC}$ concepts and Horn modal formulas given in the literature.

A definition for Horn DLs based on polarity is given
in~\cite{DBLP:conf/ijcai/HustadtMS05}. This definition can be
restricted to the $\mathcal{ALC}$ case as follows. We say that an $\mathcal{ALC}$-concept $C$ is a \emph{\phALC-concept} if
${\sf pol}(C) \leq 1$, where ${\sf pol}(C) = {\sf pl}^+(C)$ with
${\sf pl}^+$ defined as shown in the table below: 
\[
\begin{array}{l l l}
  C & {\sf pl}^+ & {\sf pl}^-\\
  \hline\\[-.5em]
  \top & 0 & 0\\[.5em]
  \bot & 0 & 0\\[.5em]
  A & 1 & 0\\[.5em]
  \neg C' & {\sf pl}^-(C') & {\sf pl}^+(C')\\[.5em]
  \bigsqcap_i C_i & {\sf max}_i {\sf pl}^+(C_i) & \sum_i {\sf
                                                  pl}^-(C_i)\\[.5em]
  \bigsqcup_i C_i & \sum_i {\sf pl}^+(C_i) & {\sf max}_i {\sf
                                             pl}^-(C_i)\\[.5em]
  \exists R. C' \qquad {}& {\sf max}\{1, {\sf pl}^+(C')\} \qquad {} & {\sf pl}^-(C')\\[.5em]
  \forall R. C'& {\sf pl}^+(C') & {\sf max}\{1, {\sf pl}^-(C')\}\\[.5em]
  \hline
\end{array}
\]

\begin{theorem}
  \label{sHorn_pHorn}
  Every \hALC-concept is equivalent to a
  \phALC-concept, and vice versa.
\end{theorem}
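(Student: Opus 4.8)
The plan is to prove the two directions separately: that every \hALC-concept is a \phALC-concept is immediate, while the converse is the real content.

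For the easy direction I would first record by a routine induction that ${\sf pl}^-(L)=0$ for every $\mathcal{ELU}$-concept $L$, since none of the $\mathcal{ELU}$-constructors $A,\top,\sqcap,\sqcup,\exists R$ raises ${\sf pl}^-$ above $0$. Then, reading $L\to H$ as $\neg L\sqcup H$, an induction on the \hALC-grammar shows ${\sf pl}^+(H)\le 1$ for every \hALC-concept $H$: the cases $\bot,\top,A$ are clear; for $H\sqcap H'$ the value is the $\max$ of two numbers $\le 1$; for $\exists R.H$ it is $\max\{1,{\sf pl}^+(H)\}=1$; for $\forall R.H$ it equals ${\sf pl}^+(H)$; and for $L\to H$ it equals ${\sf pl}^-(L)+{\sf pl}^+(H)={\sf pl}^+(H)\le 1$. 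Hence every \hALC-concept already satisfies ${\sf pol}\le 1$, i.e.\ is a \phALC-concept.

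For the converse I would first pass to negation normal form. A short check that the De Morgan laws, double-negation elimination, the $\exists/\forall$ dualities, and the replacement of $C\to D$ by $\neg C\sqcup D$ all leave both ${\sf pl}^+$ and ${\sf pl}^-$ unchanged shows that I may assume the given \phALC-concept $C$ is in NNF, built from literals $A,\neg A,\top,\bot$ using $\sqcap,\sqcup,\exists R,\forall R$, and still with ${\sf pl}^+(C)\le 1$. The key auxiliary fact (a ``negative lemma'') is: if ${\sf pl}^+(D)=0$ for an NNF concept $D$, then $\neg D$ is equivalent to $\bot$ or to an $\mathcal{ELU}$-concept. To prove it I would use that, for NNF concepts, ${\sf pl}^+$ never decreases when passing from an NNF-subconcept to a larger one (each of $\sqcap,\sqcup,\exists R,\forall R$ yields a value at least the ${\sf pl}^+$ of its arguments). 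Thus ${\sf pl}^+(D)=0$ forbids $D$ from containing any positive literal $A$ (value $1$) or any existential subconcept (value $\ge 1$), so $D$ is built from $\neg A,\top,\bot$ using only $\sqcap,\sqcup,\forall R$. Driving $\neg$ inward then produces an NNF concept over $A,\top,\bot,\sqcap,\sqcup,\exists R$, and eliminating $\bot$ via $\bot\sqcup E\equiv E$, $\bot\sqcap E\equiv\bot$, and $\exists R.\bot\equiv\bot$ leaves either $\bot$ or a genuine $\mathcal{ELU}$-concept.

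With these in hand, the converse is a structural induction on the NNF concept $C$ with ${\sf pl}^+(C)\le 1$. The literal, $\top$, and $\bot$ cases are immediate, and $\sqcap$, $\exists R$, $\forall R$ reduce to the induction hypothesis exactly as in the easy direction (the polarity constraint forces each relevant argument to have ${\sf pl}^+\le 1$). The crucial — and only genuinely delicate — case is $C_1\sqcup C_2$: here ${\sf pl}^+(C_1)+{\sf pl}^+(C_2)\le 1$ forces one disjunct, say $C_2$, to satisfy ${\sf pl}^+(C_2)=0$, while ${\sf pl}^+(C_1)\le 1$. By the negative lemma, either $\neg C_2\equiv\bot$, in which case $C_1\sqcup C_2\equiv\top$, or $\neg C_2\equiv L$ for some $\mathcal{ELU}$-concept $L$; applying the induction hypothesis to replace $C_1$ by an equivalent \hALC-concept $H$ gives $C_1\sqcup C_2\equiv\neg C_2\to C_1\equiv L\to H$, which is a \hALC-concept. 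I expect this disjunction step — reconciling the absence of $\sqcup$ in \hALC with the fact that $\bot$ is allowed in \hALC but excluded from $\mathcal{ELU}$ — to be the main obstacle; the remaining work is bookkeeping with the polarity functions ${\sf pl}^+$ and ${\sf pl}^-$.
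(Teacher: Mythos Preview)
Your proposal is correct and follows essentially the same approach as the paper: the forward direction is the same routine polarity induction, and for the converse both you and the paper pass to NNF, observe that an NNF concept with ${\sf pl}^+=0$ is built from $\neg A,\top,\bot,\sqcap,\sqcup,\forall R$ so that its negation is (equivalent to) an $\mathcal{ELU}$-concept, and then treat a disjunction by separating the at most one disjunct of polarity~$1$ from the rest and rewriting as an implication $L\to H$. The only differences are cosmetic---the paper processes $n$-ary disjunctions in one step via an explicit translation~${\sf sH}$ rather than binary ones recursively---and you are in fact slightly more careful than the paper in eliminating $\bot$ from the resulting antecedent so that it is a genuine $\mathcal{ELU}$-concept.
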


\begin{proof}
It is readily checked by induction on the construction of a \hALC-concept $C$ that ${\sf pol}(C)\leq 1$, and so $C$ is also a \phALC-concept.

  For the converse direction, we define a translation ${\sf sH}$
  of \phALC-concepts to equivalent \hALC-concepts. To ease the proof, we assume the
  \phALC-concepts to be in negation normal form (NNF). Under this assumption,
  the definition of \phALC-concepts can be
  simplified. Namely, an $\mathcal{ALC}$ concept $C$ in NNF is a
  \phALC-concept if ${\sf pol}(C)\leq 1$, where ${\sf pol}(C)$ is 
  defined as follows:
  $$
  {\sf pol}(C)=\left\{
    \begin{array}{l l}
      0 & \text{if }C= \top \mid \bot \mid \neg A\\[.5em]
      1 & \text{if }C=A\\[.5em]
      {\sf max}_i {\sf pol}(C_i) & \text{if }C = C_1\sqcap \ldots
                                        \sqcap C_n\\[.5em]
      \sum_i {\sf pol}(C_i) & \text{if }C = C_1\sqcup \ldots
                                        \sqcup C_n\\[.5em]
      {\sf max}\{1, {\sf pol}(C')\} & \text{if }C=\exists R. C'\\[.5em]
      {\sf pol}(C') & \text{if }C=\forall R. C'
    \end{array}
  \right.
  $$

\medskip
\noindent
  \textit{Claim.} Any \phALC-concept $C = C_1\sqcup
  \ldots \sqcup C_n$ in NNF is equivalent to a concept of the form $L \to D$, where $L$ is an
  $\mathcal{ELU}$-concept, and either $D = \bot$ or $D = C_j$, for some $j$,
  $1 \leq j \leq n$, and ${\sf pol}(C_j)=1$.

\smallskip
\noindent
\emph{Proof of claim}. 
As ${\sf pol}(C)\leq 1$, there exists at most one disjunct $C_j$, $1\leq j\leq n$, with ${\sf pol}(C_j) = 1$. We define a \phALC-concept 
$C_\sqcup \sqcup D$ equivalent to
$C$ in the following way. If there exists a disjunct $C_j$ of $C$ with
${\sf pol}(C_j)=1$, then
$C_\sqcup = C_1 \sqcup \ldots \sqcup C_{j-1} \sqcup C_{j+1}\sqcup
\ldots \sqcup C_n$
and $D = C_j$; otherwise $C_\sqcup = C$ and $D=\bot$. It follows that
${\sf pol}(C_\sqcup) = 0$, which implies that $C_\sqcup$ is an
$\mathcal{ALC}$-concept built using $\top$, $\bot$, $\neg A$,
$\sqcup$, $\sqcap$ and $\forall R$ only. Let $L$ be the $\mathcal{ELU}$-concept defined as ${\rm NNF}(\neg C_\sqcup)$. Then $C$ is
equivalent to $L \to D$, as required.

\smallskip

  In the following definition of the translation ${\sf sH}$, we use $L_C$ and $D_C$ for
  a $\phALC$-concept $C=C_1\sqcup \ldots \sqcup C_n$ to denote the
  equivalent concept $L_C \to D_C$:
  \[
  {\sf sH}(C)=\left\{
    \begin{array}{l l}
      C & \text{if }C ::= \top\mid \bot\mid A \mid \neg A\\[.5em]
      \bigsqcap_{i=1}^n {\sf sH}(C_i) & \text{if }C = C_1\sqcap \ldots
                                        \sqcap C_n\\[.5em]
       L_C\to {\sf sH}(D_C) & \text{if }C = C_1\sqcup \ldots
                                        \sqcup C_n\\[.5em]
      \exists R. {\sf sH}(C') & \text{if } C = \exists R . C'\\[.5em]
      \forall R . {\sf sH}(C') & \text{if } C = \forall R . C'\\
    \end{array}
    \right.
  \]
It is readily seen that ${\sf sH}(C)$ is a \hALC-concept equivalent to $C$.
\end{proof}

Horn modal formulas have been defined in various ways in the
literature~\cite{Nguyen04,Sturm00,BresolinMS16,CerroP87,ChenLin93},
and not all of the definitions are equivalent. Sturm~\cite{Sturm00} defines
Horn modal formulas with $n$-ary modal operators. We show that, when restricted
to unary modal operators, his definition is equivalent to the
definition of \hALC-concepts given in this paper.  We
rephrase the definition in~\cite{Sturm00} in the DL terms as
follows.  Let $\Hmc_b$ be defined by the grammar
$$
H,H'::= \bot \mid \neg A \mid H \sqcap H' \mid H \sqcup H' \mid \forall R.H,
$$
where $A$ is a concept name and $R$ a role name.
Then the set $\mathcal{H}$ of \emph{\shALC-concepts} is the smallest set containing $\Hmc_b \cup N_C$, closed under
$\sqcap$, $\exists R$, $\forall R$ and such that whenever $C,C' \in \mathcal{H}$ and
$C\in \Hmc_b$ or $C'\in \Hmc_b$, then $C \sqcup C' \in \mathcal{H}$. The set
$\Hmc_b$ can be seen as the set containing the negation of
$\mathcal{ELU}$-concepts, and so the equivalence with our definition can
be obtained by an argument analogous to the one in the proof of
Theorem~\ref{sHorn_pHorn}.

The remaining notions of Horn modal formulas~\cite{Nguyen04,BresolinMS16,CerroP87,ChenLin93} are rather
different from our definition of \hALC-concepts. 
To illustrate, we focus on Nguyen's definition~\cite{Nguyen04},  
rephrasing it in the DL parlance. A \emph{\nhALC-concept} $G$ is defined by 
the following grammar:
\begin{align*}
  P,P' &::= \top \mid \bot \mid A \mid P \sqcap P' \mid P \sqcup P'
         \mid \exists R . P \mid \forall R. P\\
  G,G'  &::= A \mid \neg P \mid G \sqcap G' \mid \exists R . G \mid
          \forall R. G \mid P\to G
\end{align*}
The crucial difference between \nhALC-concepts and \hALC-concepts lies in the definition of $P$,  which allows universal role restrictions $\forall R.P$. 
%
%
It is not hard to see that the \nhALC-concept 
$\forall R. A \to B$ is not preserved under products, and so is not equivalent to any Horn FO formula.

\section*{Definitions and Proofs for Section: Simulations for \hALC{}}
We first give a rigorous definition of the relation $\Amf,X \preceq_{\textit{horn}}^{\ell}\Bmf,b$ and
then supply the missing details of the proof of Lemma~\ref{lem:diagram}.






\begin{definition}[\bf $\ell$-Horn simulation]\em
Let $\Amf$ and $\Bmf$ be $\tau$-structures. Define relations $\preceq_{\textit{horn}}^{\ell}$, $\ell<\omega$, between pointed 
structures $\Amf,X$ and $\Bmf,b$ by induction:
\begin{itemize}
\item $\Amf,X \preceq_{\textit{horn}}^{0}\Bmf,b$ if $X\not=\emptyset$ and $X\subseteq A^{\Amf}$ implies $b\in A^{\Bmf}$, 
for all $A\in \tau$, and $\Bmf,b\preceq_{\textit{sim}}^{0}\Amf,a$ for all $a\in X$.
\item $\Amf,X \preceq_{\textit{horn}}^{\ell+1}\Bmf,b$ if the following conditions hold:
 \begin{itemize}
  \item $\Amf,X \preceq_{\textit{horn}}^{0} \Bmf,b$;
  \item if $X R^{\Amf\uparrow} Y$, then there exist $Y' \subseteq Y$ and $b'\in \text{dom}(\Bmf)$ such that 
        $(b,b')\in R^{\Bmf}$ and $\Amf,Y' \preceq_{\textit{horn}}^{\ell}\Bmf,b'$, for all $R\in \tau$;
  \item if $(b,b')\in R^\Bmf$,
    then there exists $Y \subseteq \text{dom}(\Amf)$ with $X R^{\Amf\downarrow} Y$ and 
    $\Amf,Y \preceq_{\textit{horn}}^{\ell} \Bmf,b'$, for all $R\in \tau$;
  \item $\Bmf,b\preceq_{\textit{sim}}^{\ell+1} \Amf,a$ for all $a\in X$.
  \end{itemize}
\end{itemize}
\end{definition}

\medskip
\noindent
{\bf Lemma~\ref{lem:diagram}}
For any finite vocabulary $\tau$, pointed $\tau$-structures $\Amf,X$ and $\Bmf,b$,
and any $\ell < \omega$,
$$
\Amf,X \leq_{\textit{horn}\mathcal{ALC}}^{\ell} \Bmf,b \quad \text{iff} \quad 
\text{$\exists X_{0}\subseteq X\ \Amf,X_{0} \preceq_{\textit{horn}}^{\ell}\Bmf,b$}.
$$
If $\Amf$ and $\Bmf$ are finite, then
$$
\Amf,X \leq_{\textit{horn}\mathcal{ALC}} \Bmf,b \quad \text{iff} \quad 
\text{$\exists X_{0}\subseteq X\ \Amf,X_{0} \preceq_{\textit{horn}}\Bmf,b$}.
$$
\begin{proof}
It remains to prove the direction from right to left of the first claim. Thus we prove the following.

\medskip
\noindent 
\emph{Claim 1}. For any $\ell < \omega$, $X\subseteq \text{dom}(\Amf)$ and $b\in \text{dom}(\Bmf)$, if
$\Amf,X \preceq_{\textit{horn}}^{\ell}\Bmf,b$, then $X\subseteq C^{\Amf}$ implies $b\in C^{\Bmf}$
for every $C \in \text{Horn}_{\ell}$.

\smallskip
\noindent
\emph{Proof of claim}. 
For $\ell=0$, Claim~1 is trivial. Suppose it has been proved for $\ell$ and show that it holds for $\ell+1$ by induction on the construction of $C$.
Thus, suppose that Claim~1 has been proved for $C', C_{1}, C_{2}\in \text{Horn}_{\ell}$, and that
$C\in \text{Horn}_{\ell+1}$ is of the form $C=\forall R.C'$, $C=\exists R.C'$, $C= C_{1}\sqcap C_{2}$, or 
$C= L\rightarrow C'$ with $L$ a $\mathcal{ELU}$ concept of depth $\leq \ell+1$.
Suppose also that $\Amf,X \preceq_{\text{horn}}^{\ell+1}\Bmf,b$ and $X\subseteq C^{\Amf}$. 

\emph{Case} $C=\forall R.C'$. Suppose $b\not\in (\forall R.C')^{\Bmf}$. Choose $b'$ with $(b,b')\in R^{\Bmf}$ and $b'\not\in C'^{\Bmf}$. By~(\backh), there is $Y \subseteq \text{dom}(\Amf)$ with $X R^{\Amf\downarrow} Y$ and $\Amf,Y \preceq_{\textit{horn}}^{\ell} \Bmf,b'$. We have $C'\in \text{Horn}_{\ell}$, and so, by IH for $\ell$, there exists $a'\in Y$ with $a'\not\in C'^{\Amf}$. Then there exists $a\in X$ with $a \not\in (\forall R.C')^{\Amf}$, which is impossible.

\emph{Case} $C = \exists R.C'$. Suppose $b\not\in (\exists R.C')^{\Bmf}$. We have $X R^{\Amf\uparrow} Y$ for 
$Y=C'^{\Amf}$. By~(\forthh) for $Y$, there exist $Y'\subseteq Y$ and $b'\in \text{dom}(\Bmf)$ with 
$(b,b')\in R^\Bmf$ and $\Amf,Y \preceq_{\textit{horn}}^{\ell}\Bmf,b'$. Since $C'\in \text{Horn}_{\ell}$ and by IH, 
$b'\in C'^{\Bmf}$. But then $b\in (\exists R.C')^{\Bmf}$, contrary to our assumption. 

\emph{Case} $C= (L \rightarrow C')$. Suppose $b\notin (L\rightarrow C')^{\Bmf}$. 
Then $b\in L^{\Bmf}$ and $b\not\in C'^{\Bmf}$.
By Theorem~\ref{thm:elu}, $X\subseteq L^{\Amf}$, and by IH, there exists 
$a\in X$ with $a\not\in C'^{\Amf}$. 
Then $X\not\subseteq (L\rightarrow C')^{\Amf}$, which is a contradiction.

The remaining case $C=C_{1}\sqcap C_{2}$ is easy and left to the reader.
%
%
%
\end{proof}

\section*{Proofs for Section~\ref{sec:complexity}}

\noindent\textbf{Lemma~\ref{lem:reductions}}. \begin{itshape}
  \begin{enumerate}

    \item[(1)] $\text{CBE}\leq_T^P\text{HornSim}$;

    \item[(2)] $\overline{\text{HornSim}}\leq_m^P\text{CBE}$;

    \item[(3)] $\text{HornSim}\leq_m^P \text{Entailment}$;

    \item[(4)] $\text{Entailment}\leq_m^P \text{HornSim}$;

    \item[(5)] $\text{Equivalence}\leq_T^P \text{Entailment}$;

    \item[(6)] $\text{Entailment}\leq_m^P\text{Equivalence}$.

  \end{enumerate}
\end{itshape}

\begin{proof}
  For (1), observe that the following are equivalent by
  Lemma~\ref{lem:diagram} for all $\Amf,P,N$:

  \begin{itemize}

    \item there is some \hALC-concept $C$ such that $P \subseteq C^\Amf$ and 
          $C^\Amf\cap N=\emptyset$,

    \item $\Amf,P\not\preceq_{\textit{horn}}\Amf,b$ for all $b\in N$.

  \end{itemize}
  
  To see ``$\Rightarrow$'', let $C$ be such a concept. By
  Lemma~\ref{lem:diagram}, we have 
  have $\Amf,P\not\preceq_{\textit{horn}} \Amf,b$, for all $b\in N$.
  
  Conversely, suppose $\Amf,P\not\preceq_{\textit{horn}}\Amf,b$ for all
  $b\in N$. By Lemma~\ref{lem:diagram}, we have
  $\Amf,P\not\leq_{\hALC{}}\Amf,b$, for all $b\in N$. Thus, there is a \hALC{}-concept $C_b$ such that
  $P\subseteq C_b^\Amf$ and $b\notin C_b^\Amf$, for all $b\in N$. Let $C$ be the
  conjunction of the concepts $C_b$, for all $b\in N$. Clearly,
  $P\subseteq C^\Amf$ and $C^\Amf\cap N=\emptyset$.

  \smallskip For (2), observe that $\Amf,X\preceq_{\textit{horn}}\Bmf,b$ iff
  in the disjoint union $\Amf\uplus \Bmf$ of $\Amf$ and $\Bmf$ the positive examples $P=X$ cannot
  be distinguished from the negative examples $N=\{b\}$.
  
  \smallskip For (3), let $\Amf,\Bmf,X,b$ be an input to HornSim. Define $\Amf'$ by adding a new
  $R$-predecessor $a$ to all nodes in $X$. Further, define $\Bmf'$ by
  taking the disjoint union of $\Amf$ and $\Bmf$ and adding a new
  $R$-predecessor $d$ to $b$, and making $d$ also a predecessor of all
  nodes in (the copies of) $X$. Then we have:
  \begin{align*}
    \Amf,X\preceq_{\textit{horn}}\Bmf,b \quad\text{iff\quad}
    \Amf',a\preceq_{\textit{horn}}\Bmf',d,
  \end{align*}
  and the latter is equivalent to $\Amf',a\leq_{\hALC} \Bmf',d$ by
  Theorem~\ref{thm:ehrenhorn}. 
  
  \smallskip Theorem~\ref{thm:ehrenhorn} implies (4).

  \smallskip For (5), observe that equivalence is just mutual entailment. 
  
  \smallskip For (6), let $\Amf,\Bmf,a,b$ be an instance of
  entailment. Construct a new structure $\Amf'$ by adding two fresh
  elements $a'$ and $b'$ to the disjoint union $\Amf\cup\Bmf$ of \Amf
  and \Bmf and making $a'$ an $R$-predecessor of
  both $a$ and $b$, and $b'$ an $R$-predecessor of $b$. It is routine
  to verify that: 
  $$\Amf,a\leq_{\hALC}\Bmf, b \quad\text{iff}\quad
  \Amf',a'\equiv_{\hALC} \Amf',b'.$$
\end{proof}

\medskip\noindent\textbf{Lemma~\ref{lem:hornsim}.}\textit{
  HornSim is \ExpTime-complete. 
}

\begin{proof}
  For the upper bound, we provide an alternating algorithm which
  essentially implements the Horn simulation game and requires only
  polynomial space. Let $(X_0,b_0)$ be the input. The algorithm
  proceeds in rounds and maintains a pair $(X,d)$ with $X\subseteq \text{dom}(\Amf)$
  and $d\in \text{dom}(\Bmf)$. At pair $(X,d)$, the algorithm proceeds as follows:
  \begin{itemize}

    \item For every $R\in \tau$, and every $Y$ with
      $X R^{\Amf\uparrow} Y$, guess non-empty $Y'\subseteq Y$ and $d'$ with
      $(d,d')\in R^\Bmf$ and proceed with the pair $(Y',d')$.

    \item For every $R\in \tau$, and every $d'$ with $(d,d')\in
      R^\Bmf$, guess non-empty $Y$ with $X R^{\Amf\downarrow} Y$ and proceed
      with the pair $(Y,d')$.

  \end{itemize}
  Note the similarity of the two points above with
  properties~\itforthh and~\itbackh, respectively. The
  algorithm rejects if $(X,d)$ does not satisfy~\itatomich
  or~\itsimb at some stage, or it fails guessing $Y'$ or $Y$,
  respectively, in the two points above; it accepts after
  $2^{|\text{dom}(\Amf)|}\cdot |\text{dom}(\Bmf)|$ rounds. It remains to observe that the algorithm
  obviously requires only polynomial space and that both~\itatomich
  and~\itsimb can be checked in polynomial time.
  
  \medskip For the lower bound, we reduce the word problem for
  polynomially space-bounded, alternating Turing machines (ATMs),
  similar to~\cite{HarelKV02}. An ATM is a tuple
  $M=(Q_e,Q_u,\Sigma,\Gamma,q_0,\mapsto,F_{\mathit{rej}},F_{\mathit{acc}})$ where
  $\Sigma$ is the input alphabet, $\Gamma$ is the tape alphabet,
  $Q_e$, $Q_u$, $F_{\mathit{rej}}$, and $F_{\mathit{acc}}$ are pairwise disjoint sets of
  existential, universal, rejecting, and accepting states,
  respectively. We denote the set of all states with $Q$, the set of
  all rejecting and accepting states with $F$, and assume that the
  initial state $q_0$ is universal, that is, $q_0\in Q_u$. The transition relation
  ${\mapsto}\subseteq Q\times\Gamma\times
  Q\times\Gamma\times\{L,R,H\}$ has binary branching degree for all
  $q\in Q_e\cup Q_u$, that is, every pair $(q,a)$ has precisely two
  successors.  Accordingly, we use $(q,a)\mapsto
  \langle(q_l,b_l,d_l),(q_r,b_r,d_r)\rangle$ to indicate that after
  reading symbol $a$ in state $q$, the TM can branch left with
  $(q_l,b_l,d_l)$ and right with $(q_r,b_r,d_r)$. We further assume
  that, for all $q\in F$ and $a\in \Gamma$, we have
  $(q,a)\mapsto(q,a,H)$, that is, once a final state $q\in F$ is
  reached, $M$ loops in the same configuration.

  Acceptance of such TMs is defined as follows. The possible
  computations of $M$ on some input word $w$ induce an AND-OR graph
  whose nodes are $M$'s configurations and whose edges correspond to
  successor configurations. With each node in the graph,
  we associate an acceptance value as follows: 
  \begin{itemize}

    \item accepting (respectively, rejecting) configurations are
      associated with value $1$ (respectively, $0$);

    \item a universal configuration is associated with the minimum
      value associated with one of its successor configurations;

    \item an existential configuration is associated with the maximum
      value associated with one of its successor configurations.

  \end{itemize}
  It is well-known that we can assume that in $M$ from every possible
  start configuration, we always reach final configurations, so the
  above is well-defined. The ATM $M$ accepts an input word $w$ iff
  the initial configuration is associated with value $1$.

  For the reduction, let $M$ be an $s(n)$-space bounded ATM, for some
  polynomial $s(n)$ and $w$ some input word of length $n$. We
  construct structures $\Amf$, $\Bmf$, a set $X\subseteq
  \text{dom}(\Amf)$, and $b\in \text{dom}(\Bmf)$ such that
  $$M\text{ accepts }w\text{\quad iff\quad $\Amf,X\preceq_{\textit{horn}}\Bmf,b$.}$$
  The structure $\Amf$ can be thought of as a disjoint union of $s(n)$
  structures $\Amf_1,\ldots,\Amf_{s(n)}$ and a single copy of \Bmf.
  Intuitively, each sub-structure $\Amf_i$ is responsible for a tape
  cell of one of $M$'s configurations on input $w$, and different tape
  cells are synchronized via the simulation conditions using different
  role names. We include a copy \Bmf in \Amf due to technical
  reasons made clear below. The role of \Bmf (as the second structure) is to
  control $M$'s computation by enforcing the right association of
  values with configurations in the AND-OR graph; it is essentially
  independent of $M$.

  The vocabulary consists of the following symbols:
  \begin{itemize}

    \item concept names $U_{0}, U_1, U_{\swarrow}, U_{\searrow}$,
      $V_1,\ldots,V_{s(n)}$, and

    \item role names $R_{id}$ and $R_{q,a,i,d}$ for every
      $q\in Q$, $a\in \Gamma$, $i\in\{1,\ldots,s(n)\}$, and
      $d\in\{\swarrow,\searrow\}$.

  \end{itemize}

  We start with \Bmf. Its domain $\text{dom}(\Bmf)$ consists of 20 elements: 16
  internal elements of shape $(\ast,l,r,val,d)$, where
  ${\ast}\in\{\wedge,\vee\}$, $l,r,val\in\{0,1\}$ satisfy $l\ast
  r\equiv val$, and $d\in \{\swarrow,\searrow\}$, and final elements
  $(val,d)$ with $val\in\{0,1\}$ and $d\in\{\swarrow,\searrow\}$. More
  precisely, we have for all $d\in \{\swarrow,\searrow\}$ the
  following elements:
  \begin{align*}
    (\wedge,0,0,0,d), (\wedge,0,1,0,d), (\wedge,1,0,0,d),
    (\wedge,1,1,1,d), \\
    (\vee,0,0,0,d), (\vee,0,1,1,d), (\vee,1,0,1,d), (\vee,1,1,1,d),
    \\
    (0,d),(1,d).
  \end{align*}
  Note that elements of the shape $(\ast,l,r,val,d)$ match the truth
  table entries for the operator $\ast$ in the sense that $l\ast
  r\equiv val$. We refer with \emph{universal} and \emph{existential}
  elements to elements of the shape $(\wedge,\cdots)$ and
  $(\vee,\cdots)$, respectively.

  For the concept names, we take
  \begin{align*}
    U_{\swarrow}^\Bmf & = \{ (\ast,l,r,val,d)\in B\mid
    d=\swarrow\}\cup\{(0,\swarrow),(1,\swarrow)\}, \\
    U_{\searrow}^\Bmf & = \{ (\ast,l,r,val,d)\in B\mid
    d=\searrow\}\cup\{(0,\searrow),(1,\searrow)\}, \\
    U_0^\Bmf & = \{(0,\swarrow),(0,\searrow)\}, \\
    U_1^\Bmf & = \{(1,\swarrow),(1,\searrow)\}, \\
    V_i^\Bmf & = \emptyset\hspace{2cm}\text{for all
      $i\in\{1,\ldots,s(n)\}$}.
  \end{align*}
  For the role names of shape $R_{q,a,i,d}$, we take
  \begin{itemize}

    \item $( (\ast,l,r,val,d),(\ast',l',r',val',d'))\in
      R_{q,a,i,d}^\Bmf$ iff $\ast\neq\ast'$ and either $val'=l$ and
      $d'={\swarrow}$ or $val'=r$ and $d'={\searrow}$. 

  \end{itemize}
  That is, we switch between existential and universal elements and require that the next
  value $val'$ coincides with the current $l$ or $r$ depending on the
  branch. Everything is independent of the values of
  $q,a,i$, but depends on $d$. We further have 
  \begin{itemize}

    \item $((\ast,l,r,val,d),(val',d'))\in R_{q,a,i,d}^\Bmf$ iff either $val'=l$ and
      $d'={\swarrow}$ or $val'=r$ and $d'={\searrow}$.

  \end{itemize}
  Finally, for the role name $R_{id}$, $R^\Bmf_{id}$ is the
  identity on the final elements, that is, $\{( b,b)\mid
    b\in \text{dom}(\Bmf)\text{ of shape $(val,d)$}\}$.

  \smallskip
  We continue with structures $\Amf_i$, for each $i\in\{1,\ldots,s(n)\}$.
  The domain $\text{dom}(\Amf_i)$ of each $\Amf_i$  consists of all possible contents
  of a cell, extended with a direction, that is,
  $$\text{dom}(\Amf_i) = (\Gamma\cup(Q\times\Gamma))\times\{\swarrow,\searrow\}.$$
  For the concept names, we take:
  \begin{align*}
    U_{\swarrow}^{\Amf_i} & = (\Gamma\cup
    (Q\times\Gamma))\times\{\swarrow\}, \\
    U_{\searrow}^{\Amf_i} & = (\Gamma\cup
    (Q\times\Gamma))\times\{\searrow\}, \\
    U_0^{\Amf_i} & =
    F_{\mathit{rej}}\times\Gamma\times\{\swarrow,\searrow\}, \\ 
    U_1^{\Amf_i} & =
    F_{\mathit{acc}}\times\Gamma\times\{\swarrow,\searrow\}, \\
    V_j^{\Amf_i} & = \begin{cases} \text{dom}(\Amf_i) & \text{if $i\neq j$} \\
      \emptyset & \text{otherwise}\end{cases} \quad\quad\text{for all
      $j\in\{1,\ldots,s(n)\}$}.
  \end{align*}
  For a role name $R_{q,a,j,d}$, intuitively
  $R_{q,a,j,d}^{\Amf_i}$ contains all pairs $(\gamma,\gamma')$ such
  that there is configuration $\alpha$ in state $q$, with head
  position $j$, reading $a$, (and previous direction $d$), and a
  successor configuration $\alpha'$ of $\alpha$, in which cell $i$
  with content $\gamma$ has been updated to $\gamma'$.  More
  precisely, we include in $R_{q,a,j,d}^{\Amf_i}$ for all $(q,a)\to
  \langle (q_l,b_l,d_l),(q_r,b_r,d_r)\rangle$:
  \begin{itemize}

    \item if $d_l=L$, the following pairs: 
      \begin{itemize}

	\item if $i=j$, then $((q,a,d),(b_l,\swarrow))$, 

	\item if $i=j+1$, then $((b,d),(q_l,b,\swarrow))$ for all
	  $b\in \Gamma$,

	\item if $i\notin\{j,j+1\}$, then $((b,d),(b,\swarrow))$
	  for all $b\in\Gamma$;
      \end{itemize}

    \item if $d_l=H$, the following pairs:
      \begin{itemize}

	\item if $i=j$, then $((q,a,d),(q_l,b_l,\swarrow))$,

	\item if $i\neq j$, then $( (b,d),(b,\swarrow))$, for all
	  $b\in\Gamma$;

      \end{itemize}

    \item if $d_l=R$, the following pairs:
      \begin{itemize}

	\item if $i=j$, then $((q,a,d),(b_l,\swarrow))$,

	\item if $i=j-1$, then $((b,d),(q_l,b,\swarrow))$ for all
	  $b\in \Gamma$,

	\item if $i\notin\{j,j-1\}$, then $( (b,d),(b,\swarrow))$ for
	  all $b\in\Gamma$;

      \end{itemize}

    \item the cases for $d_r=L$, $d_r=H$, and $d_r=R$ are
      obtained by replacing above $\swarrow$, $b_l$, and $q_l$ with $\searrow$,
      $b_r$, and $q_r$, respectively.

  \end{itemize}
  Finally, for the role name $R_{id}$, $R_{id}^{\Amf_i}$ is the 
  identity relation on the set $\Gamma\cup(F\times\Gamma)$.

    The structure $\Amf$ is now constructed by first taking the disjoint
  union of all $\Amf_i$, $i\in\{1,\ldots,s(n)\}$ and \Bmf, and then
  adding connections between elements from $\Amf_i$ and \Bmf. For all
  $c\in \text{dom}(\Amf_i)$, let us denote with $(c,i)$ the corresponding copy of $c$
  in $\text{dom}(\Amf)$. We add the following connections: 
  \begin{itemize}

    \item $((a,d,i),b)\in R^\Amf$, for every $(a,d,i)\in
      \text{dom}(\Amf)$, 
      $b\in \text{dom}(\Bmf)$, and every role name $R$;

    \item $((q,a,d,i),b)\in R^\Amf$, for every for every role name $R$
      and every $(q,a,d,i)\in \text{dom}(\Amf)$ and $b\in
      \text{dom}(\Bmf)$ such that either $q\in Q_e$ and $b$ is
      universal or $q\in Q_u$ and $b$ is existential.

  \end{itemize}
  Let us point out two important properties of the structure \Amf.
  First, by the connections of the $\Amf_i$ with the copy of \Bmf in \Amf, we have
  that:
  \begin{itemize}

    \item[(P1)] $\Bmf,b\preceq_{\textit{sim}} \Amf,x$ for every universal element $b\in\text{dom}(\Bmf)$ 
      and every element $x$ of shape $(a,d,i)\in \text{dom}(\Amf)$ and every element
      $x=(q,a,d,i)\in \text{dom}(\Amf)$ with $q\in Q_u$, and similarly, $\Bmf,b\preceq_{\textit{sim}} \Amf,x$
      for every
      existential element $b\in \text{dom}(\Bmf)$ and every $x$ of shape
      $(a,d,i)\in \text{dom}(\Amf)$ and every element $x=(q,a,d,i)\in
      \text{dom}(\Amf)$ with $q\in
      Q_e$. Even more, the witnessing simulation is of the form
      $\{( (a,d,i),b)\}\cup \{(b',b')\mid b'\in \text{dom}(\Bmf)\}$.
  
  \end{itemize}
  The second property concerns the synchronization of the different
  $\Amf_i$. For formulating the property, it is convenient to
  associate with a direction $d\in\{\swarrow,\searrow\}$ and some
  configuration $\alpha$ of $M$ a set $X_{\alpha,d}\subseteq
  \text{dom}(\Amf)$ in the following natural way. If $\alpha$ is the configuration
  $b_1\cdots b_{s(n)}$, then $X_{\alpha,d}$ is the set
  $$X_{\alpha,d}=\{(b_i,d,i)\mid i\in\{1,\ldots,s(n)\}\}.$$
  We now have the following property:
  \begin{itemize}

    \item[(P2)] For every configuration $\alpha$ in state $q$ with
      symbol $a$ at head position $i$, there are precisely two role
      names $R$ such that every $(c,j)\in X_{\alpha,d}$ (for any
      $d$) has an $R$-successor in $\Amf$, namely $R_{q,a,i,\swarrow}$
      and $R_{q,a,i,\searrow}$.  Moreover, if we move jointly to these
      successors, we arrive at $X_{\alpha_l,\swarrow}$ and
      $X_{\alpha_r,\searrow}$, respectively, where $\alpha_l,\alpha_r$
      are the two successor configurations of $\alpha$ according to
      $\mapsto$.

  \end{itemize}

  \smallskip
  Let $\alpha_0$ be $M$'s initial configuration on input $w$ and $\hat
  b$ the element $(\wedge,1,1,1,\swarrow)$ in $\Bmf$. Based on the
  insights~(P1) and~(P2) given above, we can verify
  correctness of the reduction.

  \smallskip\noindent\textit{Claim.} $M$ accepts $w$ iff
  $\Amf,X_{\alpha_0,\swarrow}\preceq_{\textit{horn}}\Bmf,\hat b$.

  \smallskip\noindent\textit{Proof of the Claim.} ``$\Rightarrow$'' We
  define a Horn simulation $Z$ guided by the AND-OR graph induced by
  the computation of $M$ on input $w$. Let $\Cmc$ be the set of all
  configurations of $M$ and denote with $\ell(\alpha)$ the associated
  acceptance value of a configuration $\alpha\in \Cmc$. Define
  $Z$ as follows: 
  \begin{itemize}

    \item $X_{\alpha,d}Z(\ast,l,r,val,d)$ for all $d\in
      \{\swarrow,\searrow\}$ such that:
      \begin{itemize}

	\item either $\alpha$ is universal and $\ast=\wedge$ or
	  $\alpha$ is existential and $\ast=\vee$,

	\item $val=\ell(\alpha)$ and $\alpha$ has a left successor
	  configuration $\alpha_l$ with $l=\ell(\alpha_l)$ and a right
	  successor configuration $\alpha_r$ with $r=\ell(\alpha_r)$.

      \end{itemize}

    \item $X_{\alpha,d}Z(val,d)$ for all $d\in\{\swarrow,\searrow\}$
      such that $\alpha$ is a final configuration with
      $val=\ell(\alpha)$.

    \item $\{b'\} Z b'$ for all $b'\in \text{dom}(\Bmf)$,

  \end{itemize}
  Note that we clearly have $X_{\alpha_0,\swarrow} Z \hat b$. It thus
  remains to verify that $Z$ is indeed a Horn simulation, that is, it
  satisfies Conditions~\itatomich, \itsimb, \itforthh, and
  \itbackh.
  \begin{itemize}

    \item Condition~\itatomich is satisfied due to the definition of
      $U^\Amf$ for the concept names $U$. 

    \item Condition~\itforthh is a consequence
      of~(P1) above.

    \item For Condition~\itforthh, consider a pair $X_{\alpha,d} Z
      (\ast,l,r,val,d)$, that is, $\alpha$ is not a final
      configuration; the other cases are similar. Further assume that
      $X_{\alpha,d} (R^\Amf)^\uparrow Y$. By~(P2), we know
      that $R$ is of shape $R_{q,a,i,d'}$,
      $d'\in\{\swarrow,\searrow\}$, where $q$ is the state of
      $\alpha$, $i$ is its head position, and $a$ is the symbol of the
      head.  Moreover, (P2) implies that
      $Y=X_{\alpha_l,\swarrow}$ or $Y=X_{\alpha_r,\searrow}$
      (depending on $d'$), where $\alpha_l$, $\alpha_r$ are left and
      right successor configurations of $\alpha$. It remains to note
      that, by definition of $Z$, there is some $b'\in
      \text{dom}(\Bmf)$ such that $YZb'$.

    \item Condition~\itbackh is also a consequence
      of~(P1). Indeed, let $X_{\alpha,d}Z(\ast,l,r,val,d)$ and
      $b$ some $R$-successor of $(\ast,l,r,val,d)$ in \Bmf.
      By~(P1), $b$ is an $R$-successor $x'$ of some (actually:
      all) $x\in X_{\alpha,d}$. By definition of $Z$, we have
      $\{b\}Zb$.

  \end{itemize}
  
  ``$\Leftarrow$'' Let $Z$ be a Horn simulation with
  $X_{\alpha_0,\swarrow}Z\hat b$. We show that $M$ accepts $w$ by
  constructing the (relevant subset of the) AND-OR graph induced by
  $M$'s computation on input $w$. We first show the following for all
  $X_{\alpha,d} Z b$: 
  \begin{enumerate}

    \item $\alpha$ is an accepting (rejecting) configuration iff $b$
      is an accepting (rejecting) element in \Bmf.

    \item if $\alpha$ is not final, then also $X_{\alpha_l,\swarrow} Z b_l$ and
      $X_{\alpha_r,\searrow} Z b_r$ for some $b_l$ and $b_r$, where
      $\alpha_l$, $\alpha_r$ are the left and right successor
      configurations of $\alpha$. 

  \end{enumerate} 
  It follows from the definition of $R_{id}^\Amf$ and $R_{id}^\Bmf$
  and the fact that $Z$ satisfies Conditions~\itforthh and~\itbackh, that
  $\alpha$ is a final configuration iff $b$ is a final element.
  Property~1) then follows from Condition~\itsimb and the definition of
  $U_{0/1}^\Bmf$ for final elements. 

  For Property~2), let $Y$ be some set with
  $X_{\alpha,d}(R^\Amf)^\uparrow Y$ for some $R$.  By~(P2), we
  know that $R$ is of shape $R_{q,a,i,d'}$,
  $d'\in\{\swarrow,\searrow\}$, where $q$ is the state of $\alpha$,
  $i$ is its head position, and $a$ is the symbol of the head.
  Moreover, $Y=X_{\alpha_l,\swarrow}$ or $Y=X_{\alpha_r,\searrow}$
  (depending on $d'$). By Condition~\itforthh, there is a
  non-empty $Y'\subseteq Y$ and some $b'$ with $(b,b')\in R^\Bmf$ such
  that $Y' Z b'$. By Condition~\itatomich and the definition of
  $V_i^\Amf$ and $V_i^\Bmf$, we have $Y'=Y$. 

  Now, $X_{\alpha_0,\swarrow} Z \hat b$ and Property~2) imply that for
  every configuration $\alpha$ reachable from $\alpha_0$ we have
  $X_{\alpha} Z b$ for some $b$. We claim that in this case the value
  $val$ of $b$ is the value associated to the configuration $\alpha$ in the
  AND-OR graph. Indeed, for final configurations this is a consequence
  of Property~1) above. For the remaining configurations, this is a
  consequence of the definition of $R_{q,a,i,d}^\Bmf$. It remains to
  note that this implies that $\alpha_0$ is associated with value
  $1$ since the value $val$ in $\hat b$ is $1$.
  This finishes the proof of the Claim.

  \medskip Since the construction of \Amf and \Bmf can be carried out in
  polynomial time, this establishes \ExpTime-hardness of HornSim.
\end{proof}

For the restricted problems $\ell$-entailment, $\ell$-equivalence, and
$\ell$-HornSim, we prove the following Lemma analogously to 
Lemma~\ref{lem:reductions}.

\begin{lemma} \label{lem:bounded-reductions}
  There are the following reductions:
  \begin{enumerate}

    \item[(1)] $\text{$\ell$-CBE}\leq_T^P\ell\text{-}\text{HornSim}$;

    \item[(2)] $\overline{\ell\text{-HornSim}}\leq_m^P\text{$\ell$-CBE}$;

    \item[(3)] $\ell\text{-HornSim}\leq_m^P \text{$\ell$-Entailment}$;

    \item[(4)] $\text{$\ell$-Entailment}\leq_m^P \ell\text{-HornSim}$;

    \item[(5)] $\text{$\ell$-Equivalence}\leq_T^P \text{$\ell$-Entailment}$;

    \item[(6)]
      $\text{$\ell$-Entailment}\leq_m^P\text{$(\ell+1)$-Equivalence}$.

  \end{enumerate}
\end{lemma}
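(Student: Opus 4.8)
The plan is to re-run the six constructions from the proof of Lemma~\ref{lem:reductions} essentially verbatim, now carrying the depth parameter $\ell$ through each argument and replacing every appeal to Theorem~\ref{thm:ehrenhorn} and Lemma~\ref{lem:diagram} by its $\ell$-bounded variant (whose left-hand equivalences already come with a matching $\ell$). The reductions split into two groups: the four that operate ``at the root'' without inserting a new role edge, where $\ell$ is transported unchanged, and the two that prepend a fresh node, where the effect on $\ell$ must be computed.

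For the first group: reduction~(1) becomes a Turing reduction that, on input $\Amf,P,N$, calls the $\ell$-HornSim oracle on $\Amf,P,\Amf,b$ for each $b\in N$ and accepts iff every call is negative; correctness is the $\ell$-bounded form of Lemma~\ref{lem:diagram}, together with the observation that the conjunction $\bigsqcap_{b\in N} C_b$ of depth-$\le\ell$ separators is again of depth $\le\ell$, so the single separating concept stays within depth $\ell$. Reduction~(2) maps an $\ell$-HornSim instance $\Amf,\Bmf,X,b$ to the $\ell$-CBE instance $(\Amf\uplus\Bmf,\ P=X,\ N=\{b\})$: by the $\ell$-bounded Lemma~\ref{lem:diagram}, $X$ is separable from $b$ by a depth-$\le\ell$ concept iff $\Amf,X\not\preceq_{\textit{horn}}^{\ell}\Bmf,b$, which is exactly $\overline{\ell\text{-HornSim}}$. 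Reduction~(4) is immediate from the $\ell$-bounded Theorem~\ref{thm:ehrenhorn}, since $\Amf,a\leq_{\hALC}^{\ell}\Bmf,b$ is literally $\Amf,\{a\}\preceq_{\textit{horn}}^{\ell}\Bmf,b$, i.e. $\ell$-Entailment is $\ell$-HornSim with a singleton first component. Reduction~(5) is unchanged, since $\ell$-equivalence is mutual $\ell$-entailment and costs two oracle calls.

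The two remaining reductions prepend a fresh root. For~(6) I reuse the gadget of Lemma~\ref{lem:reductions}(6): add fresh $a'$ as an $R$-predecessor of both $a$ and $b$ and fresh $b'$ as an $R$-predecessor of $b$, so that $\Amf,a\leq_{\hALC}^{\ell}\Bmf,b$ iff $a'$ and $b'$ are $\hALC$-equivalent. Here a concept witnessing a failure of $\ell$-entailment at $(a,b)$ only surfaces one role step higher, wrapped in an additional $\exists R$ or $\forall R$; its depth is therefore $\ell+1$, which is exactly why the target is $(\ell+1)$-Equivalence. For~(3) the target is the bound-preserving correspondence $\Amf,X\preceq_{\textit{horn}}^{\ell}\Bmf,b$ iff $\Amf',a\preceq_{\textit{horn}}^{\ell}\Bmf',d$, i.e. $\Amf',a\leq_{\hALC}^{\ell}\Bmf',d$, where $a$ sees exactly the set $X$ and $d$ sees $b$ together with a private isomorphic copy of $X$ (the copy serving only to make \itbackh and \textit{(sim)} hold harmlessly, as in the unbounded proof). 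The delicate point, and the reason~(3) cannot simply copy the gadget of~(6), is that the construction must feed the set $X$ into the game \emph{without} spending a round, so that the depth-$\le\ell$ Horn theory of $X$ lands on the single node $a$ at bound $\ell$ rather than $\ell+1$; I expect to achieve this by arranging the root so that the $\ell$-round Horn-simulation game from the singleton position $(\{a\},d)$ re-enacts the $\ell$-round game from $(X,b)$ at the same bound.

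The main obstacle is precisely this depth-bookkeeping for~(3): I have to verify, against the inductive definition of $\preceq_{\textit{horn}}^{\ell}$ and its decomposition via \textit{(sim$^{\ell}$)}, that the root construction preserves the bound $\ell$ -- in sharp contrast to the equivalence gadget of~(6), where the extra role quantifier provably costs a round and forces the $+1$. Everything else is routine: each construction is computable in polynomial time and alters the bound by at most one, so the \ExpTime\ bounds for binary $\ell$ and the \PSpace\ bounds for unary $\ell$ are inherited along the chain of reductions exactly as in the unbounded case, and the single $+1$ appearing in~(6) is immaterial for both encodings.
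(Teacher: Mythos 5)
Your items (1), (2), (4), (5), and (6) are the paper's six constructions with $\ell$ threaded through, which is precisely what the paper does (its proof of this lemma is literally ``analogously to Lemma~\ref{lem:reductions}''), so those parts are in order. The genuine gap is your treatment of (3). The bound-preserving equivalence you posit, $\Amf,X\preceq_{\textit{horn}}^{\ell}\Bmf,b$ iff $\Amf',a\preceq_{\textit{horn}}^{\ell}\Bmf',d$, is false for any gadget of the shape you describe, and the ``arrangement of the root'' you ``expect'' to supply cannot exist: positions of the Horn simulation game move only along role edges, so the first of the $\ell$ rounds from $(\{a\},d)$ is inevitably spent descending the fresh $R$-edge, leaving only the depth-$(\ell-1)$ Horn theory of $X$ versus $b$ visible; dually, a depth-$\le\ell$ concept distinguishing $a$ from $d$ wraps a depth-$(\ell-1)$ concept in $\exists R$/$\forall R$. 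Already at $\ell=0$ your claimed iff fails outright: $a$ and $d$ are fresh atomless roots, so $\Amf',a\preceq_{\textit{horn}}^{0}\Bmf',d$ holds unconditionally, while $\Amf,X\preceq_{\textit{horn}}^{0}\Bmf,b$ fails whenever $X\subseteq A^{\Amf}$ and $b\notin A^{\Bmf}$.

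The obstacle is self-inflicted. The paper defines $\ell$-entailment with $\ell$ \emph{as part of the input} ($\Amf,\Bmf,a,b,\ell$), so the many-one reduction in (3) is free to output the instance $(\Amf',\Bmf',a,d,\ell+1)$; the correct correspondence, provable exactly as in the unbounded case by applying \itbackh at the root and taking the canonical witness subset from the proof of Lemma~\ref{lem:diagram}, is that there exists $X_{0}\subseteq X$ with $\Amf,X_{0}\preceq_{\textit{horn}}^{\ell}\Bmf,b$ iff $\Amf',a\preceq_{\textit{horn}}^{\ell+1}\Bmf',d$, i.e.\ iff $\Amf',a\leq^{\ell+1}_{\hALC}\Bmf',d$. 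This $+1$ is exactly parallel to the one you correctly applied in (6), and your own closing paragraph already observes that a shift of the parameter by one is immaterial for both the \PSpace\ (unary) and \ExpTime\ (binary) consequences, so nothing downstream is affected. Replace the bound-preserving claim by the shifted one and your proof closes; as written, (3) rests on a false equivalence and an unfulfilled promise. (One further point, which I do not count against you since the paper's own wording in Lemma~\ref{lem:reductions} has the same wrinkle: by Lemma~\ref{lem:diagram} separability in (2) corresponds to the failure of $\Amf,X_{0}\preceq_{\textit{horn}}^{\ell}\Bmf,b$ for \emph{all} $X_{0}\subseteq X$, not just $X_{0}=X$; the relation $\preceq_{\textit{horn}}$ is not monotone in its first argument because of condition \itsimb, so the subset quantifier should be carried along explicitly.)
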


Thus, it suffices to establish the following lemma for the complexity
of $\ell$-HornSim to finish the proof of Theorem~\ref{thm:complexity}.

\begin{lemma} \label{lem:k-hornsim}
  $\ell$-HornSim is \PSpace-complete for unary encoding of
  $\ell$ and \ExpTime-complete for binary encoding.
\end{lemma}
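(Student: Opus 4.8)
The plan is to reuse, with minimal changes, the alternating algorithm and the lower-bound reduction from the proof of Lemma~\ref{lem:hornsim}, and to read off the two different complexities from the two alternation characterizations $\text{APTIME}=\PSpace$ and $\text{APSPACE}=\ExpTime$.

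\textbf{Upper bounds.} I would take the alternating procedure that implements the Horn simulation game in the proof of Lemma~\ref{lem:hornsim} and simply equip it with a round counter that halts after $\ell$ rounds instead of after $2^{|\text{dom}(\Amf)|}\cdot|\text{dom}(\Bmf)|$ rounds. Player~1's choices (a role $R$ together with a set $Y$ in~\itforthh, or an $R$-successor $b'$ in~\itbackh) are realized as universal branches, player~2's responses as existential branches, and the local tests~\itatomich and~\itsimb (the latter via $\preceq_{\textit{sim}}^{\ell'}$, which is polynomial-time computable since it coincides with $\preceq_{\textit{sim}}$ once $\ell'\ge|\text{dom}(\Amf)|\cdot|\text{dom}(\Bmf)|$) are performed along the way. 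Each configuration stores only a pair $(X,d)$ (one bit per element of $\text{dom}(\Amf)$ plus $\log|\text{dom}(\Bmf)|$ bits) and the counter, so the procedure runs in polynomial space. For $\ell$ in binary the counter still fits in $\log\ell=\text{poly}$ bits, giving an alternating polynomial-space procedure and hence the $\ExpTime$ upper bound by $\text{APSPACE}=\ExpTime$. For $\ell$ in unary we have $\ell\le|\text{input}|$, so the height of the computation tree is $\ell\cdot\text{poly}=\text{poly}$, giving an alternating polynomial-time procedure and hence the $\PSpace$ upper bound by $\text{APTIME}=\PSpace$.

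\textbf{Lower bounds.} For the $\ExpTime$ lower bound with binary $\ell$, note that the procedure of Lemma~\ref{lem:hornsim} accepts after $2^{|\text{dom}(\Amf)|}\cdot|\text{dom}(\Bmf)|$ rounds, so $\Amf,X\preceq_{\textit{horn}}\Bmf,b$ iff $\Amf,X\preceq_{\textit{horn}}^{\ell}\Bmf,b$ for $\ell=2^{|\text{dom}(\Amf)|}\cdot|\text{dom}(\Bmf)|$. Mapping a HornSim instance to this $\ell$-HornSim instance, with $\ell$ written in binary using only polynomially many bits, is a polynomial reduction, so the $\ExpTime$-hardness of HornSim transfers. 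For the $\PSpace$ lower bound with unary $\ell$, I would instead reduce from the word problem for \emph{polynomial-time} bounded ATMs, which is $\PSpace$-complete by $\text{APTIME}=\PSpace$; such a machine is in particular $p(n)$-space bounded, and (using that final states loop) every branch of its AND--OR computation graph reaches a final configuration within $p(n)$ steps. The construction of Lemma~\ref{lem:hornsim} then applies verbatim with the round bound set to $\ell=p(n)$ in unary.

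\textbf{Main obstacle.} The delicate point is to verify that capping the game at $\ell$ rounds leaves the outcome unchanged when $\ell$ matches the depth of the computation, i.e.\ that the correctness argument for the unbounded reduction (Properties~(P1) and~(P2) and the labeling of configurations by acceptance values) survives the round-by-round decomposition of the game and of its~\itsimb subgames into $\preceq_{\textit{sim}}^{\ell'}$. Here the saving grace is the simplicity of the simulations witnessing~(P1): they have the form $\{((a,d,i),b)\}\cup\{(b',b')\mid b'\in\text{dom}(\Bmf)\}$, so player~2 wins them in \emph{any} number of rounds and the bounded condition $\preceq_{\textit{sim}}^{\ell'}$ holds at every reached position. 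Consequently, once every reachable configuration has been labeled (at depth $\le\ell$), the remaining rounds are played inside looping final configurations and cannot spoil player~2's strategy, so the equivalence ``$M$ accepts $w$ iff $\Amf,X_{\alpha_0,\swarrow}\preceq_{\textit{horn}}^{\ell}\Bmf,\hat b$'' holds as in the unbounded case.
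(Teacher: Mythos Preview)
Your proposal is correct and follows essentially the same approach as the paper: the upper bounds are obtained by running the alternating algorithm of Lemma~\ref{lem:hornsim} for $\ell$ rounds, and the \PSpace lower bound is obtained by rerunning the reduction with a polynomial-\emph{time} ATM and $\ell=p(n)$ in unary, exactly as the paper does.

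The one noteworthy difference is in the \ExpTime lower bound. The paper re-invokes the ATM reduction from Lemma~\ref{lem:hornsim} directly, taking $\ell$ to be the $2^{O(s(n))}$ time bound after which the $s(n)$-space-bounded ATM must halt; you instead reduce from HornSim itself by choosing $\ell$ exponential in $|\text{dom}(\Amf)|$. Both routes work; yours is arguably cleaner since it treats Lemma~\ref{lem:hornsim} as a black box. There is, however, a small imprecision in your justification: the fact that the algorithm of Lemma~\ref{lem:hornsim} accepts after $N=2^{|\text{dom}(\Amf)|}\cdot|\text{dom}(\Bmf)|$ rounds does \emph{not} literally say that $\preceq_{\textit{horn}}$ coincides with $\preceq_{\textit{horn}}^{N}$, because that algorithm checks the full \itsimb at each step, whereas $\preceq_{\textit{horn}}^{\ell}$ uses the decreasing bounds \textit{(sim$^{\ell-k}$)}. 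To close the gap you need $\ell$ large enough that \textit{(sim$^{\ell-k}$)} agrees with the unbounded \itsimb throughout the first $N$ rounds; since $\preceq_{\textit{sim}}^{m}$ stabilizes once $m\ge|\text{dom}(\Amf)|\cdot|\text{dom}(\Bmf)|$, taking $\ell=N+|\text{dom}(\Amf)|\cdot|\text{dom}(\Bmf)|$ suffices and is still polynomial-size in binary. With this adjustment your reduction goes through.
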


\begin{proof}
  For the upper bounds, observe that we can use the alternating algorithm given in the proof of
  Lemma~\ref{lem:hornsim} and run it for $\ell$ rounds.  It
  remains to observe that it is an alternating, polynomially time
  bounded procedure in case of unary encoding, and an alternating,
  polynomially space bounded procedure in case of binary encoding.
  The \PSpace and \ExpTime upper bounds follow. 
 
  For the \ExpTime-lower bound, we take the same reduction as in the
  proof of Theorem~\ref{lem:hornsim}, but add an input $\ell$
  specifying the maximum time until a final state is reached. It is
  well-known that we can assume without loss of generality that this
  happens after $2^{O(s(n))}$ steps. Binary encoding of $\ell$ yields the
  result. 

  The \PSpace lower bound follows a similar strategy. We reduce the
  word problem for polynomially \emph{time} (instead of space) bounded
  ATMs. There is a fixed such ATM $M$ with a \PSpace-hard word
  problem; let $p(n)$ be the polynomial bound on the time of $M$.  The
  reduction is now as in Lemma~\ref{lem:hornsim} except
  that we replace $s(n)$ with $p(n)$, and on input $w$, we
  additionally set $\ell=p(|w|)$.
\end{proof}

We finish this section with TBox distinguishability. Let us denote
with \emph{TBox entailment} and \emph{TBox equivalence}, the problems
whether $\Amf\leq_{\hALC}\Bmf$ and $\Amf\equiv_{\hALC}\Bmf$,
respectively, for given structures $\Amf,\Bmf$. Moreover, denote with
\emph{GHornSim} the problem whether $\Amf \preceq_{\textit{horn}}
\Bmf$. Finally, we refer with TBox $\ell$-entailment and TBox
$\ell$-equivalence to the restricted versions.

\begin{theorem} \label{thm:complexity-global}
  TBox entailment and equivalence are \ExpTime-complete. Moreover,
  TBox $\ell$-entailment and TBox $\ell$-equivalence are \ExpTime-complete for
  binary encoding of $\ell$ and \PSpace-complete for unary encoding. 
\end{theorem}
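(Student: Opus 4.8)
The plan is to derive everything from the game characterisation of TBox entailment in Theorem~\ref{thm:ehrenhorn-tbox}, which says that $\Amf \leq_{\hALC} \Bmf$ holds iff for every $b\in\text{dom}(\Bmf)$ there is a nonempty $X\subseteq\text{dom}(\Amf)$ with $\Amf,X\preceq_{\textit{horn}}\Bmf,b$, with the same statement for the $\ell$-bounded versions. For the upper bounds I would re-use the alternating, polynomial-space procedure from the proof of Lemma~\ref{lem:hornsim} almost verbatim, wrapped in two extra quantifier layers matching the shape above: first \emph{universally} branch over the (polynomially many) nodes $b\in\text{dom}(\Bmf)$, then \emph{existentially} guess a nonempty set $X\subseteq\text{dom}(\Amf)$ (one block of $|\text{dom}(\Amf)|$ guessed bits), and finally run the Horn simulation game from $(X,b)$ exactly as in Lemma~\ref{lem:hornsim}. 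Guessing $b$ and $X$ costs only polynomial time and space, so the whole machine is alternating and polynomial-space in the unrestricted and binary-$\ell$ cases, giving the \ExpTime{} bound (\textsc{APSpace} $=$ \ExpTime), while in the unary-$\ell$ case the game lasts only polynomially many rounds, so the machine is alternating polynomial-\emph{time}, giving \PSpace{} (\textsc{APTime} $=$ \PSpace).

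For the lower bounds I would reduce the (already \ExpTime-complete) concept problem Entailment to TBox entailment. Given an Entailment instance $\Amf,a,\Bmf,b$, introduce a fresh role name $P$ and fresh root nodes $r_A,r_B$ carrying no concept name, and set $\Amf^{\star}=(\Amf\uplus\Bmf)\cup\{r_A\}$ with the single edge $(r_A,a)\in P^{\Amf^{\star}}$, and $\Bmf^{\star}=\Bmf\cup\{r_B\}$ with the single edge $(r_B,b)\in P^{\Bmf^{\star}}$. The disjoint copy of $\Bmf$ inside $\Amf^{\star}$ guarantees, via Lemma~\ref{lem:hornbisim}$(i)$ applied to the bisimulation identifying that copy with $\Bmf$ in $\Bmf^{\star}$, that every $b'\in\text{dom}(\Bmf)$ is Horn-simulated by the singleton of its copy; hence the ``for all $b'$'' condition of Theorem~\ref{thm:ehrenhorn-tbox} is automatic for all nodes of $\Bmf^{\star}$ except $r_B$. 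It therefore remains to show that there is some $X'$ with $\Amf^{\star},X'\preceq_{\textit{horn}}\Bmf^{\star},r_B$ iff $\Amf,a\leq_{\hALC}\Bmf,b$.

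The forward direction uses the witness $X'=\{r_A\}$, checking \itatomich, \itsimb, \itforthh, and \itbackh{} along the unique $P$-edges, which reduce the game to the original position $(\{a\},b)$. For the converse, the crucial point is that, since $r_B$ has the $P$-successor $b$ and $r_A$ is the \emph{only} node of $\Amf^{\star}$ with an outgoing $P$-edge, whose \emph{unique} $P$-successor is $a$, condition \itbackh{} applied to $(r_B,b)$ forces the responding set to be nonempty and contained in $\{a\}$, i.e.\ equal to $\{a\}$; the game from $(\{a\},b)$ then stays inside the $\Amf$-component, so any successful $X'$ yields $\Amf^{\star},\{a\}\preceq_{\textit{horn}}\Bmf^{\star},b$, which collapses to $\Amf,\{a\}\preceq_{\textit{horn}}\Bmf,b$, equivalently $\Amf,a\leq_{\hALC}\Bmf,b$ by Lemma~\ref{lem:diagram}. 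This gives $\Amf,a\leq_{\hALC}\Bmf,b$ iff $\Amf^{\star}\leq_{\hALC}\Bmf^{\star}$, establishing \ExpTime-hardness of TBox entailment.

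Finally, TBox equivalence reduces to TBox entailment by running the test in both directions, and TBox entailment reduces to TBox equivalence by sending $\Amf\leq_{\hALC}\Bmf$ to the pair $(\Amf\uplus\Bmf,\Amf)$: invariance of TBoxes under disjoint unions makes $\Amf\uplus\Bmf\leq_{\hALC}\Amf$ hold unconditionally, so $\Amf\uplus\Bmf\equiv_{\hALC}\Amf$ iff $\Amf\leq_{\hALC}\Bmf$. For the bounded variants the same gadgets work: the reduction above adds exactly one level of guarded quantification, so $\ell$-Entailment reduces to $(\ell{+}O(1))$-TBox-entailment (the $\pm1$ bookkeeping already present in Lemma~\ref{lem:bounded-reductions}), and, combined with the \PSpace/\ExpTime{} completeness of $\ell$-HornSim from Lemma~\ref{lem:k-hornsim}, this transfers the \PSpace{} (unary $\ell$) and \ExpTime{} (binary $\ell$) lower bounds. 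I expect the main obstacle to be precisely the converse half of the $r_B$-analysis: one must rule out that some unintended $X'$ satisfies the simulation at $r_B$ even when $\Amf,a\not\leq_{\hALC}\Bmf,b$, and this is exactly where the single-successor design of $r_A$ together with Lemma~\ref{lem:diagram} — which turns the existential ``some $X'$'' into a whole-domain entailment and then pins it down to the point $a$ — is essential.
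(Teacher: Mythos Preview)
Your proposal is correct. The upper-bound argument and the entailment/equivalence interreduction match the paper's Lemmas~\ref{lem:reductions-tbox} and~\ref{lem:ghornsim} almost verbatim (the paper phrases the equivalence reduction as $\Amf\leq_{\hALC}\Bmf$ iff $\Amf\uplus\Bmf\equiv_{\hALC}\Bmf$, which is the same trick with the roles of the two structures swapped).

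The one genuine difference is the lower bound. The paper does \emph{not} reduce from concept-level Entailment; instead it re-opens the ATM reduction from Lemma~\ref{lem:hornsim} and patches it directly for the global problem: mark $\hat b$ and every element of $X_{\alpha_0}$ with a fresh concept name $S^{*}$, so that the only candidate set $X$ for $\hat b$ is $X_{\alpha_0}$, while all other elements of $\Bmf$ already have bisimilar copies inside $\Amf$. Your route is more modular---it treats Theorem~\ref{thm:complexity} as a black box and uses the fresh-role/root gadget to localise the global condition to a single designated node---at the price of the extra $+1$ in depth that you already flag for the $\ell$-bounded variants. The paper's route avoids that bookkeeping but requires revisiting the internals of the ATM encoding. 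Both arguments are clean; yours generalises more readily if one later changes the base logic, while the paper's keeps the depth accounting trivial.
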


As we have the following relationships, it suffices to prove
Lemma~\ref{lem:ghornsim} below. 

\begin{lemma} \label{lem:reductions-tbox}
  \begin{enumerate}

    \item[(1)] TBox entailment is equivalent to GHornSim;

    \item[(2)] TBox $\ell$-entailment is equivalent to $\ell$-GHornSim;

    \item[(3)] TBox equivalence $\leq_{T}^P$ TBox entailment;

    \item[(4)] TBox $\ell$-equivalence $\leq_{T}^P$ TBox $\ell$-entailment;

    \item[(5)] TBox entailment $\leq_m^P$ TBox equivalence.

    \item[(6)] TBox $\ell$-entailment $\leq_m^P$ TBox $\ell$-equivalence.

  \end{enumerate}
\end{lemma}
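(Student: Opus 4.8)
The plan is to follow closely the template of Lemma~\ref{lem:reductions}, the difference being that all problems now live on the TBox level, so I would exploit Theorem~\ref{thm:ehrenhorn-tbox} together with the invariance of $\mathcal{ALC}$-TBoxes under disjoint unions recalled in the preliminaries. For (1) and (2) there is in fact nothing to construct: by Theorem~\ref{thm:ehrenhorn-tbox}, for finite $\Amf$ and $\Bmf$ we have $\Amf \leq_{\hALC} \Bmf$ iff $\Amf \preceq_{\textit{horn}} \Bmf$, and $\Amf \leq_{\hALC}^{\ell} \Bmf$ iff $\Amf \preceq_{\textit{horn}}^{\ell} \Bmf$. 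Since TBox entailment asks exactly for the left-hand relation and GHornSim for the right-hand one, the two problems coincide and the reduction is the identity; the same holds for their $\ell$-bounded versions.

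For (3) and (4) I would use that equivalence is mutual entailment: $\Amf \equiv_{\hALC}^{(\ell)} \Bmf$ holds iff both $\Amf \leq_{\hALC}^{(\ell)} \Bmf$ and $\Bmf \leq_{\hALC}^{(\ell)} \Amf$ hold, directly by the definition of $\equiv_{\hALC}^{(\ell)}$. A decision procedure for TBox equivalence therefore makes two oracle calls to TBox entailment, yielding the polynomial Turing reductions.

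The heart of the lemma is (5) and (6). Here I would use the many-one map $(\Amf,\Bmf) \mapsto (\Amf,\, \Amf\uplus\Bmf)$, where $\Amf\uplus\Bmf$ is the disjoint union, and establish the equivalence
$$\Amf \leq_{\hALC}^{(\ell)} \Bmf \quad\text{iff}\quad \Amf \equiv_{\hALC}^{(\ell)} \Amf\uplus\Bmf.$$
The key ingredient is the invariance of $\mathcal{ALC}$-TBoxes, and hence of individual $\hALC$-CIs, under disjoint unions: for any CI $C\sqsubseteq D$ we have $\Amf\uplus\Bmf \models C\sqsubseteq D$ iff $\Amf \models C\sqsubseteq D$ and $\Bmf \models C\sqsubseteq D$, and this equivalence is proved CI-by-CI, so it is insensitive to the depth of $C\sqsubseteq D$. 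From this, the direction $\Amf\uplus\Bmf \leq_{\hALC}^{(\ell)} \Amf$ holds trivially, since every CI satisfied by the union is already satisfied by $\Amf$; while $\Amf \leq_{\hALC}^{(\ell)} \Amf\uplus\Bmf$ unwinds, using the invariance, to exactly "for all (depth-$\le\ell$) CIs, $\Amf \models C\sqsubseteq D$ implies $\Bmf \models C\sqsubseteq D$", i.e. to $\Amf \leq_{\hALC}^{(\ell)} \Bmf$. Combining the two directions of $\equiv_{\hALC}^{(\ell)}$ gives the claim. Since disjoint union is polynomial-time computable and does not increase concept depth, this is a polynomial many-one reduction with $\ell$ unchanged, which is precisely why (6) reduces $\ell$-entailment to $\ell$-equivalence rather than to $(\ell+1)$-equivalence, in contrast to the concept-level Lemma~\ref{lem:reductions}(6) where adding a fresh predecessor raised the depth.

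The only point requiring care, and the main (mild) obstacle, is verifying in (6) that the depth bound is respected throughout. But because disjoint-union invariance is established separately for each CI and leaves that CI, and hence its depth, untouched, the whole argument for (5) goes through uniformly for every fixed $\ell$, giving (6) at once.
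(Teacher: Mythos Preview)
Your proposal is correct and follows essentially the same route as the paper: (1)--(4) are handled identically, and for (5)--(6) both you and the paper reduce via disjoint union and invariance of CIs under disjoint unions, the only cosmetic difference being that the paper uses the equivalence $\Amf\leq_{\hALC}^{(\ell)}\Bmf$ iff $\Amf\uplus\Bmf\equiv_{\hALC}^{(\ell)}\Bmf$ whereas you use the symmetric variant $\Amf\equiv_{\hALC}^{(\ell)}\Amf\uplus\Bmf$. Your remark explaining why no depth increase occurs here (in contrast to Lemma~\ref{lem:reductions}(6)) is a nice addition.
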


\begin{proof}
  Properties~(1) and~(2) are a consequence of
  Theorem~\ref{thm:ehrenhorn-tbox}.  Properties~(3) and~(4) is due to
  the fact that equivalence is just mutual entailment.  Properties~(5)
  and~(6) follow from the fact that $\Amf\leq_{\hALC}^{(\ell)}\Bmf$
  iff $\Amf\cup\Bmf\equiv^{(\ell)}_{\hALC}\Bmf$, for the disjoint
  union $\Amf \cup \Bmf$ of $\Amf$ and $\Bmf$.
\end{proof}

\begin{lemma} \label{lem:ghornsim}
  GHornSim is \ExpTime-complete, and $\ell$-GHornSim is
  \ExpTime-complete if $\ell$ is given in binary, and \PSpace-complete
  if $\ell$ is given in unary. 
\end{lemma}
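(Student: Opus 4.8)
The plan is to mirror the pointed case treated in Lemmas~\ref{lem:hornsim} and~\ref{lem:k-hornsim}, absorbing the extra outermost quantifier alternation in the definition of $\Amf \preceq_{\textit{horn}} \Bmf$ (``for every $b$ there is a set $X$ with $\Amf,X\preceq_{\textit{horn}}\Bmf,b$'') by one additional universal and one additional existential move at the start of the alternating game. For the upper bounds I would reuse the alternating, polynomial-space procedure from the proof of Lemma~\ref{lem:hornsim} verbatim, prefixing it by (i) a universal branch over all $b\in\text{dom}(\Bmf)$, then (ii) an existential guess of a non-empty $X\subseteq\text{dom}(\Amf)$ (writable with $|\text{dom}(\Amf)|$ bits, hence in polynomial time and space), and finally (iii) the Horn-simulation game started from the position $(X,b)$; accepting after $2^{|\text{dom}(\Amf)|}\cdot|\text{dom}(\Bmf)|$ rounds remains correct since that bounds the number of distinct positions. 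Exactly as in Lemma~\ref{lem:k-hornsim}, for $\ell$-GHornSim one runs the game for $\ell$ rounds: with $\ell$ in unary this is an alternating polynomial-time procedure (so \PSpace), while with $\ell$ in binary, or unbounded, it is an alternating polynomial-space procedure (so \ExpTime), using $\text{APTIME}=\PSpace$ and $\text{APSPACE}=\ExpTime$.

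For the lower bounds the plan is to reduce the pointed problem, which is \ExpTime-hard (\PSpace-hard) by Lemma~\ref{lem:reductions}(3) with Lemma~\ref{lem:hornsim} (respectively Lemma~\ref{lem:bounded-reductions}(3) with Lemma~\ref{lem:k-hornsim}), to GHornSim. Given a pointed instance asking whether $\Amf,a \preceq_{\textit{horn}} \Bmf,b$ (equivalently $\Amf,a\leq_{\hALC}\Bmf,b$, by Theorem~\ref{thm:ehrenhorn}), I introduce a fresh concept name $P$ and a fresh role name $T$ and build two structures. Let $\Cmc$ be the disjoint union of $\Amf$ with a fresh isomorphic copy $\Bmf'$ of $\Bmf$, augmented by a new node $c$ with the single edge $(c,a)\in T^{\Cmc}$ and $P^{\Cmc}=\{c\}$. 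Let $\Dmc$ be $\Bmf$ augmented by a new node $d$ with the single edge $(d,b)\in T^{\Dmc}$ and $P^{\Dmc}=\{d\}$. I claim $\Cmc \preceq_{\textit{horn}} \Dmc$ iff $\Amf,a \preceq_{\textit{horn}} \Bmf,b$, and that the construction is polynomial.

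To verify the claim I would argue as follows. Every original node $v\in\text{dom}(\Bmf)\subseteq\text{dom}(\Dmc)$ is handled for free: its copy $v'\in\Bmf'$ is a clean disjoint isomorphic copy (no $T$-edge, not in $P$, no incoming cross-edge), so the identity relation on these copies is a bisimulation between $\Cmc$ and $\Dmc$ and, by Lemma~\ref{lem:hornbisim}(i), yields $\Cmc,\{v'\} \preceq_{\textit{horn}} \Dmc,v$. For the single remaining node $d$, the converse of \itatomich implied by \itsimb (see the remark after Definition~\ref{def:hornsim}) together with $d\in P^{\Dmc}$ forces any simulating set $Y$ to satisfy $Y\subseteq P^{\Cmc}=\{c\}$, hence $Y=\{c\}$; since $c$ and $d$ carry only the fresh single $T$-edge, unwinding it shows that $\Cmc,\{c\} \preceq_{\textit{horn}} \Dmc,d$ holds iff both $\Dmc,d \preceq_{\textit{sim}} \Cmc,c$ and $\Cmc,\{a\} \preceq_{\textit{horn}} \Dmc,b$ hold, and the latter is just $\Amf,a \preceq_{\textit{horn}} \Bmf,b$ (the simulation requirement being already subsumed by it through \itsimb at the root). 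As the reduction adds exactly one $T$-step, it maps $\ell$-Entailment to $(\ell+1)$-GHornSim, transferring the \ExpTime (binary) and \PSpace (unary) lower bounds.

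The main obstacle I anticipate is precisely this correctness analysis for the distinguished node $d$: I must ensure that the bisimilar copies $\Bmf'$ placed inside $\Cmc$ discharge every ``easy'' node of $\Dmc$ yet are never usable to simulate $d$ (they carry no $T$-edge and therefore fail \itsimb against $d$), and that the gadget around $c$ and $d$ neither leaks into nor is short-circuited by those copies, so that $\Cmc,\{a\}\preceq_{\textit{horn}}\Dmc,b$ genuinely stays confined to the $\Amf$-- and $\Bmf$--parts. Everything else — the alternating-complexity accounting and the treatment of the round counter under unary versus binary encoding — is routine and identical to the pointed case.
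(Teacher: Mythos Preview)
Your proposal is correct. The upper-bound argument is essentially identical to the paper's, which simply says the bound ``follows from the definition of $\Amf \preceq_{\textit{horn}}\Bmf$ and the complexity of Horn simulations and restricted Horn simulations.''

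For the lower bounds you take a genuinely different, more modular route. The paper does not give a black-box reduction from the pointed problem; instead it \emph{reuses the specific ATM construction} from Lemma~\ref{lem:hornsim}: recall that in that construction the structure $\Amf$ already contains a disjoint copy of $\Bmf$, so every $d\in\text{dom}(\Bmf)\setminus\{\hat b\}$ automatically has a bisimilar singleton witness in $\Amf$. The paper then only adds a fresh concept name $S^{*}$ marking $\hat b$ and all of $X_{\alpha_0}$, which forces $X_{\alpha_0}$ to be the unique candidate set for $\hat b$. Your construction achieves the same two effects (bisimilar copies for ``easy'' nodes, a fresh concept forcing a unique candidate at the distinguished node) without inspecting the ATM reduction, at the cost of adding an entire fresh copy $\Bmf'$ of $\Bmf$ and a one-step gadget $(c,T,a)$/$(d,T,b)$. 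What the paper's approach buys is brevity (a one-line tweak of an existing proof); what yours buys is modularity (it works from any hard pointed instance and makes the $\ell\mapsto\ell+1$ shift transparent). The underlying idea---a marking concept pinning the candidate set plus bisimilar singletons elsewhere---is the same.
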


\begin{proof}
  
  The upper bound follows from the definition of $\Amf \preceq_{\textit{horn}}\Bmf$
  and the complexity of Horn simulations and restricted Horn
  simulations, see Lemma~\ref{lem:hornsim} and~\ref{lem:k-hornsim},
  respectively. 

  For the \ExpTime-lower bound for GHornSim, observe that the proof of
  Lemma~\ref{lem:hornsim} is easily adapted so as to show
  \ExpTime-hardness also in this case. Indeed, we can mark $\hat b$
  and all elements in $X_{\alpha_0}$ with a fresh concept name $S^*$.
  Now, for every element $d\in \text{dom}(\Bmf)\setminus\{\hat b\}$
  there is a copy $d'$ of $d$ in \Amf, for which trivially
  $\Amf,d'\leq \Bmf,d$. It remains to note that, by construction, the
  only candidate set for $\hat b$ is in fact $X_{\alpha_0}$. 

  Finally, \ExpTime and \PSpace-hardness for the restricted problem is
  obtained from the above result in the same way as in
  Lemma~\ref{lem:k-hornsim}.
\end{proof}

\section*{Proofs for Section: Expressive Completeness for \hALC{}}
We provide proofs for the implications (3) $\Rightarrow$ (4) of Theorems~\ref{thm:mainconcept}
and \ref{thm:maintbox}.
\begin{lemma}
If an $\mathcal{ALC}$-concept $C$ of depth $\leq \ell$ is preserved under $\ell$-Horn simulations, 
then $C$ is equivalent to a \hALC{}-concept of depth $\leq \ell$ \textup{(}also in the setting of finite model theory\textup{)}.
\end{lemma}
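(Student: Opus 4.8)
The plan is to build the strongest \hALC-consequence of $C$ of depth $\le\ell$ and show it is equivalent to $C$, relying on the fact that the \emph{set} $X$ in a Horn-simulation position does the work that disjunction would do in a Boolean setting. If $C$ is unsatisfiable (in the finite-model-theory reading: has no finite model), then $C$ is equivalent to the \hALC-concept $\bot$, which has depth $0\le\ell$, and we are done; so assume $C$ is satisfiable. I would fix, exactly as in the proof of Lemma~\ref{lem:diagram}, a finite set $\text{Horn}_{\ell}$ of \hALC-concepts of depth $\le\ell$ containing one representative of each equivalence class, and recall the characteristic concept $\rho_{\Amf,\ell,X}=\bigsqcap\{F\in\text{Horn}_{\ell}\mid X\subseteq F^{\Amf}\}$, which satisfies $b\in\rho_{\Amf,\ell,X}^{\Bmf}$ iff $\Amf,X\leq_{\textit{horn}\mathcal{ALC}}^{\ell}\Bmf,b$.

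Since each element $c$ satisfying $C$ in some (finite) structure realizes one of the finitely many \emph{Horn-types} $\{F\in\text{Horn}_{\ell}\mid c\in F^{\Amf}\}$, I would pick finitely many witnesses $(\Amf_{1},c_{1}),\dots,(\Amf_{m},c_{m})$ with $c_{i}\in C^{\Amf_{i}}$ realizing \emph{all} Horn-types of $C$-elements, let $\Bmf'$ be their disjoint union, and let $X=\{c_{1},\dots,c_{m}\}$ (identified with their copies in $\Bmf'$). Because $\mathcal{ALC}$-concepts are bisimulation invariant and each $c_{i}$ is bisimilar to its copy in $\Bmf'$, membership in any $\mathcal{ALC}$-concept transfers between $\Amf_{i}$ and $\Bmf'$; in particular $X\subseteq C^{\Bmf'}$. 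I then set $D:=\rho_{\Bmf',\ell,X}$, a \hALC-concept of depth $\le\ell$.

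The key computation is that, for every $F\in\text{Horn}_{\ell}$, we have $X\subseteq F^{\Bmf'}$ iff $C\sqsubseteq F$: the direction $\Leftarrow$ is immediate, and $\Rightarrow$ holds because $X$ realizes every Horn-type of a $C$-element, so if all witnesses satisfy $F$ then so does every $C$-element in every (finite) structure. Hence $D$ is exactly the conjunction of all depth-$\le\ell$ \hALC-concepts entailed by $C$, which gives $C\sqsubseteq D$ at once. For the converse $D\sqsubseteq C$, I would take any pointed $\Amf,a$ with $a\in D^{\Amf}$. Then $a\in\rho_{\Bmf',\ell,X}^{\Amf}$, i.e.\ $\Bmf',X\leq_{\textit{horn}\mathcal{ALC}}^{\ell}\Amf,a$, so by Lemma~\ref{lem:diagram} there is $X_{0}\subseteq X$ with $\Bmf',X_{0}\preceq_{\textit{horn}}^{\ell}\Amf,a$; as $X_{0}\subseteq X\subseteq C^{\Bmf'}$, preservation of $C$ under $\ell$-Horn simulations yields $a\in C^{\Amf}$. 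Thus $C\equiv D$. For the finite-model-theory version the same argument applies verbatim: the witnesses $\Amf_{i}$ can be chosen finite, so $\Bmf'$ is finite and one only considers finite $\Amf$; the $\ell$-bounded direction of Lemma~\ref{lem:diagram} holds for arbitrary structures, and preservation is invoked only on finite models.

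The one point that must be got right is the identification $D=\rho_{\Bmf',\ell,X}=\bigsqcap\{F\in\text{Horn}_{\ell}\mid C\sqsubseteq F\}$. This is where the set $X$ in Horn-simulation positions is essential: collecting all realized $C$-types into a single disjoint-union structure lets a single \emph{conjunction} over the whole set $X$ express ``the strongest \hALC-consequence of $C$'' — the object that disjunction would be needed to build otherwise. Once this is in place the statement is a direct application of Lemma~\ref{lem:diagram} together with the preservation hypothesis, which is why (3)$\Rightarrow$(4) can fairly be called straightforward.
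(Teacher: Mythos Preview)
Your proof is correct and follows essentially the same approach as the paper: both define $D$ as the conjunction of all $F\in\text{Horn}_{\ell}$ entailed by $C$, identify $D$ with $\rho_{\Bmf',\ell,X}$ for a suitable disjoint-union structure $\Bmf'$ and set $X\subseteq C^{\Bmf'}$, and then derive $D\sqsubseteq C$ from Lemma~\ref{lem:diagram} plus the preservation hypothesis. The only difference is cosmetic: the paper takes $\Bmf'$ to be the disjoint union of \emph{all} finite structures with $X=C^{\Bmf'}$, whereas you select finitely many witnesses realizing all depth-$\ell$ Horn-types of $C$-elements; your variant makes the finite-model-theory version immediate, while the paper appeals to the finite model property of $\mathcal{ALC}$ to cover that case.
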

\begin{proof}
Let $C$ be an $\mathcal{ALC}$ concept of depth $\leq \ell$ that is preserved under $\ell$-Horn simulations. 
We use the set Horn$_{\ell}$ of concepts and the concept $\rho_{\Bmf,\ell,b}$ defined in the proof of Lemma~\ref{lem:diagram}. 
Let $D$ be the conjunction of all $H\in \text{Horn}_{\ell}$ with $\emptyset \models C \sqsubseteq H$. 
It suffices to show that $\emptyset \models D \sqsubseteq C$ establishing that $D$ and $C$ are equivalent. 
To this aim, observe that $D$ is equivalent to $\rho_{\Amf_{u},\ell,X}$, where $\Amf_{u}$ is the disjoint union of all 
finite structures (up to isomorphisms) and $X=C^{\Amf_{u}}$. To see this, suppose  first that $H\in \text{Horn}_{\ell}$ and 
$\emptyset \models C \sqsubseteq H$. Then $X=C^{\Amf}\subseteq H^{\Amf}$ for all structures $\Amf$. Thus, $H$ is a conjunct of 
$\rho_{\Amf_{u},\ell,X}$. Conversely, suppose $H$ is a conjunct of $\rho_{\Amf_{u},\ell,X}$. Then $X\subseteq H^{\Amf_{u}}$. 
Thus, $C^{\Amf_{u}}\subseteq H^{\Amf_{u}}$. But then $\emptyset \models C \sqsubseteq H$ since the latter is equivalent to 
the condition that $C^{\Amf}\subseteq H^{\Amf}$ for all \emph{finite} structures $\Amf$ and $\mathcal{ALC}$ has the finite model property.

Now suppose $\Bmf,b$ is a pointed structure with $b\in D^{\Bmf}$. Then $b\in \rho_{\Amf_{u},\ell,X}^{\Bmf}$. By Lemma~\ref{lem:diagram}, 
there exists $X_{0}\subseteq X$ such that $\Amf_{u},X_{0} \preceq_{\textit{horn}}^{\ell} \Bmf,b$. Then $b\in C^{\Bmf}$ follows from 
the assumption that $C$ is preserved under $\ell$-Horn simulations.
The finite model theory version follows directly using the finite model property of $\mathcal{ALC}$.
\end{proof}

\begin{lemma}
Let $\Tmc$ be an $\mathcal{ALC}$ TBox of depth $\ell$ preserved under global $\ell$-Horn simulations.
Then $\Tmc$ is equivalent to a \hALC{} TBox of depth $\leq \ell$ (also in the setting of finite model theory).
\end{lemma}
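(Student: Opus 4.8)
The plan is to take $\Tmc'$ to be a finite set of representatives of the (finitely many, up to equivalence over the finite vocabulary $\tau$) \hALC-CIs $C\sqsubseteq D$ of depth $\leq\ell$ that are entailed by $\Tmc$, and to show $\Tmc'\equiv\Tmc$. One inclusion, $\Tmc\models\Tmc'$, is immediate, so the work is to prove $\Tmc'\models\Tmc$. Since $\mathcal{ALC}$ has the finite model property, CI-entailment from $\mathcal{ALC}$-TBoxes coincides over arbitrary and over finite models, so the classical and finite settings use the very same $\Tmc'$.

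So let $\Bmf$ be an arbitrary model of $\Tmc'$; the goal is $\Bmf\models\Tmc$. The key step is to manufacture a model $\Amf^{\ast}$ of $\Tmc$ with $\Amf^{\ast}\preceq_{\textit{horn}}^{\ell}\Bmf$, for then preservation under global $\ell$-Horn simulations yields $\Bmf\models\Tmc$ directly. By Theorem~\ref{thm:ehrenhorn-tbox} it suffices to build $\Amf^{\ast}\models\Tmc$ with $\Amf^{\ast}\leq_{\hALC}^{\ell}\Bmf$, that is, refuting every depth-$\leq\ell$ \hALC-CI that $\Bmf$ refutes. For each such CI $C\sqsubseteq D$ (there are only finitely many up to equivalence, and each lies outside $\Tmc'$, hence is not entailed by $\Tmc$), I would pick a model $\Amf_{C,D}$ of $\Tmc$ refuting it, taken finite by the finite model property. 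Let $\Amf^{\ast}$ be the disjoint union of these $\Amf_{C,D}$. By invariance of $\mathcal{ALC}$-TBoxes under disjoint unions, $\Amf^{\ast}\models\Tmc$; and since a CI holds in a disjoint union exactly when it holds in every component, $\Amf^{\ast}$ refutes precisely the intended CIs, giving $\Amf^{\ast}\leq_{\hALC}^{\ell}\Bmf$. (The family is nonempty whenever $\Tmc$ is satisfiable, as $\Bmf$ always refutes $\top\sqsubseteq\bot$; if $\Tmc$ is unsatisfiable then $\top\sqsubseteq\bot\in\Tmc'$, so $\Tmc'$ has no models and $\Tmc'\models\Tmc$ holds vacuously.)

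Because this construction produces a finite $\Amf^{\ast}$ in both settings, the same argument covers finite model theory: with $\Bmf$ finite, the finitely many witnesses $\Amf_{C,D}$ are finite, so $\Amf^{\ast}$ is finite and the finite-model version of the preservation hypothesis applies. The main obstacle is exactly pinning down this finite case, which hinges on there being only finitely many depth-$\leq\ell$ \hALC-CIs up to equivalence, together with the finite model property of $\mathcal{ALC}$ supplying a \emph{finite} counter-model for each non-entailed CI. The other delicate point is to route the logical relation $\leq_{\hALC}^{\ell}$ through Theorem~\ref{thm:ehrenhorn-tbox} into the combinatorial relation $\preceq_{\textit{horn}}^{\ell}$, so that the hypothesis of preservation under global $\ell$-Horn simulations can be invoked.
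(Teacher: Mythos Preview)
Your proposal is correct and follows essentially the same strategy as the paper: define $\Tmc'$ as the depth-$\leq\ell$ \hALC-consequences of $\Tmc$, then for any model $\Bmf$ of $\Tmc'$ build a disjoint union of (finite) models of $\Tmc$ refuting the CIs that $\Bmf$ refutes, and invoke the Ehrenfeucht--Fra\"iss\'e characterization to obtain a global $\ell$-Horn simulation into $\Bmf$. The only cosmetic differences are that the paper normalizes to CIs of the form $\top\sqsubseteq H$ (equivalent, since $L\sqsubseteq H \equiv \top\sqsubseteq(L\to H)$), indexes the witness models by pairs $(b,H)$ rather than by refuted CIs, and appeals to Lemma~\ref{lem:diagram} directly instead of going through Theorem~\ref{thm:ehrenhorn-tbox}.
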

\begin{proof}
Let $\Tmc'$ be the set of \hALC{} CIs $\top \sqsubseteq H$ with $H\in \text{Horn}_{\ell}$
such that $\Tmc\models \top \sqsubseteq H$. We show that $\Tmc'\models \Tmc$.
Take a model $\Bmf$ of $\Tmc'$.
Take for every \hALC{} CI $\top \sqsubseteq H$ of depth $\leq \ell$ and $b\in \text{dom}(\Bmf)\setminus H^{\Bmf}$
a model $\Amf_{b,H}$ of $\Tmc$ and $a_{b,H}\in \text{dom}(\Amf_{b,H})$ with $a_{b,H}\not \in H^{\Amf_{b,H}}$.
Such a model exists since $\mathcal{T}' \not\models \top \rightarrow H$ and so, by the definition of $\mathcal{T}'$, 
$\mathcal{T} \not\models \top \rightarrow H$.

Let $\Amf$ be the disjoint union of all $\Amf_{b,H}$ and let $X_{b}$ be the set of all $a_{b,H}$.
Then $X_{b} \subseteq H^{\Amf}$ implies $b\in H^{\Bmf}$, for every \hALC{}-concept $H$
of depth $\leq \ell$. Thus, by Lemma~\ref{lem:diagram}, there exists $Y_{b}\subseteq X_{b}$ such that
$\Amf,Y_{b}\preceq_{\text{horn}}^{\ell} \Bmf,b$. Then $\Bmf$ is model of $\Tmc$ since $\Amf$ is a model of
$\Tmc$ and $\Tmc$ is preserved under global $\ell$-Horn simulations.

In the finite model theory setting it suffices to consider finite models $\Bmf$ of $\Tmc'$ since $\mathcal{ALC}$
has the finite model property for TBox reasoning. For the same reason one can always choose finite $\Amf_{b,H}$.
\end{proof}

\section*{Proofs for Section on Guarded Fragment}

\subsection{Basic Properties of \hGF}

\begin{theorem} \label{thm:horngf-complexity}
  Satisfiability in \hGF$[\tau]$ is \ExpTime-complete if the arity
  of predicate names in $\tau$ is bounded, and \TwoExpTime-complete if
  not.
\end{theorem}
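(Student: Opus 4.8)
The plan is to treat the upper and lower bounds separately, both times relating \hGF-satisfiability to the known complexity of GF-satisfiability, which is \ExpTime-complete for bounded arity and \TwoExpTime-complete in general~\cite{DBLP:journals/jsyml/Gradel99}.

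For the upper bounds I would simply observe that, by the grammar defining \hGF, every \hGF$[\tau]$-sentence is in particular a GF$[\tau]$-sentence. Hence deciding satisfiability of a \hGF-sentence is a special case of deciding satisfiability of a GF-sentence, and the \ExpTime (bounded arity) and \TwoExpTime (general) upper bounds are inherited from Gr\"adel's results without further work. The entire content of the theorem therefore lies in the matching lower bounds, which is where I expect the difficulty to be, since a syntactic restriction could in principle lower the complexity.

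For the \ExpTime lower bound with bounded arity, I would exploit that \hGF contains the standard translation of \hALC, whose vocabulary is unary and binary and hence of bounded arity. Reasoning in \hALC is \ExpTime-hard~\cite{DBLP:journals/tocl/KrotzschRH13}, and, as is standard for expressive description logics, the hardness already holds for \emph{unsatisfiability}, i.e.\ for deciding $\Tmc \models L \sqsubseteq \bot$ where $\Tmc$ is a \hALC-TBox and $L$ an $\mathcal{ELU}$-concept. This is exactly the unsatisfiability of the \hGF-sentence obtained by conjoining the (Horn and guarded) translations of the CIs in $\Tmc$ with the equality-guarded existential $\exists x\,(x=x \wedge L^{\dag}(x))$. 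Since \ExpTime is closed under complementation, \ExpTime-hardness of \hGF-unsatisfiability yields \ExpTime-hardness of \hGF-satisfiability, and together with the upper bound this gives \ExpTime-completeness for bounded arity.

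For the \TwoExpTime lower bound in the unbounded-arity case, binary predicates no longer suffice and I would adapt Gr\"adel's \TwoExpTime-hardness reduction for GF, which encodes the word problem of exponentially-space-bounded alternating Turing machines (equivalently, a tiling problem over a doubly-exponential torus) using predicates whose arity is polynomial in the machine. The configurations form a tree generated by guarded existential quantification, while transition and consistency constraints are enforced by universal guarded implications. The main obstacle is that such reductions naturally place disjunctive information in positive (head) position to capture nondeterminism and alternation, which Horn syntax forbids. I would resolve this with the usual witness-successor technique: each branching choice is represented by the existence of a distinctly labelled guarded successor rather than by a disjunction, and the acceptance value of a configuration is propagated upward by Horn implications whose heads are single atoms. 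It is helpful here that \hGF already permits \emph{disjunction in the antecedents} of implications (the bodies $\lambda$ range over all of GF$^{\exists}$), so only the heads need to be kept disjunction-free; verifying that every sentence produced by the adapted reduction really conforms to the \hGF grammar is the delicate point. An alternative I would keep in reserve is to reduce from the known \TwoExpTime-hardness of consistency checking for guarded existential rules~\cite{DBLP:journals/jair/CaliGK13}, observing that denial constraints and the Horn-guarded rules involved translate into \hGF and again invoking closure of \TwoExpTime under complement. Either route yields \TwoExpTime-hardness and hence, with the upper bound, \TwoExpTime-completeness.
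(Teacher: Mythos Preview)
Your proposal is correct and matches the paper's approach exactly: upper bounds inherited from GF via Gr\"adel, the bounded-arity lower bound from \hALC{} via~\cite{DBLP:journals/tocl/KrotzschRH13}, and the unbounded-arity lower bound by adapting Gr\"adel's \TwoExpTime reduction. The paper's proof is in fact terser than yours, calling the last step a ``straightforward adaptation'' without spelling out the witness-successor trick you describe.
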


\begin{proof}
  The upper bounds follow from the guarded fragment
  GF~\cite{DBLP:journals/jsyml/Gradel99}.  The lower bound in the
  bounded arity case is inherited from \hALC{}~\cite{DBLP:journals/tocl/KrotzschRH13}.
  For the case of unbounded arity, the lower bound is obtained as a
  straightforward adaptation of the \TwoExpTime lower bound for GF
  in~\cite{DBLP:journals/jsyml/Gradel99}. 
\end{proof}

Let us fix a vocabulary $\tau$. A \emph{tuple-generating dependency (tgd)
over $\tau$} is an FO formula of the form 
$$
\forall \xbf\forall \ybf\,(\psi(\xbf,\ybf)\rightarrow \exists \zbf\, \varphi(\xbf,\zbf))
$$ 
where $\psi(\xbf,\ybf)$ and $\varphi(\xbf,\zbf)$ are conjunctions of atoms
over $\tau$ and are called \emph{body} and \emph{head} of the tgd,
respectively. A tgd is \emph{guarded} if there is an atom in the body
$\psi(\xbf,\ybf)$ which contains all variables $\xbf\cup\ybf$. We
allow as special cases empty body and head and denote this as
$\top\rightarrow \exists \zbf\,\varphi(\zbf)$ and
$\forall\xbf\forall\ybf\, \psi(\xbf,\ybf)\rightarrow\bot$, respectively.

The following theorem captures the relation between
\hGF{} and guarded tgds; essentially, satisfiability and query
answering can be interreduced in polynomial time. 

\begin{lemma} \label{lem:horngf-gtgds}
  For every set $\Sigma$ of guarded tgds over some vocabulary $\tau$,
  we can compute in polynomial time a \hGF{} formula
  $\varphi_\Sigma$ over a larger vocabulary $\tau'\supseteq\tau$ such
  that $\Sigma$ and $\varphi_{\Sigma}$ are equisatisfiable and, for
  every conjunctive query $q$ and database $D$ over $\tau$, we have
  $\Sigma\cup D\models q$ iff $\{\varphi_\Sigma\}\cup D\models q$.
  
  Conversely, every \hGF{} formula $\varphi$
  over $\tau$ can be translated in polynomial time into a set of
  guarded tgds $\Sigma_\varphi$ over a larger vocabulary
  $\tau'\supseteq\tau$ such that $\varphi$ and $\Sigma_{\varphi}$ are
  equisatisfiable and, for every conjunctive query $q$ and database
  $D$ over $\tau$, we have $\{\varphi\}\cup D\models q$ iff
  $\Sigma_\varphi\cup D\models q$.
\end{lemma}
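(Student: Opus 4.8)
The plan is to establish both directions by one uniform device: a structural \emph{renaming} transformation that introduces a single fresh predicate per subformula while leaving the vocabulary $\tau$ untouched. Since every auxiliary predicate lies in $\tau'\setminus\tau$, the only thing to verify is that the two formalisms have the same $\tau$-models up to expansion by, and reduct to, the fresh predicates. Equisatisfiability and preservation of $\Sigma\cup D\models q$ (for $q,D$ over $\tau$) then follow for free, because a conjunctive query over $\tau$ cannot mention the auxiliary predicates and its truth depends only on the $\tau$-reduct of a model. Both transformations are clearly polynomial, producing $O(1)$ rules or conjuncts per subformula.

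For the direction from guarded tgds to \hGF{} I would note that a guarded tgd $\forall\xbf\ybf(\psi\to\exists\zbf\,\varphi)$ is \emph{almost} a \hGF{} sentence. Writing the body as $\psi=G(\xbf\ybf)\wedge\beta$, with $G$ the guard atom and $\beta$ the conjunction of the remaining (possibly no) atoms, I rewrite the tgd as $\forall\xbf\ybf(G(\xbf\ybf)\to(\beta\to\exists\zbf\,\varphi))$: the outer quantifier is now guarded by $G$, and $\beta\in\mathrm{GF}^{\exists}$ is a legal antecedent $\lambda$ of a \hGF{} implication. The only non-\hGF{} part is the head $\exists\zbf\,\varphi$, whose existential is unguarded; here I use the larger vocabulary and replace the head by $\exists\zbf(G_t(\xbf\zbf)\wedge\varphi)$ for a fresh guard predicate $G_t$ collecting all head variables. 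This is now a genuine guarded existential, and since $G_t$ is fresh it can always be populated on the witnesses of $\exists\zbf\,\varphi$ and dropped in the reduct, so the $\tau$-theory is unchanged. Conjoining the resulting sentences over all tgds gives $\varphi_\Sigma$; the degenerate cases $\psi\to\bot$ and $\top\to\exists\zbf\,\varphi$ are treated identically.

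The converse direction is the substantial one. Treating $\varphi$ as a sentence, I assign to each positively occurring \hGF{}-subformula $\psi(\xbf)$ a fresh $P_\psi$ of arity $|\xbf|$ and to each $\mathrm{GF}^{\exists}$-antecedent $\lambda(\xbf)$ a fresh $Q_\lambda$, and generate guarded tgds that unfold the positive markers top-down: $P_{R(\xbf)}(\xbf)\to R(\xbf)$; $P_{\psi_1\wedge\psi_2}\to P_{\psi_1}$ and $\to P_{\psi_2}$; $P_{\exists\ybf(G\wedge\psi')}(\xbf)\to\exists\ybf(G(\xbf\ybf)\wedge P_{\psi'}(\xbf\ybf))$; $P_{\forall\ybf(G\to\psi')}(\xbf)\wedge G(\xbf\ybf)\to P_{\psi'}(\xbf\ybf)$; and $P_{\lambda\to\psi'}(\xbf)\wedge Q_\lambda\to P_{\psi'}$. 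Dually, the antecedent markers are generated bottom-up: $R(\xbf)\to Q_{R(\xbf)}(\xbf)$, $Q_{\lambda_1}\wedge Q_{\lambda_2}\to Q_{\lambda_1\wedge\lambda_2}$, $Q_{\lambda_i}\to Q_{\lambda_1\vee\lambda_2}$, and $G\wedge Q_{\lambda'}\to Q_{\exists\mathbf{w}(G\wedge\lambda')}$. Finally I assert the nullary marker of the whole formula by $\top\to P_\varphi$. Correctness is a routine induction on subformulas: in any model of the tgds $P_\psi$ forces $\psi$ and $\lambda$ forces $Q_\lambda$, while any model of $\varphi$ expands to a model of the tgds by interpreting each marker as the extension of its subformula.

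The hard part will be the guardedness bookkeeping for the $\mathrm{GF}^{\exists}$ antecedents: the naive rule $Q_{\lambda_1}(\xbf_1)\wedge Q_{\lambda_2}(\xbf_2)\to Q_{\lambda_1\wedge\lambda_2}(\xbf_1\cup\xbf_2)$ need not have a single atom guarding all of $\xbf_1\cup\xbf_2$, so it is not a guarded tgd. The fix exploits the defining feature of \hGF{} that every free variable of an antecedent is co-guarded by the guard $G_{\mathrm{ctx}}$ of the nearest enclosing guarded quantifier; I would thread this context guard through the recursion and add it to each such body (e.g. $G_{\mathrm{ctx}}\wedge Q_{\lambda_1}\wedge Q_{\lambda_2}\to Q_{\lambda_1\wedge\lambda_2}$), which simultaneously guards the body and makes all head variables occur in it. A matching care is needed in the soundness induction: one must check that a marker $P_\psi(\abf)$ becomes true only when $G_{\mathrm{ctx}}$ holds on $\abf$, so the context guard is available precisely where the $Q_\lambda$ markers are consumed. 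The last wrinkle is the equality guard $x=y$, which is not a relational atom; this is absorbed by letting the ambient relational context guard co-guard its variables (equivalently, by permitting equality atoms in rule bodies) and does not affect the polynomial bound.
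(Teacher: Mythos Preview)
Your proposal is correct and follows essentially the same route as the paper: a fresh predicate per subformula, top-down unfolding of the Horn structure, bottom-up generation of markers for $\mathrm{GF}^{\exists}$ antecedents, and the crucial observation that the nearest enclosing guard can be threaded into the antecedent rules to restore guardedness. Your tgd-to-\hGF{} direction (add a fresh head-guard $G_t$) is a cosmetic variant of the paper's (introduce a fresh atom $R(\xbf,\zbf)$ that simultaneously serves as the new head and as the self-guard of the existential, then split into two rules).

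One point you under-specify: equality. You mention only the equality \emph{guard}, but equality can also occur as a positive atomic subformula of $\varphi$, and then your atomic rule becomes $P_{x=y}(x,y)\to x=y$, which is an egd rather than a tgd. The paper deals with this explicitly: equalities on the left of a rule are eliminated by substitution, and equalities on the right are replaced by a fresh binary predicate $E$, together with guarded tgds axiomatising that $E$ behaves as a congruence under every relational guard. You should add this step; it is routine and still polynomial, but without it $\Sigma_\varphi$ is not a set of guarded tgds.
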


\begin{proof}
  The first statement follows from the arguments
  in~\cite{DBLP:journals/jair/CaliGK13}. We repeat the idea for the
  sake of completeness. Every guarded tgd $\forall \xbf\forall \ybf\,
  (\psi(\xbf,\ybf)\rightarrow \exists \zbf\, \varphi(\xbf,\zbf))$ in
  $\Sigma$ is replaced by two guarded tgds
  \begin{align*}
    \forall \xbf\forall \ybf\, (\psi(\xbf,\ybf) & \rightarrow \exists
    \zbf\, R(\xbf,\zbf)), \\
    \forall \xbf\forall \zbf\, (R(\xbf,\zbf) & \rightarrow
    \varphi(\xbf,\zbf)),
  \end{align*}
  for a fresh predicate name $R$ of appropriate arity. Obviously, the resulting set is a set
  of \hGF{} sentences; their conjunction satisfies the requirements of
  the statement.

  \smallskip Conversely, we give a translation of \hGF{} sentences
  into guarded tgds which can be used to reduce satisfiability and
  query answering.  Let $\varphi_0$ be a \hGF{} sentence.  Let us
  denote with $\varphi[\ybf]$ the fact that a formula $\varphi$ has
  precisely free variables $\ybf$. Further, if $\varphi$ is a
  subformula of $\varphi_0$, the \emph{guard of $\varphi$ in
  $\varphi_0$} is the atom $G(\xbf\ybf)$ that guards all free
  variables of $\varphi$; in case such an atom does not exist, we
  assume that the guard is $\top$. Introduce a fresh predicate name $R_\varphi$
  (of matching arity) for every complex subformula $\varphi$ of $\varphi_0$,
  and set $R_\alpha=\alpha$ whenever $\alpha$ is $\bot$, $\top$, or
  an atomic formula.
   
  Now, let $\Sigma_0$ be the set consisting of the following sentences:
  \begin{itemize}

    \item $\top\to R_{\varphi_0}$,

    \item for every subformula $\psi[\xbf]=\varphi[\ybf]\wedge
      \varphi'[\ybf']$ of $\varphi_0$, include 
      \begin{align*}
	& \forall \xbf\, (R_{\psi}(\xbf)\to
	R_{\varphi}(\ybf)),\text{ and}\\
	& \forall \xbf\, (R_{\psi}(\xbf)\to R_{\varphi'}(\ybf')),
      \end{align*}

    \item for every subformula $\psi[\xbf]=\exists
      \ybf(G(\xbf,\ybf)\wedge \varphi[\zbf])$ of $\varphi_0$, include 
      \begin{align*}
	& \forall\xbf\, (R_{\psi}(\xbf)\to \exists \ybf\, R(\xbf,\ybf)),
	\\
	& \forall \xbf\forall \ybf\, (R(\xbf,\ybf)\to G(\xbf,\ybf)),
	\text{ and}\\
	& \forall \xbf\forall \ybf\, (R(\xbf,\ybf)\to R_{\varphi}(\zbf)),
      \end{align*}
      where $R$ is a fresh predicate name of appropriate arity,

    \item for every subformula $\psi[\xbf]=\forall
      \ybf(G(\xbf\ybf)\to \varphi[\zbf])$ of $\varphi_0$, include 
      $$\forall\xbf\forall \ybf\, (G(\xbf\ybf)\wedge R_{\psi}(\xbf)\to R_{\varphi}(\zbf)),$$

    \item for every subformula $\psi[\xbf]=\lambda[\ybf]\to
      \varphi[\ybf']$ of $\varphi_0$, 
      include $$\forall \xbf\, (R_{\psi}(\xbf)\wedge
      R_{\lambda}(\ybf)\to R_{\varphi}(\ybf')).$$
      Moreover, for every subformula
      $\lambda'[\vbf]=\theta[\zbf]\wedge\theta'[\zbf']$
      of $\lambda$ with guard $G(\zbf_0)$, include 
      $$\forall \xbf\, (G(\zbf_0)\wedge R_{\theta}(\zbf)\wedge
      R_{\theta'}(\zbf')\to R_{\lambda'}(\vbf)),$$
      for every subformula
      $\lambda'[\vbf]=\theta[\zbf]\vee\theta'[\zbf']$
      of $\lambda$ with guard $G(\zbf_0)$, include 
      \begin{align*}
	& \forall \xbf\, (G(\zbf_0)\wedge R_{\theta}(\zbf)\to
	R_{\lambda'}(\vbf)), \text{ and}\\
	& \forall \xbf\, (G(\zbf_0)\wedge R_{\theta'}(\zbf')\to R_{\lambda'}(\vbf)), 
      \end{align*}
      for every subformula $\lambda'[\vbf]=\exists \zbf\, H(\vbf,\zbf) \wedge \theta[\zbf']$
      of $\lambda$ with guard $G(\zbf_0)$, include 
      \begin{align*}
	& \forall \vbf\forall \zbf\, (H(\vbf,\zbf) \wedge
	R_{\theta}(\zbf')\to
	R(\vbf)), \text{ and}\\
	& \forall \zbf_0\, (G(\zbf_0) \wedge R(\vbf)\to
	R_{\lambda'}(\vbf)),
      \end{align*}
      for some fresh predicate name $R$ of appropriate arity.

  \end{itemize}
  It is routine to verify that 
  \begin{itemize} 

    \item[$(\dagger)$] $\Sigma_0$ is satisfiable iff $\varphi_0$ is
      satisfiable and query answering relative to $\varphi_0$ is
      the same as query answering relative to $\Sigma_0$ (over databases
      in the vocabulary of $\varphi_0$).  
  
  \end{itemize}

  Notice that the sentences in $\Sigma_0$ are guarded tgds whenever,
  $\varphi_0$ is equality-free, that is, it does not contain atoms of
  the form $x=y$. Obtain $\Sigma_1$ from $\Sigma_0$ by removing all
  occurrences of $x=y$ on the left-hand side of some rule and, in such
  a case, replacing all occurrences of $y$ with $x$ in the rule.
  Obviously, $\Sigma_1$ still satisfies $(\dagger)$. To remove equality
  atoms from the right-hand side of the rules in $\Sigma_1$, we observe
  first that there is only one atom on the right-hand side of the
  rules in $\Sigma_1$. Obtain a set of rules $\Sigma_2$ by replacing every
  $x=y$ on the right-hand side by $E(x,y)$, for a new predicate name
  $E$, and adding the following guarded tgds, for every predicate name $R$
  appearing in $\Sigma_1$, $\xbf=x_1,\ldots,x_{r_R}$, and every
  $i,j,k$ with $1\leq i<j<k\leq r_R$:
  \begin{align*}
    & \forall \xbf\, (R(\xbf)\wedge
    E(x_i,x_j) \to E(x_j,x_i)), \\
    & \forall \xbf\, (R(\xbf)\wedge
    E(x_i,x_j)\wedge E(x_j,x_k) \to E(x_i,x_k)),\\
    & \forall \xbf\, (R(\xbf)\wedge
    E(x_i,x_j)\to
    R(\xbf[x_i/x_j])\wedge R(\xbf[x_j/x_i])).
  \end{align*}
  Intuitively, these guarded tgds enforce that $E$ behaves like a
  congruence relation under each possible guard $R$. It is routine to
  verify that $\varphi_0$ is satisfiable iff $\Sigma_2$ is satisfiable,
  and that query answering relative to $\varphi_0$ is the same as
  query answering relative to $\Sigma_2$.\footnote{The latter
    equivalence only holds in
  the case without the unique name assumption (UNA). With UNA, the
  equivalence holds under the additional condition that
$\Sigma_2\cup D\not\models E(a,b)$ for any distinct
$a,b\in\mn{dom}(D)$. }
\end{proof}

Consequently, we have: 

\begin{theorem}\label{thm:hgf-ptime-query}
  Query answering in \hGF{} is in \PTime data complexity. 
\end{theorem}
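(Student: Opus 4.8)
The plan is to obtain the \PTime upper bound by composing the polynomial-time reduction of Lemma~\ref{lem:horngf-gtgds} with the known \PTime data complexity of query answering for guarded tgds.

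First I would fix the data complexity setting precisely: the \hGF{}-formula $\varphi$ (playing the role of the ontology) and the conjunctive query $q$ are fixed, while the database $D$ over $\tau$ is the sole varying input. Applying the second half of Lemma~\ref{lem:horngf-gtgds} to the fixed $\varphi$ produces, once and for all and independently of $D$, a set $\Sigma_\varphi$ of guarded tgds over an enlarged vocabulary $\tau'\supseteq\tau$ such that, for the fixed $q$ and every database $D$ over $\tau$,
$$
\{\varphi\}\cup D\models q \quad\text{iff}\quad \Sigma_\varphi\cup D\models q.
$$
The essential observation is that $\Sigma_\varphi$ depends only on $\varphi$, so with respect to the data complexity measure it is a fixed set of guarded tgds and its size is a constant.

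Next I would invoke the established fact that, for a fixed set of guarded tgds and a fixed conjunctive query, deciding entailment against a varying database is in \PTime~\cite{DBLP:conf/pods/CaliGL09,DBLP:journals/jair/CaliGK13}. Applied to the fixed $\Sigma_\varphi$ and $q$, this gives a polynomial-time procedure, in $|D|$, for checking $\Sigma_\varphi\cup D\models q$, and hence, by the displayed equivalence, for checking $\{\varphi\}\cup D\models q$. This yields the claimed \PTime upper bound.

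The proof is essentially a corollary, so the main work sits in Lemma~\ref{lem:horngf-gtgds}, which is already in hand; within the present argument the only real subtlety concerns equality and the unique name assumption. The translation underlying that lemma eliminates equality atoms by simulating them through a fresh congruence predicate, and (as its footnote records) the query-answering equivalence is stated without the UNA, or with an extra side condition under it. I would therefore read the claim in exactly the semantic setting in which both the reduction lemma and the cited \PTime result for guarded tgds are valid, so that the two ingredients compose without friction; no further argument is needed.
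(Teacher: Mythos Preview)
Your proposal is correct and matches the paper's own treatment: the paper presents this theorem as an immediate corollary (``Consequently, we have:'') of Lemma~\ref{lem:horngf-gtgds} together with the known \PTime data complexity of query answering for guarded tgds~\cite{DBLP:conf/pods/CaliGL09,DBLP:journals/jair/CaliGK13}. Your attention to the equality/UNA subtlety from the footnote is apt but does not deviate from the paper's implicit argument.
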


\subsection{Ehrenfeucht-Fra\"{i}ss\'{e} games}

The proof of Theorem~\ref{thm:gfexists} is standard and omitted. It is useful
to define an analogue of the formula $\lambda_{\Amf,\ell,a}$ (in $\mathcal{ELU}$) 
for GF$^{\exists}$. For a finite vocabulary $\tau$, $\ell<\omega$, and variables $\xbf = x_{1}\ldots x_{m}$
we fix a finite set GF$_{\ell}^{\exists}(\tau,m)$ of formulas in GF$^{\exists}[\tau]$
of depth $\leq \ell$ and free variables among $x_{1},\ldots,x_{m}$ such that for every 
GF$^{\exists}[\tau]$ formula $\lambda$ of depth $\leq \ell$ and free variables among 
$x_{1},\ldots,x_{m}$ there exists a 
formula $\lambda'$ in GF$_{\ell}^{\exists}(\tau,m)$
such that $\lambda$ and $\lambda'$ are equivalent.
For a structure $\Amf$ and tuple $\abf$ of length $m$ we denote by $\lambda_{\Amf,\ell,\abf}$
the conjunction of all formulas $\lambda$ in GF$_{\ell}^{\exists}(\tau,m)$ such that $\Amf\models \lambda(\abf)$.
We then have for all pointed structures $\Amf,\abf$ and $\Bmf,\bbf$ and $\ell< \omega$:
\begin{equation}\label{eq:guarded1}
\Bmf\models \lambda_{\Amf,\ell,\abf}(\bbf) \quad \text{iff} \quad \Amf,\abf \preceq_{\textit{gsim}}^{\ell}\Bmf,\bbf.
\end{equation}
We now come to the proof of Theorem~\ref{thm:EhrenhornGF}. As announced, we are going
to prove a stronger version in which the tuple $\abf$ is replaced by a set $X$ of tuples.
\begin{lemma}\label{lem:EhrenhornGF}
For any finite vocabulary $\tau$, pointed $\tau$-structures $\Amf,X$ and $\Bmf,\bbf$, and any 
$\ell < \omega$, we have
$$
\Amf,X \leq_{\textit{hornGF}}^{\ell} \Bmf,\bbf \quad \text{ iff }
\quad \exists X_{0}\subseteq X: \Amf,X_{0} \preceq_{\textit{ghsim}}^{\ell} \Bmf,\bbf
$$
If $\Amf$ and $\Bmf$ are finite, then
$$
\Amf,X \leq_{\textit{hornGF}} \Bmf,\bbf \quad \text{ iff }
\quad \exists X_{0}\subseteq X: \Amf,X_{0} \preceq_{\textit{ghsim}} \Bmf,\bbf
$$
\end{lemma}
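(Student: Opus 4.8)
The plan is to mimic the proof of Lemma~\ref{lem:diagram}, replacing the $\mathcal{ELU}$-ingredients by their GF$^{\exists}$-counterparts and the node-based moves by the link-based guarded moves. Using the characteristic formulas from~\eqref{eq:guarded1}, I first fix, for each $\ell$ and each arity $m$, a finite set $\text{Horn}_{\ell}$ of \hGF-formulas of depth $\leq\ell$ with free variables among $x_{1},\dots,x_{m}$ that represents all such formulas up to equivalence; this is possible because over a finite vocabulary and fixed free variables there are only finitely many non-equivalent \hGF-formulas of bounded depth. For a pointed structure $\Amf,X$ I set $\rho_{\Amf,\ell,X}$ to be the conjunction of all $\varphi\in\text{Horn}_{\ell}$ with $\Amf\models\varphi(\abf)$ for every $\abf\in X$, so that $\Bmf\models\rho_{\Amf,\ell,X}(\bbf)$ holds iff $\Amf,X\leq_{\textit{hornGF}}^{\ell}\Bmf,\bbf$.

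For the implication $(\Rightarrow)$ I would define, for each $\ell$, a set $Z_{\ell}$ of links $(P,\bbf)$ between $\Amf$ and $\Bmf$ by putting $(P,\bbf)\in Z_{\ell}$ whenever $P[\bbf]\neq\emptyset$ and $(i)$ $\Bmf\models\rho_{\Amf,\ell,P[\bbf]}(\bbf)$ and $(ii)$ $\Amf\models\lambda_{\Bmf,\ell,\bbf}(p(\bbf))$ for every $p\in P$. By~\eqref{eq:guarded1} condition $(ii)$ is precisely $\Bmf,\bbf\preceq_{\textit{gsim}}^{\ell}\Amf,p(\bbf)$, which also guarantees that each $p$ is a homomorphism and hence that $(P,\bbf)$ really is a link. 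The heart of the argument is then the claim that $(P,\bbf)\in Z_{\ell}$ implies $\Amf,P[\bbf]\preceq_{\textit{ghsim}}^{\ell}\Bmf,\bbf$, proved by induction on $\ell$ and verifying (\atomgh), (\forthgh), (\backgh), and (\simgh) in turn. Condition (\atomgh) holds because atomic formulas are depth-$0$ \hGF-formulas and (\simgh) is immediate from $(ii)$. For (\forthgh), given an $R(\bbf_{0}\ybf)$-successor $Y$ of $(P,\bbf)$, every $p(\bbf_{0})$ satisfies the \hGF-formula $\exists\ybf(R(\xbf_{0}\ybf)\wedge\rho_{\Amf,\ell,Y})$ of depth $\leq\ell+1$; transferring it through $(i)$ yields a witness $\bbf'$ with $\Bmf\models R(\bbf_{0}\bbf')$ and $\Bmf\models\rho_{\Amf,\ell,Y}(\bbf_{0}\bbf')$, and player~2 answers with the link on $\bbf_{0}\bbf'$ whose maps realize $Y'=Y\cap\lambda_{\Bmf,\ell,\bbf_{0}\bbf'}^{\Amf}$, which satisfies $(i)$ and $(ii)$ at level $\ell$ so that the induction hypothesis applies. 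Finally, to pass from $\Amf,X\leq_{\textit{hornGF}}^{\ell}\Bmf,\bbf$ to the existence of $X_{0}$, I take $X_{0}=\{\abf\in X\mid\Amf\models\lambda_{\Bmf,\ell,\bbf}(\abf)\}$ with the associated link $P_{0}$; the inclusion $\Bmf\models\rho_{\Amf,\ell,X_{0}}(\bbf)$ is obtained exactly as in Lemma~\ref{lem:diagram} by applying the hypothesis to the formulas $\lambda_{\Bmf,\ell,\bbf}\to\varphi$, which hold vacuously on $X\setminus X_{0}$ and therefore on all of $X$.

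The converse $(\Leftarrow)$ is a routine induction on $\ell$ showing that a level-$\ell$ guarded Horn simulation preserves every \hGF-formula of depth $\leq\ell$, with the cases $\exists\ybf(G\wedge\varphi)$, $\forall\ybf(G\to\varphi)$, $\lambda\to\varphi$, and $\wedge$ handled by (\forthgh), (\backgh), (\simgh) together with Theorem~\ref{thm:gfexists}, and (\atomgh), respectively. The finite-model statement then follows from the $\ell$-indexed one by the standard observation that over finite structures player~2 wins the unbounded game iff she wins every $\ell$-round game, so $\Amf,X\preceq_{\textit{ghsim}}\Bmf,\bbf$ coincides with $\Amf,X\preceq_{\textit{ghsim}}^{\ell}\Bmf,\bbf$ for all $\ell$, and likewise $\leq_{\textit{hornGF}}$ coincides with the intersection of the $\leq_{\textit{hornGF}}^{\ell}$.

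The main obstacle I expect is the back condition (\backgh). Unlike the \hALC{} case, player~1 may move to an \emph{arbitrary} guarded tuple $\bbf'$ that overlaps $\bbf$ in a set $[\bbf]\cap[\bbf']$ constrained by the guard of $\bbf'$, and the response link $P'$ must be compatible with $P$ through $p|_{[\bbf]\cap[\bbf']}=p'|_{[\bbf]\cap[\bbf']}$. Building $P'$ requires, for each $\varphi\in\text{Horn}_{\ell}$ refuted at $\bbf'$, producing a map realizing $\lambda_{\Bmf,\ell,\bbf'}$ together with $\neg\varphi$ while respecting the overlap, and arguing---via the universal guarded quantifier $\forall\ybf(G(\xbf\ybf)\to(\lambda_{\Bmf,\ell,\bbf'}\to\varphi))$ as the relevant \hGF-formula---that such maps must exist, then collecting them into a single link that verifies $(i)$ and $(ii)$ at level $\ell$. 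Getting the interaction between the guard atom, the overlap of $\bbf'$ with $\bbf$, and the homomorphism requirement on the maps right is the delicate bookkeeping step; once it is in place, the induction closes exactly as for \hALC{}.
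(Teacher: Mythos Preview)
Your proposal is correct and follows essentially the same approach as the paper's proof: the same characteristic formulas $\rho_{\Amf,\ell,X}$ and $\lambda_{\Bmf,\ell,\bbf}$, the same relations $Z_{\ell}$ defined by conditions $(i)$ and $(ii)$, the same inductive verification of the four simulation conditions (with (\forthgh) handled via $\exists\ybf(R\wedge\rho_{\Amf,\ell,Y})$ and (\backgh) via $\forall\ybf(G\to(\lambda_{\Bmf,\ell,\bbf'}\to\varphi))$), and the same construction of $X_{0}$. Your identification of (\backgh) as the delicate step---splitting $\bbf'$ into its overlap with $\bbf$ and the remainder, and collecting for each refuted $\varphi$ a map $p_{\varphi}$ extending some $p\in P$ on the overlap---matches the paper exactly, and your justification of $\Bmf\models\rho_{\Amf,\ell,X_{0}}(\bbf)$ via $\lambda_{\Bmf,\ell,\bbf}\to\varphi$ is in fact slightly more explicit than what the paper records.
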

\begin{proof}
We first introduce an analogue of the formula $\rho_{\Amf,\ell,X}$ (for \hALC{}) in \hGF{}.
For a finite vocabulary $\tau$, $\ell < \omega$, and variables $\xbf=x_{1}\ldots x_{m}$
we fix a finite set \hGF$_{\ell}(m)$ of formulas in \hGF{} of depth $\leq \ell$ and free variables among 
$x_{1},\ldots,x_{m}$ such that for every 
\hGF{} formula $\varphi$ of depth $\leq \ell$ and free variables among $x_{1},\ldots,x_{m}$ there exists a 
formula $\varphi'$ in \hGF$_{\ell}(m)$ such that $\varphi$ and $\varphi'$ are equivalent.
For a structure $\Amf$ and set $X$ of tuples of the same length $m$ we denote by $\rho_{\Amf,\ell,X}$
the conjunction of all formulas $\varphi$ in \hGF$_{\ell}(m)$ such that $\Amf\models \varphi(\abf)$ for all 
$\abf\in X$. In what follows, when using the formulas $\lambda_{\Amf,\ell,\abf}$ and $\rho_{\Amf,\ell,X}$,
the number $m$ of variables used will always be clear from the context and not be mentioned explicitly.

As the second claim follows directly from the first claim, we prove the first one only.

For the proof of the implication from left to right, let the pointed structures 
$\Amf,X_{0}$ and $\Bmf$, $\bbf_{0}$ be given with $X_{0}$ and $\bbf_{0}$ guarded tuples of the same
length. Assume that $\Amf,X_{0} \leq_{\textit{hornGF}}^{\ell} \Bmf,\bbf_{0}$. We show that
there exists $X_{0}'\subseteq X_{0}$ such that $\Amf,X_{0}' \preceq_{\textit{ghsim}}^{\ell} \Bmf,\bbf_{0}$.

%
Define sets $Z_{k}$ of pairs $(X,\bbf)$ by setting
$(X,\bbf)\in Z_{k}$ if $X\not=\emptyset$ is a set of guarded tuples in $\Amf$ and $\bbf$ 
is a guarded tuple in $\Bmf$ of the same length as the tuples in $X$ such that
\begin{itemize}
\item[(a)] $\Bmf\models \rho_{\Amf,k,X}(\bbf)$;
\item[(b)] $\Amf \models \lambda_{\Bmf,k,\bbf}(\abf)$ for all $\abf\in X$.
\end{itemize}
Observe that if $(X,\bbf)\in Z_{0}$, then $(P,\bbf)$ with 
$$
P= \{p_{\abf}:\bbf\mapsto \abf \mid \abf \in X\}
$$
is a well-defined link (as $\Amf\models \lambda(\abf)$ implies $\Bmf\models \lambda(\bbf)$ for all $\abf\in X$
and $\lambda(\xbf)$ in GF$_{0}^{\exists}$). 
In what follows we do not distinguish between $(X,\bbf)$ and the corresponding link.
We show the following claim by induction over $k<\omega$.

\medskip
\noindent
\emph{Claim 1}. For all $k<\omega$, non-empty sets $X$ of tuples in $\Amf$, and tuples $\bbf$ in $\Bmf$: if $(X,\bbf)\in Z_{k}$,
then $\Amf,X \preceq_{\textit{ghsim}}^{k} \Bmf,\bbf$.

\medskip 
\noindent
For $k=0$, Claim~1 is proved as follows. Assume $(X,\bbf)\in Z_{0}$ and let $(P,\bbf)$ be the corresponding link. 
For Condition~(\atomgh), observe that if $\Amf\models p(b_{i})=p(b_{j})$ for all $p\in P$,
then $\Bmf\models b_{i}= b_{j}$ since $\Bmf\models \rho_{\Amf,0,P[\bbf]}(\bbf)$ and $(x=y)$ are in \hGF$_{0}$.
Similarly, if $R\in \tau$ and $\Amf\models R(p(\bbf))$ for all $p\in P$,
then $\Bmf\models R(\bbf)$ since $\Bmf\models \rho_{\Amf,0,P[\bbf]}(\bbf)$ and $R(\xbf)$ is in \hGF$_{0}$.
Condition~(\simgh) follows from Theorem~\ref{thm:gfexists}. 

Assume Claim~1 has been proved for $k$ and let $(X,\bbf)\in Z_{k+1}$. Denote by $(P,\bbf)$ the corresponding link.
Condition~(\atomgh) can be proved in the same way as for $k=0$.
For Condition~(\forthgh), assume $Y$ is an $R(\bbf_{0},\ybf)$-successor of
$(P,\bbf)$. We have to construct a tuple $\bbf'$ and set of tuples $Y'\subseteq Y$ with $(Y',\bbf_{0}\bbf')\in Z_{k}$.
By definition,
$$
\Amf\models \exists \ybf \big(R(p(\bbf_{0})\ybf)\wedge \rho_{\Amf,k,Y}(p(\bbf_{0})\ybf)\big),
$$  
for all $p\in P$. By Condition (a) and since $\exists \ybf (R(\xbf_{0}\ybf)\wedge \rho_{\Amf,k,Y}(\xbf_{0}\ybf))$
is of depth $\leq k+1$,
$$
\Bmf \models \exists \ybf \big(R(\bbf_{0}\ybf)\wedge \rho_{\Amf,k,Y}(\bbf_{0}\ybf)\big).
$$
Thus, there exists a tuple $\bbf'$ such that 
$$
\Bmf\models R(\bbf_{0}\bbf')\wedge \rho_{\Amf,k,Y}(\bbf_{0}\bbf').
$$
Let 
$$
Y' = \{ \abf \in Y \mid \Amf\models \lambda_{\Bmf,k,\bbf_{0}\bbf'}(\abf)\}
$$
To show that $(Y',\bbf_{0}\bbf')\in Z_{k}$ it suffices to show that $Y'$ is non-empty and satisfies Condition~(a)
(Condition~(b) holds by definition).
\begin{itemize}
\item Assume $Y'=\emptyset$. Then $\Amf\not\models \lambda_{\Bmf,k,\bbf_{0}\bbf'}(\abf)$
for any $\abf \in Y$.
Then $(\lambda_{\Bmf,k,\bbf_{0}\bbf'}\rightarrow \bot)$ is equivalent to a conjunct
of $\rho_{\Amf,k,Y}$. Then 
$$
\Bmf\models (\lambda_{\Bmf,k,\bbf_{0}\bbf'}(\bbf_{0}\bbf')\rightarrow \bot)
$$
by construction of $\bbf_{0}\bbf'$ and we have derived a contradiction.
\item For Condition~(a), assume that $\psi$ is in \hGF{} and of depth $\leq k$ and 
$\Amf \models \psi(\abf)$ for all $\abf\in Y'$. 
We have to show that $\Bmf\models \psi(\bbf_{0}\bbf')$. We have 
$$
\Amf \models (\lambda_{\Bmf,k,\bbf_{0}\bbf'} \rightarrow \psi) (\abf),
$$
for all $\abf\in Y$. Then $(\lambda_{\Bmf,k,\bbf_{0}\bbf'} \rightarrow \psi)$ is a conjunct of
$\rho_{\Amf,k,Y}$ and so 
$$
\Bmf\models (\lambda_{\Bmf,k,\bbf_{0}\bbf'} \rightarrow \psi)(\bbf_{0}\bbf').
$$
From
$$
\Bmf\models \lambda_{\Bmf,k,\bbf_{0}\bbf'}(\bbf_{0}\bbf')
$$
we obtain
$$
\Bmf\models \psi(\bbf_{0}\bbf'),
$$
as required.
\end{itemize}
To show Condition~(\backgh), assume that a guarded tuple $\bbf'$ is given. Consider w.l.o.g. an atomic formula $R(\xbf')$ with 
$\Bmf\models R(\bbf')$. Let $\bbf'= \bbf_{0}\bbf_{1}$ where $\bbf_{0}$ and $\bbf_{1}$ are such that
$[\bbf_{0}]\subseteq [\bbf]$ and $[\bbf_{1}]\cap [\bbf]=\emptyset$.
Take for every $\psi(\xbf_{0}\xbf_{1})$ in \hGF$_{k}$ with $\Bmf\models\neg\psi(\bbf_{0}\bbf_{1})$
a $p_{\psi}\in P$ and tuple $\abf_{\psi}$ such that $\Amf\models (\lambda_{\Amf,k,\bbf'}\wedge \neg\psi)(p_{\psi}(\bbf_{0}),\abf_{\psi})$.
They exist because otherwise 
$$
\Amf\models \forall \xbf_{1}(R(\xbf_{0}\xbf_{1}) \rightarrow 
(\lambda_{\Amf,k,\bbf'}\rightarrow \psi))(p(\bbf_{0}),\xbf_{1})
$$
for all $p\in P$ and so, by definition of $Z_{k+1}$,:
$$
\Bmf\models \forall \xbf_{1}(R(\xbf_{0}\xbf_{1}) \rightarrow 
(\lambda_{\Amf,k,\bbf'}\rightarrow \psi))(\bbf_{0},\xbf_{1}))
$$
and we have derived a contradiction.
Now let $(P',\bbf')$ consist of all $p':\bbf' \mapsto p_{\psi}(\bbf_{0})\abf_{\psi}$ with $p'|_{[\bbf_{0}]}=p|_{[\bbf_{0}]}$.
Then $(P',\bbf')$ is as required for Condition~(\backgh).

Condition~(\simgh) follows from Theorem~\ref{thm:gfexists} and so Claim~1 is proved.

\medskip

It remains to prove that there exists $X_{0}'\subseteq X_{0}$ with
$(X_{0}',\bbf_{0})\in Z_{\ell}$.
Set 
$$
X_{0}' = \{ \abf \in X_{0} \mid \Amf\models \lambda_{\Bmf,\ell,\bbf_{0}}(\abf)\}
$$
Then $(X_{0}',\bbf_{0})$ satisfies Conditions~(a) and (b) by definition.
$X_{0}'\not=\emptyset$ can be proved in the same way as the proof of $Y'\not=\emptyset$ 
given above. 

\medskip

For the proof of the implication from right to left, we show the following claim
by induction over $k<\omega$.

\medskip
\noindent 
\emph{Claim 2.} For all $k< \omega$, if $(P,\bbf) \in Z_{k}$, then $\Amf\models \varphi(p(\bbf))$ for all $p\in P$
implies $\Bmf\models \varphi(\bbf)$, for all formulas $\varphi$ in \hGF{} of depth $\leq k$.

\medskip
For $k=0$, Claim~2 follows from Condition~(\atomgh). Assume Claim~2 has been proved for $k$.
We prove Claim~2 for $k+1$ by induction over the construction of $\varphi$.
Thus, assume that Claim~2 has been proved for $\varphi', \varphi_{1},\varphi_{2}\in \hGF_{k}$, and that
$\varphi\in \hGF_{k+1}$ is of the form $\varphi = \forall \ybf(G(\xbf\ybf) \to \varphi'(\xbf\ybf))$,
$\varphi=  \exists \ybf (G(\xbf\ybf)\wedge \varphi'(\xbf\ybf))$,
$\varphi=\varphi_{1}\wedge \varphi_{2}$,
or $\varphi= \lambda\rightarrow \varphi'$, where $\lambda$ is in GF$^{\exists}_{k+1}$.
Then we prove Claim~2 for $\varphi$. Assume $(P,\bbf) \in Z_{k+1}$ and $\Amf\models \varphi(p(\bbf))$ for all $p\in P$. 

\begin{itemize}
\item Assume $\varphi = \forall \ybf(G(\xbf\ybf) \to \varphi'(\xbf\ybf))$ and for a proof by contradiction that 
$\Bmf\not\models \varphi(\bbf)$.
Choose $\bbf'$ with $\Bmf\models G(\bbf\bbf') \wedge \neg \varphi'(\bbf\bbf')$. 
By Condition~(\backgh), there exists a link $(P',\bbf\bbf')$ in $Z_{k}$ such that for all $p'\in P'$ there exists $p\in P$
with $p|_{[\bbf]}=p'|_{[\bbf]}$. $\varphi'$ has depth $\leq k$ and thus,
as we assume that Claim~2 has been proved for $k$, there exists $p'\in P'$
such that $\Amf\not\models \varphi'(p'(\bbf\bbf'))$. 
There exists $p\in P$ with $p|_{[\bbf]}=p'|_{[\bbf]}$
and so $\Amf\not\models \varphi(p(\bbf))$, and we have derived a contradiction.
\item Assume $\varphi=  \exists \ybf (G(\xbf\ybf)\wedge \varphi'(\xbf\ybf))$
and for a proof by contradiction that $\Bmf\not\models \exists \ybf (G(\bbf\ybf)\wedge \varphi'(\bbf\ybf))$.
Let 
$$
Y= \{ \abf \mid \Amf\models (G \wedge \varphi')(\abf)\}
$$
Then $Y$ is a $G(\bbf,\ybf)$-successor of
$(P,\bbf)$ and so by Condition~(\forthgh) there exists $(P',\bbf\bbf')\in Z_{k}$ such that $P'[\bbf\bbf'] \subseteq Y$.
Thus, by induction hypothesis, $\Bmf\models (G(\bbf\bbf') \wedge \varphi'(\bbf\bbf'))$.
But then $\Bmf\models \varphi(\bbf)$ and we have derived a contradiction.
\item Assume $\varphi= \varphi_{1}\wedge \varphi_{2}$ and for a proof by contradiction that 
$\Bmf\not\models (\varphi_{1}\wedge \varphi_{2})(\bbf)$. We
may assume w.l.o.g. that $\Bmf\not\models \varphi_{1}(\bbf)$. But $\Amf\models \varphi(p(\bbf))$
for all $p\in P$, and so $\Amf\models \varphi_{1}(p(\bbf))$, for all $p\in P$.
By IH, $\Bmf\models \varphi_{1}(\bbf)$, and we have derived a contradiction.
\item Assume $\varphi = \lambda \rightarrow \varphi'$
and for a proof by contradiction $\Bmf\not\models (\lambda \rightarrow \varphi')(\bbf)$.
Then $\Bmf\models \lambda(\bbf)$ and $\Bmf\not\models \varphi'(\bbf)$.
By Theorem~\ref{thm:gfexists}, $\Amf\models \lambda(p(\bbf))$ for all $p\in P$, and, 
by IH, there exists $p\in P$ such that $\Amf\not\models \varphi'(p(\bbf))$.
Then there exists $p\in P$ with $\Amf\not\models \varphi(p(\bbf))$ and we have derived a contradiction.
\end{itemize}
This finishes the proof.%
\end{proof}

\subsection{Model indistinguishability in \hGF}

We introduce decision problems entailment, equivalence, CBE, and
GuardedHornSim similar to Section~\ref{sec:complexity}. For example,
entailment is the problem of deciding $\Amf,\abf\leq_{\hGF}\Bmf,\bbf$
on input $\Amf,\Bmf,\abf,\bbf$, and GuardedHornSim is the following
problem:
\begin{itemize}

  \item \textbf{Input}: structures $\Amf$, $\Bmf$ and a set $X$ of
    guarded tuples in \Amf, and \bbf a guarded tuple in \Bmf

  \item \textbf{Question:} Is $\Amf,X\preceq_{\textit{ghsim}}\Bmf,\bbf$?

\end{itemize}

The main theorem here is the following: 

\medskip\noindent\textbf{Theorem~\ref{thm:gf-complexity}.}\textit{
  In \hGF, entailment, equivalence, and CBE are in \ExpTime. Moreover,
  $\ell$-entailment, $\ell$-equivalence, and $\ell$-CBE are in
  \ExpTime for binary encoding of $\ell$ and in \PSpace for unary
  encoding.
}

\begin{proof}
  
Since entailment, equivalence, and CBE can be reduced to
GuardedHornSim, it suffices to show that GuardedHornSim can be decided
in \ExpTime.

To this end, we start with noting that, for a given guarded tuple $\bbf$, the
number of mappings $p:\bbf\mapsto p(\bbf)$ is bounded by
$|\Amf|$: since $\bbf$ is a guarded tuple, there is some $R\in \tau$
and a tuple \abf such that $\abf\in R^{\Bmf|_{[\bbf]}}$ and
$[\abf]=[\bbf]$. Every possible mapping $p$ maps this atom to a
different atom $R(\abf')$ in \Amf, thus the number of mappings is
bounded by $|\Amf|$. Hence, the size of a witnessing $P$ with
$X=P[\bbf]$ is bounded by $|\Amf|$ as well, and we can try (in
exponential time) all possible~$P$.

It thus remains to check on input $\Amf,\Bmf,(P,\bbf)$ whether there
is a guarded Horn simulation $Z$ between \Amf and \Bmf with $(P,\bbf)\in
Z$. This can be realized using an alternating algorithm which
implements the guarded Horn simulation game. We need an auxiliary notion
and claim. Fix structures \Amf, \Bmf
over some vocabulary $\tau$ and a link $(P_0,\bbf_0)$ between $\Amf$
and \Bmf. Denote with $\mn{ar}(\tau)$ the maximal arity of symbols in
$\tau$. We call a guarded Horn simulation $Z$ \emph{normal for
$(P_0,\bbf_0)$} if, for every $(P,\bbf)\in Z$ with $(P,\bbf)\neq
(P_0,\bbf_0)$, we have $|\bbf|\leq \mn{ar}(\tau)$.  The following is
routine to show:

\smallskip\noindent{\textit{Claim.}} There is a guarded Horn simulation
$Z$ with $(P_0,\bbf_0)\in Z$ iff there is a guarded Horn simulation $Z$
with $(P_0,\bbf_0)\in Z$ which is normal for $(P_0,\bbf_0)$.

\smallskip Based on this claim, we can devise the alternating
algorithm.  Let $\Amf,\Bmf,(P_0,\bbf_0)$ be the input. The algorithm
maintains links and proceeds in rounds. At a link $(P,\bbf)$ it does
the following: 
\begin{enumerate}

  \item Reject if $(P,\bbf)$ does not satisfy Condition~(\atomgh).

  \item Reject if $(P,\bbf)$ does not satisfy Condition~(\simgh). 

  \item For every set of guarded tuples $Y$ and every atomic formula
    $R$ such that $Y$ is an $R(\bbf_0,\ybf)$ successor of $(P,\bbf)$:
    \begin{enumerate}

      \item guess a link $(P',\bbf_0\bbf)$ with $|\bbf_0\bbf|\leq
	\mn{ar}(\tau)$, 

      \item reject if $P'[\bbf_0\bbf']\not\subseteq Y$,

      \item replace $(P,\bbf)$ with $(P',\bbf_0\bbf)$.

    \end{enumerate}

  \item for every guarded tuple $\bbf'$ with $|\bbf'|\leq
    \mn{ar}(\tau)$: 
    \begin{enumerate}

      \item guess a link $(P',\bbf')$ with $\bbf'\leq
	\mn{ar}(\tau)$,

      \item reject if for some $p'\in P'$, there is no $p\in P$ with
	$p|_{[\bbf]\cap[\bbf']}=p'|_{[\bbf\cap[\bbf']]}$;

      \item replace $(P,\bbf)$ with $(P',\bbf')$.

    \end{enumerate}

\end{enumerate}
The algorithm accepts after $N:=2^{|\Amf|}\cdot
(|\text{dom}(\Bmf)|^{\mn{ar}(\tau)}+1)+1$ where $|\Amf|$ is length of the
  representation of \Amf, that is, roughly the number of ground atoms.
  Note that steps~1) and~2) correspond to Conditions~(\atomgh)
  and~(\simgh), respectively, and steps~3) and~4) correspond
  to~(\forthgh) and~(\backgh), respectively. Thus, to establish
  correctness it suffices to prove that after $N$ steps, we know that
  there is a normal guarded Horn simulation $Z$ with $(P_0,\bbf_0)$. This is
  a consequence of the fact that there are at most $N-1$ links:
  \begin{itemize}

    \item the number of possible tuples $\bbf$ is bounded by
      $\text{dom}(\Bmf)^{\mn{ar}(\tau)}+1$, by normality;

    \item as argued in the beginning of the proof, for a fixed guarded
      tuple $\bbf$ the number of possible sets $P$ is bounded by
      $2^{|\Amf|}$.

  \end{itemize}
  Thus, after $N$ steps, the algorithm has visited a link twice and
  can stop. This argument also shows that the size of each link is
  bounded by some polynomial in the size of the representations of
  \Amf and \Bmf. Moreover, we can count (in binary) up to $N$ in
  polynomial space. Overall, the algorithm is an alternating \PSpace
  algorithm which suffices to show an \ExpTime upper bound.

  The upper bounds for the restricted problems are obtained in the
  same way from the alternating algorithm as discussed in
  Lemma~\ref{lem:k-hornsim} for \hALC.
\end{proof}

\subsection{Expressive Completeness}

We remind the reader of the definition of $\omega$-saturated structures. Let $\tau$ be a finite vocabulary and
assume that $\Amf$ is a $\tau$-structure. Then $\Amf$ is \emph{$\omega$-saturated} if for all tuples 
$\abf$ in $\text{dom}(\Amf)$ and all sets $\Gamma(\xbf\ybf)$ of FO$[\tau]$-formulas with $\ybf$ and $\abf$ of 
the same length, if $\Amf\models \exists \xbf (\bigwedge_{\varphi\in \Gamma'}\varphi(\xbf\abf))$ for all 
finite subsets $\Gamma'$ of $\Gamma$, then there exists a tuple 
$\bbf$ in $\text{dom}(\Amf)$ such that $\Amf\models \Gamma(\bbf\abf)$.
Every satisfiable set of FO$[\tau]$-formulas is satisfiable in an $\omega$-saturated structure~\cite{ChangKeisler}. 

\medskip
\noindent
{\bf Theorem~\ref{thm:exprhgf}}
A FO-formula is equivalent to a \hGF-formula just in case it is preserved under FO-restricted
generalized guarded Horn simulations.

\medskip
\noindent
\begin{proof} The direction from left to right is straightforward. Conversely,
suppose $\varphi(\xbf_{0})$ is preserved under generalized guarded Horn simulations.
Let $\text{cons}(\varphi)$ be the set of all $\psi(\xbf_{0})$ in \hGF{} entailed by $\varphi(\xbf_{0})$.
By compactness, it suffices to show $\text{cons}(\varphi)\models \varphi$. Let $\Bmf$ be an $\omega$-saturated
model satisfying $\text{cons}(\varphi)(\bbf_{0})$ for some tuple $\bbf_{0}$ in $\text{dom}(\Bmf)$. 
We show $\Bmf\models \varphi(\bbf_{0})$. For any tuple $\bbf$ and tuple $\xbf$ of variables of the same length as $\bbf$, we denote by 
$\lambda_{\Bmf,\bbf}(\xbf)$ the set of guarded existential 
positive $\lambda(\xbf)$ with $\Bmf\models \lambda(\bbf)$.
Let $\mathcal{C}$ be the set of all sets $\Gamma(\xbf_{0})$ of FO-formulas with $\Bmf\models \Gamma(\bbf_{0})$ and such that
$\Gamma(\xbf_{0}) \cup \{\varphi(\xbf_{0})\}$ is satisfiable and take, for any $\Gamma(\xbf_{0})\in \mathcal{C}$, an $\omega$-saturated structure $\Amf_{\Gamma}$ and tuple $\abf_{\Gamma}$ with $\Amf_{\Gamma}\models (\Gamma\cup \{\varphi\})(\abf_{\Gamma})$.
Let $\Amf$ be the disjoint union of $(\Amf_{\Gamma}\mid \Gamma\in \mathcal{C})$ and let
$Z$ be the set of pairs $(X,\bbf)$ such that
\begin{itemize}
\item[(a)] for any $\psi(\xbf)\in \hGF{}$, if $(\Amf_{\Gamma}\mid \Gamma\in \mathcal{C})
\models \psi(\abf)$ for all $\abf\in X$, then $\Bmf\models \psi(\bbf)$;
\item[(b)] there exists a set $\Phi(\xbf)\supseteq \lambda_{\Bmf,\bbf}$ of FO-formulas
such that $X$ is the set of all tuples $\abf$ with  
$(\Amf_{\Gamma}\mid \Gamma\in \mathcal{C}) \models \Phi(\abf)$.
\end{itemize}
Each $(X,\bbf)$ in $Z$ can be regarded as a link of the form $(P,\bbf)$ with $X=P[\bbf]$.
We show that $Z$ is an FO-restricted guarded Horn simulation between $(\Amf_{\Gamma}\mid \Gamma\in \mathcal{C})$ and $\Bmf$. 

Assume $(X,\bbf)$ is given. Let $\Gamma_{X}$ be a set of FO-formulas that defines $X$ in 
$(\Amf_{\Gamma}\mid \Gamma\in \mathcal{C})$. Let $(P,\bbf)$ be the link corresponding to $(X,\bbf)$. 
We check the conditions.

\medskip
\noindent
The Condition~(\atomgh) follows from Condition~(a).

\medskip
\noindent
To show that Condition~(\forthggh) holds, 
assume that $Y$ is a set of guarded tuples such that there is a set $\Gamma_{Y}$ of FO-formulas with free
variables among $\xbf_{0}\ybf$ defining $Y$ in $(\Amf_{\Gamma}\mid \Gamma\in \mathcal{C})$.
Assume first that $Y$ is an $R(\bbf_{0}\ybf)$-successor of
$(P,\bbf)$ and $\bbf_{0}$ is not empty. We have to show that there exists $(P',\bbf_{0}\bbf')\in Z$ such that 
$P'[\bbf_{0}\bbf'] \subseteq Y$. Let $\rho_{(\Amf_{\Gamma}|\Gamma\in\mathcal{C}),Y}$ be the set of all formulas $\rho$ in \hGF{} with free variables 
among $\xbf_{0}\ybf$ such that $(\Amf_{\Gamma} \mid \Gamma\in \mathcal{C})\models \rho(\abf_{0}\abf)$ for all $\abf_{0}\abf\in Y$.
For every $\rho(\xbf_{0}\ybf)$ in $\rho_{(\Amf_{\Gamma}|\Gamma\in\mathcal{C}),Y}$, and $p\in P$:
$$
(\Amf_{\Gamma}\mid \Gamma\in \mathcal{C})\models \exists \ybf (R(p(\bbf_{0}),\ybf) \wedge \rho(p(\bbf_{0}),\ybf)).
$$ 
Thus 
$$
\Bmf\models \exists \ybf (R(\bbf_{0}\ybf) \wedge \rho(\bbf_{0}\ybf)),
$$ 
for every $\rho(\xbf_{0}\ybf)$ in $\rho_{(\Amf_{\Gamma}|\Gamma\in\mathcal{C}),Y}$.
By $\omega$-saturatedness of $\Bmf$, there exists a tuple $\bbf'$ such that
$$
\Bmf\models R(\bbf_{0}\bbf'), \quad \Bmf \models \rho_{(\Amf_{\Gamma}|\Gamma\in\mathcal{C}),Y}(\bbf_{0}\bbf').
$$
Now we set 
$$
Y'= \{\abf \in Y \mid (\Amf_{\Gamma}\mid \Gamma\in \mathcal{C})\models \lambda_{\Bmf,\bbf_{0}\bbf'}(\abf)\}
$$
One can show that $(Y',\bbf_{0}\bbf')$ is as required. 
To show that $(Y',\bbf_{0}\bbf')\in Z$ it suffices to show that 
$Y'$ is not empty and satisfies Conditions~(a) and (b).
\begin{itemize}
\item Assume $Y'=\emptyset$. Then 
$$
(\Amf_{\Gamma}\mid \Gamma\in \mathcal{C})\not\models \lambda_{\Bmf,\bbf_{0}\bbf'}(\abf)
$$
for any $\abf \in Y$. Then $(\Gamma_{Y}\cup \lambda_{\Bmf,\bbf_{0}\bbf'})(\xbf_{0}\ybf)$ is not satisfied in any $\Amf_{\Gamma}$.
By compactness and $\omega$-saturatedness of every $\Amf_{\Gamma}$, 
there exist finite subsets $\Gamma_{0}$ of $\Gamma_{Y}$ and
$\lambda_{\Bmf,\bbf_{0}\bbf'}'$ of $\lambda_{\Bmf,\bbf_{0}\bbf'}$ such that $\Gamma_{0} \cup\lambda_{\Bmf,\bbf_{0}\bbf'}'$
is not satisfied in any $\Amf_{\Gamma}$. But then 
$$
(\Amf_{\Gamma}\mid \Gamma\in \mathcal{C})\models(\lambda_{\Bmf,\bbf_{0}\bbf'}'\rightarrow \bot)(\abf),
$$
for all $\abf\in Y$. $(\lambda_{\Bmf,\bbf_{0}\bbf'}'\rightarrow \bot)$ is then in \hGF{} and so
$$
\Bmf\models (\lambda_{\Bmf,\bbf_{0}\bbf'}'\rightarrow \bot)(\bbf_{0}\bbf'),
$$
by construction of $\bbf_{0}\bbf'$, and we have derived a contradiction.
\item For Condition~(a), assume that $\rho$ is in \hGF{} and 
$\Amf \models \rho(\abf)$ for all $\abf\in Y'$. 
We have to show that $\Bmf\models \rho(\bbf_{0}\bbf')$. But this can be shown similarly to the
non-emptiness proof above.
\item For Condition (b), observe that 
$$
\Phi = (\Gamma_{Y} \cup \lambda_{\Bmf,\bbf_{0}\bbf'})(\xbf_{0}\ybf)
$$
is as required.
\end{itemize}
Assume next that $Y$ is an $R(\ybf)$-successor of
$(P,\bbf)$ and $Y$ intersects with all $\Amf_{\Gamma}$.
We have to show that there exists $(P',\bbf')\in Z$ such that 
$P'[\bbf'] \subseteq Y$.
As $Y$ intersects with every $\Amf_{\Gamma}$, it follows that 
$$
(\Amf_{\Gamma}\mid \Gamma\in \mathcal{C})\models \exists \ybf (R(\ybf) \wedge \rho(\ybf))
$$ 
for every $\rho(\ybf)$ in $\rho_{(\Amf_{\Gamma}|\Gamma\in\mathcal{C}),Y}$.
Thus,
$$
\Bmf\models \exists \ybf (R(\ybf) \wedge \rho(\ybf)),
$$
for every $\rho(\ybf)$ in $\rho_{(\Amf_{\Gamma}|\Gamma\in\mathcal{C}),Y}$.
By $\omega$-saturatedness of $\Bmf$, there exists a tuple $\bbf'$ such that
$$
\Bmf\models R(\bbf'), \quad \Bmf \models \rho_{(\Amf_{\Gamma}|\Gamma\in\mathcal{C}),Y}(\bbf')
$$
Consider $\lambda_{\Bmf,\bbf'}(\ybf)$. We let 
$$
Y'= \{\abf \in Y \mid (\Amf_{\Gamma}\mid \Gamma\in \mathcal{C})\models \lambda_{\Bmf,\bbf'}(\abf)\}
$$
One can now show similarly to the previous case that $(Y',\bbf')$ is as required.

\medskip

To show Condition~(\backgh), assume that a guarded tuple $\bbf'$ is given. Consider w.l.o.g. an atomic formula 
$R(\xbf')$ with $\Bmf\models R(\bbf')$. Let $\bbf'= \bbf_{0}\bbf_{1}$, where $\bbf_{0}$ and $\bbf_{1}$ are such that
$[\bbf_{0}]\subseteq [\bbf]$ and $[\bbf_{1}]\cap [\bbf]=\emptyset$. Let $X'$ denote the set of all tuples $\abf_{0}\abf_{1}$ such that
there exists $\Gamma\in \mathcal{C}$ with $\abf_{0}\abf_{1}$ in $\Amf_{\Gamma}$ and 
such that
\begin{itemize}
\item $\Amf_{\Gamma}\models \lambda_{\Amf,\bbf'}(\abf_{0}\abf_{1})$ and
\item there exists a tuple $\abf''$ with $\Amf_{\Gamma}\models \Gamma_{X}(\abf_{0}\abf'')$.
\end{itemize}
Define $\Gamma_{X'}$ as the set of all formulas
$$
\exists \xbf_{1} (\lambda(\xbf_{0}\xbf_{1}) \wedge \psi(\xbf_{0}))
$$
where $\lambda(\xbf_{0}\xbf_{1})\in \lambda_{\Amf,\bbf'}(\xbf_{0}\xbf_{1})$ and
$\psi(\xbf_{0})= \exists \xbf''\psi'(\xbf_{0}\xbf'')$ for some $\psi(\xbf_{0}\xbf'')\in
\Gamma_{X}(\xbf_{0}\xbf'')$. Then, by $\omega$-saturatedness of all $\Amf_{\Gamma}$ with $\Gamma\in \mathcal{C}$,
we have that $\Gamma_{X'}$ defines $X'$:
$$
X' = \{\abf_{0}\abf_{1} \mid (\Amf_{\Gamma}\mid \Gamma\in \mathcal{C})\models \Gamma_{X'}(\abf_{0}\abf_{1})\}
$$
We show that $(X',\bbf')$ is as required. Condition~(b) is satisfied by definition.
For Condition~(a), assume $\rho(\xbf_{0}\xbf_{1})$ is in \hGF{} and $\Bmf\not\models\rho(\bbf_{0}\bbf_{1})$.
It suffices to show that there exists $\abf_{0}\abf_{\rho}\in X'$ such that 
$(\Amf_{\Gamma}\mid \Gamma\in \mathcal{C})\models \neg\rho (\abf_{0}\abf_{\rho})$.
But $\abf_{0}\abf_{\rho}$ exists because otherwise there exists $\lambda\in\lambda_{(\Amf_{\Gamma}\mid \Gamma\in \mathcal{C}),\bbf'}$
such that   
\begin{equation}\label{first}
(\Amf_{\Gamma}\mid \Gamma\in \mathcal{C})\models \forall \xbf_{1}(R(\xbf_{0}\xbf_{1}) \rightarrow 
(\lambda \rightarrow \rho))(p(\bbf_{0})\xbf_{1})
\end{equation}
for all $p\in P$ and so, by definition of $Z$,
$$
\Bmf\models \forall \xbf_{1}(R(\xbf_{0}\xbf_{1}) \rightarrow 
(\lambda \rightarrow \rho))(\bbf_{0},\xbf_{1}),
$$
and we have derived a contradiction. To prove \eqref{first}, assume
\eqref{first} does not hold. Then, for every $\lambda'\in\lambda_{(\Amf_{\Gamma}\mid \Gamma\in \mathcal{C}),\bbf'}$,
there exists $p\in P$ such that 
$$
(\Amf_{\Gamma}\mid \Gamma\in \mathcal{C})\models \exists \xbf_{1}(\lambda'\wedge \neg\rho)(p(\bbf_{0}),\xbf_{1}).
$$
By definition, it follows that for every $\lambda'\in\lambda_{(\Amf_{\Gamma}\mid \Gamma\in \mathcal{C}),\bbf'}$ there exists $\Gamma\in\mathcal{C}$ such that 
$\Amf_{\Gamma}$ realizes
$$
\Gamma_{X}(\xbf_{0}\xbf'')\cup \{(\lambda'\wedge \neg\rho)(\xbf_{0}\xbf_{1})\}
$$ 
But then, by $\omega$-saturatedness, compactness, and the definition of $(\Amf_{\Gamma}\mid \Gamma\in \mathcal{C})$,
there exists $\Gamma\in \mathcal{C}$ such that $\Amf_{\Gamma}$ realizes
$$
\Gamma_{X}(\xbf_{0}\xbf'')\cup \lambda_{(\Amf_{\Gamma}\mid \Gamma\in \mathcal{C}),\bbf'}(\xbf_{0}\xbf_{1})
\cup \{\neg\rho(\xbf_{0}\xbf_{1})\}
$$
which implies that there exists $\abf_{0}\abf_{\rho}\in X'$ such that 
$(\Amf_{\Gamma}\mid \Gamma\in \mathcal{C})\models \neg\rho (\abf_{0}\abf_{\rho})$.

\medskip

Finally, Condition~(\emph{sim}$^{gg}_h$) can be proved using $\omega$-saturatedness of $\Bmf$ and all $\Amf_{\Gamma}$.
\end{proof}

\section*{Extending \hALC{} with the $\nabla$-operator}

Denote by $\mathcal{ELU}_{\nabla}$ the extension of $\mathcal{ELU}$ with the $\nabla$-operator defined by
setting
$$
\nabla R.C ~=~ \exists R.\top \sqcap \forall R.C.
$$
Thus, \emph{$\mathcal{ELU}_{\nabla}$-concepts} are given by the grammar
$$
C,D ~::=~ A \ \mid \ \top \ \mid \  C \sqcap D \ \mid \  C \sqcup D \ \mid \  \exists R.C \ \mid \ \nabla R.C.
$$
We define $\hALC_{\nabla}$ in the same way as \hALC{} (see Definition~\ref{defHornALC}) with the exception that now $L$ is an arbitrary $\mathcal{ELU}_{\nabla}$-concept. The following lemma shows that,  modulo the standard translation, $\hALC_{\nabla}$ is a fragment of Horn FO.

\begin{lemma}\label{Lnabla}
Every \hALC$_{\nabla}$-concept is equivalent to a Horn formula.
\end{lemma}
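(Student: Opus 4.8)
The plan is to prove the lemma by induction on the construction of the \hALC$_{\nabla}$-concept, following the proof of Theorem~\ref{thm:11}$(i)$ verbatim in all cases except the implication case, which is the only place where the enlarged left-hand sides enter. First I would record the closure properties of Horn formulas on which the induction rests: Horn formulas are closed under $\wedge$, $\exists$ and $\forall$ by definition, and, crucially, if $\alpha$ is an atom and $\psi$ is Horn, then $\alpha\to\psi$ is equivalent to a Horn formula. This last fact holds because, writing $\psi$ (after pulling quantifiers to the front) as a prefix over a conjunction of basic Horn clauses, $\neg\alpha\vee\psi$ distributes to a prefix over $\bigwedge_j(\neg\alpha\vee\mathrm{clause}_j)$, and adjoining the single negative literal $\neg\alpha$ to a clause with at most one positive literal again yields a clause with at most one positive literal. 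With these in hand, the cases $\bot,\top,A$, $H\sqcap H'$, $\exists R.H$ and $\forall R.H$ are exactly the \hALC{} argument; for instance $(\forall R.H)^{\dag}=\forall y\,(R(x,y)\to H^{\dag}[y/x])$ is Horn because $H^{\dag}$ is Horn by the induction hypothesis, using the atom-implication fact and closure under $\forall$.

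The heart of the argument is the implication case $L\to H$ with $L$ an $\mathcal{ELU}_{\nabla}$-concept. By the outer induction hypothesis $H^{\dag}$ is Horn, so it suffices to prove the auxiliary statement, which I would establish by a separate (inner) induction on the structure of $L$: \emph{for every $\mathcal{ELU}_{\nabla}$-concept $L$ and every Horn formula $\psi$, the formula $L^{\dag}\to\psi$ is equivalent to a Horn formula}. The Boolean and existential cases reduce to strictly smaller subconcepts with a Horn target: $A\to\psi$ is the atom case above; $\top\to\psi\equiv\psi$; $(C\sqcap D)\to\psi\equiv C\to(D\to\psi)$, so one applies the inner hypothesis first to $D$ and then to $C$; $(C\sqcup D)\to\psi\equiv(C\to\psi)\wedge(D\to\psi)$; and $(\exists R.C)\to\psi\equiv\forall y\,(R(x,y)\to(C^{\dag}[y]\to\psi))$, where $C^{\dag}[y]\to\psi$ is Horn by the inner hypothesis, $R(x,y)\to(\cdot)$ is Horn by the atom fact, and the outer $\forall y$ preserves Horn-ness.

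The only genuinely new case, and the main obstacle, is $L=\nabla R.C$. Here I would generalize the computation that handles $C_{\nabla}$ in the proof of Theorem~\ref{thm:11}: using $\nabla R.C=\exists R.\top\sqcap\forall R.C$, a case split on whether $x$ has an $R$-successor gives
\[
(\nabla R.C\to\psi)\ \equiv\ \forall y\,\Big(R(x,y)\to\exists z\,\big(R(x,z)\wedge(C^{\dag}[z]\to\psi)\big)\Big).
\]
Indeed, when $x$ has no $R$-successor both sides are vacuously true; when it does, $\nabla R.C\equiv\forall R.C$, so $(\nabla R.C\to\psi)$ becomes ``some $R$-successor fails $C$, or $\psi$ holds,'' which is precisely $\exists z\,(R(x,z)\wedge(C^{\dag}[z]\to\psi))$. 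The right-hand side is Horn: by the inner induction hypothesis applied to the strictly smaller subconcept $C$, the formula $C^{\dag}[z]\to\psi$ is Horn; conjoining the atom $R(x,z)$ and existentially quantifying $z$ keeps it Horn; the atom-implication fact makes $R(x,y)\to(\cdot)$ Horn; and closure under $\forall y$ finishes the case. This reduction to a smaller left-hand subconcept is exactly what makes the inner induction go through, and verifying the displayed equivalence---the step that escapes the positive-existential world through the interplay of the $\exists R.\top$ and $\forall R.C$ conjuncts---is where the real content lies.
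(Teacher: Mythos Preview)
Your proof is correct and follows essentially the same route as the paper: an outer induction on the \hALC$_{\nabla}$-concept reducing everything to the implication case, and an inner induction on the $\mathcal{ELU}_{\nabla}$-concept $L$ showing that $L^{\dag}\to\psi$ is Horn whenever $\psi$ is, with the $\nabla$ case handled via exactly the equivalence you display (the paper writes it as $\exists y\,R(x,y)\to\exists y\,(R(x,y)\wedge(C^{\dag}(y)\to\psi))$, which is your formula after pushing the antecedent's $\exists$ under a universal). Your write-up is in fact more explicit than the paper's about the closure properties being invoked and about why the displayed equivalence holds.
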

\begin{proof}
We proceed by induction on the construction of \hALC$_{\nabla}$-concepts. The only non-trivial case is $L \to C$, where $L$ is an $\mathcal{ELU}_{\nabla}$-concept and $C^\dag$ is equivalent to a Horn formula. This case follows from the claim below.

\medskip
\noindent
{\it Claim}. If $H(\zbf)$ is any Horn formula and $L^\dag(x)$ is the standard translation of an $\mathcal{ELU}_{\nabla}$-concept $L$, then $L^{\dag}(x)\rightarrow H(\zbf)$ is equivalent to a Horn formula.

\smallskip
\noindent
{\it Proof of claim} is by induction on the construction of $L$. The basis of induction and the cases when $L$ is $A$, $\top$ and $C \sqcup D$ are obvious. 

\emph{Case} $(C^\dag \land D^\dag)(x) \to H(\zbf)$. This formula is equivalent to $C^\dag(x) \to (D^\dag(x) \to H(\zbf))$. By IH, $D^\dag(x) \to H(\zbf)$ is equivalent to some Horn formula $H'(x,\zbf)$, and by the same reason $C^\dag(x) \to H'(x,\zbf)$ is also equivalent to a Horn formula.

\emph{Case} $(\exists R.C)^{\dag}(x) \to H(\zbf)$. This formula is equivalent to $\forall y\, ((R(x,y) \wedge C^{\dag}(y)) \to H(\zbf))$ with a fresh $y$, 
which is clearly equivalent to a Horn formula, by IH.

\emph{Case} $(\nabla R.C)^{\dag} \rightarrow H(\zbf)$. Its standard translation is equivalent to
$$
\exists y\, R(x,y) \to [\forall y (R(x,y) \rightarrow C^{\dag}(y)) \rightarrow H(\zbf)].
$$
One can check that this formula has the same models as 
$$
\exists y\, R(x,y) \to \exists y [R(x,y) \wedge (C^{\dag}(y) \rightarrow H(\zbf))],
$$
which is equivalent to a Horn formula, by IH. 
\end{proof}

\begin{definition}[\bf $\mathcal{ELU}_{\nabla}$-simulation]\em
 An \emph{$\mathcal{ELU}_{\nabla}$-simulation} between $\tau$-structures $\Amf$ and $\Bmf$ is a relation 
$Z \subseteq \text{dom}(\Amf)\times \text{dom}(\Bmf)$ if the
following conditions hold:
\begin{description}
\item[\hspace*{-1.5mm}(\emph{atom})] for any $A\in \tau$, if $(a,b) \in Z$ and $a\in A^{\Amf}$, then 
$b\in A^{\Bmf}$;
 
\item[\hspace*{-1.5mm}(\emph{forth})] for any $R\in \tau$, if $(a,b) \in Z$ and $(a,a')\in R^{\Amf}$, then there exists
$b'\in \text{dom}(\Bmf)$ with $(b,b')\in R^{\Bmf}$ and $(a',b') \in Z$;
       
\item[\hspace*{-1.5mm}(\emph{back})] for any $R\in \tau$, if $(a,b) \in Z$, $a \in (\exists R.\top)^{\Amf}$, 
and $(b,b')\in R^{\Bmf}$, then there is $a'\in \text{dom}(\Amf)$ with $(a,a')\in R^{\Amf}$ and $(a',b')\in Z$.
%
\end{description}
We write $\Amf,a \preceq_{\nabla}\Bmf,b$ if there exists a $\mathcal{ELU}_{\nabla}$-simulation $Z$ between $\Amf$ and 
$\Bmf$ such that $(a,b) \in Z$.
\end{definition}

\begin{theorem}[\bf Ehrenfeucht-Fra\"{i}ss\'e game for $\mathcal{ELU}_{\nabla}$]\label{thm:elunabla}
For any finite vocabulary $\tau$, pointed $\tau$-structures $\Amf,a$ and $\Bmf,b$, and any 
$\ell < \omega$, we have
$$
\Amf,a \leq_{\mathcal{ELU}_{\nabla}}^{\ell} \Bmf,b \quad \text{iff} \quad 
\Amf,a \preceq_{\nabla}^{\ell}\Bmf,b.
$$
If $\Amf$ and $\Bmf$ are finite, then
$$
\Amf,a \leq_{\mathcal{ELU}_{\nabla}} \Bmf,b \quad \text{iff} \quad 
\Amf,a \preceq_{\nabla}\Bmf,b.
$$
\end{theorem}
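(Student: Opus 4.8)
The plan is to establish the graded equivalence $\Amf,a \leq_{\mathcal{ELU}_{\nabla}}^{\ell} \Bmf,b$ iff $\Amf,a \preceq_{\nabla}^{\ell}\Bmf,b$ for every $\ell<\omega$, following the same two-direction scheme as the proof of Theorem~\ref{thm:elu} for plain $\mathcal{ELU}$; the non-graded claim for finite structures then follows by stabilization. Throughout I would use, for each $\ell$, a \emph{characteristic concept} $\lambda_{\Amf,\ell,a}$: an $\mathcal{ELU}_{\nabla}$-concept of depth $\leq\ell$ with $b \in \lambda_{\Amf,\ell,a}^{\Bmf}$ iff $\Amf,a \leq_{\mathcal{ELU}_{\nabla}}^{\ell}\Bmf,b$. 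Its existence follows from there being only finitely many non-equivalent $\mathcal{ELU}_{\nabla}$-concepts of any fixed depth: take the conjunction of all depth-$\leq\ell$ concepts true at $a$, which is itself an $\mathcal{ELU}_{\nabla}$-concept since the grammar is closed under $\sqcap$.

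For the preservation direction $(\Leftarrow)$ I would argue by induction on the construction of an $\mathcal{ELU}_{\nabla}$-concept $C$ of depth $\leq\ell$, showing that $\Amf,a \preceq_{\nabla}^{\ell}\Bmf,b$ and $a \in C^{\Amf}$ imply $b \in C^{\Bmf}$. The cases $A$, $\top$, $\sqcap$, $\sqcup$ are immediate from \emph{(atom)} and monotonicity, and $\exists R.C$ is handled by \emph{(forth)} exactly as for $\mathcal{ELU}$. The genuinely new case is $C = \nabla R.D = \exists R.\top \sqcap \forall R.D$: to obtain $b \in (\exists R.\top)^{\Bmf}$ I apply \emph{(forth)} to some $R$-successor of $a$ (one exists as $a \in (\exists R.\top)^{\Amf}$), and to obtain $b' \in D^{\Bmf}$ for every $R$-successor $b'$ of $b$ I apply \emph{(back)} — whose guard $a \in (\exists R.\top)^{\Amf}$ is met — to get an $R$-successor $a'$ of $a$ with $\Amf,a' \preceq_{\nabla}^{\ell-1}\Bmf,b'$, observe $a' \in D^{\Amf}$ because $a \in (\nabla R.D)^{\Amf}$, and finish by the induction hypothesis. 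This makes transparent why $\nabla$ forces the asymmetric \emph{(back)} condition guarded by $\exists R.\top$.

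For the converse $(\Rightarrow)$ I would set $Z_{\ell} = \{(a,b) \mid \Amf,a \leq_{\mathcal{ELU}_{\nabla}}^{\ell}\Bmf,b\}$ and verify by induction on $\ell$ that membership in $Z_{\ell}$ yields a winning strategy for player~2 in the $\ell$-round game, i.e.\ that \emph{(forth)} and \emph{(back)} moves can always be answered by a position in $Z_{\ell-1}$, with \emph{(atom)} holding trivially. For \emph{(forth)}, given $(a,a')\in R^{\Amf}$, the depth-$\leq\ell$ concept $\exists R.\lambda_{\Amf,\ell-1,a'}$ is true at $a$, hence at $b$ by $\leq^{\ell}_{\mathcal{ELU}_{\nabla}}$, supplying a $b'$ with $\Amf,a'\leq^{\ell-1}_{\mathcal{ELU}_{\nabla}}\Bmf,b'$. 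The main obstacle is \emph{(back)}: given $a \in (\exists R.\top)^{\Amf}$ and $(b,b')\in R^{\Bmf}$, I must find an $R$-successor $a'$ of $a$ with $\Amf,a'\leq^{\ell-1}_{\mathcal{ELU}_{\nabla}}\Bmf,b'$. Suppose not; then $b' \notin \lambda_{\Amf,\ell-1,a'}^{\Bmf}$ for every $R$-successor $a'$ of $a$. Since there are only finitely many characteristic concepts of depth $\leq\ell-1$ up to equivalence, the distinct concepts $L_1,\dots,L_k$ occurring as $\lambda_{\Amf,\ell-1,a'}$ form a finite list, every $R$-successor of $a$ satisfies $L_1 \sqcup \cdots \sqcup L_k$, and $a \in (\exists R.\top)^{\Amf}$, so $a \in (\nabla R.(L_1 \sqcup \cdots \sqcup L_k))^{\Amf}$. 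This depth-$\leq\ell$ concept must then hold at $b$, forcing $b' \in (L_1 \sqcup \cdots \sqcup L_k)^{\Bmf}$, a contradiction.

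The crucial use of $\sqcup$ in this \emph{(back)} argument is exactly the point where the structure of $\mathcal{ELU}_{\nabla}$ matters, and it explains why the left-hand language must retain disjunction; I expect the bookkeeping of depths across the nested inductions to be the only other delicate aspect, but it is routine. Finally, the finite-structure equivalence $\Amf,a \leq_{\mathcal{ELU}_{\nabla}} \Bmf,b$ iff $\Amf,a \preceq_{\nabla}\Bmf,b$ follows because $\leq_{\mathcal{ELU}_{\nabla}}$ coincides with $\bigcap_{\ell}\leq^{\ell}_{\mathcal{ELU}_{\nabla}}$ (every concept has finite depth), the graded result identifies this with $\bigcap_{\ell}\preceq^{\ell}_{\nabla}$, and on finite structures the descending chain $\preceq^{0}_{\nabla}\supseteq\preceq^{1}_{\nabla}\supseteq\cdots$ stabilizes to a genuine $\mathcal{ELU}_{\nabla}$-simulation.
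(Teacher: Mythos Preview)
The paper does not give a proof of Theorem~\ref{thm:elunabla}; it is stated without proof in the appendix, presumably as a routine extension of the folklore $\mathcal{ELU}$ characterization (Theorem~\ref{thm:elu}). Your argument is correct and is precisely the expected one: the only genuinely new case over $\mathcal{ELU}$ is $\nabla R.D$, and you handle it cleanly in both directions, with the contrapositive argument for \emph{(back)} via the finite disjunction $\nabla R.(L_1\sqcup\cdots\sqcup L_k)$ of characteristic concepts being exactly the point where $\nabla$ and $\sqcup$ interact.
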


\begin{definition}[\bf Horn$_{\nabla}$-simulation]\em
 A \emph{Horn$_{\nabla}$ simulation} $Z$ between $\tau$-structures $\Amf$ and $\Bmf$ is a Horn simulation
 such that, in addition, 
\begin{description}
\item[\hspace*{-1.5mm}(\emph{sim}$_{\nabla}$)] if $(X,b) \in Z$, then $\Bmf,b\preceq_{\nabla}\Amf,a$ for every $a\in X$.
\end{description}
We write $\Amf,X \preceq_{\textit{horn}_{\nabla}}\Bmf,b$ if there exists a Horn$_{\nabla}$ simulation $Z$ between 
$\Amf$ and $\Bmf$ such that $(X,b) \in Z$.
\end{definition}
\begin{theorem}[\bf Ehrenfeucht-Fra\"{i}ss\'e game for \hALC$_{\nabla}$]\label{thm:ehrenhornnabla}
For any finite vocabulary $\tau$, pointed $\tau$-structures $\Amf,a$ and $\Bmf,b$,
and any $\ell < \omega$, we have
$$
\Amf,a \leq_{\textit{horn}\mathcal{ALC}_{\nabla}}^{\ell} \Bmf,b \quad \text{iff} \quad 
\Amf,a \preceq_{\textit{horn}_{\nabla}}^{\ell}\Bmf,b.
$$
If $\Amf$ and $\Bmf$ are finite, then
$$
\Amf,a \leq_{\textit{horn}\mathcal{ALC}_{\nabla}} \Bmf,b \quad \text{iff} \quad \Amf,a \preceq_{\textit{horn}_{\nabla}}\Bmf,b.
$$
\end{theorem}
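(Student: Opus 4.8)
The plan is to follow the proof of Lemma~\ref{lem:diagram} (and hence Theorem~\ref{thm:ehrenhorn}) almost verbatim, exploiting the modularity of Horn simulation games: the only structural change from \hALC{} to $\hALC_{\nabla}$ is that the left-hand sides of implications are now $\mathcal{ELU}_{\nabla}$-concepts rather than $\mathcal{ELU}$-concepts, and this is reflected by replacing the subgame \itsimb with its $\mathcal{ELU}_{\nabla}$-counterpart (\emph{sim}$_{\nabla}$). As in Lemma~\ref{lem:diagram}, I would first prove the stronger \emph{set-based} statement
\[
\Amf,X \leq_{\textit{horn}\mathcal{ALC}_{\nabla}}^{\ell}\Bmf,b
\quad\text{iff}\quad
\exists X_{0}\subseteq X\ \Amf,X_{0}\preceq_{\textit{horn}_{\nabla}}^{\ell}\Bmf,b,
\]
from which the pointed ($X=\{a\}$) version is immediate. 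The finite-model version follows by the same reasoning as in the \hALC{} case of Theorem~\ref{thm:ehrenhorn}: there are only finitely many candidate sets $X_{0}\subseteq\text{dom}(\Amf)$, the relations $\preceq_{\textit{horn}_{\nabla}}^{\ell}$ are decreasing in $\ell$, so some single $X_{0}$ works for all $\ell$, and on finite structures $\preceq_{\textit{horn}_{\nabla}}^{\omega}$ coincides with $\preceq_{\textit{horn}_{\nabla}}$.

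For the direction $(\Rightarrow)$ I would reuse the relations $Z_{\ell}$ from the proof of Lemma~\ref{lem:diagram}, but with characteristic concepts drawn from the richer language: let $\lambda_{\Amf,\ell,a}$ be an $\mathcal{ELU}_{\nabla}$-concept of depth $\le\ell$ with $b\in\lambda_{\Amf,\ell,a}^{\Bmf}$ iff $\Amf,a\preceq_{\nabla}^{\ell}\Bmf,b$ (its existence is guaranteed by Theorem~\ref{thm:elunabla} and the finiteness of $\mathcal{ELU}_{\nabla}$-concepts of bounded depth modulo equivalence), and let $\rho_{\Amf,\ell,X}$ be the conjunction of all $\hALC_{\nabla}$-concepts of depth $\le\ell$ true throughout $X$. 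Setting $(X,b)\in Z_{\ell}$ iff $X\neq\emptyset$, $b\in\rho_{\Amf,\ell,X}^{\Bmf}$ and $X\subseteq\lambda_{\Bmf,\ell,b}^{\Amf}$, the induction on $\ell$ showing $(X,b)\in Z_{\ell}\Rightarrow\Amf,X\preceq_{\textit{horn}_{\nabla}}^{\ell}\Bmf,b$ is carried out exactly as before. The crucial observation is that the witnessing concepts occurring in the verification of \itforthh and \itbackh---namely $\exists R.\rho_{\Amf,\ell,Y}$, $(\lambda_{\Bmf,\ell,b'}\to\bot)$, and $\forall R.(\lambda_{\Bmf,\ell,b'}\to C)$---remain $\hALC_{\nabla}$-concepts once $\lambda_{\Bmf,\ell,b'}$ is an $\mathcal{ELU}_{\nabla}$-concept, so those steps go through unchanged; the one new point is that condition (\emph{sim}$_{\nabla}^{\ell}$) now follows from $X\subseteq\lambda_{\Bmf,\ell,b}^{\Amf}$ via Theorem~\ref{thm:elunabla} instead of from Theorem~\ref{thm:elu}. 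To produce the required $X_{0}$, one takes $X_{0}=X\cap\lambda_{\Bmf,\ell,b}^{\Amf}$ and verifies $(X_{0},b)\in Z_{\ell}$ as in the $\hALC{}$ case.

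The direction $(\Leftarrow)$ proceeds, as in the appendix proof of Lemma~\ref{lem:diagram}, by induction on $\ell$ and on the construction of a concept $C$ in a fixed finite set $\text{Horn}^{\nabla}_{\ell}$ representing the $\hALC_{\nabla}$-concepts of depth $\le\ell$. The cases $C=\exists R.C'$, $C=\forall R.C'$ and $C=C_{1}\sqcap C_{2}$ are literally the earlier ones. The only modified case is $C=L\to C'$ with $L$ an $\mathcal{ELU}_{\nabla}$-concept: assuming $\Amf,X\preceq_{\textit{horn}_{\nabla}}^{\ell}\Bmf,b$, $X\subseteq C^{\Amf}$ and, for contradiction, $b\notin(L\to C')^{\Bmf}$, we have $b\in L^{\Bmf}$ and $b\notin C'^{\Bmf}$; condition (\emph{sim}$_{\nabla}$) gives $\Bmf,b\preceq_{\nabla}\Amf,a$ for every $a\in X$, so Theorem~\ref{thm:elunabla} propagates $b\in L^{\Bmf}$ to $X\subseteq L^{\Amf}$, whence $X\subseteq C'^{\Amf}$ and the induction hypothesis yields the contradiction $b\in C'^{\Bmf}$.

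Since Theorem~\ref{thm:elunabla} may be assumed, essentially all of the remaining work is this bookkeeping, and the transcription is routine once one checks the concept-class closure noted above. I expect the real content to sit \emph{inside} Theorem~\ref{thm:elunabla}, and the delicate point there is the asymmetric (\emph{back}) clause of $\mathcal{ELU}_{\nabla}$-simulations, which fires only at nodes satisfying $\exists R.\top$: this restriction is precisely what makes $\nabla R.C=\exists R.\top\sqcap\forall R.C$, and hence all of $\hALC_{\nabla}$, preserved under products and the game sound, whereas an unrestricted universal clause would fail---just as $\forall R.A\to B$ is not preserved under products. I would therefore double-check that $\lambda_{\Bmf,\ell,b}$ faithfully encodes (\emph{sim}$_{\nabla}^{\ell}$) before regarding the argument as complete.
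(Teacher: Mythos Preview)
Your proposal is correct and follows exactly the approach the paper intends: the paper does not spell out a proof of Theorem~\ref{thm:ehrenhornnabla}, but its modularity remark (that one obtains games for $\hALC_{\nabla}$ by replacing the $\mathcal{ELU}$-subgame with one capturing $\mathcal{ELU}_{\nabla}$) makes clear that the argument is the transcription of Lemma~\ref{lem:diagram} with Theorem~\ref{thm:elunabla} in place of Theorem~\ref{thm:elu}, which is precisely what you do. Your check that the witnessing concepts $\exists R.\rho_{\Amf,\ell,Y}$, $\lambda_{\Bmf,\ell,b'}\to\bot$, and $\forall R.(\lambda_{\Bmf,\ell,b'}\to C)$ remain $\hALC_{\nabla}$-concepts when $\lambda$ is an $\mathcal{ELU}_{\nabla}$-concept is the only point requiring verification, and it is correct.
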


\end{document}